\newcommand\cyr{%
\renewcommand\rmdefault{wncyr}%
\renewcommand\sfdefault{wncyss}%
\renewcommand\encodingdefault{OT2}%
\normalfont
\selectfont}
\DeclareTextFontCommand{\textcyr}{\cyr}
\definecolor{light}{gray}{0.85}
\definecolor{heavy}{gray}{0.30}
\definecolor{mygray}{gray}{.5}
 \newtheorem{theoremS}{Theorem}[section]
\renewcommand{\dotfill}{\leaders\hbox to 5pt{\hss.\hss}\hfill}
\newcounter{definitioncounter}
\newcounter{assertioncounter}
\newcounter{examplecounter}
\newcounter{exercisecounter}
\title{Some aspects of physical prototyping in Pervasive Computing\\\color{white}--\color{black} \\ \Large Distributed adaptive beamforming, Device-free recognition of activities from RF, Secure keys from ambient audio and calculation of mathematical functions on the wireless channel}
\author{\begin{minipage}[t]{10.5cm}
\begin{minipage}[t]{10.5cm}\vspace{1cm}  \centering
Habilitationsschrift zur Erlangung der Lehrbefugnis im Fach Informatik an der Karl-Friedrich-Gauss-Fakultät der Technischen Universität Carolo Wilhelmina zu Braunschweig\vspace{3cm}\end{minipage}\\
\begin{minipage}[t]{5cm}\normalsize \sffamily		
\begin{tabular}{rl}
Vorgelegt von: & Dr. rer. nat. Stephan Sigg\\
 	&sigg$@$ibr.cs.tu-bs.de\\
	&Institut für Betriebssysteme und Rechnerverbund\\
	&Technische Universität Braunschweig\\[.3cm]
	&Braunschweig im Februar 2015\\[.5cm]
	Habilitation: & 17.03.2017
\end{tabular}
 \end{minipage}
\end{minipage}
 }
\date{ }
\begin{document}
\pagestyle{empty}
\maketitle
\pagestyle{plain}

\tableofcontents
\setcounter{definitioncounter}{0}\setcounter{examplecounter}{0}
\setcounter{exercisecounter}{0}

\chapter{Introduction}
This document summarises the results of several research campaigns over the past seven years. 
The main connecting theme is the physical layer of widely deployed sensors in Pervasive Computing domains.
In particular, we have focused on the RF-channel or on ambient audio.
Instead of plugging together existing technologies to solve a particular task, we have been re-prototyping the use and interaction via these interfaces for a particular purpose. 
In particular, the initial problem from which we started this work was that of distributed adaptive transmit beamforming. We have been looking for a simple method to align the phases of jointly transmitting nodes (e.g. sensor or IoT nodes).
The algorithmic solution to this problem was to implement a distributed random optimisation method on the participating nodes in which the transmitters and the receiver follow an iterative question-and-answer scheme. 
In this scheme, phases of transmitters are randomly altered.
The algorithm works on the physical layer, not utilising existing protocols. 
We have been able to derive sharp asymptotic bounds on the expected optimisation time of an evolutionary random optimiser and presented an asymptotically optimal approach (cf. section~\ref{sectionOriginalBF01})\cite{4022}.
The latter approach, however, requires richer feedback from the transmit devices which restricts its application. 
Given the strong unimodality of the underlying search space, we then derived improved sharp bounds on a local random search approach (cf. section~\ref{sectionOriginalBF02})\cite{Beamforming_Sigg_2014}.

One thing that we have learned from the work on these physical layer algorithms was that the signals we work on are fragile and perceptive to physical environmental changes.
These could be obstacles such as furniture, opened or closed windows or doors as well as movement of individuals.

This observation motivated us to view the wireless interface as a sensor for environmental changes in Pervasive Computing environments.
Pioneering this field of device-free recognition of activities and situations, we have demonstrated the feasibility of this sensing paradigm with software radios and also sensor nodes (cf. section~\ref{sectionOriginalRF01})\cite{Pervasive_Sigg_2012}.
The essential novelty of this sensing paradigm enabled by looking at the physical layer directly is that monitored entities do not need to be equipped with any hardware or with any part of the sensing system.
By reflecting and blocking RF-signals, the monitored entities are implicitly integral parts of the sensing system.
Improving the recognition accuracy of these systems, we could show that an accurate recognition of activities is also possible utilising only ambient signals (i.e. not controlling the transmitter) (cf. section~\ref{sectionOriginalRF02})\cite{Pervasive_Shi_2014}.
In particular, ambient FM radio signals have been utilised.
Finally, we could also demonstrate that a (lower accuracy) recognition is also possible on consumer devices, such as smartphones and that software radios are not mandatory for this recognition scheme. 
In this work, gestures and activities have been distinguished by analysing the fluctuation in the received signal strength indicator (RSSI) of received IEEE 802.11 packets (cf. section~\ref{sectionOriginalRF03})\cite{RFSensing_Sigg_2014}.

Another use of physical layer RF-signals is for security applications (e.g.~\cite{Cryptography_Mathur_2011}).
The essential idea is that the signal fluctuations at two distinct physical locations are uncorrelated given that the devices are separated by at least half the wavelength of the signal.
Then, close devices can use their correlated signal for the generation of common secure keys whereas devices which are farther apart are not able to generate identical keys following the same protocol since their input to generate the keys is uncorrelated.
The security of this scheme relies on the difficulty to predict a channel at a particular remote physical location.

However, due to the high frequency of RF-signals, devices can be separated by few centimeters at most in order to generate common secure keys by this approach.

Instead, we exploit ambient audio which shares a number of properties with RF but operates at a lower frequency, so that higher physical separation of devices is acceptable.
In particular, we presented a scheme for the generation of secure cryptographic keys from ambient audio (cf. section~\ref{sectionOriginalSE01})~\cite{Cryptography_Schuerman_2011}. 
The approach has been exploited in various environmental conditions and we have not been able to find BIAS using statistical tests.
Later, we ported this approach towards common smartphones (cf. section~\ref{sectionOriginalSE02})~\cite{Cryptography_Nguyen_2012-2}.
In this, hardware inconsistencies and insufficient synchronisation on common smartphone platforms had to be solved algorithmically. 

This collection of applications demonstrates the potential of physical prototyping of Pervasive Computing applications. 
Existing protocols simplify communication among and with devices and the interaction with common interfaces. 
However, such protocols also introduce overhead and abstract from available information.
While this enables a wide and easy application of the protocols, some information is only available on the physical level and some level of efficiency only possible by re-prototyping the physical layer. 

We are currently working to further push these mentioned directions and novel fields of physical prototyping as detailed, for instance, in \cite{FunctionComputation_Sigg_2012,InNetworkProcessing_Jakimovski_2012,FunctionComputation_Sigg_2013}.
In particular, the calculation of mathematical operations on the wireless channel at the time of transmission appears to contain good potential for gains in efficiency for communication and computation in Pervasive Computing domains.

\section{Contribution}
This thesis presents the work of seven publications at international Journals or Conferences between 2010 and 2014. 
In particular, these are
\begin{enumerate}
\item[\cite{4022}] Stephan Sigg, Rayan Merched El Masri and Michael Beigl: Feedback based closed-loop carrier synchronisation: A sharp asymptotic bound, an asymptotically optimal approach, simulations and experiments, in IEEE Transactions on Mobile Computing (TMC), 2011 \\(DOI: http://dx.doi.org/10.1109\%2FTMC.2011.21)
\item[\cite{Beamforming_Sigg_2014}] Stephan Sigg: A fast binary feedback-based distributed adaptive carrier synchronisation for transmission among clusters of disconnected IoT nodes in smart spaces, Elsevier Journal on Ad Hoc Networks, vol. 16, May 2014, pp. 120-130 \\(DOI: http://10.1016/j.adhoc.2013.12.006)
\item[\cite{Pervasive_Sigg_2012}] Stephan Sigg, Markus Scholz, Shuyu Shi, Yusheng Ji and Michael Beigl: RF-sensing of activities from non-cooperative subjects in device-free recognition systems using ambient and local signals, in IEEE Transactions on Mobile Computing (TMC), Feb. 2013, vol. 13, no. 4 \\(DOI: http://doi.ieeecomputersociety.org/10.1109/TMC.2013.28)
\item[\cite{Pervasive_Shi_2014}] Shuyu Shi, Stephan Sigg, Wei Zhao, and Yusheng Ji: Monitoring of Attention from Ambient FM-radio Signals, IEEE Pervasive Computing, Los Alamitos, CA, USA, IEEE Computer Society, Jan-Mar 2014, vol. 13, no. 1, pp. 30-36, 2014 \\(DOI: http://dx.doi.org/10.1109/MPRV.2014.13)
\item[\cite{RFSensing_Sigg_2014}] Stephan Sigg, Ulf Blanke and Gerhard Troester: The Telepathic Phone: Frictionless Activity Recognition from WiFi-RSSI, IEEE International Conference on Pervasive Computing and Communications (PerCom), Budapest, Hungary, March 24-28, 2014 (DOI: http://dx.doi.org/10.1109/PerCom.2014.6813955)
\item[\cite{Cryptography_Schuerman_2011}] Dominik Schuermann and Stephan Sigg: Secure communication based on ambient audio, in IEEE Transactions on Mobile Computing (TMC), Feb. 2013, vol. 12 no. 2 (DOI: http://doi.ieeecomputersociety.org/10.1109/TMC.2011.271)
\item[\cite{Cryptography_Nguyen_2012-2}] Ngu Nguyen, Stephan Sigg, An Huynh and Yusheng Ji: Pattern-based Alignment of Audio Data for Ad-hoc Secure Device Pairing, in 2012 16th International Symposium on Wearable Computers (ISWC), pp.88-91, 18-22 June 2012 \\(DOI: http://dx.doi.org/10.1109/ISWC.2012.14)
\end{enumerate}

\chapter{Original work}
\section[Feedback based closed-loop carrier synchronisation: A sharp asymptotic bound, an asymptotically optimal approach, simulations and experiments]{Feedback based closed-loop carrier synchronisation: A sharp asymptotic bound, an asymptotically optimal approach, simulations and experiments \footnote{Originally published as 'Stephan Sigg, Rayan Merched El Masri and Michael Beigl: Feedback based closed-loop carrier synchronisation: A sharp asymptotic bound, an asymptotically optimal approach, simulations and experiments, in IEEE Transactions on Mobile Computing (TMC), 2011
(DOI: http://dx.doi.org/10.1109\%2FTMC.2011.21)' 1536-1233/11/\$26.00 \copyright 2011 IEEE  Published by the IEEE CS, CASS, ComSoc, IES, SPS}}\label{sectionOriginalBF01}
We derive an asymptotically sharp bound on the synchronisation speed of a randomised black box optimisation technique for closed-loop feedback based distributed adaptive beamforming in wireless sensor networks.
We also show that the feedback function that guides this synchronisation process is weak multimodal. 
Given this knowledge that no local optimum exists, we consider an approach to locally compute the phase offset of each individual carrier signal.
With this design objective an asymptotically optimal algorithm is derived.
Additionally, we discuss the concept to reduce the optimisation time and energy consumption by hierarchically clustering the network into subsets of nodes that achieve beamforming successively over all clusters.
For the approaches discussed we demonstrate their practical feasibility in simulations and experiments.

\subsection{Introduction}\label{sectionIntroductionBF01}
In recent years, sensor nodes of extreme tiny size have been envisioned \cite{5911,5912,5137}.
In \cite{5916}, for example, applications for square-millimetre sized nodes that seamlessly integrate into an environment are detailed.
At these small form-factors transmission power of wireless nodes is restricted to several microwatts.
Communication between a single node and a remote receiver is then only feasible at short distances.
It is possible, however, to increase the maximum transmission range by cooperatively transmitting information from distinct nodes of a network \cite{5908,5884}.
Cooperation can increase the capacity and robustness of a network of transmitters \cite{5893,5894} and decreases the average energy consumption per node \cite{5885,5888,5939}.

Related research branches are cooperative transmission \cite{5907}, collaborative transmission \cite{4019,4020}, distributed adaptive beamforming \cite{Mudumbai_2009,5919,Barriac_2004,5923}, collaborative beamforming \cite{5930} or cooperative/virtual MIMO for wireless sensor networks \cite{5940,5937,5941,5938}.
One approach is to utilise neighbouring nodes as relays \cite{5898,5899,5900} as proposed by Cover and El Gamal in \cite{5901}.
Cooperative transmission is then achieved by multi-hop \cite{5903,5807,5909} or data flooding \cite{Ochiai_2005,5905,5844,5843} approaches.
The general idea of multi-hop relaying based on the physical channel is to retransmit received messages by a relay node so that the destination will receive not only the message from the source node but also from the relay.
In data flooding approaches, a node will retransmit a received message at its reception. 
It has been shown that the approach outperforms non-cooperative multi-hop schemes significantly.
In particular, the transmission time is reduced compared to traditional transmission protocols \cite{5886}.

In these approaches, nodes are not tightly synchronised and transmission may be asynchronous.
Synchronous transmission, however, is achieved by virtual MIMO techniques.
In these implementations, identical RF carrier signal components from various transmitters that function as a distributed beamformer are superimposed.
When the relative phase offset of these carrier signal components at a remote receiver is small, the signal strength of the received sum signal is improved.
In virtual MIMO for wireless sensor networks, single antenna nodes are cooperating to establish a multiple antenna wireless sensor network \cite{5937,5940,5941}.
Virtual MIMO has capabilities to adjust to different frequencies and is highly energy efficient \cite{5938,5939}.
However, the implementation of MIMO capabilities in WSNs requires accurate time synchronisation, complex transceiver circuits and signal processing that might surpass the power consumption and processing capabilities of simple sensor nodes.

Other solutions proposed are open-loop synchronisation methods such as round-trip synchronisation \cite{5931,5932,5933}.
In this scheme, the destination transmits beacons in opposed directions along a multi-hop circle in which each of the nodes appends its part of the overall message to the beacons.
Beamforming is achieved when the processing time along the multi-hop chain is identical in both directions.
This approach, however, does not scale with the size of a network.

Closed loop feedback based approaches include full-feedback techniques, in which carrier synchronisation is achieved in a master-slave manner.
The phase-offset among the carrier signals of destination nodes is corrected by a receiver node.
Diversity between RF-transmit signal components is achieved over CDMA channels \cite{5934}.
This approach is applicable only to small network sizes and requires sophisticated processing capabilities at the source nodes.

A more simple and less resource demanding implementation is the one-bit feedback based closed-loop synchronisation considered in \cite{5934,5920}.
The authors describe an iterative process in which $n$ source nodes $i\in[1,\dots,n]$ randomly adapt the phases $\gamma_i$ of their carrier signal $\Re\left(m(t)e^{j(2\pi (f_c+f_i)t+\gamma_i)}\right)$.
Here, $m(t)$ is the transmit message and $f_i$ denotes the frequency offset of node $i$ to a common carrier frequency $f_c$. 
Initially, i.i.d. phase offsets $\gamma_i$ of carrier signals are assumed.
When a receiver requests a transmission from the network, carrier phases are synchronised in an iterative process.
\begin{enumerate}
	\item Each source node $i$ adjusts its carrier phase offset $\gamma_i$ and frequency offset $f_i$ randomly.
	\item The source nodes transmit to the destination simultaneously as a distributed beamformer.
	\item The receiver estimates the level of phase synchronisation of the received sum signal (for instance by the SNR).
	\item This value is broadcast as a feedback to the network.
	Nodes interpret this feedback and adapt the phase of their carrier signal accordingly.
\end{enumerate}
These four steps are iterated repeatedly until a stop criterion is met (e.g. maximum iteration count or sufficient synchronisation). 
Fig.~\ref{FigureGeneralProcedure} illustrates this procedure.
\begin{figure}
	\centering
	\includegraphics[scale=.4]{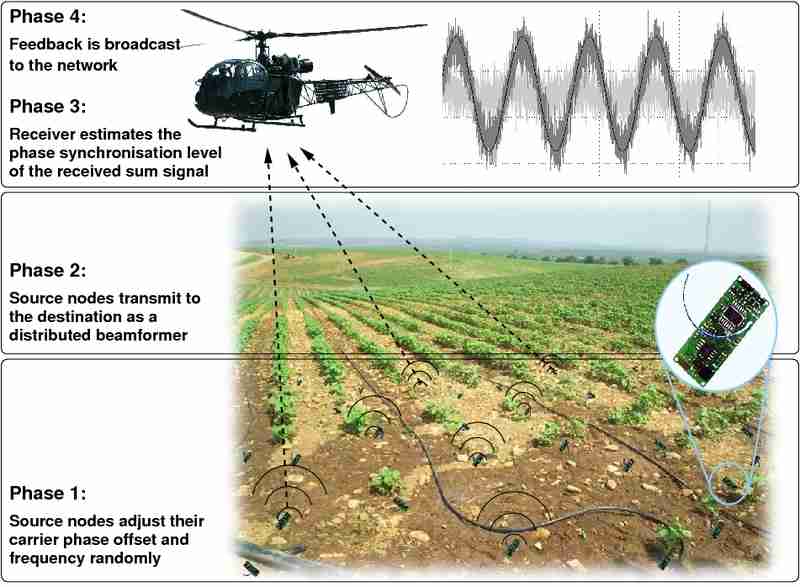}
	\caption{Schematic illustration of feedback based distributed adaptive beamforming in wireless sensor networks {\scriptsize(\copyright 2011 IEEE  Published by the IEEE CS, CASS, ComSoc, IES, SPS)}}
	\label{FigureGeneralProcedure}
\end{figure}
It has been studied by different authors \cite{Mudumbai_2010b,Seo_2008,Bucklew_2008,4019}.
The distinct approaches proposed differ in the implementation of the first and the fourth step specified above.
The authors of \cite{Bucklew_2008} show that it is possible to reduce the count of transmitters in a random process and still achieve sufficient synchronisation among all nodes.

In \cite{Mudumbai_2010b,Seo_2008,Bucklew_2008} a process is described in which each node alters its carrier phase offset $\gamma_i$ according to a normal distribution with small variance in step one.
In \cite{4019} a uniform distribution is utilised instead but the probability for one node to alter the phase offest of its carrier signal is low.
We show in section~\ref{sectionResultsBF01} that both approaches achieve a similar performance.
Only in \cite{Seo_2008} not only the phase but also frequency is adapted.

Significant differences among these approaches also apply to the feedback and the reactions of nodes in step four.
In  \cite{Mudumbai_2009,Seo_2008,Bucklew_2008} a one-bit feedback is utilised.
Nodes sustain their phase modifications when the feedback has improved and otherwise reverse them.
In \cite{Bucklew_2008} it was shown that the optimisation time is improved by a factor of two when a node as response to a negative feedback from the receiver applies a complementary phase offset instead of simply reversing its modification.
In \cite{4019}, authors suppose to utilise more than one bit as feedback so that parameters of the optimisation can be adapted with regard to the optimisation progress.

The strength of feedback based closed-loop distributed adaptive beamforming in wireless sensor networks is its simplicity and low processing requirements that make it feasible for the application in networks of tiny sized, low power and computationally restricted sensor nodes.

We study aspects of this transmission scheme and derive sharp asymptotic lower and upper bounds on the expected optimisation time of a common implementation in section~\ref{sectionAsymptoticBounds}.
Together with these bounds we show that the feedback function is weak multimodal so that no local optimum exists.
By small modifications of the common algorithm, however, further improvements in the synchronisation time can be achieved. 
In section~\ref{sectionClustering} we discuss a hierarchical clustering scheme that exploits that the superimposed signal strength of a set of nodes is increased at a slower pace than the synchronisation time with increasing node count.
When additional information available at a receiver node is utilised, further improvements are possible.
In section~\ref{sectionThreeUnknowns} we show that by providing more than one bit as feedback to the transmitters, knowledge about the feedback function can be derived from measurements of a single node altering the phase offset of its carrier signal.
We present an asymptotically optimal algorithm that utilises this knowledge and significantly improves the synchronisation process.
In section~\ref{sectionResultsBF01}, algorithms are compared for their synchronisation performance in numerical simulations.
In these simulations, the impact of various environmental settings and algorithmic configurations can be approximated.
Finally, in section~\ref{sectionJulian} we demonstrate the feasibility of distributed adaptive beamforming in wireless sensor networks in a near-realistic instrumentation with software radios.
Section~\ref{sectionConclusionBF01} draws our conclusion.

\subsection{Synchronisation time analysis}\label{sectionAsymptoticBounds}
We analyse the process of distributed adaptive beamforming in wireless sensor networks as described in section~\ref{sectionIntroductionBF01} and assume that each one of the $n$ nodes decides with probability $\frac{1}{n}$ to change the phase of its carrier signal uniformly at random in the interval $[0,2\pi]$.
On obtaining the feedback of the receiver, nodes that recently updated the phase of their carrier signal either sustain this decision or reverse it, depending on whether the feedback has improved or not.
A feedback function $\mathcal{F}:\zeta_{\mbox{\footnotesize sum}}^*\rightarrow\mathds{R}$ maps the superimposed received carrier signal 
\begin{equation}
\zeta_{\mbox{\footnotesize sum}}=\Re\left(m(t)e^{j2\pi f_ct}\sum_{i=1}^n\mbox{RSS}_ie^{j(\gamma_i+\phi_i+\psi_i)}\right)\label{equationOneBF01}	
\end{equation}
to a real-valued feedback score.
In equation~(\ref{equationOneBF01}) the $RSS_i$ denotes the received signal strength of the $i$-th signal out of $n$ received signal components.
As local oscillators are not synchronised and nodes are spatially distributed, $\phi_i$ and $\psi_i$ account for the phase offset in the received signal components due to the offset in the local oscillators and due to distinct signal propagation times.
A possible feedback function that is proportional to the distance between an observed superimposed carrier $\zeta_{\mbox{\footnotesize sum}}$ and an optimum sum carrier signal 
\begin{equation}
\zeta_{\mbox{\footnotesize opt}}=\Re\left(m(t)\mbox{RSS}_{\mbox{\footnotesize opt}}e^{j(2\pi f_ct+\gamma_{\mbox{\footnotesize opt}})}\right)	
\end{equation}
is $\mathcal{F}\left(\zeta_{\mbox{\footnotesize sum}}\right)=\int_{t=0}^{2\pi}\left|\zeta_{\mbox{\footnotesize sum}}-\zeta_{\mbox{\footnotesize opt}}\right|$.
Since this function can be mapped onto other feedback measures as, for instance, the signal to noise ratio (SNR) or the received signal strength (RSS), the following discussion remains valid for these feedback measures.
While the multimodality of this feedback function is straightforward, we derive in Appendix~A that it is also weak multimodal so that no local optima exist.

Distributed adaptive beamforming in wireless sensor networks is a search problem.
The search space $\mathcal{S}$ is given by the set of possible combinations of phase and frequency offsets $\gamma_i$ and $f_i$ for all $n$ carrier signals.
A global optimum is a configuration of individual carrier phases that result in identical phase and frequency offset of all received direct signal components.
For the analysis, we assume that the optimisation aim is to achieve for an arbitrary $k$ a maximum relative phase offset of $\frac{4\pi}{k}$ between any two carrier signals.
This means that we can control the quality of the synchronisation achieved by the variable $k$.
An optimum is then reached when the phases of all carrier signal components of a receiver are within an interval of $\frac{4\pi}{k}$ in the phase space.
When $k$ is increased this directly translates to an improved phase synchronisation among signal components. 
Naturally, we can expect that the accuracy of the synchronisation also impacts the synchronisation time. 

For our analysis we logically divide the phase space for a single carrier signal into $k$ intervals of width $\frac{2\pi}{k}$. 
Observe that this is half of the interval that was used to define the optimum synchronisation.
Consequently, when the achieved phase offset of each received signal component is within a maximum distance of $\frac{2\pi}{k}$ to the optimum phase offset, all received carrier phase signals are within an interval of $\frac{4\pi}{k}$ in the phase space and the optimum is reached. 

For a specific superimposed carrier signal $\zeta$ at a receiver we represent the corresponding search point $s_\zeta=(\Gamma_t,F_t)_\zeta\in \mathcal{S}$ at iteration $t$ by a specific combination of phase and frequency offsets with $\Gamma_t=(\gamma_{t,1},\dots,\gamma_{t,n})$ and $F_t=(f_{t,1},\dots,f_{t,n})$.
In order to respect neighbourhood similarities we represent search points as Gray encoded binary strings $s_\zeta\in\mathds{B}^{n\cdot\log(k)}$ so that similar points have a small Hamming distance \cite{2105}.
A search point is then composed from $n$ sections of $\log(k)$ bits each.
Every block of length $\log(k)$ describes one of the $k$ intervals for the phase offset of one carrier signal.
For the analysis, we assume that the frequency offset $f_i$ is zero for all carriers.
Observe, however, that the discussion can be easily adapted to also cover a simultaneous carrier frequency synchronisation. 
In~\cite{Seo_2008} the authors demonstrated, that the same random synchronisation approach can be utilised to synchronise carrier frequencies when in each iteration not only the carrier phase but also the frequency of the transmit signals is altered. 
By this generalisation, the search space of the algorithm is increased. 
For each node, not only $k$ distinct possibilities exist, but $k\cdot\rho$ where $\rho$ denotes the count of distinct frequencies that can possibly be applied for each carrier signal.
The optimisation time then increases by a factor of $\rho$. 
However, the analytical discussion becomes more complicated as the common period of the received sum signal might be increased considerably. 

The optimisation problem is denoted as $\mathcal{P}$ and $T_\mathcal{P}$ describes the count of iterations required to reach one optimum for the problem $\mathcal{P}$.

\subsubsection{An upper bound on the expected synchronisation time}\label{sectionUpperBound}
The value of the feedback function increases with the number of carrier signals $\zeta_i$ that share the same interval for their phase offset $\gamma_i$ at the receiver.
Assume that $\kappa\in[1,k]$ is the interval that contains most of the carrier phase offsets.
As worse feedback values are not accepted, we count the iterations required for all carrier signals to change to interval $\kappa$.
We can roughly divide the values of the feedback function into $n$ partitions $L_1,\dots,L_n$ depending on the number of carrier signals with their phase in the interval $\kappa$.
For each one transmitter, the probability to adapt its phase to one specific interval is $\frac{1}{k}$.
The probability to increase the feedback value so that at least the next partition is reached is then 
\begin{equation}
	\frac{1}{k}\cdot\left( n-L_i\right)\cdot\frac{1}{n}
\end{equation}
since one carrier signal $\zeta_i$ is altered with probability $\frac{1}{n}$ and the probability to reach any particular of the $(n-L_i)$ partitions that would increase the feedback value is $\frac{1}{k}$.
In partition $i$, a total of 
\begin{equation}
	\left(\begin{array}{c}
	      	n-i\\1
	      \end{array}
\right)=n-i
\end{equation}
carrier signals each suffice to improve the feedback value with probability $\frac{1}{n}\cdot\frac{1}{k}$.
We therefore require that at least one of the not synchronised carrier signals is correctly altered in phase while all other $n-1$ signals remain unchanged.
This happens with probability 
\begin{eqnarray}
&	&\left(
	\begin{array}{c}
		n-i\\1
	\end{array}
	\right)\cdot \frac{1}{n}\cdot\frac{1}{k}\cdot\left(1-\frac{1}{n}\right)^{n-1}\nonumber\\
&=&\left(\frac{n-i}{n\cdot k}
\right)\cdot\left(1-\frac{1}{n}\right)^{n-1}.
\end{eqnarray}
Since
\begin{equation}
	\left(1-\frac{1}{n}\right)^n<\frac{1}{e}<\left(1-\frac{1}{n}\right)^{n-1}
\end{equation}
We obtain the probability $P[L_i]$ that $L_i$ is left and a partition $j$ with $j>i$ is reached as
\begin{equation}
	P[L_i]\geq\frac{n-i}{n\cdot e \cdot k}.
\end{equation}
The expected number of iterations to change the layer is bounded from above by $P[L_i]^{-1}$. 
We consequently obtain the overall expected synchronisation time as
\begin{eqnarray}
	E[T_{\mathcal{P}}]&\leq& \sum_{i=0}^{n-1} \frac{e\cdot n \cdot k}{n-i}\nonumber\\
	&=&e\cdot n\cdot k\cdot\sum_{i=1}^n\frac{1}{i}\nonumber\\
	&<&e\cdot n\cdot k\cdot\left(\ln(n)+1\right)\nonumber\\
	&=&\mathcal{O}\left(n \cdot k \cdot \log n\right).
\end{eqnarray}

\subsubsection{A lower bound on the expected synchronisation time}
After the initialisation, the phases of the carrier signals are identically and independently distributed.
Consequently for a superimposed received sum signal $\zeta$, each bit in the binary string $s_\zeta$ that represents the corresponding search point has an equal probability to be $1$ or $0$.
The probability to start from a search point $s_\zeta$ with Hamming distance $h(s_{\mbox{\footnotesize opt}},s_\zeta)$ not larger than $l\in\mathds{N}\; ; \; l\ll n\cdot\log(k)$ to one of the global optima $s_{\mbox{\footnotesize opt}}$ directly after the random initialisation is at most 
\begin{eqnarray}
	P[h(s_{\mbox{\footnotesize opt}},s_\zeta)\leq l]
	&=& \sum_{i=0}^l\left(\begin{array}{c}
	                                	n\cdot\log(k)\\ n\cdot\log(k)-i
	                                \end{array}
\right)\cdot \frac{k}{2^{n\cdot\log(k)-i}}\nonumber\\
&\leq&\frac{\left(n\cdot\log(k)\right)^{l+2}}{2^{n\cdot\log(k)-l}}\nonumber
\end{eqnarray}
In this formula, 
\begin{equation}
\left(\begin{array}{c}
	                                	n\cdot\log(k)\\ n\cdot\log(k)-i
	                                \end{array}
\right)	
\end{equation}
is the count of possible configurations with $i$ bit errors to a given global optimum, $\frac{1}{2^{n\cdot\log(k)-i}}$ represents the probability for all these bits to be correct and $k$ is the count of global optima.
This means that with high probability (w.h.p.) the Hamming distance to the nearest global optimum is at least $l$.
We use the method of the expected progress to calculate a lower bound on the optimisation time.

Let $(s_\zeta,t)$ denote the situation that search point $s_\zeta$ is achieved after $t$ iterations of the algorithm.
We assume a progress measure $\Lambda:\mathds{B}^{n\cdot\log(k)}\rightarrow\mathds{R}^+_0$ such that $\Lambda(s_\zeta,t)<\Delta$ represents the case that a global optimum was not found in the first $t$ iterations.
For every $t\in\mathds{N}$ we have 
\begin{eqnarray}
	E[T_\mathcal{P}]&\geq& t\cdot P[T_\mathcal{P}>t]\nonumber\\ 
	&=& t\cdot P[\Lambda(s_\zeta,t)<\Delta]\nonumber\\
	&=& t\cdot (1-P[\Lambda(s_\zeta,t)\geq\Delta]).
\end{eqnarray}
With the help of the Markov-inequality we obtain 
\begin{equation}
P[\Lambda(s_\zeta,t)\geq\Delta]\leq \frac{E[\Lambda(s_\zeta,t)]}{\Delta}	
\end{equation}
and therefore 
\begin{equation}
	E[T_\mathcal{P}]\geq t\cdot\left(1-\frac{E[\Lambda(s_\zeta,t)]}{\Delta}\right).
\end{equation}
This means that we can obtain a lower bound on the optimisation time by providing the expected progress after $t$ iterations.
The probability for $l$ bits to correctly flip is at most 
\begin{eqnarray}
	& & \left(1-\frac{1}{n\cdot\log(k)}\right)^{n\cdot\log(k)-l}\cdot\left(\frac{1}{n\cdot\log(k)}\right)^l\nonumber\\
	&\leq&\frac{1}{(n\cdot\log(k))^l}.
\end{eqnarray}
In this formula, $\left(1-\frac{1}{n\cdot\log(k)}\right)^{n\cdot\log(k)-l}$ describes the probability that all 'correct' remain unchanged while the remaining $l$ bits flip with probability $\left(\frac{1}{n\cdot\log(k)}\right)^l$.
The expected progress in one iteration is therefore 
\begin{eqnarray}
	E[\Lambda(s_{\zeta},t),\Lambda(s_{\zeta'},t+1)]&\leq& \sum_{i=1}^l\frac{i}{(n\cdot\log(k))^i}\nonumber\\
	&<&\frac{2}{n\cdot\log(k)}
\end{eqnarray}
and the expected progress in $t$ iterations is consequently not greater than $\frac{2t}{n\cdot\log(k)}$.
When we choose $t=\frac{n\cdot\log(k)\cdot\Delta}{4}-1$, the double of the expected progress is still smaller than $\Delta$.
With the Markov inequality we can show that this progress is not achieved with probability $\frac{1}{2}$.
Altogether we conclude that the expected synchronisation time is bounded from below by
\begin{eqnarray}
	E[T_\mathcal{P}]&\geq& t\cdot\left(1-\frac{E[\Lambda(s_\zeta,t)]}{\Delta}\right) \nonumber\\
	&\geq&  \frac{n\cdot\log(k)\cdot\Delta}{4}\cdot \left(1- \frac{\frac{2\cdot n\cdot\log(k)}{4\cdot n\cdot\log(k)}\cdot\Delta}{\Delta}\right)\nonumber\\
	&=&\Omega(n\cdot\log(k)\cdot\Delta)
\end{eqnarray}
With $\Delta=k\cdot\frac{\log(n)}{\log(k)}$ we obtain a lower bound in the same order as the upper bound derived in section~\ref{sectionUpperBound} and consequently an asymptotically sharp bound of
\begin{equation}
E[T_\mathcal{P}]=\Theta\left(n\cdot k\cdot \log(n)\right).
\end{equation}

Note that in \cite{Mudumbai_2010b} an upper bound on the expected asymptotic synchronisation time was derived that scales linearly in the number of nodes $n$ when the probability distribution is optimally altered repeatedly during the synchronisation.
However, simulation results derived for a fixed uniform distribution in this study also indicate a logarithmic factor in the synchronisation time of one-bit feedback based synchronisation.

\subsection{Hierarchical clustering}\label{sectionClustering}
A further improvement of the synchronisation time can be achieved by synchronising smaller clusters of nodes separately.
Since this bound on the synchronisation time grows faster than linearly with the network size $n$ but the received signal strength $\mbox{RSS}_{\mbox{\footnotesize sum}}$ of the received superimposed signal grows linearly with $n$, the overall energy consumption and synchronisation time might be reduced when fewer nodes transmit for a shorter time but with an increased transmission power.
Note that currently most low cost radios are not capable of altering their transmission power and therefore are not able to exploit this property.
More sophisticated radios could, however, achieve carrier phase synchronisation more efficiently when this fact is utilised.
We propose the following hierarchical clustering scheme that synchronises all transmit nodes iteratively in clusters of reduced size.
\begin{enumerate}
	\item Determine clusters (e.g. by a random process initialised by the receiver node)
	\item Synchronise clusters successively as described above with possibly increased transmit power.
	When cluster $\iota$ is sufficiently synchronised, nodes in this cluster sustain their carrier signal and stop transmitting until all clusters are synchronised.
	\item At this stage, carrier signals in all clusters are in phase but carrier phases of distinct clusters might differ.
	Determine representative nodes from all clusters and synchronise these.
	\item Nodes in all clusters alter their carrier phase by the phase offset experienced and broadcast by the corresponding representative node (broadcast).\\
	Let $\zeta_{i}=\Re\left(m(t)\mbox{RSS}_ie^{j2\pi f_ct(\gamma_i+\phi_i+\psi_i)}\right)$ and $\zeta_{i}'=\Re\left(m(t)\mbox{RSS}_ie^{j2\pi f_ct(\gamma_i'+\phi_i+\psi_i)}\right)$ be the carrier signals of representative node $i$ from cluster $\iota$ before and after synchronisation between representative nodes was achieved.
	A node $h$ from cluster $\iota$ alters its carrier signal $\zeta_{h}=\Re\left(m(t)\mbox{RSS}_he^{j2\pi f_ct(\gamma_h+\phi_h+\psi_h)}\right)$ to $\zeta_{h}'=\Re\left(m(t)\mbox{RSS}_he^{j2\pi f_ct(\gamma_h+\phi_h+\psi_h+\gamma_i-\gamma_i')}\right)$.
	Under ideal conditions, all nodes are now in phase.
	\item To account for synchronisation errors a final synchronisation phase in which all nodes participate concludes the overall synchronisation process.
\end{enumerate}
Fig.~\ref{figureClustering} illustrates this procedure. 
\begin{figure}
\centering
	\includegraphics[width=\textwidth]{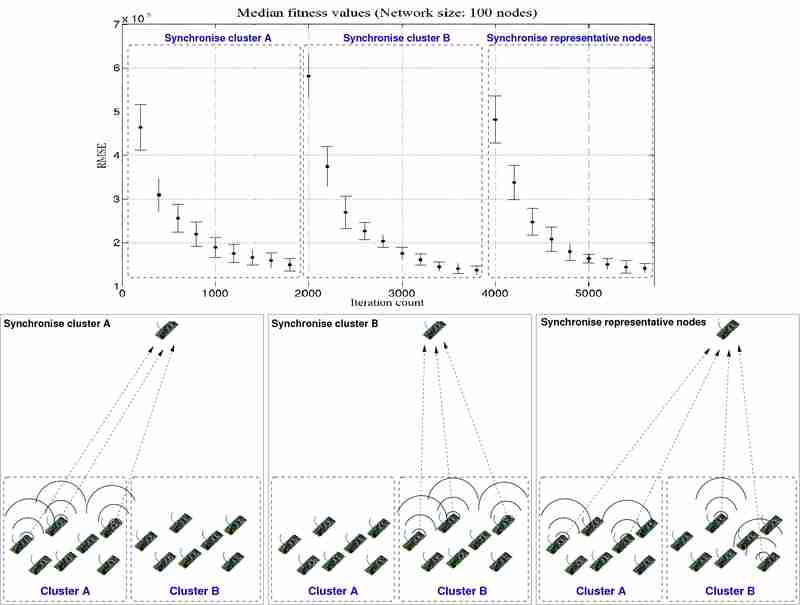}
	\caption{Illustration of the approach to cluster the network of nodes in order to improve the synchronisation time of feedback based closed-loop distributed adaptive beamforming. {\scriptsize(\copyright 2011 IEEE  Published by the IEEE CS, CASS, ComSoc, IES, SPS)}}
	\label{figureClustering}
\end{figure}
The crucial idea of this approach is applied in step~4. 
Since nodes inside a cluster have already been synchronised, they are still in phase after all apply an identical phase offset.
Because this offset is the phase alteration the representative nodes experienced to during their synchronisation, all nodes should be synchronised after this step. 

A potential problem for this approach is phase noise.
Since only one cluster is synchronised at a time, phases of nodes in the inactive clusters experience phase noise and start drifting out of phase due to practical properties of oscillators.
However, we show in section~\ref{sectionResultsBF01} that sufficient synchronisation is possible in the order of milliseconds.
Therefore, we do not consider phase noise an important issue.
Observe that all coordination is initiated by the receiver node so that no inter-node communication is required for coordination.

Depending on the network size, more than one hierarchy stage might be optimal for the synchronisation time and the energy consumption.
To estimate the optimal hierarchy depth and the optimum cluster size, the count of nodes participating in the synchronisation must be computed.
We assume that the nodes themselves do not know the network size. 
This means that the remote receiver derives the network size, calculates optimal cluster sizes and hierarchy depths and broadcasts this information.
In \cite{5811} it was demonstrated that the superimposed sum signal from arbitrarily synchronised nodes is sufficient to estimate the number of transmitters.
We derive the optimum hierarchy depth and cluster size by integer programming in time $\mathcal{O}(n^2)$ (cf. Appendix~B).
The expected synchronisation time is dependent on the cluster count, cluster size and hierarchy depth. 
Since for each cluster a small instance of the original problem is solved, the synchronisation time can be composed from the synchronisation times of individual clusters.

\subsection{An asymptotically optimal algorithm}\label{sectionThreeUnknowns}
Since no local optimum exists in the search space due to its weak multimodality (cf. Appendix~A), the performance of $\Theta(n\cdot k\cdot\log(n))$ of the random search seems weak.
Previously, only one feedback bit was utilised. 
Due to this reduced information, some of the information available at the receiver is not available by the nodes that actually react on the feedback.
When more information is included in the feedback of the receive node we are able to design an asymptotically optimal synchronisation algorithm.

In every iteration the receiver provides additional information over a feedback value so that a node $i$ can learn the optimum phase offset of its own carrier 
\begin{equation}
\zeta_{i}=\Re\left(m(t)\mbox{RSS}_ie^{j2\pi f_ct(\gamma_i+\phi_i+\psi_i)}\right)	\nonumber
\end{equation}
relative to the superimposed sum signal 
\begin{equation}
\zeta_{\mbox{\footnotesize sum}\smallsetminus i}=\Re\left(m(t)e^{j2\pi f_ct}\sum_{o\in[1,n];o\not= i}\mbox{RSS}_oe^{j(\gamma_o+\phi_o+\psi_o)}\right)	\nonumber
\end{equation}
of all other nodes, provided that the latter does not change significantly.
$\zeta_{\mbox{\footnotesize sum}\smallsetminus i}$ is a sinusoidal signal.
The feedback is maximal when $\zeta_i$ and $\zeta_{\mbox{\footnotesize sum}\smallsetminus i}$ have identical phase offset at a receiver.
With increasing phase offset 
\begin{equation}
\left|\left(\gamma_i+\phi_i+\psi_i\right)-\left(\gamma_{\mbox{\footnotesize sum}\smallsetminus i}+\phi_{\mbox{\footnotesize sum}\smallsetminus i}+\psi_{\mbox{\footnotesize sum}\smallsetminus i}\right)\right|\nonumber
\end{equation}
the feedback value decreases symmetrically.
Consequently, the feedback function has the form $\mathcal{F}(\gamma_i)=A\sin\left(\gamma_i+\varPhi\right)+c$.
This is an equation with the three unknowns $A$ (amplitude), $\varPhi$ (phase offset of $\mathcal{F}$) and the additive term $c$ so that a node $i$ can calculate it with three distinct measurements.
Fig.~\ref{figureOneSender}  illustrates the accuracy of this procedure for 100 transmitters.
\begin{figure}
\centering
	\includegraphics[scale=.25]{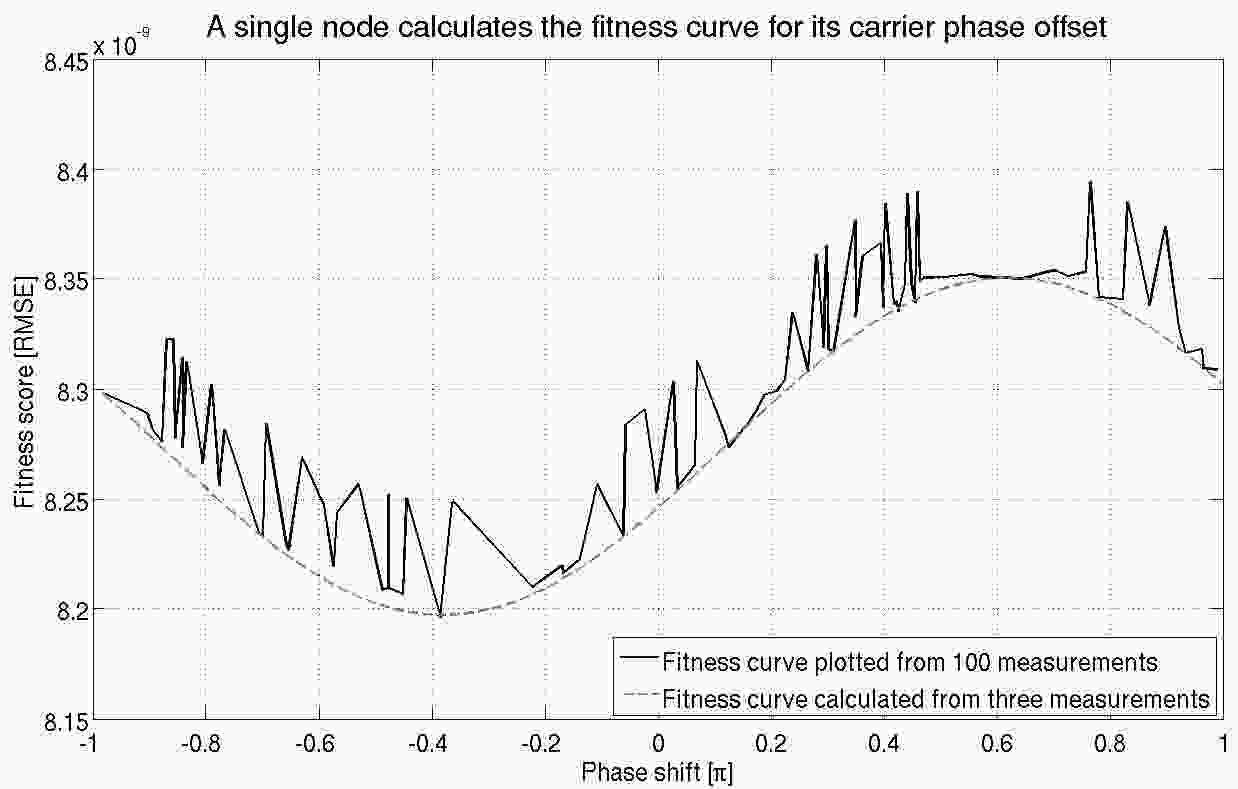}
	\caption{Deviation of the feedback curve calculated from three measurements to the feedback curve plotted from 100 measurements. {\scriptsize(\copyright 2011 IEEE  Published by the IEEE CS, CASS, ComSoc, IES, SPS)}}
	\label{figureOneSender}
\end{figure}
The root of the mean square error (RMSE) is calculated as  
\begin{equation}
RMSE=\sqrt{\sum_{t=0}^{\tau}
		\frac{\left(
				\zeta_{\mbox{\footnotesize sum}}+\zeta_{\mbox{\footnotesize noise}}-\zeta_{\mbox{\footnotesize opt}}
			\right)^2}{n}
		}.\label{equationRMSEBF01}
\end{equation}
Here, $\tau$ is chosen to cover several signal periods.

For the optimisation process, a node will during each of four subsequent iterations either alter the phase offset of its carrier signal or sustain it for all four iterations.
The probability to alter the phase offset should be low as, for instance, $\frac{1}{n}$.
A node that decides to alter its phase offset, will do this three times to measure feedback values for distinct phase offsets, derive with these measurements the feedback function, alter its phase offset accordingly and finally transmit a fourth time to obtain the amount by which the achieved feedback value deviates from the expected value.
If the deviation is small, the node will not alter its phase further since the current phase offset is considered optimal.
All other nodes then adapt the probability to alter their carrier phase so that one node alters the phase of its carrier signal on average per iteration (for instance from $\frac{1}{n}$ to $\frac{1}{n-1}$).

As nodes are chosen according to a random process, it is possible that more than one node simultaneously alters its phase offset. 
In this case, the node's conclusions on the impact of their phase-alteration on the feedback value are biased.
Therefore, in the fourth measurement, when the measured value deviates significantly from the expected feedback a node concludes that it was not the only one to alter its phase and reverses its decision.

In our measurements, the deviation of the calculated feedback curve did not exceed $0.6\%$ when only one node adapts its phase offset.
With two nodes simultaneously adapting their phase offset we already experienced a deviation of approximately $1.5\%$.
As this procedure is guided purely by the feedback broadcast by the receiver, inter-node communication is not required.

Asymptotically, the synchronisation time of this algorithm is $\Theta(n)$ since on average the count of carrier signals that are in phase increases by $1$ in each iteration.
Further performance improvements can be achieved when nodes utilise only three subsequent iterations and acquire the first measurement from the last transmission of the preceding three subsequent iterations.

The asymptotic synchronisation time derived for this approach is optimal when we assume that individual nodes have to compute their optimal carrier phase offset independently since $n$ carrier signals have to be adapted.
When, however, a synchronisation scheme is utilised in which information about the optimum relative carrier phase offsets of all nodes is provided, as e.g. in typical open-loop carrier synchronisation schemes (cf.~\cite{5923}), the asymptotic synchronisation time can be further reduced. 

This improved carrier synchronisation scheme can be applied in any scenario in which a rich feedback as, for instance, the SNR can be provided.
It is, however, not applicable when only binary feedback is provided by the receiver. 
When, for example, high noise and interference would force an impractically complex error correction scheme, it might be beneficial to utilise the one bit feeback based carrier synchronisation instead.

\subsection{Simulation studies}\label{sectionResultsBF01}
We have implemented the scenario of distributed adaptive beamforming in Matlab to obtain a better understanding of the impact of environmental parameters and algorithmic configurations.
In particular, the effect of distinct probability distributions as well as the count of transmitters and the transmission distance are considered.
In these simulations, 100 transmit nodes are placed uniformly at random on a $30m\times30m$ square area.
The receiver is located $30m$ ($100m, 200m, 300m$) above the centre of this area.
Receiver and transmit nodes are stationary.
Simulation parameters are summarised in Table~\ref{tableSimulationConfig}.
\begin{table}
\centering
\begin{footnotesize}
\caption{Configuration of the simulations. $P_{rx}$ is the the received signal power, $d$ is the distance between transmitter and receiver and $\lambda$ is the wavelength of the signal {\scriptsize(\copyright 2011 IEEE  Published by the IEEE CS, CASS, ComSoc, IES, SPS)}}
\begin{tabular}[b]{l|c}\hline
	Property & Value\\\hline
	Node distribution area & $30m\times30m$\\
	Location of the receiver & $(15m,15m,30m)$\\
	Mobility & stationary nodes\\
	Base band frequency & $f_{base}=2.4$ GHz\\
	Transmission power of nodes  & $P_{tx}=1$ mW \\
	Gain of the transmit antenna  & $G_{tx}=0$ dB\\
	Gain of the receive antenna & $G_{rx}=0$ dB \\
	Iterations per simulations& 6000 \\
	Identical simulation runs & 10\\
	Random noise power \cite{062} & $-103$ dBm \\
 	Pathloss calculation ($P_{rx}$)& \begin{minipage}{2.64cm}$P_{tx}\left(\frac{\lambda}{4\pi d}\right)^2 G_{tx} G_{rx}$ \end{minipage}\\\hline
\end{tabular}
\label{tableSimulationConfig}
\end{footnotesize}	
\end{table}

Frequency and phase stability are considered perfect.
We derived the median and standard deviation from 10 simulation runs.
One iteration consists of the nodes transmitting, feedback computation, feedback transmission and feedback interpretation by transmitters.
It is possible to perform these steps within few signal periods so that the time consumed for 6000 iterations is in the order of milliseconds for a base band signal frequency of $2.4$ GHz.
Signal quality is measured by the RMSE of the received signal to an expected optimum signal as detailed in equation~(\ref{equationRMSEBF01}).
The optimum signal is calculated as a perfectly aligned and properly phase shifted received sum signal from all transmit sources.
For the optimum signal, noise is disregarded.

Fig.~\ref{figureUniformDistribution-ReceivedSignal} depicts the optimum carrier signal, the initial received sum signal and the synchronised carrier after $6000$ iterations when carrier phases are altered with probability $\frac{1}{n}$ in each iteration according to a uniform distribution.
\begin{figure*}\centering
	\subfloat[Received sum signal from 100 transmit nodes without synchronisation and after 6000 iterations]{
		\includegraphics[scale=.185]{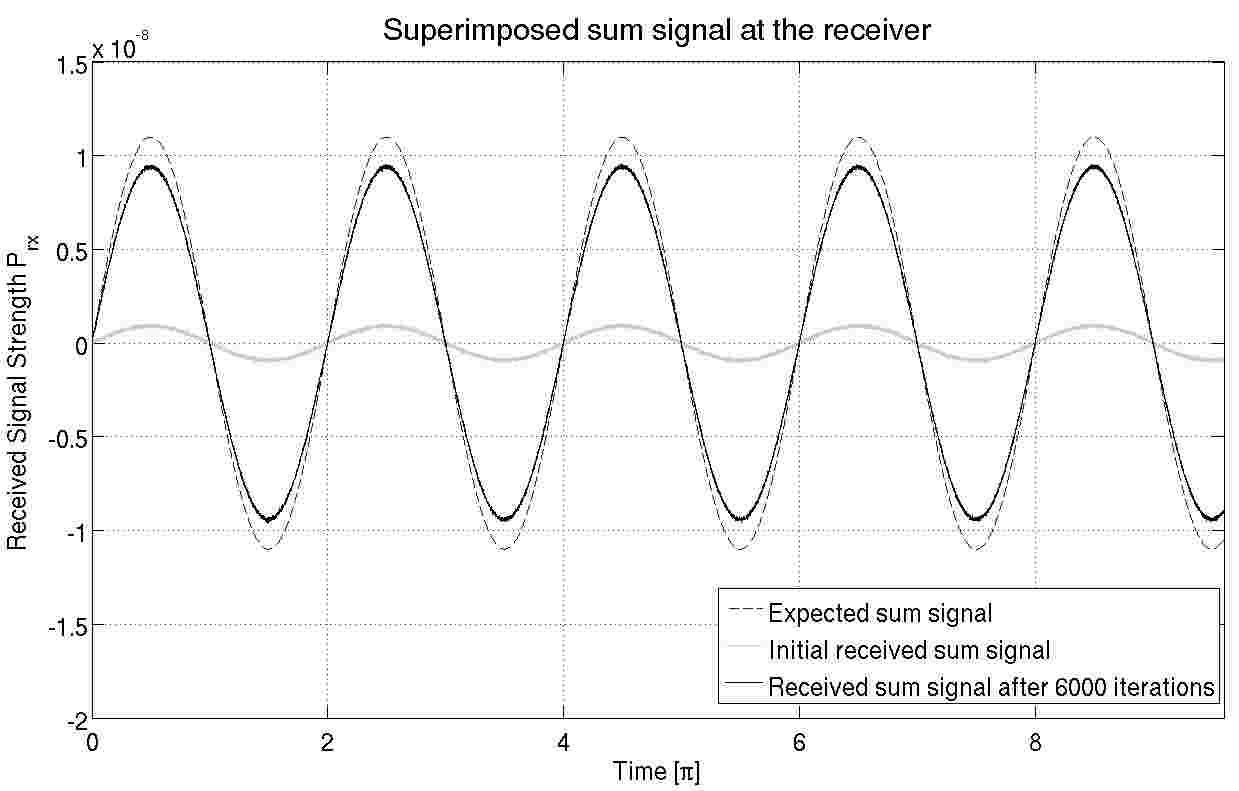}
	 	\label{figureUniformDistribution-ReceivedSignal}}
	\subfloat[Evolution of the phase adaptation process]{
		\includegraphics[scale=.185]{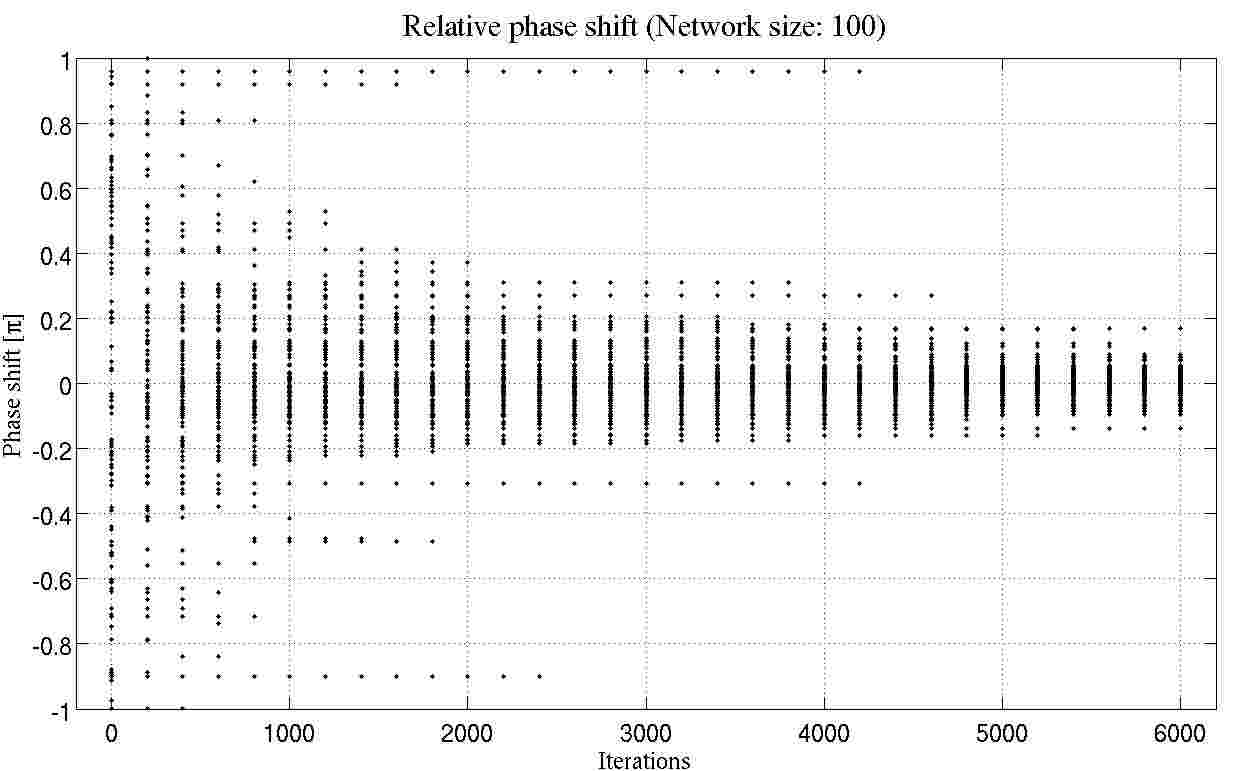}
		\label{figureUniformDistribution_PhaseAdaptation_1}
	}	
	\caption{Simulation results for a simulation with 100 transmit over 6000 iterations of the random optimisation approach to distributed adaptive beamforming in wireless sensor networks. {\scriptsize(\copyright 2011 IEEE  Published by the IEEE CS, CASS, ComSoc, IES, SPS)}}
\end{figure*}
In Fig.~\ref{figureUniformDistribution_PhaseAdaptation_1}, the phase offset of received signal components for an exemplary simulation run with the same parameters are illustrated.
We observe that after $6000$ iterations about $98\%$ of all carrier signals converge to a relative phase offset of about +/- $0.1\pi$.
The median of all variances of the phase offsets for simulations with this configuration is $0.2301$ after $6000$ iterations. 
The actual synchronisation time is dependent on the time to complete a single iteration. 
In each iteration, a synchronisation signal is transmitted, the received sum signal is analysed, feedback is calculated, broadcast to the network and interpreted by transmit nodes.
While the processing speed  might be improved with improved hardware, the round trip time of the signal poses a definite lower bound for the time a single iteration lasts. 
At a distance of 30 meters, for instance, we can not hope to complete a single iteration in less than $0.2\mu s$. 

\subsubsection{Uniform vs. normal distribution}
Distributed adaptive beamforming in wireless sensor networks has been studied in the literature according to various random phase alteration processes.
The authors in \cite{Mudumbai_2009,5919,Barriac_2004} report good results when the probability $p_\gamma$ to alter the phase of a single carrier signal in one iteration is $1$ for all nodes and the phase offset is chosen according to a normal distribution.
The variance $\sigma_\gamma^2$ applied is not reported.
In \cite{4019,4020} $p_\gamma$ was set to $\frac{1}{n}$ for each one of the $n$ nodes while the phase is altered according to a uniform distribution.

For both, uniform and normal distributed processes, we consider several values for $p_\gamma$ and $\sigma_\gamma^2$.
Generally, we achieved good performance when modifications in one iteration were small.
For the uniform distribution this translates to $p_\gamma=\frac{1}{n}$.
For the normal distribution, good results are achieved when $\sigma_\gamma^2$ and $p_\gamma$ are balanced so that the modification to the overall sum signal is small.
With increasing $p_\gamma$ good results are achieved with decreasing $\sigma_\gamma^2$.
Fig.~\ref{figureNormalDistribution_var001} depicts the results for $p_\gamma=\frac{1}{n}$ and $\sigma_\gamma^2=0.5\pi$
\begin{figure}\centering
	\includegraphics[width=.7\textwidth]{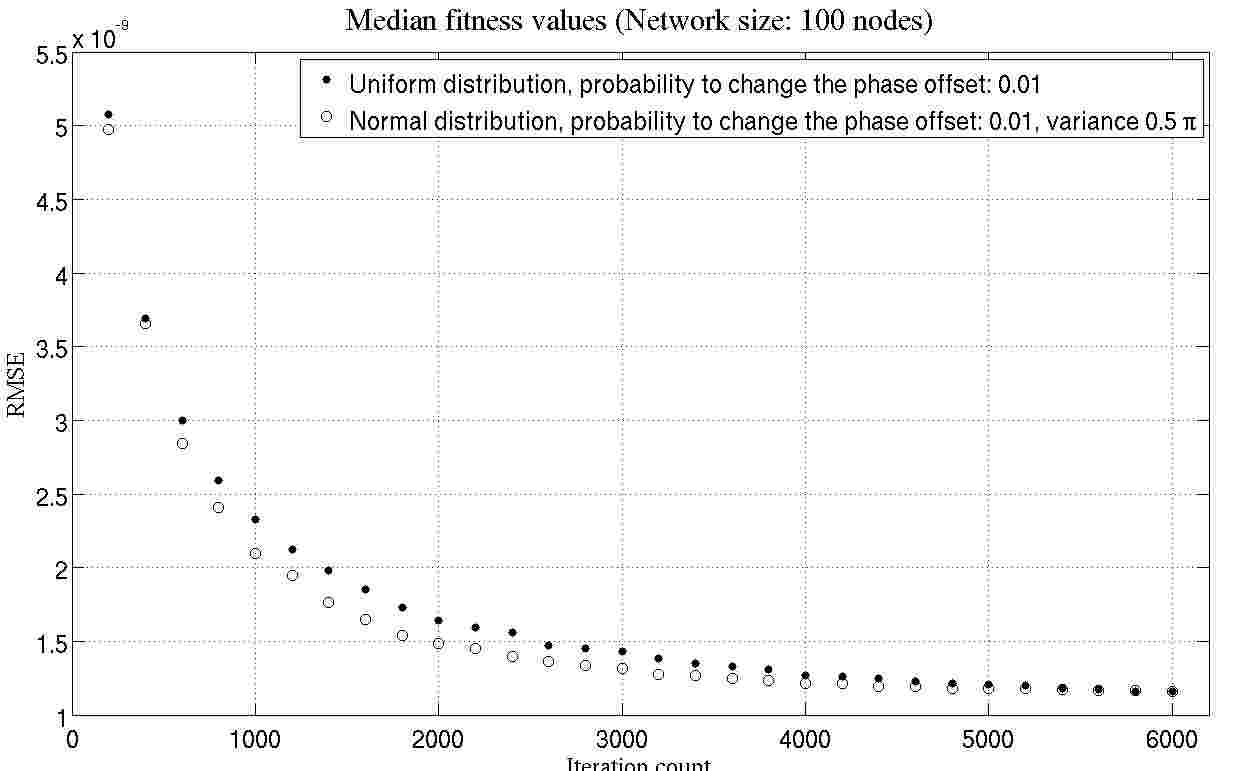}
	\caption{Performance of normal and uniform distributions for a network size of 100 nodes and $p_\gamma=0.01,\sigma_\gamma^2=0.5\pi$. {\scriptsize(\copyright 2011 IEEE  Published by the IEEE CS, CASS, ComSoc, IES, SPS)}}
	\label{figureNormalDistribution_var001}
\end{figure}

The figure shows the median RMSE value achieved in $10$ simulations by normal and uniform distributed processes over the course of 6000 iterations.
For ease of presentation, error bars are omitted in this figure.
However, the standard deviation is low for both processes (the standard deviation of this normal distributed process is depicted in Fig.~\ref{figureRayan2-b}).

The normal distributed process has a slightly improved synchronisation performance.
The optimum feedback value reached is, however, identical.

\subsubsection{Impact of the network size}
When the count of nodes that participate in the synchronisation is altered, this also impacts the performance of this process (cf. section~\ref{sectionAsymptoticBounds}).
We conducted several simulations with network sizes ranging from 20 to 100 nodes.
Fig.~\ref{figureUniformDistribution-NetDecrease} depicts the performance of several synchronisation processes with varying network sizes.
\begin{figure}
\centering
	\includegraphics[width=.7\textwidth]{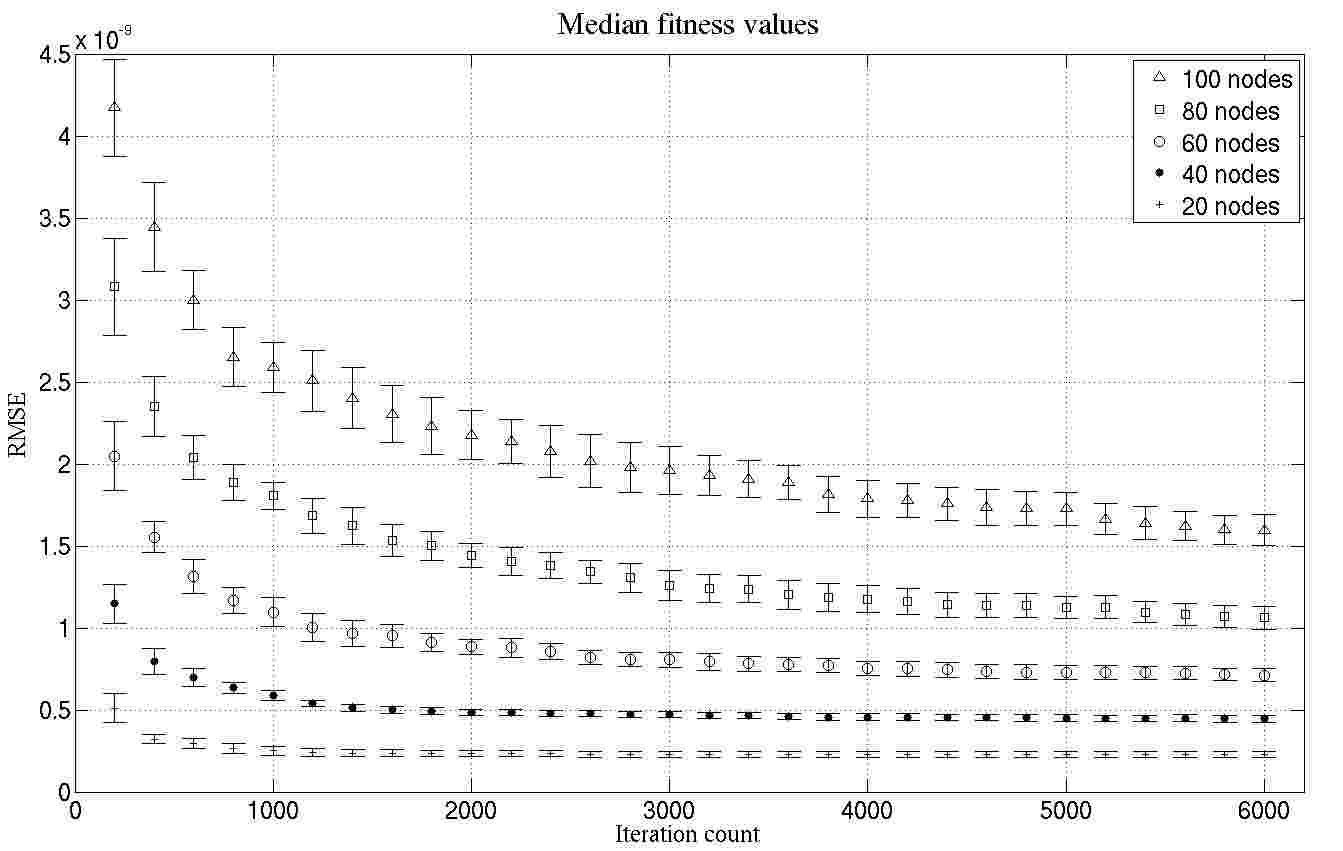}
	\caption{The synchronisation performance for various network sizes in a uniformly distributed process with $p_\gamma=0.05$. {\scriptsize(\copyright 2011 IEEE  Published by the IEEE CS, CASS, ComSoc, IES, SPS)}}
	\label{figureUniformDistribution-NetDecrease}
\end{figure}

In these simulations, we set $p_\gamma=0.05$ and utilised a uniformly distributed phase alteration process.
We see that the maximum feedback value achieved is lower for smaller network sizes.
This is due to the RMSE measure that compares the achieved sum signal to an expected optimum superimposed signal.
As the count of participating nodes diminishes, also the amplitude of the optimum signal decreases.
As expected, the optimum value is reached earlier for smaller network sizes.

\subsubsection{Impact of the transmission distance}
We are also interested in the performance of distributed adaptive beamforming when the distance between the network and a receiver is increased.
For a uniformly distributed phase alteration process with $p_\gamma=\frac{1}{n}$ we increase the transmission distance successively.
Fig.~\ref{figureIncreasedDistance} depicts the phase coherency achieved and the received sum signal for various transmission distances.
\begin{figure}
\centering
	\subfloat[Receiver distance: 100 meters -- Received RF signal]{\includegraphics[width=0.48\textwidth]{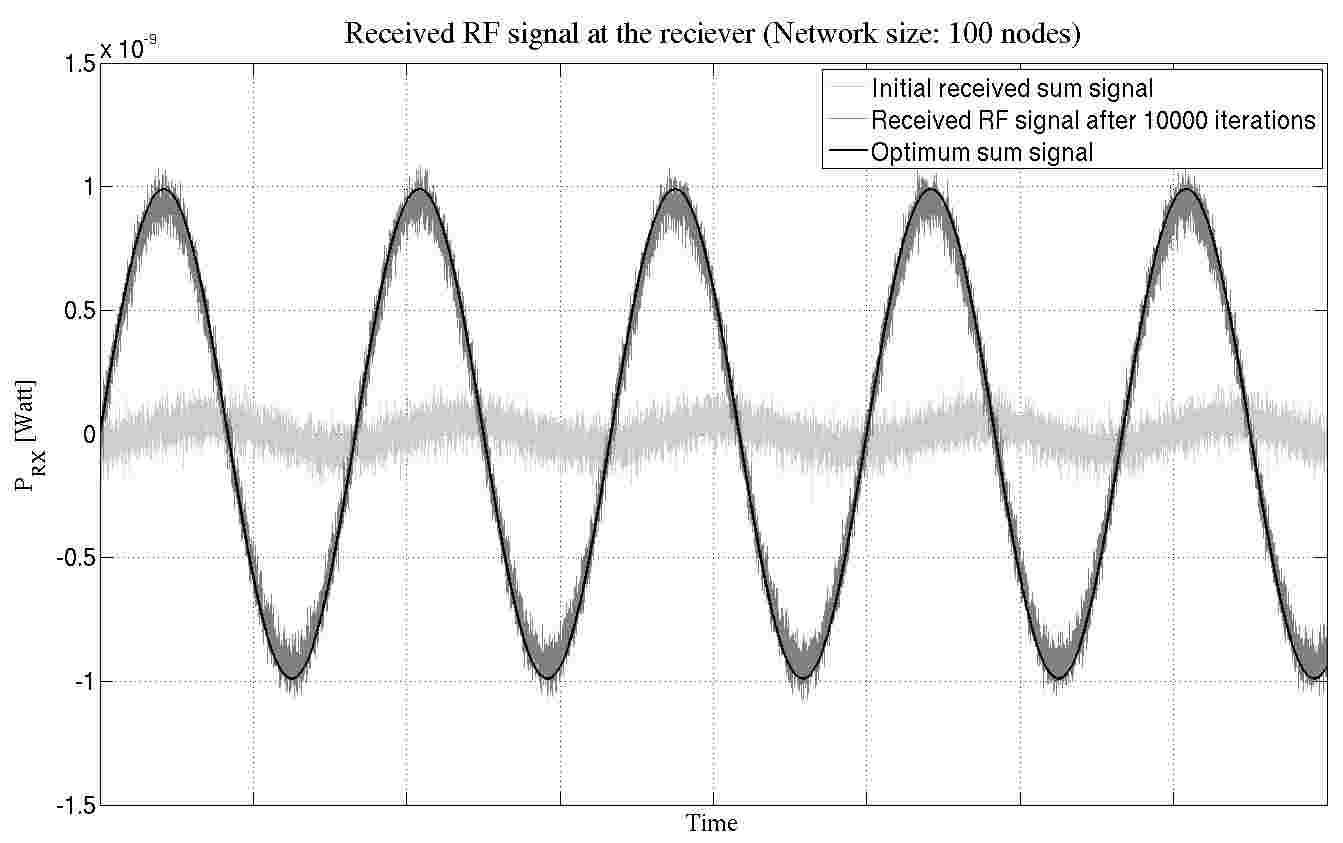}
	\label{label5}}
	\subfloat[Receiver distance: 100 meters -- Relative phase shift of signal components]{\includegraphics[width=0.48\textwidth]{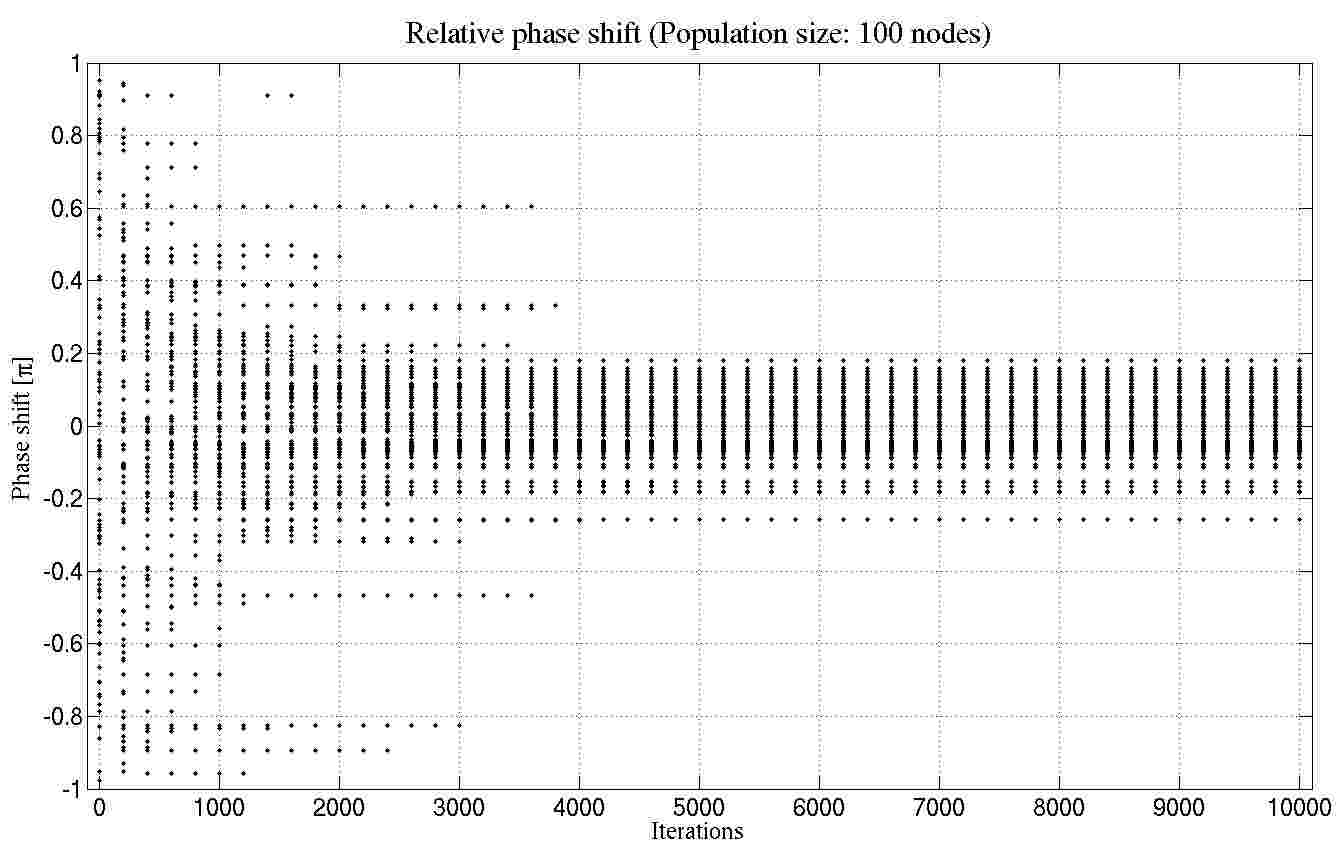}
	\label{label6}}
	
	\subfloat[Receiver distance: 200 meters -- Received RF signal ]{\includegraphics[width=0.48\textwidth]{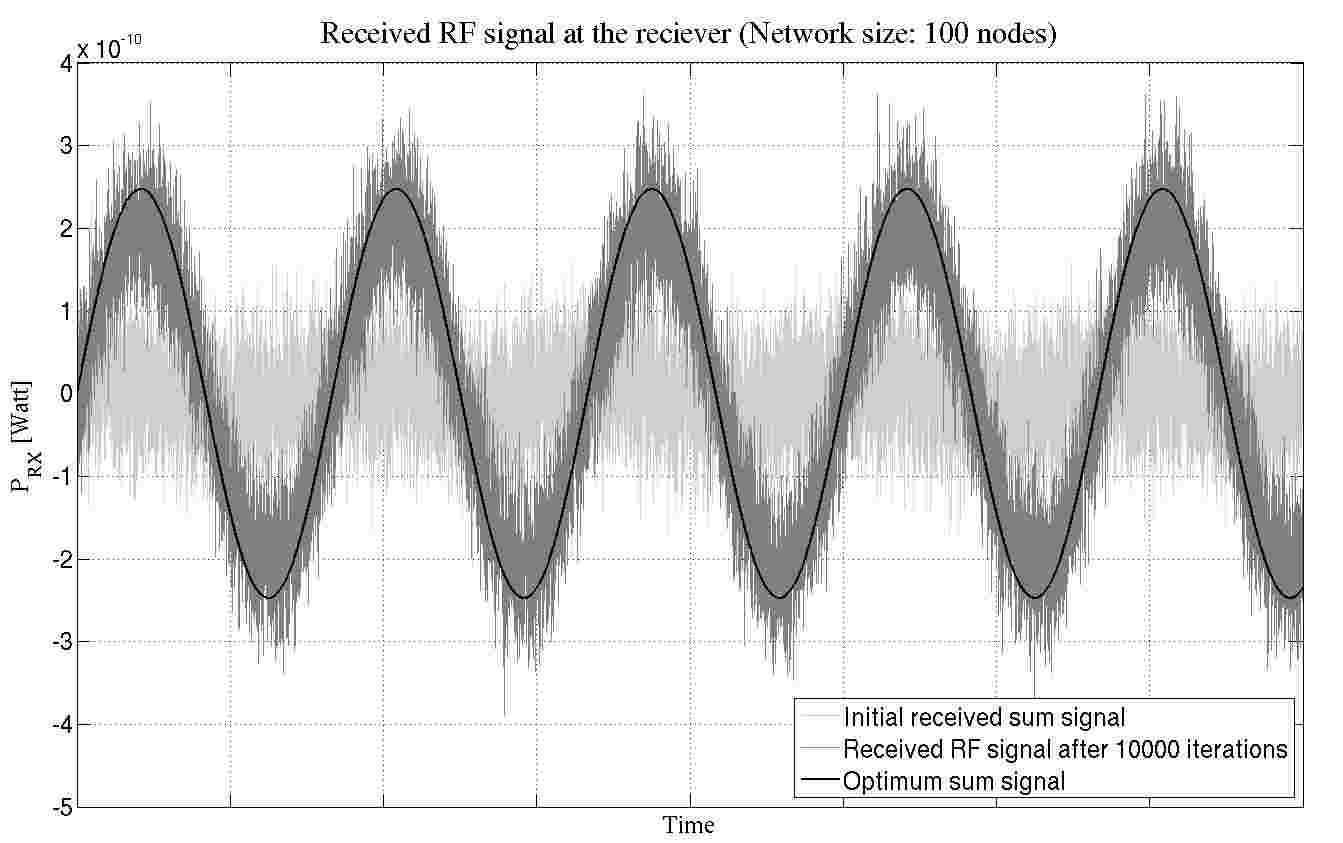}
	\label{label1}}
	\subfloat[Receiver distance: 200 meters -- Relative phase shift of signal components ]{\includegraphics[width=0.48\textwidth]{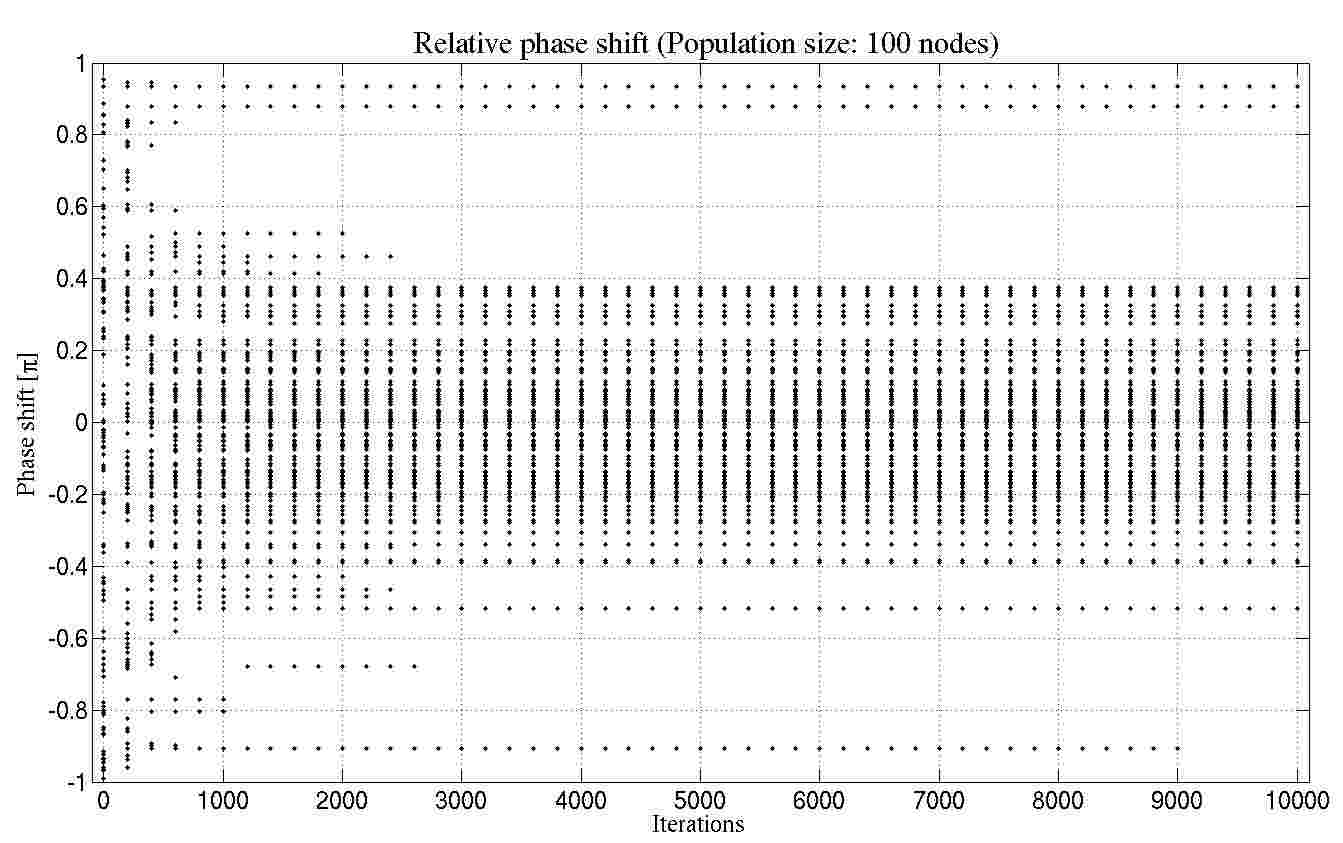}
	\label{label2}}
	
	\subfloat[Receiver distance: 300 meters -- Received RF signal ]{\includegraphics[width=0.48\textwidth]{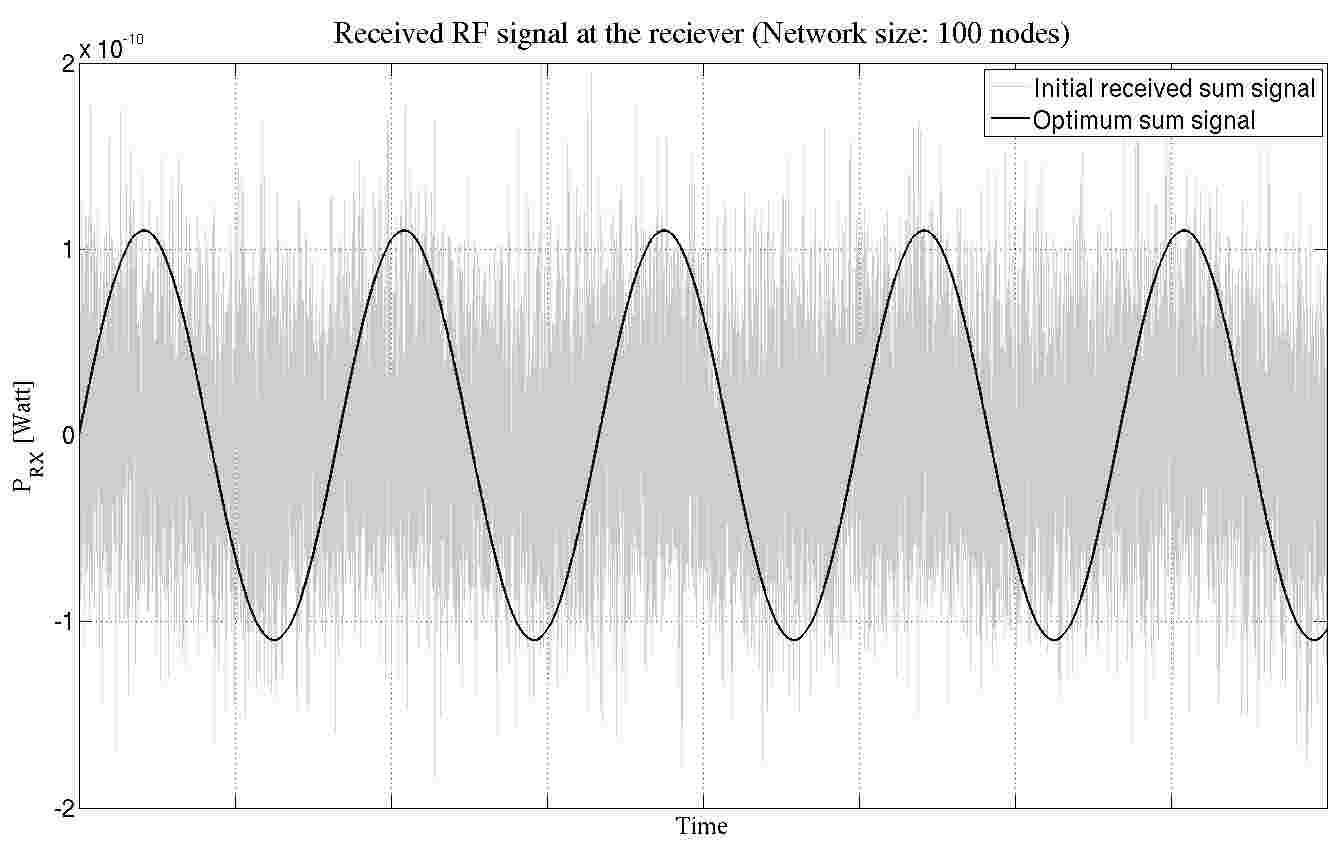}
	\label{label7}}
	\subfloat[Receiver distance: 300 meters -- Relative phase shift of signal components ]{\includegraphics[width=0.48\textwidth]{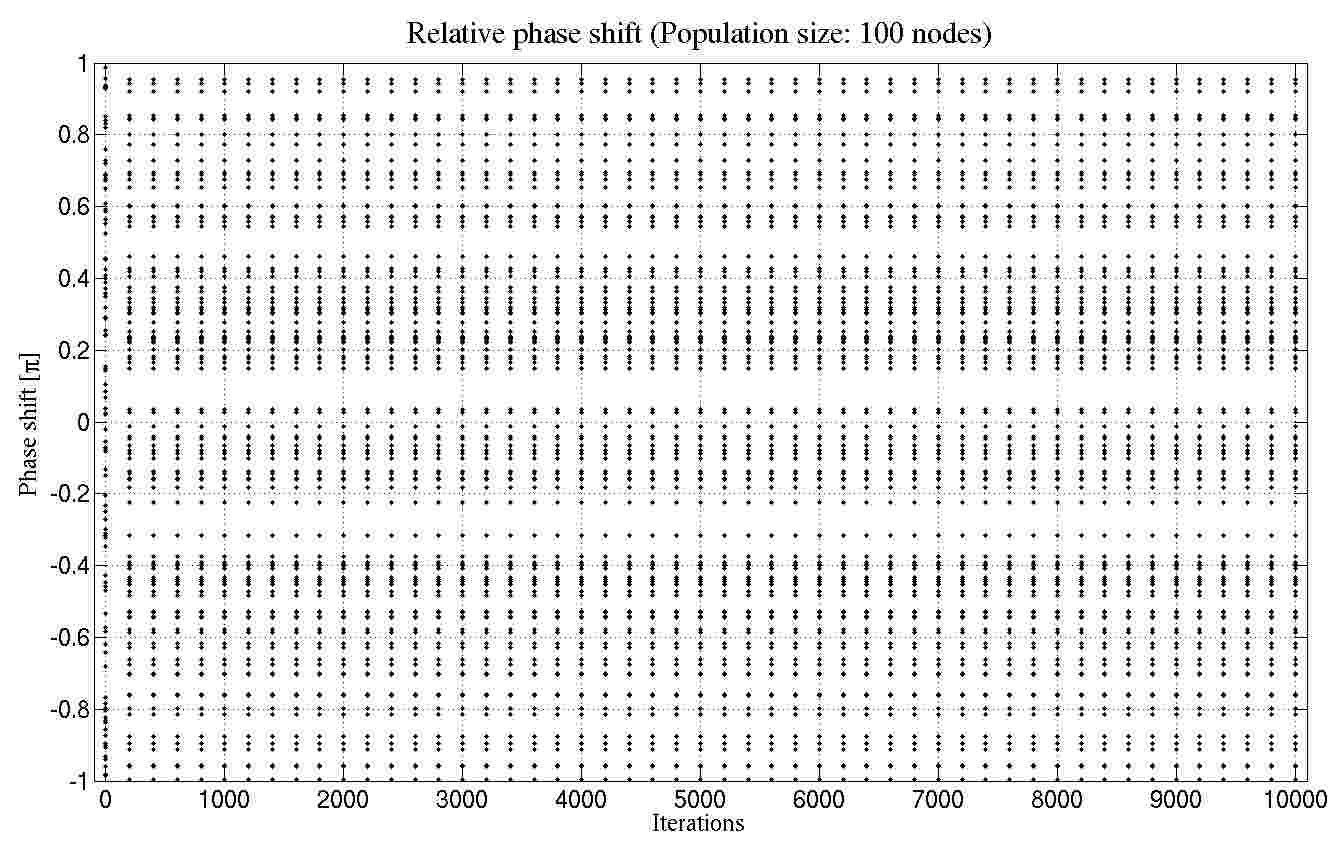}
	\label{label8}}
	\caption{RF signal strength and relative phase shift of received signal components for a network size of 100 nodes after 10000 iterations. Nodes are distributed uniformly at random on a $30m\times30m$ square area and transmit at $P_{TX}=1 mW$ with $p_\gamma=\frac{1}{n}$. {\scriptsize(\copyright 2011 IEEE  Published by the IEEE CS, CASS, ComSoc, IES, SPS)}}
	\label{figureIncreasedDistance}
\end{figure}

Although the noise power relative to the sum signal increases, synchronisation is possible at about 200 meters distance. 
Observe that in our model with $P_{tx}=1 mW$ we expect a signal strength at the receiver of $0.1 \mu W$ or $-40 dBm$ for each single carrier at this distance.

When the distance is further increased to 300 meters, however, synchronisation is not possible with this configuration, due to the high impact of the noise fluctuation on the received signal.
This  has a higher impact on the signal than the alteration of single carrier signals.

However, when more carrier signals are altered simultaneously, a weak synchronisation is still possible.
Fig.~\ref{figureIncreasedDistance-2} depicts the received carrier signal after 100 iterations for the uniformly distributed process with $p_\gamma=0.2$ and $p_\gamma=0.6$.
We see that the synchronisation quality is improved with increasing $p_\gamma$. 
While the superimposed signal is indistinguishable for $p_\gamma=0.2$, the synchronisation quality increases with $p_\gamma=0.6$.
Although the signal is heavily distorted, the carrier can be extracted.
\begin{figure}
\centering
	\subfloat[Receiver distance: 300 meters, $p_\gamma=0.2$]{\includegraphics[width=0.48\textwidth]{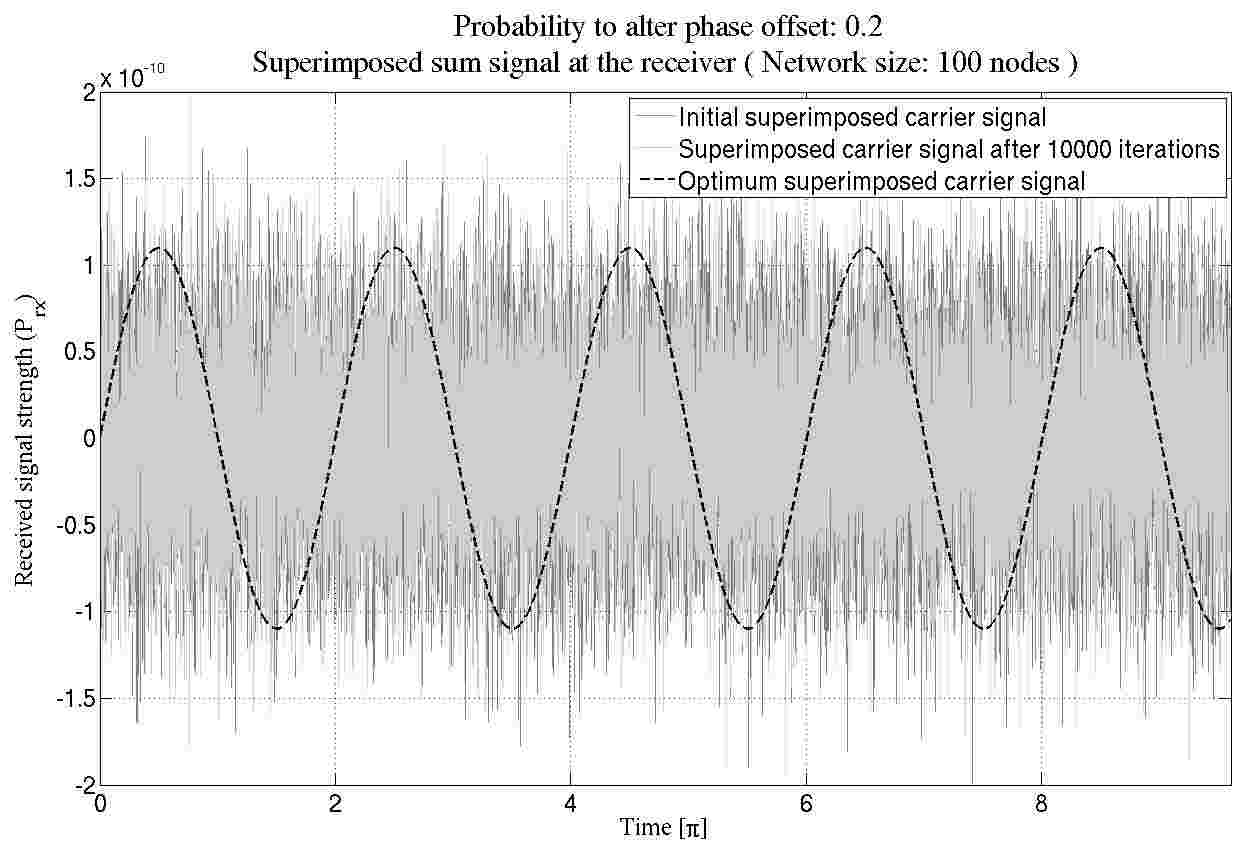}
	\label{label11}}
	\subfloat[Receiver distance: 300 meters, $p_\gamma=0.6$]{\includegraphics[width=0.48\textwidth]{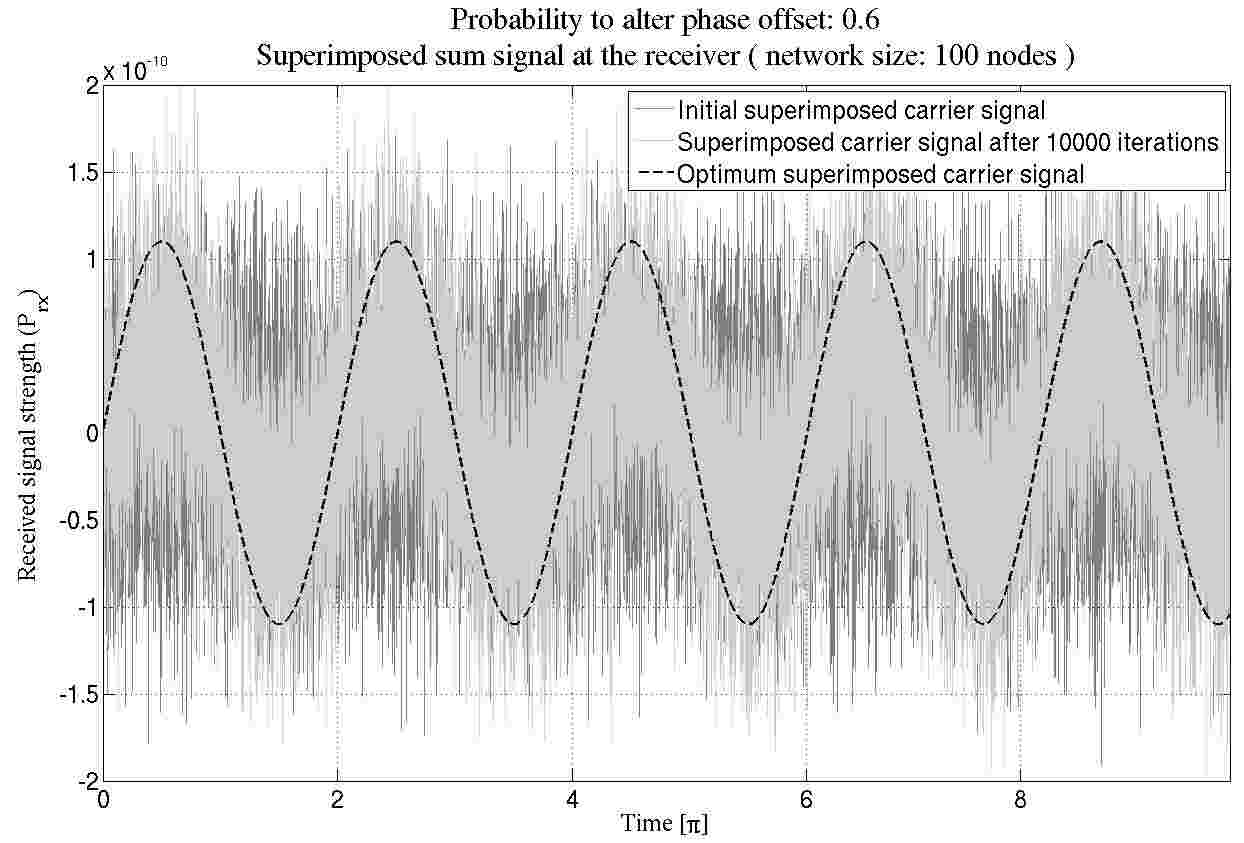}
	\label{label3}}
	\caption{RF signal strength and relative phase shift of received signal components for a network size of 100 nodes after 10000 iterations. Nodes are distributed uniformly at random on a $30m\times30m$ square area and transmit at $P_{TX}=1 mW$. {\scriptsize(\copyright 2011 IEEE  Published by the IEEE CS, CASS, ComSoc, IES, SPS)}}
	\label{figureIncreasedDistance-2}
\end{figure}

\subsubsection{Utilisation of additional feedback information}
We also conducted simulations in which our implementation of the asymptotically optimal algorithm described in section~\ref{sectionThreeUnknowns} is compared to the classical process with normal distributed phase alterations.
When optimum phase offsets are calculated by solving multivariable equations at the transmit nodes, the synchronisation performance can be greatly improved as detailed in section~\ref{sectionThreeUnknowns}.
Fig.~\ref{figureRayan2} depicts the performance improvement achieved by solving multivariable equations to determine the feedback function compared to a global random search approach.
We observe that the global random search heuristic is outperformed already after about 1000 iterations and the feedback value reached is greatly improved.
\begin{figure}\centering
	\subfloat[Phase offset achieved by the proposed optimisation algorithm for distributed adaptive beamforming in WSNs]{
		\includegraphics[width=0.48\textwidth]{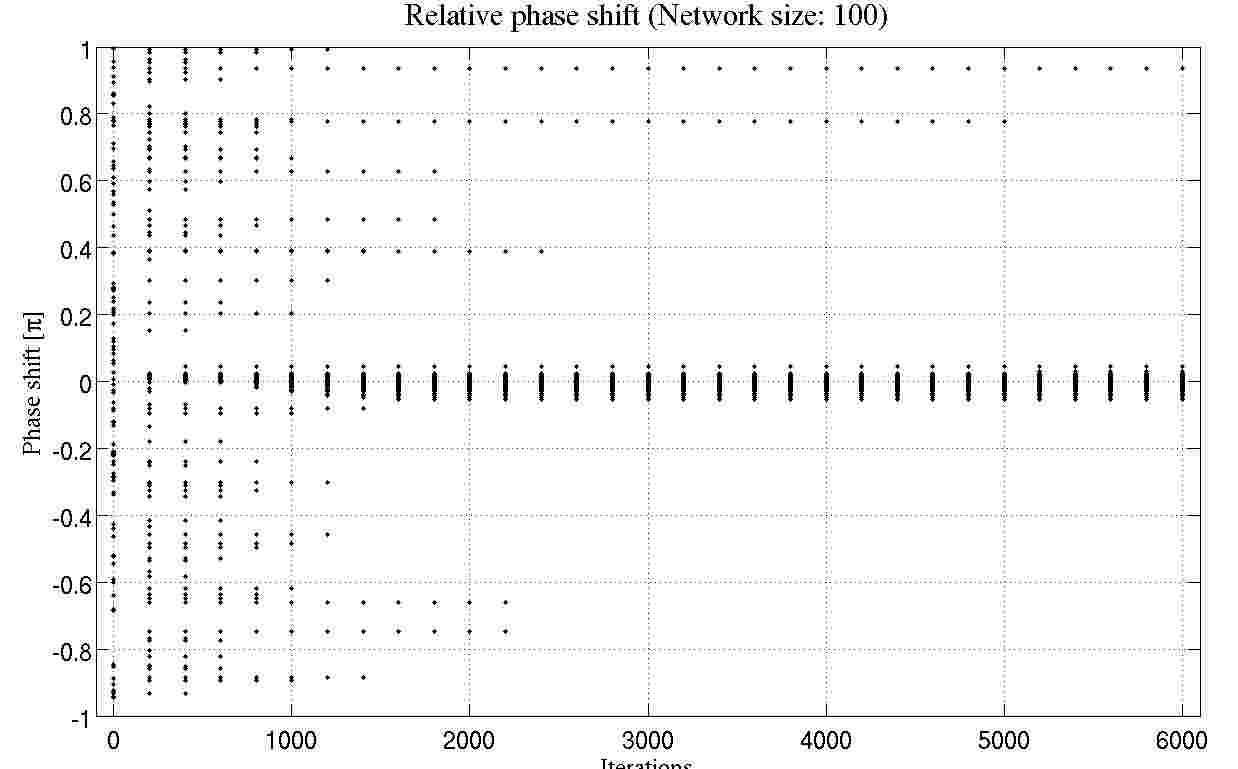}
		}
	\subfloat[Performance of the proposed optimisation algorithm for distributed adaptive beamforming in WSNs]{
		\includegraphics[width=0.48\textwidth]{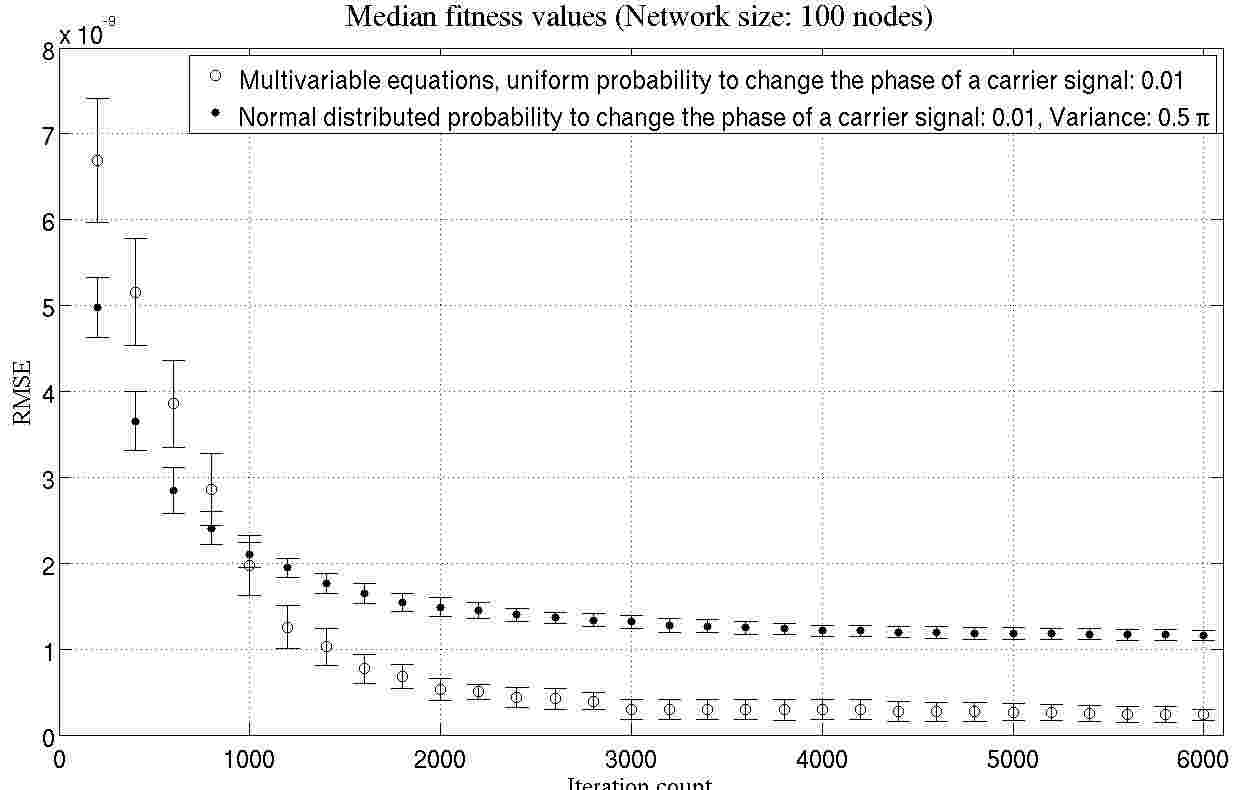}
		\label{figureRayan2-b}
		}
	\caption{Distributed adaptive beamforming with a network size of 100 nodes where phase alterations are drawn uniformly at random. Each node adapts its carrier phase offset with probability $0.01$ in one iteration. In this case, multivariable equation are solved to determine the optimum phase offset of the carrier signal. {\scriptsize(\copyright 2011 IEEE  Published by the IEEE CS, CASS, ComSoc, IES, SPS)}}
	\label{figureRayan2}
\end{figure}

The phase offset of distinct nodes is within $+/- 0.05 \pi$ for up to $99\%$ of all nodes.

\subsection{Near realistic instrumentation}\label{sectionJulian}
We have utilised USRP software radios (http://www.ettus.com) to model a sensor network capable of distributed adaptive transmit beamforming.
The software radios are controlled via the GNU radio framework (http://gnuradio.org).
The transmitter and receiver modules implement the feedback based distributed adaptive beamforming\footnote{The software for our feedback based closed loop implementation is constantly further improved and extended in student projects. It is currently not recommended for productive environments. If you are interested to receive a copy of the code in order to participate in the development and testing of the implementation please contact sigg@ibr.cs.tu-bs.de. }.
For the superimposed transmit channel and the feedback channel we utilised widely separated frequencies so that the feedback could not impact the synchronisation performance.
We conducted experiments with several transmit frequencies of nodes.
In these experiments we repeatedly synchronised the carrier phases of the three transmit devices with the help of the 1-bit feedback based algorithm described in~\cite{5919,5923} with uniform or normal probability distribution on the phase modulation.
Table~\ref{tableJulian} summarises the configuration and results of two experiments with low and high transmit frequencies of $27$MHz and $2.4$GHz, respectively.
\begin{table}
\centering
	\caption{Experimental results of software radio instrumentations {\scriptsize(\copyright 2011 IEEE  Published by the IEEE CS, CASS, ComSoc, IES, SPS)}}
	\label{tableJulian}
	\begin{tabular}{l|l|l}\hline
	Experimental setting	 & Experiment 1 & Experiment 2 \\\hline
	Separation of TX antennas [m] & $\approx 0.21$ & $\approx 0.3$ \\
	Distance to receiver [m] & $\approx 0.75$ & $\approx 4$ \\
	Transmit RF Frequency [MHz] & $f_{TX}=2400$ & $f_{TX}=27$ \\
	Receive RF Frequency [MHz] & $f_{RX}=902$ & $f_{RX}=902$ \\
	Iterations per experiment & $500$ & $200$ \\
	Mobility & stationary & stationary\\
	Identical experiments & $14$ & $10$ \\
	Transmit devices	 & $4$ & $3$ \\
	Receive devices 	& $1$ & $1$\\
	& &\\
	Algorithmic configuration & \\\hline
	Random distribution & uniform & uniform\\
	Phase alteration probability & 0.25 & 0.33\\
	& &\\
	Hardware & &\\\hline
	Transmit board & RFX2400 & LFTX\\
 	Receive board & RFX900 & RFX900\\
	Gain of receive antenna [dBi] & $G_{RX}=3$ & $G_{RX}=3$ \\
	Gain of transmit antenna [dBi] & $G_{TX}=3$ & $G_{TX}=1.5$ \\
	Median gain ($P_{RX}$) [dB] & $2.19$ & $3.72$\\\hline
	\end{tabular}
\end{table}
After 10 experiments at an RF transmit frequency of $27$MHz we achieved a median gain in the received signal strength of $3.72$dB for three independent transmit nodes after 200 iterations.
In $14$ experiments with $4$ independent nodes that transmit at $2.4$GHz the achieved median gain of the received RF sum signal was $2.19$dB after 500 iterations.

For the transmitters we utilised the clock of the first device for all transmit nodes. 
The receive node utilises its own clock and is therefore not synchronised to any of the transmit nodes. 
Apart from this clock synchronisation no other communication or synchronisation between transmitters was applied.
In future implementations it is possible to utilise GPS for the clock synchronisation.

In a third experiment we altered the transmission distance and the phase alteration variance for a normal distributed random process. 
Fig.~\ref{figureExperimentalSetting} depicts our experimental setting.
\begin{figure}
	\centering
	\includegraphics[width=9cm,height=7cm]{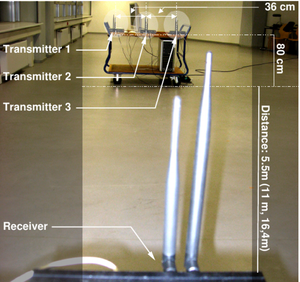}
\caption{Experimental instrumentation of distributed adaptive beamforming among three transmit USRP devices and one receive USRP device. {\scriptsize(\copyright 2011 IEEE  Published by the IEEE CS, CASS, ComSoc, IES, SPS)}}
\label{figureExperimentalSetting}
\end{figure}
Table~\ref{tableInstrumentation} summarises our experimental configuration.
\begin{table}
\centering
	\caption{Configuration of the experiment {\scriptsize(\copyright 2011 IEEE  Published by the IEEE CS, CASS, ComSoc, IES, SPS)}}
	\label{tableInstrumentation}
	\begin{tabular}{l|l}\hline
	Experimental setting & Experiment 3\\\hline
	Separation of transmit antennas [m] & 0.36 \\
	Distance to receive antenna [m] & 5.5 / 11 / 16.4\\
	Transmit frequency [MHz] & $f_{TX}=2400$ \\
	Receive frequency [MHz] & $f_{RX}=902$ \\
	Iterations per experiment & $400$ \\
	Mobility & stationary \\
	Identical experiments & $12$ \\
	Transmit devices & 3\\
	Receive devices & 1\\\\
	
	Algorithmic configuration & \\\hline
	Random distribution of the phase alteration & normal\\
	Phase alteration probability & 0.33 / 0.66 / 1.00\\
	Variance of normal distribution [$\pi$] & $0.25$ / 1\\\\
	
	Hardware & \\\hline
	Transmit board &  RFX2400\\
	Receive board & RFX900\\
	Transmit antenna & VERT2450\\
	Receive antenna& VERT900\\
	Gain of receive antenna [dBi] & $G_{RX}=3$ \\
	Gain of transmit antenna [dBi] & $G_{TX}=3$ \\\hline
	\end{tabular}
\end{table}

Carrier phases have been adapted for each transmit device independently following a normal distributed random process.
We modified the probability to alter the phase offset of one device and the variance for its normal distributed random process as well as the distance between transmit and receive devices.

Some results derived are depicted in Fig.~\ref{figureExperimentalSettingResults}.
In the figure, the mean gain of the received sum signal over all 12 experiments to the initially received sum signal is depicted.
As expected, we observe that the synchronisation process differs for different parameter settings.
Again, best results are achieved when small changes are applied in each iteration. 
Therefore, the experiments in which the phase alteration probability and the variance are small achieve superior results.
\begin{figure}
	\centering
	\subfloat[Mean gain in the signal strength at a transmission distance of 16.4 meters and a variance of the random process of $0.25\pi$]{\includegraphics[width=8cm,height=5cm]{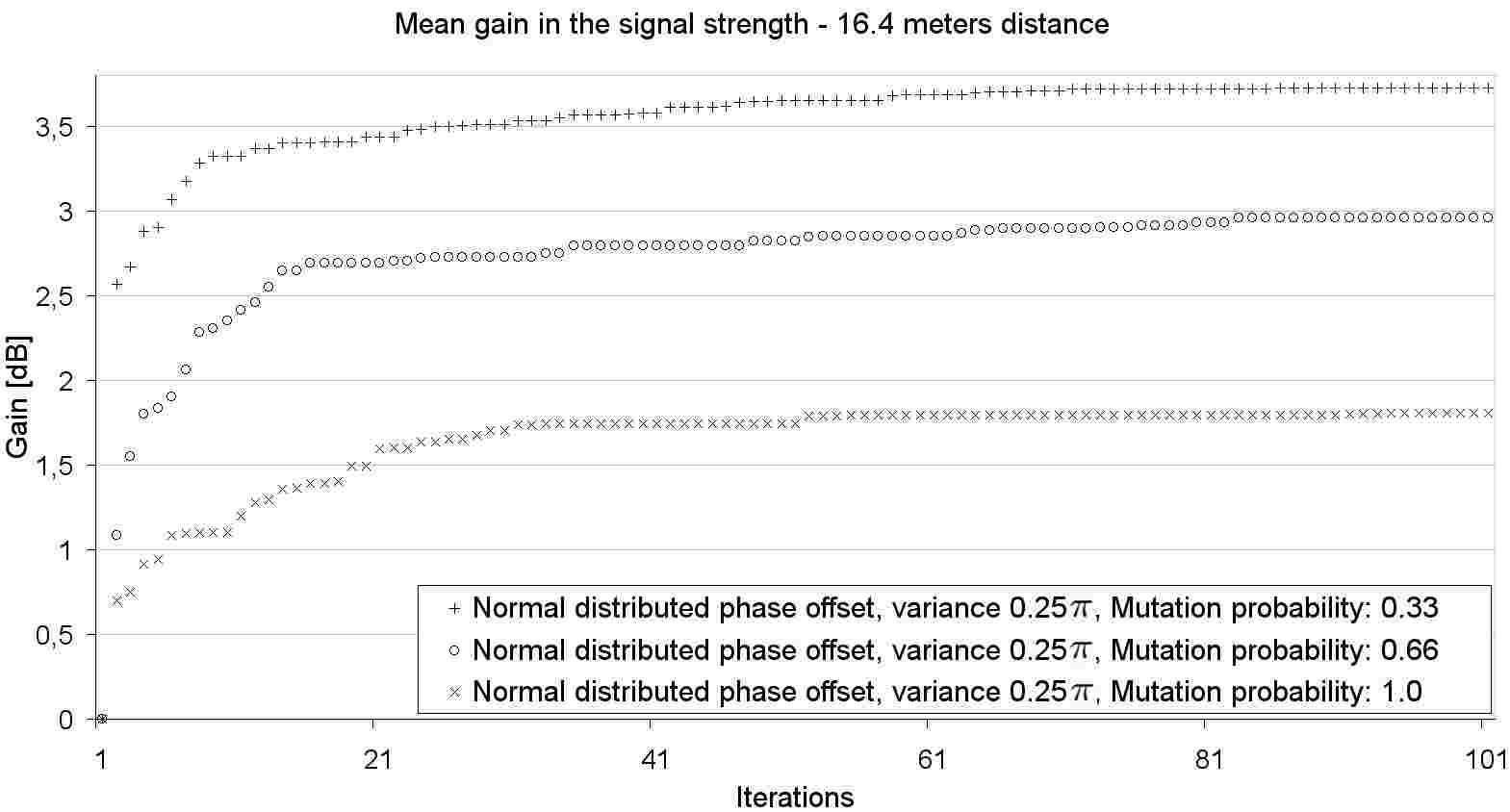}\label{figureE}}
	\subfloat[Mean gain in the signal strength at a transmission distance of 5.5 meters and a variance of the random process of $0.25\pi$ and $\pi$]{\includegraphics[width=8cm,height=5cm]{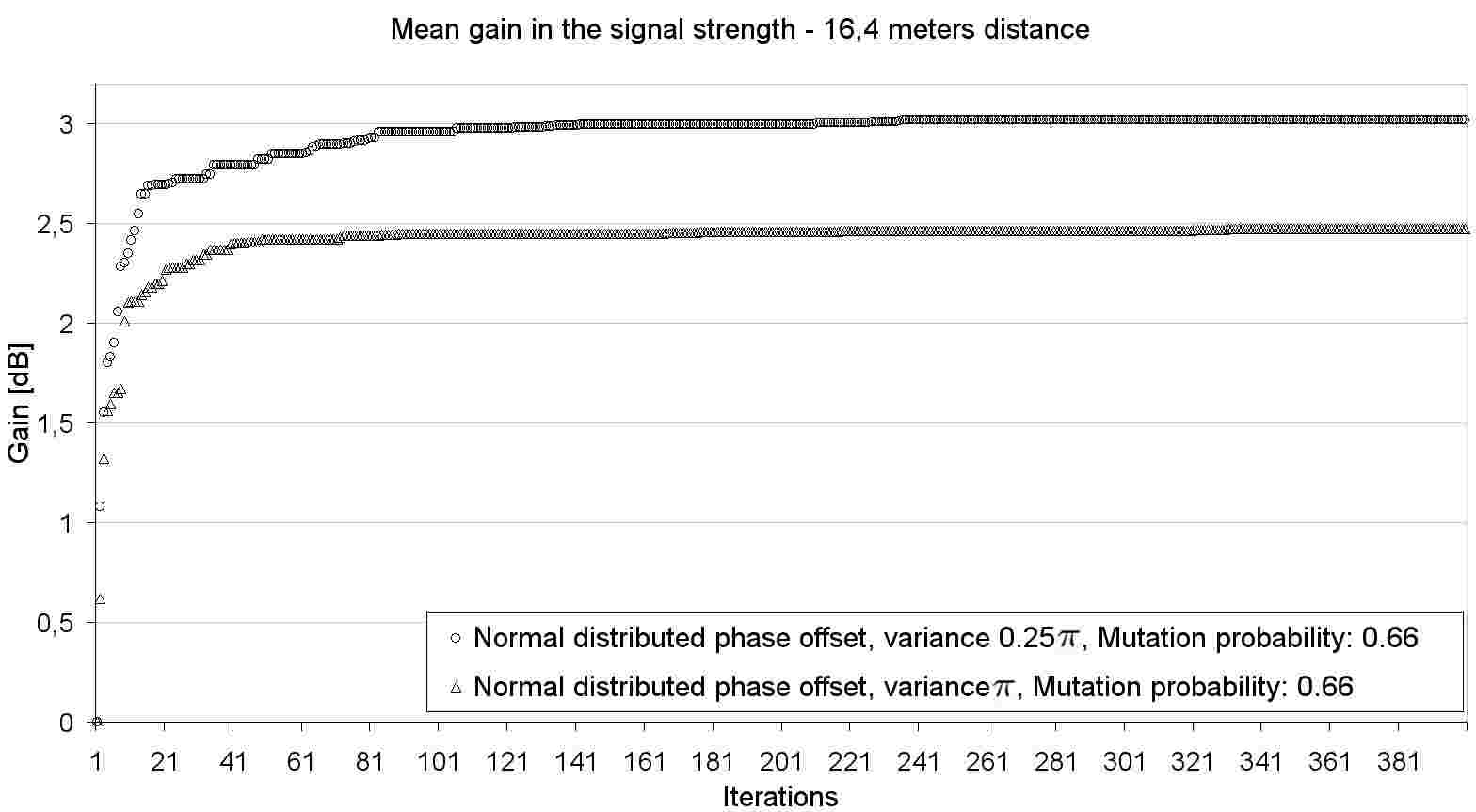}}
	\caption{Mean gain in the signal strength of three collaboratively transmitting devices {\scriptsize(\copyright 2011 IEEE  Published by the IEEE CS, CASS, ComSoc, IES, SPS)}}
	\label{figureExperimentalSettingResults}
\end{figure}

\subsection{Conclusion}\label{sectionConclusionBF01}
We have considered randomised search approaches to solve the problem of distributed adaptive transmit beamforming.
In an analytic consideration an asymptotically tight bound on the expected optimisation time of $\Theta(n\cdot k\cdot\log(n))$ was derived.

Additionally, a protocol to further reduce the optimisation time and energy consumption of distributed adaptive beamforming was introduced.
In this protocol, the problem was divided into sub-problems that were solved iteratively.
Since the decrease in the synchronisation time is greater than the increase in transmission power in smaller clusters, this approach can improve the optimisation time and reduce energy consumption.

Furthermore, an asymptotically optimal algorithm was derived.
For this approach we considered the possibility to estimate the unknown feedback function by an individual node so that an optimisation approach is possible that scales linearly with the network size $n$.
This approach is asymptotically optimal since each carrier signal has to be considered at least once individually in order to find its optimum phase offset.

In mathematical simulations we demonstrated the effect of several configurations for distributed adaptive transmit beamforming with uniform and normal distributed phase alteration methods.
Generally, a low mutation probability translates to a better performance in the phase synchronisation process.
An adaptive probability over the course of the optimisation might further improve the optimisation speed. 
While a moderate mutation probability is beneficial at the beginning of the simulation, a smaller mutation probability shows an improved optimisation speed later in the process.
Also, our implementation of the asymptotically optimal method greatly outperforms the global random search approach in the synchronisation achieved and the optimisation speed.

Finally, in an instrumentation with USRP software radios we demonstrated the feasibility of distributed adaptive transmit beamforming in a concrete implementation with up to four transmitters.

\subsection*{APPENDIX A: On the multimodality of the feedback function}\label{AppendixA}
We easily see that the feedback function is multimodal.
The reason is that, given the search point corresponding to an optimum sum signal $\zeta_{\mbox{\footnotesize opt}}$ we can state another optimum by adding the same phase offset $\gamma'$ to all carrier signals.
In particular, the feedback function is weak multimodal so that no local optimum exists.

\subsubsection{Identical transmit frequencies}
When carrier frequencies among nodes are identical a local optimum exists if we can identify at least one search point $s_{\overline{\zeta}}$ for which all small phase modifications decrease the feedback value, while some larger modifications increase it.
The smallest possible modification is realised when the transmit phase is altered for exactly one carrier signal $\zeta_i$.
Fig.~\ref{figureLocalMinima} illustrates that the feedback of a signal is given by the distance between the rotation angles $\varphi_{\mbox{\footnotesize opt}}$ and $\varphi_i$ of an optimal configuration $s_{\zeta_{\mbox{\footnotesize opt}}}$ and $s_{\zeta_i}$ as $\left|\cos(\varphi_{\mbox{\footnotesize opt}})-\cos(\varphi_i)\right|$.
\begin{figure}
	\centering
	\includegraphics[width=.9\textwidth]{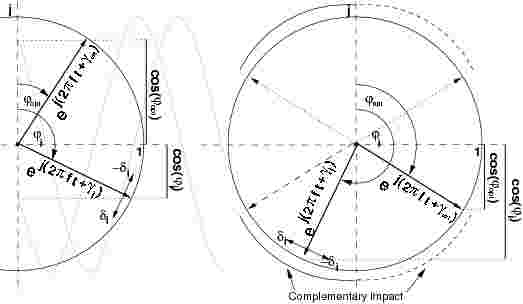}
	\caption{Fitness calculation of signal components. The feedback of the superimposed sum signal is impacted by the relative phase offset of an optimally aligned signal and a carrier signal $i$. {\scriptsize(\copyright 2011 IEEE  Published by the IEEE CS, CASS, ComSoc, IES, SPS)}}
	\label{figureLocalMinima}
 \end{figure}

Compared to $s_{\mbox{\footnotesize opt}}$ no configuration short of the optimum configuration $s_i=s_{\mbox{\footnotesize opt}}$ exists for which the phase offset between signal components is increased for phase offset $\delta_i$ regardless of the sign of $\delta_i$ (cf. Fig.~\ref{figureLocalMinima}).

\subsubsection{Distinct transmit frequencies}
When signal frequencies differ, the feedback function is not affected by phase modifications only.
The reason for this is that we can for every positive contribution to the feedback function also find a negative contribution of the same amount in the common period of $\zeta_{\mbox{\footnotesize opt}}$ and $\zeta_i$.

\subsection*{APPENDIX B: Calculation of optimal hierarchy depth and cluster size}\label{AppendixD}
We estimate the expected optimisation time for a network of size $n$ by $E[T_{\mathcal{P}n}]=c\cdot k\cdot n\cdot\log(n)$ for a suitable constant $c$ and the expected energy consumption by $E[\mathcal{E_P}_n]=c\cdot k\cdot n\cdot\log(n)\cdot \overline{\mathcal{E_P}_n}$ where $\overline{\mathcal{E_P}_n}$ is the energy consumption of all $n$ nodes in one iteration \cite{4020}.
A hierarchy and cluster structure that minimises these formulae when summed over all hierarchy stages is optimal in our sense.
We derive the optimum cluster sizes and hierarchy depths by integer programming.
For a cluster size of $m$ the above formulae have the property 
\begin{eqnarray}
	E[T_{\mathcal{P}n}]=E[T_{\mathcal{P}\frac{n}{m}}]\cdot\frac{n}{m}\cdot E[T_{\mathcal{P}m}] \nonumber\\
	E[\mathcal{E_P}_n]=E[\mathcal{E_P}_\frac{n}{m}]\cdot\frac{n}{m}\cdot E[\mathcal{E_P}_m].\nonumber
\end{eqnarray}
We define the recursion by 
\begin{eqnarray}
	E_{\mbox{\footnotesize opt}}[T_{\mathcal{P}n}]&=&min_m\left[E_{\mbox{\footnotesize opt}}[T_{\mathcal{P}\frac{n}{m}}]\cdot\frac{n}{m}\cdot E_{\mbox{\footnotesize opt}}[T_{\mathcal{P}m}] \right]\nonumber\\
	E_{\mbox{\footnotesize opt}}[\mathcal{E_P}_n]&=&min_m\left[E_{\mbox{\footnotesize opt}}[\mathcal{E_P}_\frac{n}{m}]\cdot\frac{n}{m}\cdot E_{\mbox{\footnotesize opt}}[\mathcal{E_P}_m]\right]\nonumber
\end{eqnarray}
and the start of the recursion by $E_{\mbox{\footnotesize opt}}[T_{\mathcal{P}\eta}]$ and $E_{\mbox{\footnotesize opt}}[\mathcal{E_P}_\eta]$ with $\eta$ being the minimum feasible cluster size when the maximum transmission power and distance are given.
Since $\eta$ is dependent on the distance to the receiver, it can be calculated over the round trip time between the sensor network and the receiver.

The time required for the calculation of the optimum hierarchy depth and cluster sizes is quadratic.
With a network of $n$ nodes, at most $n^2$ distinct terms $E_{\mbox{\footnotesize opt}}[T_{\mathcal{P}i}]$ and $E_{\mbox{\footnotesize opt}}[\mathcal{E_P}_i]$ with $i\in\{1,\dots,n\}$ are of relevance.
We can start by calculating $E_{\mbox{\footnotesize opt}}[\mathcal{E_P}_\eta]$ and $E_{\mbox{\footnotesize opt}}[T_{\mathcal{P}\eta}]$ and obtain all other values by table look-up according to $E_{\mbox{\footnotesize opt}}[T_{\mathcal{P}n}]$ and $E_{\mbox{\footnotesize opt}}[\mathcal{E_P}_n]$ in time $\mathcal{O}(n^2)$ since every one of the (at most) $n$ entries has not more than $n$ possible predecessors.
\vfill
\pagebreak

\section[A fast binary feedback-based distributed adaptive carrier synchronisation for transmission among clusters of disconnected IoT nodes in smart spaces]{A fast binary feedback-based distributed adaptive carrier synchronisation for transmission among clusters of disconnected IoT nodes in smart spaces \footnote{Reprinted from 'Elsevier Journal on Ad Hoc Networks, vol. 16, Stephan Sigg, A fast binary feedback-based distributed adaptive carrier synchronisation for transmission among clusters of disconnected IoT nodes in smart spaces, pp. 120-130, May 2014, with permission from Elsevier.'}}\label{sectionOriginalBF02}
We propose a transmission scheme among groups of disconnected IoT devices in a smart space. 
In particular, we propose the use of a local random search implementation to speed up the synchronisation of carriers for distributed adaptive transmit beamforming. 
We achieve a sharp bound on the asymptotic carrier synchronisation time which is significantly lower than for previously proposed carrier synchronisation processes.
Also, we consider the impact of environmental conditions in smart spaces on this synchronisation process in simulations and a case study.

\subsection{Introduction}
The advancing miniaturisation of electronics and its integration into everyday objects fosters smart spaces as an antecedent to an Internet of Things (IoT).
In such environments, arbitrarily distributed, sharply resource restricted devices share data acquired by their sensors and cooperate in their data processing in order to establish an intelligent and responsive smart space.
Instead of being equally distributed among an environment, the processing and communicating devices are likely clustered in distant physical spaces.
Consequently, for the sharply resource restricted devices it might be difficult to establish a connection among the spread clusters of devices.
Natural clusters are given, for instance, by the set of devices worn or carried by a person or also by a working place constituted of a high density of electronically enhanced tools.
In a smart space, clusters should be connected to share information and provide additional value to an individual in this space.

Since IoT devices (possibly featuring RFID or Organic Electronics~\cite{InNetworkProcessing_Jakimovski_2011}) will have a sharply restricted transmission range for their low energy budget available, these clusters, however, might be frequently disconnected.
From a communication perspective, the signal strength of such resource restricted devices in one cluster might be too weak to reach a remote cluster at sufficient Signal-to-Noise Ratio (SNR).
Therefore, although nodes in a remote cluster might sense some activity on the channel, the signal strength is too weak for them to decode information.

One solution to increase the transmission range of nodes in a cluster and thereby to establish a connection is to combine their transmit signals during simultaneous, phase-aligned transmission on the wireless channel. 

By superimposing signals on the wireless channel in-phase, they are accumulated and therefore strengthened so that the transmission range can be extended.

In the literature, several approaches for such beamforming among distributed nodes are proposed~\cite{Seo_2008,Ochiai_2005,Mudumbai_2010b,4023}.
The most common ones require either code divisioning techniques or for the receiving node to conduct significant computation~\cite{Barriac_2004,Hagmann_1981}.
Such process, however, is exhaustive for sharply resource restricted IoT nodes in a smart space.
A simpler, less resource consuming, method was proposed by Mudumbai and others in~\cite{Mudumbai_2010b,Mudumbai_2009,Bucklew_2008}.
The authors employ an iterative random search mechanism in which nodes in each iteration may randomly change the phase of their carrier signal conditioned on a binary feedback from the receiver.
This approach is better suited for IoT nodes for its low computational complexity to randomly draw alternative signal phases at nodes.
Since the binary feedback can be encoded as an energy efficient on/off (burst transmission $=1$, no transmission $=0$) scheme, the required SNR can be low.
This carrier synchronisation scheme is applicable also with inexpensive crystal oscillators with high frequency derivations~\cite{DistributedBeamforming_Mudumbai_2011,DistributedBeamforming_Quitin_2013} such as we can expect for IoT devices.

For this scheme we derived a sharp asymptotic bound on the expected optimisation time for $n$ transmit devices and $k$ possible transmit carrier phases each, in the order of $\Theta(n\cdot k \cdot \log n)$ iterations in~\cite{4022}. 
This performance is the main drawback of the beamforming scheme for smart spaces.
Significant count of iterations and therefore a high number of transmissions are required which slows down the synchronisation~\cite{CarrierSynchronisation_2012_Mudumbai}.

In~\cite{4032} an alternative asymptotically optimal, iterative optimisation approach was presented.
Although its optimisation performance was as low as $\mathcal{O}(n)$, this improved performance was achieved at the cost of a more descriptive receiver feedback so that it can not be implemented as a simple on/off scheme and is therefore less well suited for resource restricted IoT nodes.

Another possibility to improve the synchronisation performance is to modify the random search for synchronised transmit phases at nodes.
The original approach employed an evolutionary random search~\cite{5919}.
However, as indicated in~\cite{4023}, the search space of the problem is rather simple and does not contain any local optima.

Therefore, we propose in this paper to utilise a fast local random search to establish carrier synchronisation among nodes.
In particular, we derive an asymptotic upper and lower bound for the expected optimisation time and compare the approach to the one presented in~\cite{4022,5919} in simulations and a case study.

The contributions of this paper are
\begin{enumerate}
\item an improved and more exact upper bound for iterative feedback-based closed-loop carrier synchronisation with local random search, 
\item a lower bound in the same asymptotic order,
\item a discussion of environmental impacts on the performance of iterative feedback-based carrier synchronisation,
\item simulations and
\item a case study with software-defined radio devices.
\end{enumerate}
This consideration of a local random search method for feedback-based iterative carrier synchronisation improves and extends the discussion on a simple bound in~\cite{4023}.
Our analysis provides an improved and more exact upper bound and in addition derives a lower bound in the same asymptotic order.

After introducing the related work and discussing iterative random carrier synchronisation in section~\ref{sectionRelatedWorkBF02} we propose a local random search mechanism and study its expected synchronisation time in section~\ref{sectionLocalRandomSearch}.
In section~\ref{sectionImpacts} we show that the synchronisation quality of iterative feedback-based carrier synchronisation among IoT devices in a smart space is impacted by environmental stimuli.
In section~\ref{sectionSimulation}, the impact of environmental stimuli on the optimum choice of optimisation parameters is demonstrated in mathematical simulations and experimental case studies.
Section~\ref{sectionConclusionBF02} draws our conclusion.

\subsection{Distributed adaptive carrier synchronisation}\label{sectionRelatedWorkBF02}
For distributed IoT devices in a smart space to establish a transmission beam to a remote receiver, carrier phases of transmit signals have to be synchronised with respect to the receiver location and the phase and frequency offset of the distributed local oscillators.
After synchronisation, a message $m(t)$ is transmitted simultaneously by all transmit devices $i\in[1..n]$ as 
\begin{equation}
	\zeta_i(t)=\Re\left(m(t)e^{j(2\pi (f_c+f_i)t+\gamma_i)}\right)\label{equationZero}
\end{equation}
so that the receiver observes the superimposed signal 
\begin{eqnarray}
\zeta_{\mbox{\footnotesize sum}}(t)+\zeta_{\mbox{\footnotesize noise}}(t)=\hspace{4.5cm}\nonumber\\
\Re\left(m(t)\sum_{i=1}^n\mbox{RSS}_ie^{j2\pi (f_c+f_i)t+(\gamma_i+\phi_i+\psi_i)}\right)+\zeta_{\mbox{\footnotesize noise}}(t)\label{equationOneBF02}	
\end{eqnarray}
with minimum phase offset between carrier signals: 
\begin{eqnarray}
\min\left(\left|(\gamma_i+\phi_i+\psi_i)-(\gamma_j+\phi_j+\psi_j)\right|\right)\\
\forall i,j\in[1..n],i\not=j.     \nonumber
\end{eqnarray}
In equation~(\ref{equationZero}) and equation~(\ref{equationOneBF02}), $f_i$ denotes the frequency offset of device $i$ to a common carrier frequency $f_c$.  
The values $\gamma_i$, $\phi_i$ and $\psi_i$ represent the carrier phase offset of node $i$ as well as the phase offset in the received signal component due to the offset in the local oscillators of nodes ($\phi_i$) and due to distinct signal propagation times ($\psi_i$).
$\zeta_{\mbox{\footnotesize noise}}(t)$ denotes the noise and interference in the received sum signal.
We assume additive white Gaussian noise (AWGN) here.
With $\mbox{RSS}_i$ we describe the received signal strength of IoT device $i$.

Algorithms for distributed adaptive carrier synchronisation are distinguished by closed-loop carrier synchronisation and open-loop carrier synchronisation techniques~\cite{Mudumbai_2009}.
Closed-loop synchronisation can be achieved by a master-slave approach as detailed in~\cite{5934}.
The central idea is that transmit IoT devices send a synchronisation sequence simultaneously on code-divisioned channels to a destination device in the smart space.
The destination calculates the relative phase offset of the received signals and broadcasts this information to all transmitters which then adapt their carrier signals accordingly.

Due to the high computational complexity burden for the destination node to derive the relative phase offset of all received signals and for all nodes due to the utilisation of code divisioning techniques, this implementation is not suggestive for the application in smart spaces with a high count of strictly resource limited devices.

Alternatively, in a master-slave open-loop synchronisation~\cite{5923}, the relative phase offset among nodes is determined by the transmit nodes with a method similar to~\cite{5934} but among transmit IoT devices only.
The receiver then broadcasts a carrier signal once so that the transmit nodes are able to correct their frequency offsets.
In this method, however, the high complexity for the nodes is shifted from the receiver node to one of the transmit nodes.
Therefore, this approach also suffers from its high computational complexity.

A simpler and less resource demanding distributed carrier synchronisation scheme was proposed in~\cite{5919}. 
This closed-loop approach is computationally cheap at the cost of increasing the time required for carrier synchronisation. 
It utilises a binary feedback on the achieved synchronisation quality that is transmitted in each iteration from a remote receiver~\cite{Mudumbai_2009,5920}.
In particular, such binary feedback can be implemented by a simple on/off burst scheme also for sharply resource restricted IoT devices.

The central optimisation procedure consists of $n$ devices $i\in[1,\dots,n]$ randomly altering the phases $\gamma_i$ of their carrier signal $\zeta_i(t)$
in each iteration.
Implicitly, with this process a global random search is applied.
The search space $\mathcal{S}$ is spanned by all possible combinations of carrier frequencies and carrier phase offsets for all transmit nodes (cf. figure~\ref{figure01BF02}).
\begin{figure}\centering
% INclude figure: Search space representation. 
     \includegraphics[width=10cm]{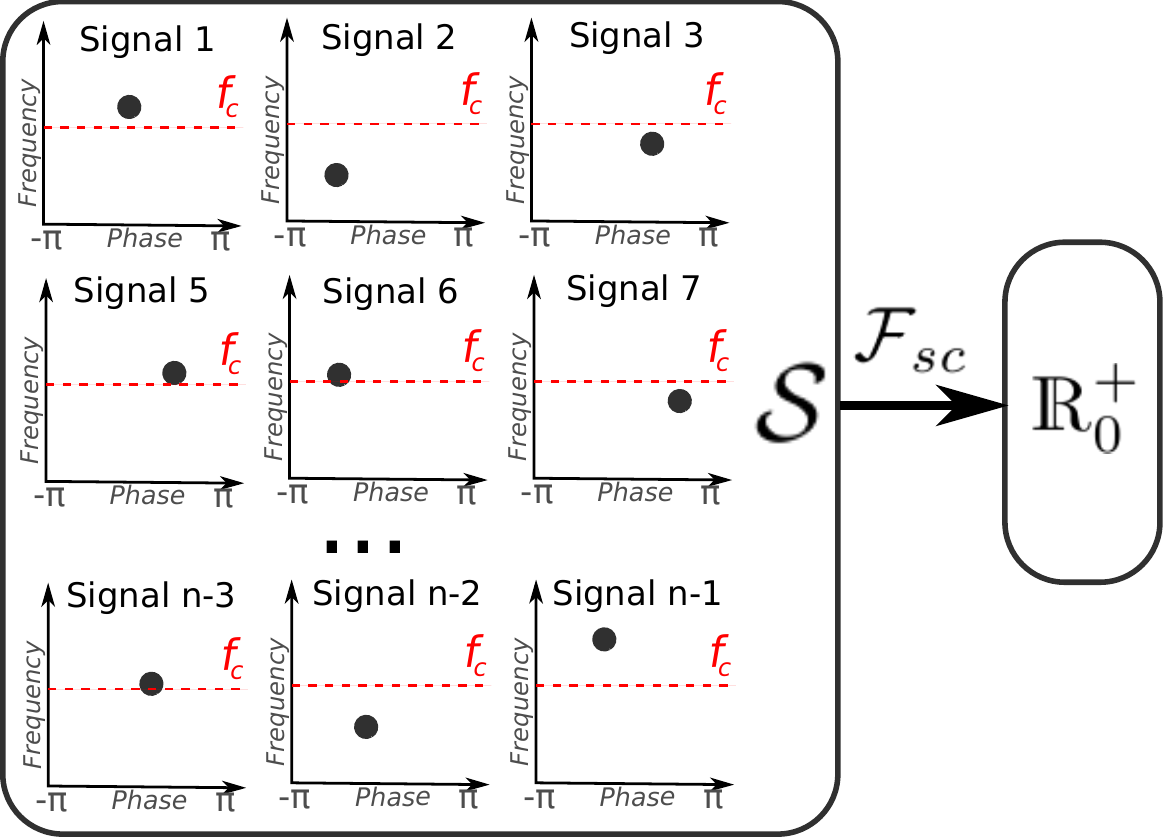}
\caption{Search space and score function for binary feedback-based distributed adaptive carrier synchronisation}
\label{figure01BF02}
\end{figure}

The figure illustrates the global search space constituted from all possible combinations of phase and frequency configurations of transmit signals at local IoT nodes.

Each specific phase-frequency combination $s\in\mathcal{S}$ is associated with a score $\mathcal{F}_{sc}:\mathcal{S}\rightarrow\mathds{R}^+_0$ that denotes its synchronisation quality.
Without loss of generality we assume that the optimisation aim is to maximise $\mathcal{F}_{sc}$.
A natural choice to compute such a score value is, for instance, the Signal-to-Noise-Ratio (SNR) of the received sum signal detailed in equation~(\ref{equationOneBF02}).

Feedback-based distributed carrier synchronisation approaches are characterised by the parameters
\begin{description}
	  \item[$P_{\mbox{\footnotesize mut},\gamma_i}$] Probability to alter the phase-offset of device $i$ ($P_{\mbox{\footnotesize mut},\gamma_i}\in[0,1]$)
	  \item[$P_{\mbox{\footnotesize mut},f_i}$] Probability to alter the frequency-offset of device $i$ ($P_{\mbox{\footnotesize mut},f_i}\in[0,1]$)
	  \item[$P_{\mbox{\footnotesize dist},\gamma_i}$] Probability distribution (phase) for the random process at device~$i$ \linebreak ($P_{\mbox{\footnotesize dist},\gamma_i} \in\{\mbox{normal, uniform, \dots}\}$)
	  \item[$P_{\mbox{\footnotesize dist},f_i}$] Probability distribution (frequency) for the random process at device~$i$  ($P_{\mbox{\footnotesize dist},f_i}\in\{\mbox{normal, uniform, \dots}\}$)
	  \item[$V_{\gamma_i}$] Variance for the random phase alteration process at device $i$ ($V_{\gamma_i} \in[0,\pi]$)
	  \item[$V_{f_i}$] Variance for the random frequency alteration process at device $i$ ($V_{f_i} \in[-f_\Delta,f_\Delta]$ for frequency range $f_\Delta$)
\end{description}
The carrier synchronisation process is described by algorithm~\ref{algorithmOne}.
\begin{algorithm}[t]
     \begin{algorithmic}[1]
          \REPEAT
	  \STATE With probability $P_{\mbox{\footnotesize mut},\gamma_i}$ and $P_{\mbox{\footnotesize mut},f_i}$, each transmit node $i$ adjusts its carrier phase offset $\gamma_i$ and frequency offset $f_i$ following a probability distribution $P_{\mbox{\footnotesize dist},\gamma_i}$ ($P_{\mbox{\footnotesize dist},f_i}$) with variance $V_{\gamma_i}$ ($V_{f_i}$).\label{stepOne}
	  \STATE Nodes transmit to the destination simultaneously as a distributed beamformer.\label{stepTwo}
	  \STATE Receiver estimates the level of phase synchronisation of the received sum signal $\zeta_{\mbox{\footnotesize sum}}(t)+\zeta_{\mbox{\footnotesize noise}}(t)$ (for instance by the SNR).
	  \STATE A binary feedback (e.g. burst/no burst) indicating whether this value has improved is broadcast.\label{stepFour}
	  \STATE When the feedback is worse than in the previous iteration, each transmit node that altered its phase offset in this iteration reverses this decision, by re-setting its carrier phase offset $\gamma_i$ to the previous value.\label{stepFive}
	  \UNTIL{Sufficient synchronisation achieved}
     \end{algorithmic}
\caption{Feedback-based distributed adaptive carrier synchronisation \label{algorithmOne}}
\end{algorithm}
Intuitively, each IoT node may in one iteration alter its transmit carrier phase offset (step~\ref{stepOne}), superimpose a synchronisation signal simultaneously with all other smart devices (step~\ref{stepTwo}) and receive a binary feedback on the quality of the synchronisation (better/worse; step~\ref{stepFour}).
These iterations are repeated until a random distribution of carrier phases is achieved that scores a sufficient synchronisation quality~\cite{Seo_2008,Mudumbai_2010b,4024}.
Initially, independent and identically distributed (i.i.d.) phase offsets $\gamma_i$ of carrier signals are assumed.
Since a decreasing signal quality is not accepted (cf.~step~\ref{stepFive} in algorithm~\ref{algorithmOne}), and since a global random search is implemented by this approach (every possible combination of carrier phase offsets of nodes has a positive probability in each iteration) the method eventually converges to the optimum with probability~1~\cite{Mudumbai_2010b}. 
For this result an idealised environment without noise and interference was considered.
In a realistic environment, the impact of the noise figure determines the accuracy that can be achieved.

In~\cite{CarrierSynchronisation_2012_Rahman}, an implementation of this carrier synchronisation approach was presented for software defined radio (SDR) devices which does not rely on any wired connections between devices (for instance, for clock synchronisation of the SDR nodes). 

The authors of~\cite{Seo_2008} then demonstrated in a case study that the method is feasible to synchronise frequency as well as phase of carrier signal components. 
Without loss of generality we will in our discussion only consider phase synchronisation and assume the frequency synchronisation as perfect.   
Our discussion can be easily extended to cover frequency synchronisation also by adding additional dimensions for the frequency of carrier signals to the search space $\mathcal{S}$~\cite{Seo_2008}.
As an alternative, sufficiently accurate separate frequency synchronisation schemes have been discussed for this approach in~\cite{DistributedBeamforming_Quitin_2012}.

The distinct implementations in the literature differ in the~\ref{stepOne}nd and the~\ref{stepFive}th step of algorithm~\ref{algorithmOne}.
For instance, in~\cite{Seo_2008,Bucklew_2008}, devices alter their carrier phase $\gamma_i$ according to a normal distribution with small variance.
In~\cite{4019}, a uniform distribution with a small probability to alter the phase offset of one individual device is utilised instead.
For a fixed uniform distribution over the whole optimisation process, a sharp asymptotic bound of $\Theta(n\cdot k \cdot \log n)$ on the expected optimisation time was derived~\cite{4022}.
Here, $k$ denotes the maximum number of distinct phase offsets a physical transmitter can generate.

In all previous studies, a global random search is considered, in which nodes choose their next carrier phase and frequency offset uniformly at random from all possible values.
Since the search space does not contain local optima~\cite{4023} we restrict the search neighbourhood to reduce the number of possible next  configurations in one iteration that would worsen the synchronisation quality.
We propose to modify step~\ref{stepOne} of algorithm~\ref{algorithmOne} to follow a local random search instead of the previously applied global random search mechanism.
In particular, an IoT node~$i$ will, when it changes its phase and frequency offset, draw the new values from a restricted neighbourhood of size~$\mathcal{N}$ that is centred around the current values of $\gamma_i$ (and $f_i$).     
This addresses a recent critique expressed in~\cite{CarrierSynchronisation_2012_Mudumbai} regarding the convergence speed for this binary feedback-based iterative adaptive carrier synchronisation.
In section~\ref{sectionLocalRandomSearch} we derive upper and lower bounds on the expected synchronisation performance of a local random search mechanism for feedback-based distributed adaptive carrier synchronisation. 
These bounds improve the existing bounds known for the global random search method. 
Section~\ref{sectionImpacts} shows that the optimum values for the parameters $P_{\mbox{\footnotesize mut},\gamma_i}$, $P_{\mbox{\footnotesize mut},f_i}$, $P_{\mbox{\footnotesize dist},\gamma_i}$, $P_{\mbox{\footnotesize dist},f_i}$, $V_{\gamma_i}$ and $V_{f_i}$ are conditioned on environmental situations in a smart space. 

\subsection{Local random search}\label{sectionLocalRandomSearch}
Recent approaches to 1-bit feedback-based distributed carrier synchronisation utilise a global random search that reaches any search point $s\in\mathcal{S}$ with a positive probability in each iteration~\cite{Seo_2008,Mudumbai_2010b,4023,Bucklew_2008}.
The probability to achieve by these random phase and frequency perturbations of $s$ a search point $s'$ with $\mathcal{F}_{sc}(s')\geq\mathcal{F}_{sc}(s)$ decreases with increasing synchronisation quality ($\mathcal{F}_{sc}(s)$ score).
A restricted search neighbourhood can, however, ensure a constant steady progress since the search space does not contain local optima as derived in~\cite{4022}.
Any local search heuristic that manages to follow a path with increasing $\mathcal{F}_{sc}$ score will find a global optimum with probability~$1$.
We assume that each transmit node is able to apply $k$ distinct phase-offsets and define a global optimum as superimposition of transmit signals in which all phases are within $\frac{2\pi}{k}$ of a superimposition with perfect phase coherency.

\begin{theoremS}\label{theoremZero}
Let $s\in\mathcal{S}$ be a current search point of a local random search algorithm $\mathcal{A}$ for feedback-based distributed adaptive carrier synchronisation with neighbourhood size $\mathcal{N}\leq|\mathcal{S}|$ and $\forall i:P_{\mbox{\footnotesize dist},\gamma_i}=\mbox{uniform}$.
For each phase and frequency perturbation (step~\ref{stepOne} in algorithm~\ref{algorithmOne}) of one transmit carrier signal, the probability to arrive at a search point (superimposition of transmit signals) $s'$ with $\mathcal{F}_{sc}(s')\geq\mathcal{F}_{sc}(s)$ is at least~$\frac{1}{2}$ for each transmit signal that has the optimum search point $s^*$ not within its neighbourhood $\mathcal{N}$.
\end{theoremS}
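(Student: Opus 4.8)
The plan is to reduce the claim to a one-dimensional statement about a single carrier and then exploit the sinusoidal shape of its contribution to the score. Fix the carrier $\zeta_i$ that is perturbed in step~\ref{stepOne} and condition on the event that no other carrier is perturbed in the same iteration, so that $\zeta_{\mbox{\footnotesize sum}\smallsetminus i}$ is held fixed. Then $\mathcal{F}_{sc}$ of the resulting configuration $s'$ depends only on the new phase $\gamma_i$: writing $\theta$ for the angle between $\zeta_i$ and $\zeta_{\mbox{\footnotesize sum}\smallsetminus i}$ at the receiver and $R=|\zeta_{\mbox{\footnotesize sum}\smallsetminus i}|$, one has $|\zeta_{\mbox{\footnotesize sum}}|^2=R^2+\mbox{RSS}_i^2+2R\,\mbox{RSS}_i\cos\theta$, so $|\zeta_{\mbox{\footnotesize sum}}|$ — and hence the SNR, the RSS, or the sinusoidal feedback $A\sin(\gamma_i+\varPhi)+c$ of section~\ref{sectionThreeUnknowns} — is a strictly increasing function of $\cos\theta$. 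In particular the single-carrier score is unimodal on the phase circle, with unique maximum at $\theta=0$ (i.e.\ $\zeta_i$ aligned with $\zeta_{\mbox{\footnotesize sum}\smallsetminus i}$, which is exactly the optimum of this one-variable subproblem and the sense in which ``$s^*$'' is read for carrier $i$) and unique minimum at $\theta=\pi$.

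I would then phrase the local move geometrically. Since $P_{\mbox{\footnotesize dist},\gamma_i}=\mbox{uniform}$ and the neighbourhood is centred at the current value, the new angle $\theta'$ is uniform on a symmetric arc $[\theta_0-\delta,\theta_0+\delta]$, with $\theta_0$ the current value of $\theta$ and $2\delta$ the angular width of $\mathcal{N}$. The hypothesis that $s^*$ does not lie in $\mathcal{N}$ for carrier $i$ says precisely that $\theta=0$ is not contained in this arc, so the arc is a genuine subinterval of $(0,2\pi)$.

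The core step is then to split the arc at its midpoint $\theta_0$ into the ``inward'' half (the one decreasing the circular distance to $\theta=0$) and the ``outward'' half, and to show that the inward half is contained in $\{\,\theta:\cos\theta\ge\cos\theta_0\,\}=\{\,\mathcal{F}_{sc}(s')\ge\mathcal{F}_{sc}(s)\,\}$. This follows from monotonicity of $\cos$ on $(0,\pi)$ and on $(\pi,2\pi)$, by cases: if the arc lies in $(0,\pi]$ then $\cos$ is decreasing on it and the half $[\theta_0-\delta,\theta_0]$ works; if it lies in $[\pi,2\pi)$ then $\cos$ is increasing on it and $[\theta_0,\theta_0+\delta]$ works; if it straddles $\pi$, the half lying on the same monotone branch as $\theta_0$ works (and if $\theta_0=\pi$ both halves do). Since $\theta'$ is uniform on the symmetric arc, the inward half carries probability $\tfrac12$, whence $P[\mathcal{F}_{sc}(s')\ge\mathcal{F}_{sc}(s)]\ge\tfrac12$. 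Two clean-up remarks: with $k$ discrete phase levels a symmetric discrete neighbourhood of the current level has at least as many improving levels as worsening ones and the current level, if re-drawable, is a non-worsening ``stay'', so the bound only improves; and if a simultaneous frequency perturbation is allowed, $|\zeta_{\mbox{\footnotesize sum}}|$ is again monotone along the frequency coordinate on each side of the matched value within one common period, so the product neighbourhood again splits into an improving half of measure $\ge\tfrac12$.

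I expect the main obstacle to be conceptual rather than computational: pinning down the reading of ``the optimum search point $s^*$ not within its neighbourhood $\mathcal{N}$''. The bound genuinely requires that the peak of the single-carrier score — the phase aligning $\zeta_i$ with the current $\zeta_{\mbox{\footnotesize sum}\smallsetminus i}$ — lie outside $\mathcal{N}$ (if it lay strictly inside and $\theta_0$ were close to it, the improving mass could be made smaller than $\tfrac12$), so the argument must identify that peak with the hypothesis; the unimodality, the symmetric-arc split, and the case analysis around $\theta=\pi$ are then routine.
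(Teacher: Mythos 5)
Your proposal is correct and follows essentially the same route as the paper's proof: the paper likewise fixes the perturbed carrier, views the move as a rotation by $\pm\nu$ with $\nu\in[0,\mathcal{N}/2]$ relative to the fixed $\zeta_{\mbox{\footnotesize sum}-i}(t)$, and argues that as long as the aligned direction is not inside the neighbourhood one of the two rotation directions increases the amplitude while the other decreases it, giving probability $\tfrac12$ under the uniform draw. Your law-of-cosines computation and the explicit monotonicity/case analysis around $\theta=\pi$ simply fill in details the paper leaves to its figure.
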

(Refer to the Appendix for a proof of the results)

Consequently, we divide the following analysis into two phases. 
In the first, the optimum is not within the neighbourhood of at least one node so that at least one node can improve the fitness with probability~$\frac{1}{2}$ or more (cf. figure~\ref{figure02BF}).
\begin{figure}
     \centering
     \includegraphics[width=11cm]{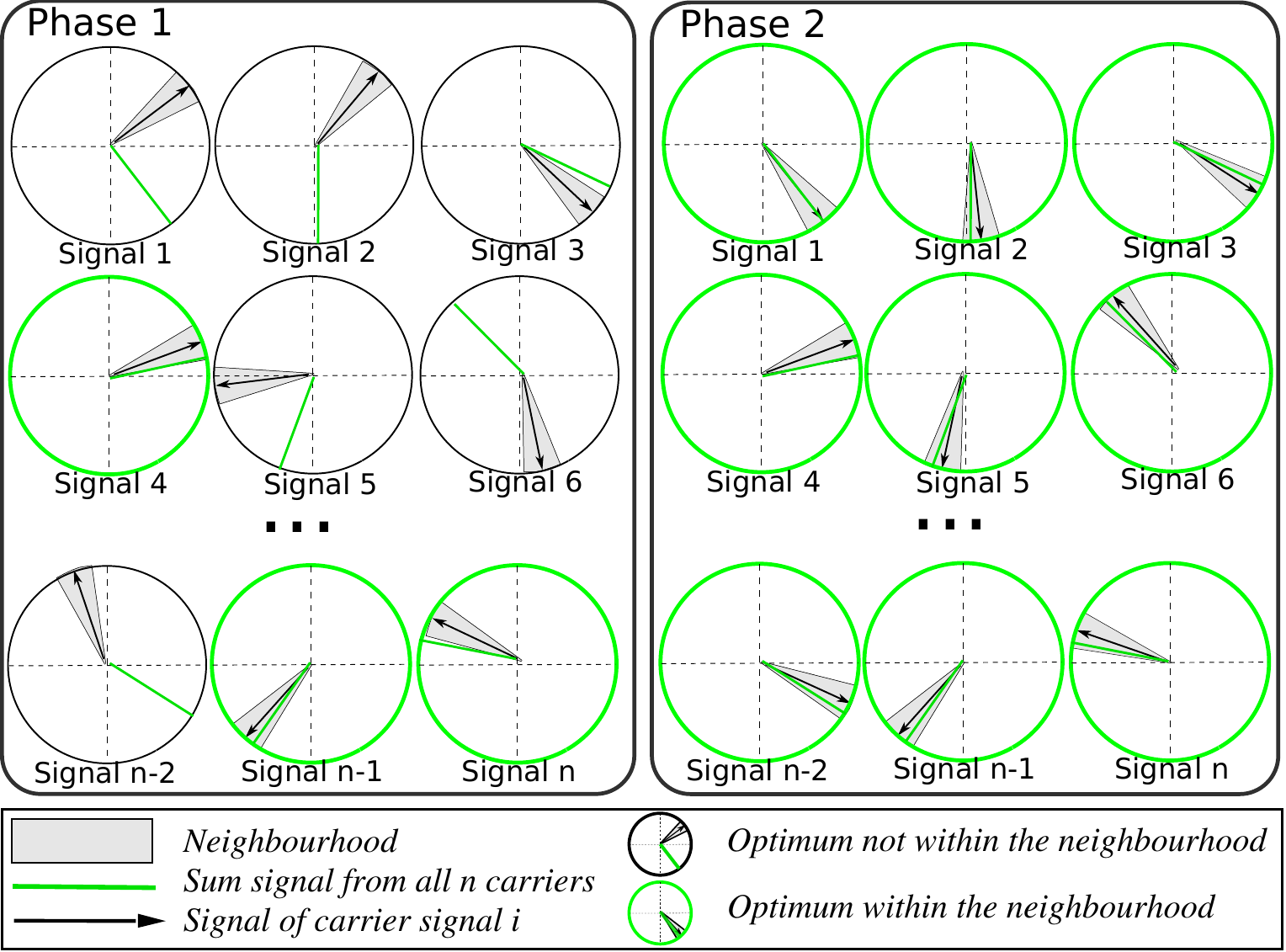}
     \caption{Phase~1 and phase~2 of the synchronisation process}
     \label{figure02BF}
\end{figure}

In the second phase, all nodes have the optimum within their neighbourhood. 
The probability to decrease the distance to the optimum might then be worse than $\frac{1}{2}$.
An optimisation process with restricted neighbourhood-size therefore has a probability of $\frac{1}{2}$ to increase the fitness-value for a long time until the optimum point is within the neighbourhood of each single carrier signal. 
The price for this high probability to improve the fitness-value in each iteration is that the chance to achieve great progress in one step (as possible with an unrestricted neighbourhood-size) is lost.
Since this event is significantly less probable, we are prepared to pay this price.
An individual node $i$ then alters the phase-offset $\gamma_i$ of its carrier-signal uniformly at random within a range of $\left[\gamma_i-\frac{\mathcal{N}}{2},\gamma_i+\frac{\mathcal{N}}{2}\right]$ for suitable $\mathcal{N}$.
To simplify the analysis we represent search points (superimpositions of transmit signals) in a binary encoding.
Figure~\ref{figure03BF} sketches this encoding.
\begin{figure}
\centering
     \includegraphics[width=14cm]{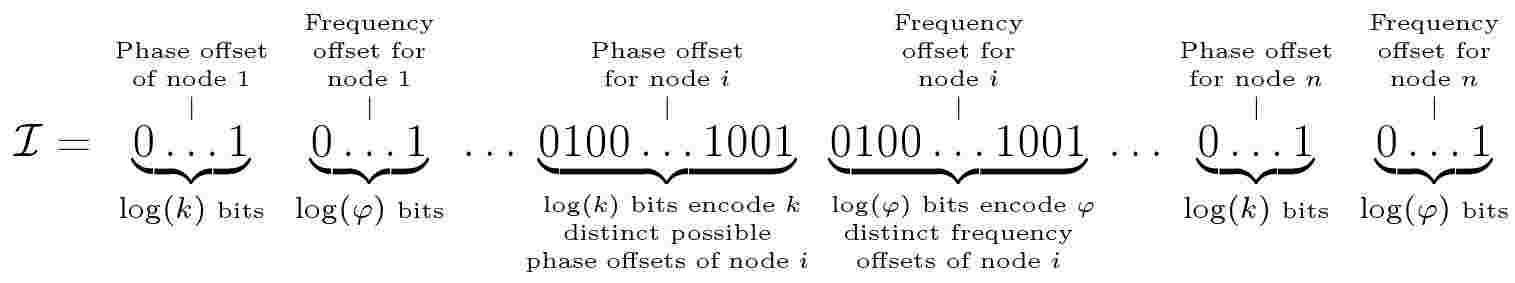}     
\caption{Binary representation of search points as a concatenation of grey encoded phase and frequency offsets}
\label{figure03BF}
\end{figure}

We assume that each transmit node is able to apply $k$ distinct phase-offsets.
We encode a search point $s\in\mathcal{S}$ represented by $n\cdot k$ distinct phase-offsets as binary string of length $n\cdot\log(k)$
\footnote{When distinct frequency offsets are also considered, a search point $s\in\mathcal{S}$ would be represented by $n\cdot k \cdot f$ distinct phase- and frequency-offsets as binary string of length $n\cdot\log(k)\cdot\log{f}$}.
We assume that configurations are encoded so that their Hamming-distance increases with increasing difference in phase-offsets~\cite{Algorithm_Savage_1995,Algorithms_Knuth_2011}.
We analyse the count of bit mutations of this bit-string until an encoding of a global optimum is found.
We choose the probability to alter a bit in the binary sequence as $\frac{1}{n\cdot \mathcal{N}}$ for $n$ nodes with neighbourhood size $\mathcal{N}$.
For the binary representation this is analogue to having a probability of $\frac{1}{n}$ for each signal to alter its phase uniformly at random within the $\mathcal{N}$ possible values in $\left[\gamma_i-\frac{\mathcal{N}}{2},\gamma_i+\frac{\mathcal{N}}{2}\right]$.                                                                                                                                                                                                                                                         Then, one node on average alters its phase offset within the neighbourhood boundaries in each iteration.
With Chernoff bounds we can show that with high probability the Hamming-distance to an optimum configuration of these offsets for all carrier signals is not much smaller than $\frac{n\cdot\log(k)}{2}$.
\begin{theoremS}\label{theoremOne}
For a network of $n$ transmit and one receive node, let $\mathcal{N}$ be the neighbourhood size of a local random search method $\mathcal{A}$ for feedback-based distributed adaptive carrier synchronisation with $\forall i:P_{\mbox{\footnotesize dist},\gamma_i}=\mbox{uniform},P_{\mbox{\footnotesize mut},\gamma_i}=\frac{1}{n}$. 
Further assume that each node is capable of transmitting signals at up to $k$ distinct carrier phases and that new carrier phases are drawn uniformly at random from the neighbourhood. 
The expected number of iterations for distributed adaptive carrier synchronisation is bounded by 
\begin{equation}
\mathcal{O}\left(n\cdot \mathcal{N}\cdot\log(n)+\frac{\log(k)}{\mathcal{N}}\right).     
\end{equation}      
\end{theoremS}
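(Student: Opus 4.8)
The plan is to charge the two summands of the bound to the two phases identified before the statement (Figure~\ref{figure02BF}), working throughout in the binary, Gray-encoded representation of a superimposition $s\in\mathcal{S}$: a string of length $n\cdot\log(k)$ in which a bit flip moves one carrier to a neighbouring phase interval, with per-bit mutation probability $\frac{1}{n\cdot\mathcal{N}}$. As noted just above, this choice makes each carrier perturb, on average, once every $n$ iterations and, when it does, redraw its phase uniformly among the $\mathcal{N}$ values of its current neighbourhood; the expected number of bit flips per iteration is then $\frac{n\cdot\log(k)}{n\cdot\mathcal{N}}=\frac{\log(k)}{\mathcal{N}}$. I would also retain the Chernoff estimate already stated: right after the i.i.d.\ initialisation, with high probability every carrier is at Hamming distance close to $\frac{n\cdot\log(k)}{2}$ from any fixed optimum encoding, so no carrier starts accidentally synchronised.

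\emph{Phase~2 --- every carrier already has the optimum $s^{*}$ within its neighbourhood.} I would treat this as a coupon-collector problem. When carrier~$i$ perturbs, it lands on its optimal interval with probability at least $\frac{1}{\mathcal{N}}$ (one good value out of $\mathcal{N}$, drawn uniformly), and once it sits on the optimal interval step~\ref{stepFive} of Algorithm~\ref{algorithmOne} never lets it leave because the feedback would worsen. Hence carrier~$i$ is permanently aligned after $\mathcal{O}(n\cdot\mathcal{N})$ iterations in expectation, and a Chernoff bound together with a union bound over the $n$ carriers gives that all carriers are aligned within $\mathcal{O}(n\cdot\mathcal{N}\cdot\log(n))$ iterations with high probability --- the $\log(n)$ being the cost of waiting for the slowest of $n$ geometric coupons. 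This yields the leading term.

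\emph{Phase~1 --- at least one carrier does not yet have $s^{*}$ within its neighbourhood.} Here Theorem~\ref{theoremZero} applies to each such carrier: every one of its perturbations is accepted with probability at least $\frac{1}{2}$ (step~\ref{stepFive}), and an accepted perturbation moves it strictly towards its optimal interval; since the feedback is weak multimodal there is no local optimum to be trapped behind, so accepted moves never have to be undone. I would run a drift argument on the binary encoding: the Hamming distance of the current string to a ``target'' encoding (one whose optimal intervals lie inside everyone's neighbourhood) is non-increasing along accepted moves, and I would combine the $\geq\frac{1}{2}$ acceptance probability, the per-iteration bit-flip rate $\frac{\log(k)}{\mathcal{N}}$, and the Chernoff bound on the starting distance to show that Phase~1 terminates within the claimed $\mathcal{O}\!\left(\frac{\log(k)}{\mathcal{N}}\right)$ additional iterations. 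Summing the two phase bounds gives $\mathcal{O}\!\left(n\cdot\mathcal{N}\cdot\log(n)+\frac{\log(k)}{\mathcal{N}}\right)$.

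\emph{Main obstacle.} The hard part is the Phase~1 estimate and, specifically, extracting a bound as small as $\frac{\log(k)}{\mathcal{N}}$ for it: a naive accounting of ``how far a carrier must travel to bring $s^{*}$ into an $\mathcal{N}$-window'' is far more pessimistic, so the proof must exploit the fine structure --- that the encoding is Gray, so that a single accepted bit flip already realises a coherent phase move; that rejected perturbations cost iterations but never undo progress (this is exactly where step~\ref{stepFive} of Algorithm~\ref{algorithmOne} enters); and that iterations in which two or more carriers perturb simultaneously --- where the per-carrier $\frac{1}{2}$ guarantee of Theorem~\ref{theoremZero} need not hold literally --- occur only with probability $\mathcal{O}(1/n)$ under $P_{\mbox{\footnotesize mut},\gamma_i}=\frac{1}{n}$ and are absorbed into the constants. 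Once the per-iteration drift is pinned down correctly, the remaining ingredients (the Chernoff concentration, the union bound over carriers, the geometric sums) are routine.
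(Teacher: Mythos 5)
Your Phase-2 argument coincides with the paper's. The paper conditions on $i$ carriers already being synchronised, lower-bounds the probability that exactly one of the remaining $n-i$ carriers mutates and lands on its optimal value by $\frac{n-i}{e\cdot n\cdot\mathcal{N}}$, and sums the reciprocals into the harmonic series $\sum_{i=1}^{n}\frac{e\cdot n\cdot\mathcal{N}}{i}=\mathcal{O}(n\cdot\mathcal{N}\cdot\log(n))$; your coupon-collector phrasing is the same computation (note only that your Chernoff-plus-union-bound version yields a high-probability statement, which still needs a routine restart argument to become the bound on the expectation that the theorem asserts and that the harmonic sum gives directly).

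The genuine gap is Phase~1, and you have effectively flagged it yourself in your ``main obstacle'' paragraph. The drift you set up --- an expected $\frac{\log(k)}{\mathcal{N}}$ bit flips per iteration run against a starting Hamming distance of order $n\cdot\log(k)$ --- produces a Phase-1 bound of order $n\cdot\mathcal{N}$, not $\mathcal{O}\left(\frac{\log(k)}{\mathcal{N}}\right)$, and none of the ingredients you enumerate (Gray coding, monotonicity of accepted moves under step~\ref{stepFive}, the $\mathcal{O}(1/n)$ probability of simultaneous perturbations) is developed into an argument that closes this gap; the proposal only states that the fine structure ``must'' be exploited. The paper obtains the $\frac{\log(k)}{\mathcal{N}}$ summand by a different accounting: it takes as progress measure the number of bits altered per iteration, credits an accepted perturbation of a carrier that does not yet cover the optimum with an average displacement of $\frac{\mathcal{N}}{2}$ within its neighbourhood, an acceptance probability of at least $\frac{1}{2}$ from Theorem~\ref{theoremZero}, and a single-mutation event of probability at least $\frac{1}{e\cdot n}$, asserts on this basis an expected per-iteration progress of $\frac{1}{2}\cdot\frac{\mathcal{N}}{2}\cdot e\cdot n=\frac{e\cdot n\cdot\mathcal{N}}{4}$ during Phase~1, and divides the maximal Hamming distance $n\cdot\log(k)$ by this rate to obtain $\frac{4\cdot\log(k)}{e\cdot\mathcal{N}}$. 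Whether or not one is fully satisfied with that step, it is exactly the quantitative claim your proposal is missing: to extract the $\frac{\log(k)}{\mathcal{N}}$ term one needs a per-iteration progress rate of order $n\cdot\mathcal{N}$ bits in Phase~1, and your stated rate of $\frac{\log(k)}{\mathcal{N}}$ expected flips per iteration is not that, nor is it even a valid lower bound on progress, since flips may be rejected or may move a carrier away from its target.
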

(Refer to the Appendix for a proof of the results)

\begin{theoremS}\label{theoremTwo}
For a network of $n$ transmit and one receive node, let $\mathcal{N}$ be the neighbourhood size of a local random search method $\mathcal{A}$ for feedback-based distributed adaptive carrier synchronisation with $\forall i:P_{\mbox{\footnotesize dist},\gamma_i}=\mbox{uniform},P_{\mbox{\footnotesize mut},\gamma_i}=\frac{1}{n}$.
Further assume that each node is capable of transmitting signals at up to $k$ distinct carrier phases and that new carrier phases are drawn uniformly at random from the neighbourhood of size~$\mathcal{N}$. 
For a suitable $\Delta$ the expected number of iterations for distributed adaptive carrier synchronisation is bounded by 
\begin{equation}
\Omega(n\cdot \mathcal{N}\cdot\Delta)\nonumber
\end{equation}      
\end{theoremS}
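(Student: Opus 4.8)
The plan is to mirror the expected-progress (drift) argument that produced the $\Omega(n\cdot\log(k)\cdot\Delta)$ lower bound for the global search in section~\ref{sectionAsymptoticBounds}, replacing the global per-bit mutation rate $\frac{1}{n\log(k)}$ by the local rate $\frac{1}{n\cdot\mathcal{N}}$. First I would fix an encoding $s^*$ of one global optimum and recall that, immediately after the i.i.d.\ initialisation, every bit of the length-$n\log(k)$ string is uniform; a counting/Chernoff estimate (the same one underlying Theorem~\ref{theoremOne}) then shows that with high probability the Hamming distance $h(s^*,s_\zeta)$ exceeds some $l$ with $l\ll n\log(k)$, so that no optimum can be reached until this gap is closed.

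Next I would introduce a progress potential $\Lambda:\mathds{B}^{n\cdot\log(k)}\to\mathds{R}^+_0$ that measures how much of the initial Hamming gap has been recovered, scaled so that the event $\Lambda(s_\zeta,t)<\Delta$ is exactly ``no optimum during the first $t$ iterations.'' Then, exactly as before,
\begin{equation}
E[T_\mathcal{P}]\geq t\cdot P[T_\mathcal{P}>t]=t\cdot P[\Lambda(s_\zeta,t)<\Delta]=t\left(1-P[\Lambda(s_\zeta,t)\geq\Delta]\right),
\end{equation}
and Markov's inequality $P[\Lambda(s_\zeta,t)\geq\Delta]\leq E[\Lambda(s_\zeta,t)]/\Delta$ reduces the whole problem to an upper bound on the expected accumulated progress.

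The core step is the per-iteration estimate. In one iteration each of the $n\log(k)$ bits is flipped independently with probability $\frac{1}{n\cdot\mathcal{N}}$, so the probability that a prescribed set of $i$ bits flips simultaneously --- the only way to gain $i$ units of progress in a single step --- is at most $\left(\frac{1}{n\cdot\mathcal{N}}\right)^i$. Summing $\sum_{i\geq1} i\left(\frac{1}{n\cdot\mathcal{N}}\right)^i<\frac{2}{n\cdot\mathcal{N}}$ bounds the expected one-step gain, hence the expected gain after $t$ iterations is below $\frac{2t}{n\cdot\mathcal{N}}$. Choosing $t=\frac{n\cdot\mathcal{N}\cdot\Delta}{4}-1$ makes $2\,E[\Lambda(s_\zeta,t)]<\Delta$, so Markov gives $P[\Lambda(s_\zeta,t)\geq\Delta]\leq\frac12$ and therefore $E[T_\mathcal{P}]\geq t\cdot\frac12=\Omega(n\cdot\mathcal{N}\cdot\Delta)$. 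Taking $\Delta$ of order $\log n$ then makes $n\cdot\mathcal{N}\cdot\Delta$ match the leading $n\cdot\mathcal{N}\cdot\log(n)$ term of the upper bound in Theorem~\ref{theoremOne}.

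The main obstacle is to pin down the ``suitable $\Delta$'' and to make the potential argument legitimate under the \emph{local} dynamics and the accept-only rule (step~\ref{stepFive}): one must check that (i) the random initialisation leaves enough Hamming slack on precisely those bits that the neighbourhood-restricted moves are allowed to touch, so that the gap cannot be short-circuited; (ii) an accepted iteration cannot increase $\Lambda$ by more than the number of bits it flips, which is what licenses the bound on the expected one-step gain; and (iii) the chosen $\Delta$ is simultaneously at most the high-probability initialisation gap and large enough that the Markov step is non-vacuous. Since the local search strictly \emph{cannot} perform the large jumps that make the global bound delicate, I expect (i)--(iii) to go through with essentially the same constants as in section~\ref{sectionAsymptoticBounds}, the only genuinely new bookkeeping being the neighbourhood restriction in (i).
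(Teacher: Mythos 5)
Your proposal follows essentially the same route as the paper's proof: the counting/Chernoff bound on the initial Hamming gap, the expected-progress potential $\Lambda$ with the chain $E[T_\mathcal{P}]\geq \tau\left(1-E[\Lambda]/\Delta\right)$ via Markov, the per-iteration progress bound $\sum_{i\geq 1} i\left(\frac{1}{n\cdot\mathcal{N}}\right)^i<\frac{2}{n\cdot\mathcal{N}}$ obtained by substituting the local per-bit rate $\frac{1}{n\cdot\mathcal{N}}$ for the global one, and the choice $\tau=\frac{n\cdot\mathcal{N}\cdot\Delta}{4}-1$. The additional care you flag in points (i)--(iii) about the accept-only rule and the neighbourhood restriction is reasonable diligence but is not something the paper's own proof elaborates; the argument goes through as you describe.
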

(Refer to the Appendix for a proof of the results)

\begin{theoremS}\label{theoremThree}
For a network of $n$ transmit and one receive node, let $\mathcal{N}$ be the neighbourhood size of a local random search method $\mathcal{A}$ for feedback-based distributed adaptive carrier synchronisation with $\forall i:P_{\mbox{\footnotesize dist},\gamma_i}=\mbox{uniform},P_{\mbox{\footnotesize mut},\gamma_i}=\frac{1}{n}$. 
Further assume that each node is capable of transmitting signals at up to $k$ distinct carrier phases and that new carrier phases are drawn uniformly at random from the neighbourhood of size~$\mathcal{N}$. 
The expected number of iterations for distributed adaptive carrier synchronisation is bounded by 
\begin{equation}
E[T_\mathcal{P}]=\Theta\left(n\cdot \mathcal{N}\cdot\log(n)+\frac{\log(k)}{\mathcal{N}}\right).\label{equationResult}
\end{equation}      
\end{theoremS}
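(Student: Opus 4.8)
The claimed $\Theta$-bound is obtained simply by matching the upper bound of Theorem~\ref{theoremOne} against the lower bound of Theorem~\ref{theoremTwo}, so the whole task reduces to showing that the two estimates are of the same asymptotic order. The plan is in three short steps, and the only real work is in the second one.

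\emph{Step 1 (upper bound).} Theorem~\ref{theoremOne} already states $E[T_\mathcal{P}]=\mathcal{O}\!\left(n\cdot\mathcal{N}\cdot\log(n)+\frac{\log(k)}{\mathcal{N}}\right)$, and its proof splits the run into Phase~1 (until every carrier signal has the optimum within its neighbourhood, where by Theorem~\ref{theoremZero} each step improves the score with probability at least $\frac{1}{2}$) and Phase~2 (fine alignment inside the neighbourhoods). I would reuse that decomposition verbatim; nothing further is needed for the $\mathcal{O}$-direction.

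\emph{Step 2 (lower bound).} Theorem~\ref{theoremTwo} gives $E[T_\mathcal{P}]=\Omega(n\cdot\mathcal{N}\cdot\Delta)$ for a suitable progress threshold $\Delta$. The plan is to instantiate
\begin{equation}
\Delta \;:=\; c\left(\log(n)+\frac{\log(k)}{n\cdot\mathcal{N}^{2}}\right)\nonumber
\end{equation}
for a small absolute constant $c>0$, so that $n\cdot\mathcal{N}\cdot\Delta=\Theta\!\left(n\cdot\mathcal{N}\cdot\log(n)+\frac{\log(k)}{\mathcal{N}}\right)$. I would then verify that this $\Delta$ is admissible in the expected-progress / Markov-inequality argument behind Theorem~\ref{theoremTwo}: the first summand is controlled by the coupon-collector lower bound for correcting all $n$ carrier neighbourhoods, each correction costing $\Theta(\mathcal{N})$ steps; the second summand is the Phase~2 bottleneck, where the Hamming distance still to be closed is $\Theta(\log k)$ while the expected per-step decrease is only $O(1/\mathcal{N})$. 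Feeding this $\Delta$ through the Markov inequality exactly as in Theorem~\ref{theoremTwo}, and using $\Theta(a+b)=\Theta(\max\{a,b\})$ for nonnegative $a,b$, yields $E[T_\mathcal{P}]=\Omega\!\left(n\cdot\mathcal{N}\cdot\log(n)+\frac{\log(k)}{\mathcal{N}}\right)$.

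\emph{Step 3 (conclusion and main obstacle).} Combining Steps~1 and~2 gives $E[T_\mathcal{P}]=\Theta\!\left(n\cdot\mathcal{N}\cdot\log(n)+\frac{\log(k)}{\mathcal{N}}\right)$, which is equation~(\ref{equationResult}). The hard part is Step~2: the single free parameter $\Delta$ of Theorem~\ref{theoremTwo} has to be chosen so that it reproduces \emph{both} terms of the upper bound simultaneously, even though the $n\cdot\mathcal{N}\cdot\log(n)$ term is a coupon-collector effect of Phase~1 and the $\frac{\log(k)}{\mathcal{N}}$ term is a per-step-progress effect of Phase~2. One therefore has to confirm that the progress measure used in proving Theorem~\ref{theoremTwo} is genuinely sensitive to both phases and that the chosen $\Delta$ does not exceed the effective diameter of the region the local search must traverse; if the proof of Theorem~\ref{theoremTwo} only captures one phase, the lower bound argument for the other term would have to be supplied separately and then the two $\Omega$-bounds combined by taking their maximum.
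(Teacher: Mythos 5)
Your proposal follows essentially the same route as the paper: the paper's own proof of Theorem~\ref{theoremThree} consists of exactly one line, namely combining Theorem~\ref{theoremOne} with Theorem~\ref{theoremTwo} under the instantiation $\Delta=\log(n)+\frac{\log(k)}{n\cdot \mathcal{N}^2}$, which is your Step~2 choice up to the constant $c$. Your additional caution in Step~3 about verifying that this $\Delta$ is admissible for the progress-measure argument is reasonable but goes beyond what the paper itself checks.
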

(Refer to the Appendix for a proof of the results)

Observe that equation~(\ref{equationResult}) evolves to 
\begin{equation}
\Theta\left(n\cdot\log(n)+\log(k)\right)     
\end{equation}
for $\mathcal{N}\rightarrow 1$ and to 
\begin{equation}
\Theta\left(n\cdot k\cdot\log(n)\right)     \label{equationTwentyOne}
\end{equation}
for $\mathcal{N}\rightarrow k$.
Equation~(\ref{equationTwentyOne}) is identical to the bound derived on the expected optimisation time of the global random search method~\cite{4023}, where in fact the neighbourhood size is $\mathcal{N}=k$.
Observe that it is more beneficial to have a smaller local search neighbourhood than to utilise a global random search method with $\mathcal{N}=k$. 

The minimum value for $E[T_\mathcal{P}]$ is achieved for $\mathcal{N}\rightarrow1$.
Since this is an asymptotic consideration, the optimum absolute value for $\mathcal{N}$ depends on the choice of $n$ and $k$.

\subsection{Environmental impacts}\label{sectionImpacts}
The performance of the local random search guided carrier synchronisation is impacted by $P_{\mbox{\footnotesize mut}, \gamma_i}$, $P_{\mbox{\footnotesize mut},f_i}$, $P_{\mbox{\footnotesize dist},\gamma_i}$, $P_{\mbox{\footnotesize dist},f_i}$, $V_{\gamma_i}$ and $V_{f_i}$.
In~\cite{4022} we observed that a good synchronisation quality is achieved when either $P_{\mbox{\footnotesize mut},\gamma_i}$, $P_{\mbox{\footnotesize mut},f_i}$, $V_{f_i}$ or $V_{\gamma_i}$ are small, so that the search space is propagated in rather small steps, eventually approaching the optimum.
This observation also agrees with our discussion in the last section.
However, the environment may impact the optimum value for these parameters.
We discuss three possible impacts, namely the number of participating nodes, the noise figure and movement of devices.

\subsubsection{Impact of noise and interference}\label{sectionImpactsNoise}
The signal observed by a receiver is composed of the signal $\zeta_{\mbox{\footnotesize sum}}(t)$ and noise $\zeta_{\mbox{\footnotesize noise}}(t)$ (cf.~equation~(\ref{equationOneBF02})).
Noise and interference might differ due to opened windows or doors, people moving or other nearby electronic devices~\cite{4036,ContextAwareness_Sigg_2011,Pervasive_Sigg_2012}.
The impact of the phase alteration of a single link $i\in[1..n]$ on the SNR of $\zeta_{\mbox{\footnotesize sum}}(t)+\zeta_{\mbox{\footnotesize noise}}(t)$ is not greater than $2\cdot\mbox{RSS}_i$.
This is the case when the phase of signal $\zeta_i(t)$ and the sum signal $\zeta_{\mbox{\footnotesize sum-i}}(t)$ without the signal of $i$ have been separated in phase by $\pi$ before $\gamma_i$ is then shifted by $\pi$. 
With the cosine rule we can calculate the change in the received signal strength of the received superimposed signal at the event of a change of the carrier phase from $\gamma_i$ to $\gamma'_i$ as
\begin{eqnarray}
\overline{\mbox{RSS}}(\gamma_i,\gamma'_i)=\hspace{5.5cm}\nonumber\\
\sqrt{\mbox{RSS}_{\mbox{\footnotesize \mbox{\footnotesize sum}-i}}^2+\mbox{RSS}_i^2-2\mbox{RSS}_{\mbox{\footnotesize \mbox{\footnotesize sum}-i}}\mbox{RSS}_i\cos(\gamma_i+\gamma'_i)}\nonumber\\
	-\sqrt{\mbox{RSS}_{\mbox{\footnotesize \mbox{\footnotesize sum}-i}}^2+\mbox{RSS}_i^2-2\mbox{RSS}_{\mbox{\footnotesize \mbox{\footnotesize sum}-i}}\mbox{RSS}_i\cos(\gamma_i)}\label{equationThree}
\end{eqnarray}
as illustrated in figure~\ref{figure04BF}.
\begin{figure}
	\centering
	\includegraphics[width=13cm]{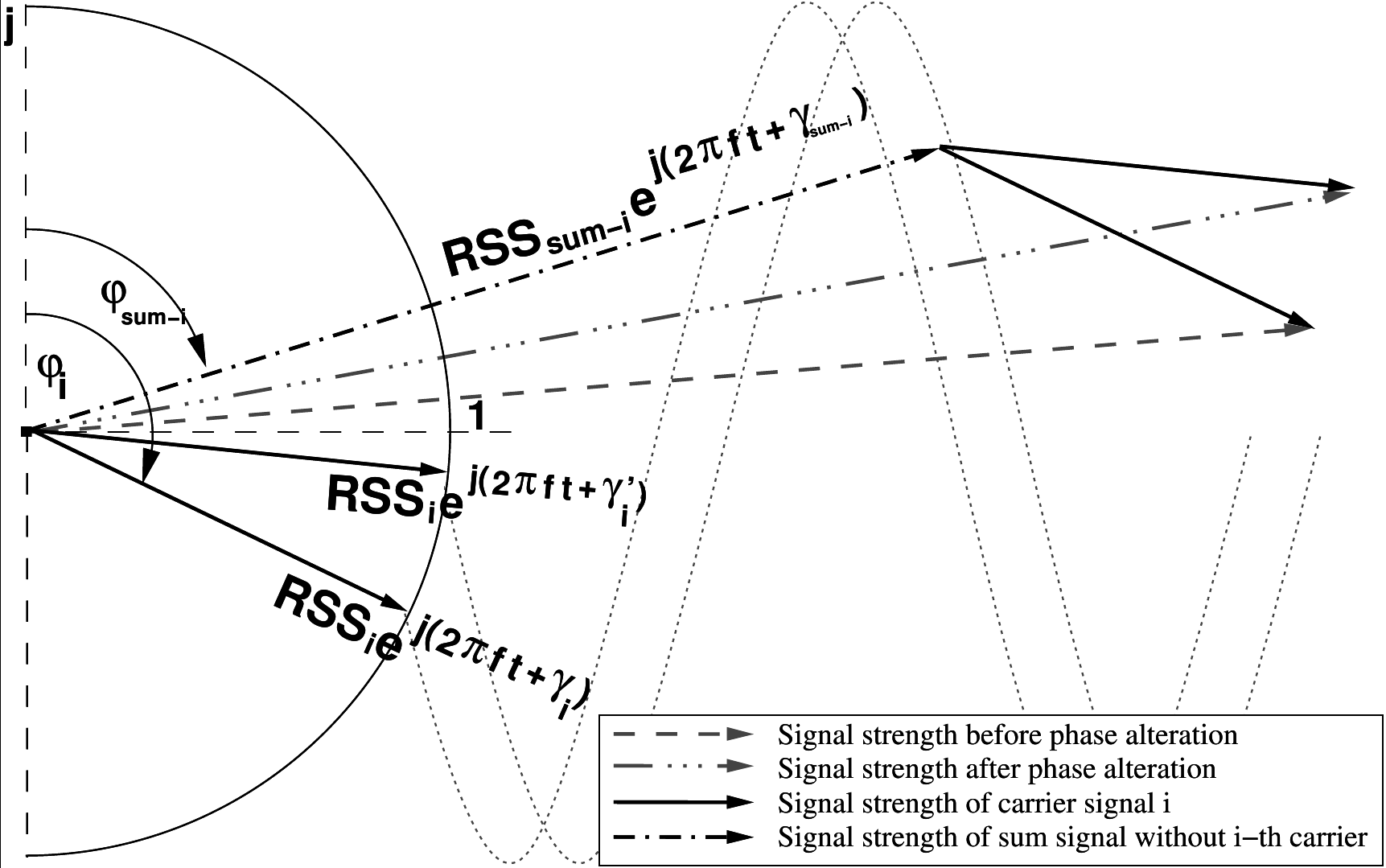}
	\caption{Illustration of the impact of carrier phase alteration on the overall received signal strength}
	\label{figure04BF}
\end{figure}
In this equation we denote the received signal strength achieved by the superimposition of all signals short of $i$ by 
\begin{eqnarray}
\mbox{RSS}_{\mbox{\footnotesize \mbox{\footnotesize sum}-i}}=\sum_{\overline{i}\not=i}\mbox{RSS}_{\overline{i}}e^{j(2\pi (f_c+f_{\overline{i}})t+\gamma_{\overline{i}}+\phi_{\overline{i}}+\psi_{\overline{i}})}; 
% \\\nonumber 
\overline{i}\in[1..n].	
\end{eqnarray}

Since the phase alteration is a random process, the actual gain of a single phase modification is typically smaller than the maximum possible value.
When we assume that a single node can establish up to $k$ equally probable carrier phases, the average gain by the alteration of one carrier signal is then
\begin{equation}
	\frac{\sum_{i=1}^k\overline{\mbox{RSS}}\left(\gamma_i,\gamma_i+\frac{2\pi}{k}\cdot i\right)}{k}.
\end{equation}
Consequently, when the noise figure is in the same order or greater, alterations of individual carriers have little effect.
In such a situation it is beneficial to increase the average distance of consecutive search points in the search space in a single iteration.
This can be achieved by increasing the variance $V_{\gamma_i}$, the neighbourhood size $\mathcal{N}$ or the probability $P_{\mbox{\footnotesize mut},\gamma_i}$ (cf. section~\ref{sectionSimulation} and section~\ref{sectionCaseStudyBF02}).

\subsubsection{Impact of the network size}\label{sectionImpactsSize}
The number of nodes that participate also impacts the performance.
Since the synchronisation is achieved by a random process over all possible combinations of phase and frequency offsets, the synchronisation time is increased with the count of nodes~\cite{5923}.
The optimum performance is achieved with small $P_{\mbox{\footnotesize mut},\gamma_i}$, $P_{\mbox{\footnotesize mut},f_i}$, $V_{f_i}$ and $V_{\gamma_i}$~\cite{4023}.
On the other hand, the relative impact of an individual node on $\zeta_{\mbox{\footnotesize sum}}(t)$ decreases with increasing node count.
We can see this again from equation~(\ref{equationThree}).
The value $\overline{\mbox{RSS}}(\gamma_i,\gamma'_i)$ decreases with increasing $\mbox{RSS}_{\mbox{\footnotesize \mbox{\footnotesize sum}-i}}$.
With increasing node count $n$, it is therefore beneficial to chose $P_{\mbox{\footnotesize mut},\gamma_i}$, $P_{\mbox{\footnotesize mut},f_i}$, $V_{\gamma_i}$ and $V_{f_i}$ slightly higher than $\frac{1}{n}$ in order to increase the impact of modifications in one iteration (cf. section~\ref{sectionSimulation}).

\subsubsection{Impact of node mobility}\label{sectionImpactsMobility}
Movement impacts the synchronisation of nodes since phases drift apart when the receiver or transmit nodes move~\cite{4032}.
Synchronisation has to be significantly faster than the velocity experienced.
An increased value for $P_{\mbox{\footnotesize mut},\gamma_i}$, $P_{\mbox{\footnotesize mut},f_i}$, $V_{\gamma_i}$ or $V_{f_i}$ might therefore be beneficial in the presence of node mobility (cf. section~\ref{sectionSimulation}).

\subsection{Simulation and case studies}\label{sectionSimulation}
% \textbf{Improve this section and make it stronger}
In a Matlab-based simulation, up to $100$ IoT devices are distributed uniformly at random on a $4$~m$\times6$~m square area (e.g. spread across a wall in a factory building) with a remote IoT receiver located up to $11$~m in orthogonal direction from the centre of this area.
Frequency and phase stability are considered perfect.
We calculate the phase offset of the received dominant signal component from each transmitter according to the transmission distance in a direct line of sight.
Path loss was calculated by the Friis free space equation $P_{tx}\left(\frac{\lambda}{2\pi d}\right)^2 G_{tx} G_{rx}$ with antenna gain for transmitter and receiver as $G_{rx}=G_{tx}=0$~dB.
Signals are transmitted at $2.4$~GHz with transmit power $P_{tx}=1$~mW.
All received signal components calculated in this manner are then summed up in order to achieve the superimposed sum signal 
\begin{equation}
	\zeta_{\mbox{\footnotesize sum}}(t)=\sum_{i=1}^n \left(\Re\left(m(t)\mbox{RSS}_ie^{j(2\pi (f_c+f_i)t+\gamma_i+\phi_i+\psi_i)}\right)\right).
\end{equation}
Finally, a noise signal $\zeta_{\mbox{\footnotesize noise}}(t)$ is added to $\zeta_{\mbox{\footnotesize sum}}(t)$ in order to estimate the signal at the receiver.
We utilise AWGN at $-103$~dBm as proposed in~\cite{062}.
For a given configuration we repeated each simulation 10 times with identical parameters.

Each simulation lasts for 6000 iterations.
Signal quality of a signal during the synchronisation phase is measured by the Root of the Mean Square Error (RMSE) of the received signal $\zeta_{\mbox{\footnotesize sum}}(t)$ to an expected optimum signal $\zeta_{\mbox{\footnotesize opt}}(t)$:
\begin{equation}
RMSE=\sqrt{\sum_{t=0}^{\varrho}
		\frac{\left(
				\zeta_{\mbox{\footnotesize sum}}(t)+\zeta_{\mbox{\footnotesize noise}}(t)-\zeta_{\mbox{\footnotesize opt}}(t)
			\right)^2}{n}
		}\label{equationRMSEBF02}
\end{equation}
In equation~(\ref{equationRMSEBF02}), $\varrho$ is chosen to cover several signal periods.

The optimum signal $\zeta_{\mbox{\footnotesize opt}}(t)$ is calculated as perfectly aligned and properly phase shifted received sum signal from all transmit sources.
For the optimum signal, noise is disregarded.

\subsubsection{Performance of a local random search}
We implement a local random search with neighbourhood radius $\frac{\mathcal{N}}{2} \in [0,\pi]$ where each node~$i\in[1,n]$ alters the phase offset $\gamma_i$ of its carrier signal $\zeta_i(t)$ with probability $\frac{1}{n}$ to
$\gamma'_i\in[\gamma_i-\frac{\mathcal{N}}{2},\gamma_i+\frac{\mathcal{N}}{2}]$.
Figure~\ref{figure05BF} depicts the performance of the algorithm with a neighbourhood size of $\mathcal{N}=0.6\pi$ compared to a global random search approach ($\mathcal{N}=2\pi$).
\begin{figure}
\centering
\includegraphics[width=13.5cm]{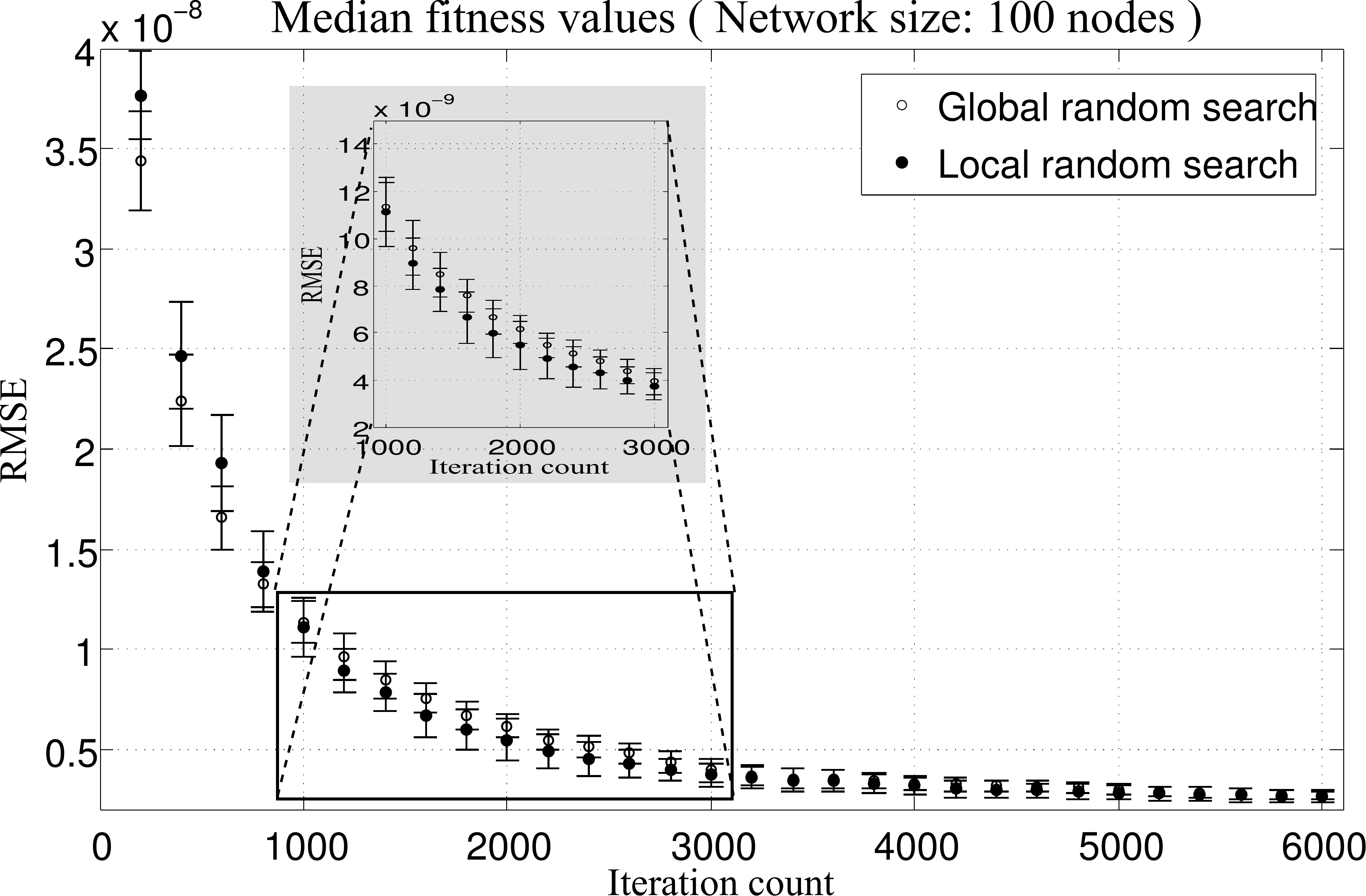}
	\caption{Comparison of the performance achieved by global and local random search}
	\label{figure05BF}
\end{figure}
We observe that, although the local random search method naturally has a slower start than the global random search method, it then reaches lower RMSE values faster. 

In particular, in the critical part ot the optimisation, the RMSE values reached by the local random search approach are reached only about 400-500 iterations later by the global random search method.
Due to noise and therefore a general saturation of the optimisation process, the synchronisation quality is not much improved afterwards so that the global random search eventually catches up.
\subsubsection{Case study with software defined radio devices}\label{sectionCaseStudyBF02}
We approximated realistic conditions in an experimental setting with Universal Software Radio Peripheral (USRP) software radios\footnote{http://www.ettus.com} to represent distributed devices.

Three USRP devices have been utilised as transmitters with one device as receiver.
In order to achieve identical transmit frequencies among devices, the clock of the first transmit device was utilised as reference for the other two transmitters.
The clock of the receiver node was, however, independent.
Alternatively, clocks might be synchronised via GPS or by the iterative frequency synchronisation approach described in~\cite{Seo_2008}.
Table~\ref{tableInstrumentation2} summarises our experimental configuration.
\begin{table}
\centering
	\caption{Configuration of the experimental case study}
	\label{tableInstrumentation2}
	\begin{tabular}{l|l}
	\textbf{Experimental setting} & \\\hline
	Separation of antennas [m] & 0.44 \\
	Distance to receive antenna [m] & 5.5 / 11 / 16.4\\
	Transmit frequency [MHz] & $f_{TX}=2400$ \\
	Receive frequency [MHz] & $f_{RX}=902$ \\
	Iterations per experiment & $400$ \\
	Mobility & stationary \\
	Identical experiments & $12$ \\
	Transmit devices & 3\\
	Receive devices & 1\\[.1cm]
	
	\textbf{Algorithmic configuration} & \\\hline
	\begin{minipage}{5cm}Random distribution of the phase alteration\end{minipage} & normal\\[.1cm]
	Phase alteration probability $P_{\mbox{\footnotesize mut},\gamma_i}$ & 0.33/0.66/1.0\\[.1cm]
	\begin{minipage}{5cm}Variance $V_{\gamma_i}$ [$\pi$] \end{minipage}& $0.25$ / 1
	\end{tabular}
\end{table}

The experimental setting is sketched in figure~\ref{figure06BF}.
\begin{figure}
     \centering	
     \includegraphics[width=11cm]{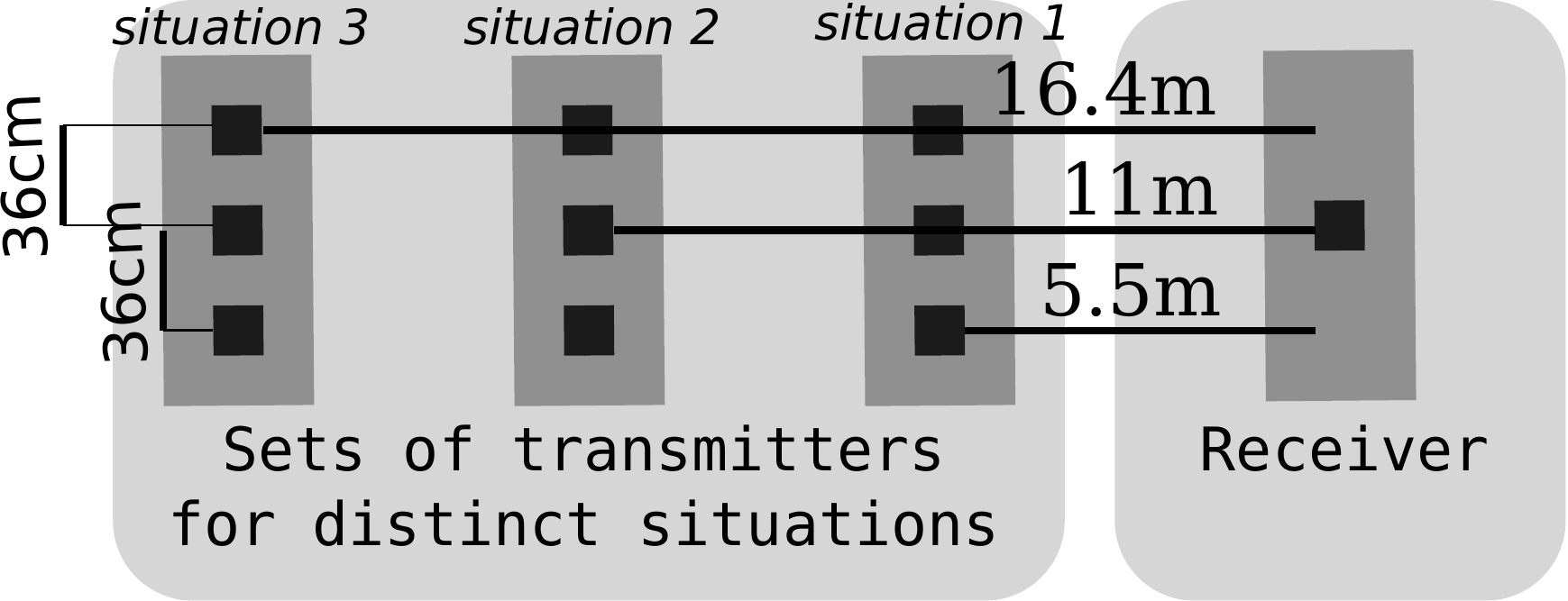}
     \caption{Illustration of the experimental setting utilised. Devices are placed on tables with a height of 72~cm.}
     \label{figure06BF}
\end{figure}
For the three different situations, the transmit nodes were moved to various distances accordingly.
We modified the probability to alter the phase offset of one device and the variance for its normal distributed random phase perturbation process as well as the distance between transmit and receive devices to account for distinct environmental situations.
The three transmit devices were controlled by a single computer running three independent and non-communicating processes. 
The receiver device was controlled by a second computer.
During the experiments the room was vacated so that no movement or presence of individuals could impact the synchronisation process.

Results derived in these experiments are depicted in figure~\ref{figure08BF} and figure~\ref{figure09BF}.
\begin{figure}
  	\centering
	  \includegraphics[width=13.5cm]{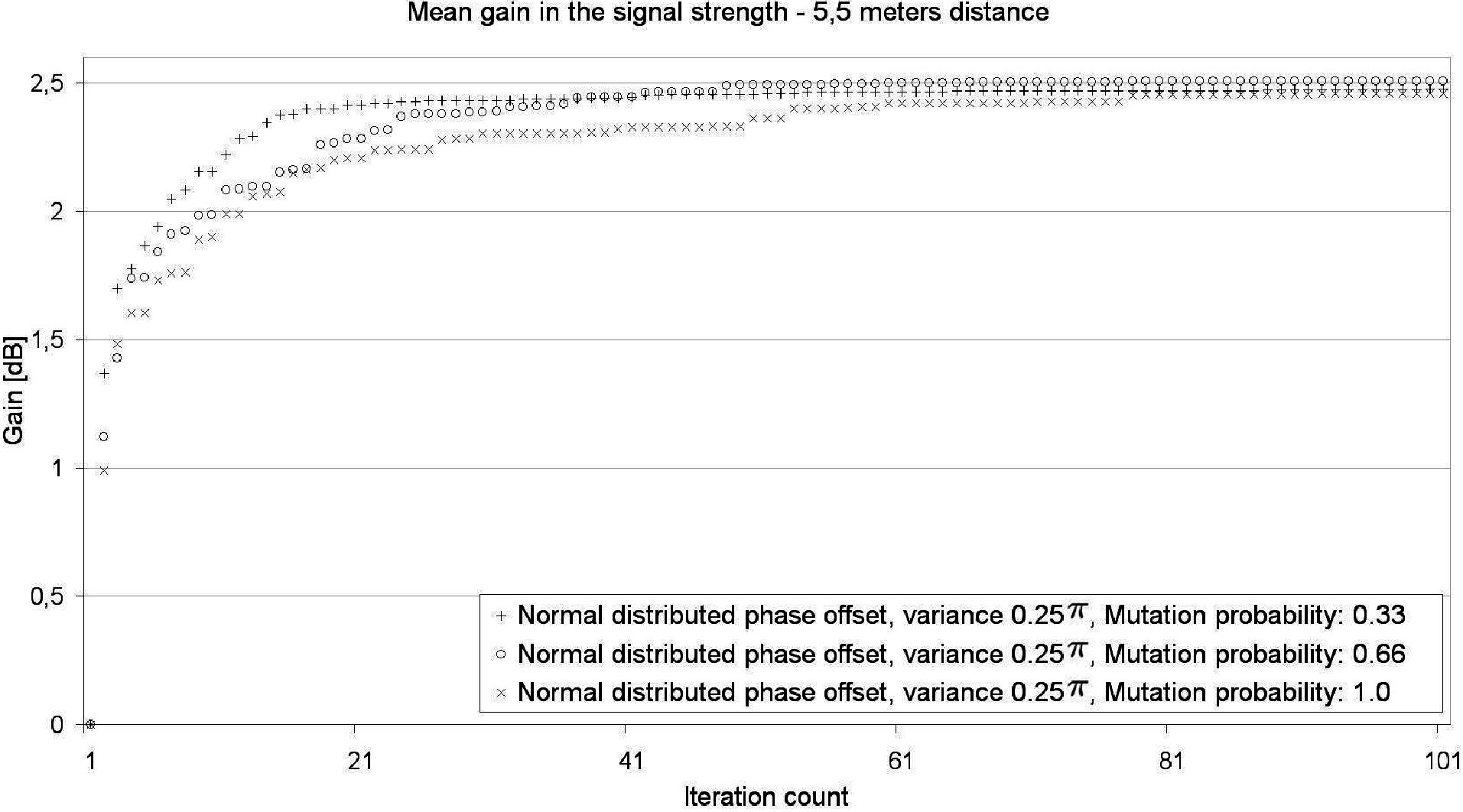}
  	\caption{Mean gain in the signal strength at a transmission distance of 5.5 meters and a variance of the random process of $0.25\pi$}
	  \label{figure08BF}
\end{figure}
\begin{figure}
\centering
	\includegraphics[width=13.5cm]{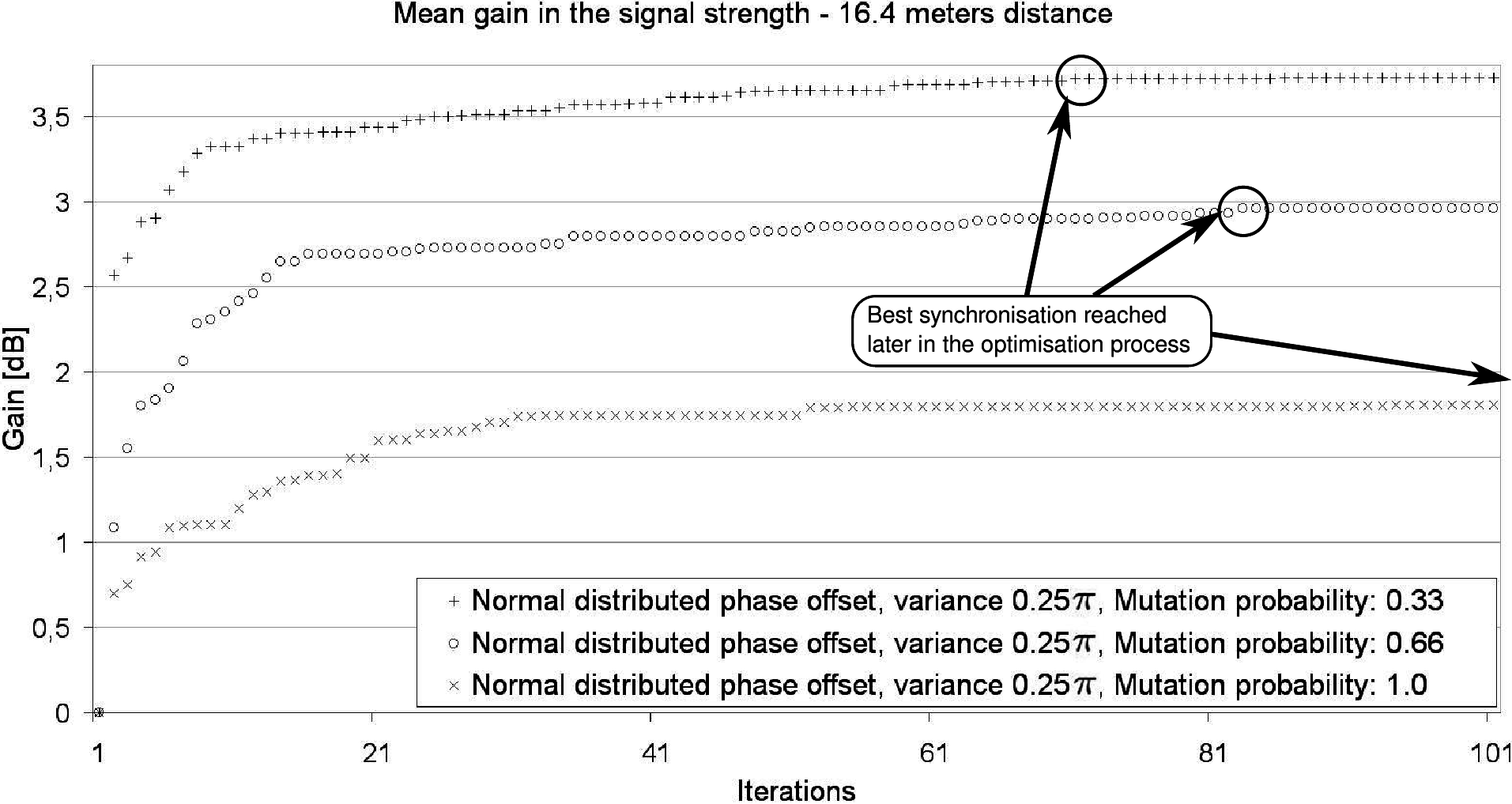}
	  \caption{Mean gain in the signal strength at a transmission distance of 16.4 meters and a variance of the random process of $0.25\pi$}
	  \label{figure09BF}
\end{figure}

The figure displays the mean gain in signal strength compared to an unsynchronised transmission at the beginning of the synchronisation.
The synchronisation performance differs for different environmental situations.
When the transmission distance increases, the relative noise figure also increases.
The best synchronisation is then generally reached later in the synchronisation process.
For instance, in figure~\ref{figure08BF}, at a transmission distance of 5.5~meters, the best value is reached after about 40 to 50 iterations. 
In figure~\ref{figure09BF} (16.4~meters) we observe that the optimum synchronisation is reached after about 60 to 80 iterations.
Also, the choice of the optimum configuration differs dependent on the scenario.
In figure~\ref{figure09BF} at a distance of 16.4~meters, a variance of $0.25\pi$ and a probability to alter the phase offset of $P_{\mbox{\footnotesize mut},\gamma_i}=0.33$ achieves the best results.
At shorter distances, the configuration with $P_{\mbox{\footnotesize mut},\gamma_i}=0.66$ results in a slightly better synchronisation performance.
For the variance, a similar effect was not observed.

\subsection{Conclusion}\label{sectionConclusionBF02}
We analysed  and evaluated a local random search-based approach to distributed adaptive carrier synchronisation for IoT nodes in a smart space with an iterative feedback-based carrier synchronisation method.
We derived a sharp asymptotic bound of 
\begin{equation}
E[T_\mathcal{P}]=\Theta\left(n\cdot \mathcal{N}\cdot\log(n)+\frac{\log(k)}{\mathcal{N}}\right)\nonumber     
\end{equation}
on the expected synchronisation performance.
This bound is significantly lower than the expected synchronisation performance derived recently for a global random search heuristic for this carrier synchronisation method.
Intuitively, although the global random search approach has, unlike the local random search, a positive (but very small) probability to reach a global optimum in each iteration, its probability to generally reach any point that would improve the synchronisation quality decreases with increasing synchronisation quality.
For the local random search, however, we could show that there is at least one node that would improve the synchronisation with probability not smaller than $\frac{1}{2n}$ for a long time during the synchronisation process.

Also, we discussed the impact of environmental effects on the synchronisation performance.
In particular, the relative noise figure, the count of participating devices and the mobility of nodes have been identified to impact the synchronisation performance.
However, by changing the probabilities  $P_{\mbox{\footnotesize mut},\gamma_i}$, $P_{\mbox{\footnotesize mut},f_i}$ to alter the phase offset or the variance $V_{\gamma_i}$, $V_{f_i}$ for each node~$i$, the synchronisation approach can be adapted to these environmental impacts.

We presented simulations and case studies with software defined radios on the iterative feedback-based carrier phase synchronisation by a local random search approach that also showed an improved performance compared to the global random search approach and an effect of the distance between nodes on the synchronisation performance.

\section*{Appendix -- Proofs}
% \section{Proofs}
\begin{proof}[Proof of Theorem~\ref{theoremZero}]
We see this from figure~\ref{figure07BF}.
\begin{figure}[p]
	\centering
	\includegraphics[width=13.5cm]{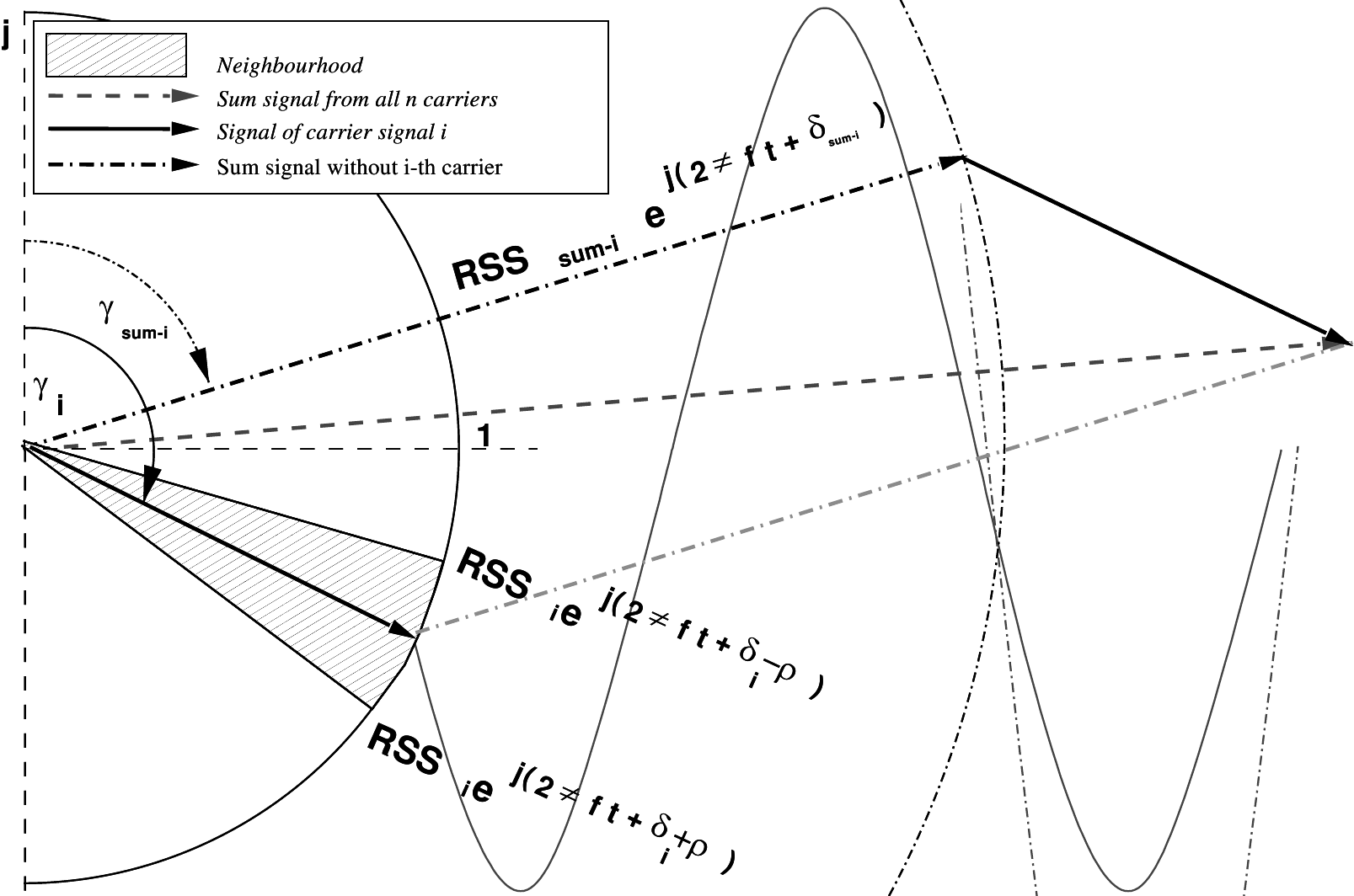}
	\caption{Illustration of the neighbourhood of the local random search approach}
     \label{figure07BF}
\end{figure}
In the figure, a vector of all carrier signals short of signal~$i$ is denoted by 
\begin{equation}
     \zeta_{\mbox{\footnotesize sum}-i}(t)=\mbox{RSS}_{\mbox{\footnotesize sum}-i}e^{j(2\pi ft+\gamma_{\mbox{\tiny sum}-i})}.     
\end{equation}
The vector associated with carrier $i$ is identified by 
\begin{equation}
\zeta_i(t)=\mbox{RSS}_{i}e^{j(2\pi ft+\gamma_{i})}.     
\end{equation}

When a single carrier signal $\zeta_i(t)$ is modified within the neighbourhood of $\mathcal{N}$, this means that $\zeta_i(t)$ is rotated by $\nu$ or $-\nu$ with $\nu\in\left[0,\frac{\mathcal{N}}{2}\right]$.
As long as $\zeta_{\mbox{\footnotesize sum}-i}(t)$ is not within the neighbourhood, a rotation of $\nu$ will increase (decrease) the amplitude of $\zeta_{\mbox{\footnotesize sum}-i}(t)+\zeta_i(t)$ while a rotation by $-\nu$ will decrease (increase) it.
The probability to improve the fitness value is $\frac{1}{2}$ when carrier phase offsets are chosen uniformly at random.
\end{proof}

\begin{proof}[Proof of Theorem~\ref{theoremOne}]   
We divide the analysis into two phases (cf.~theorem~\ref{theoremZero}).
In the first phase, at least one node does not have the optimum phase offset within its neighbourhood.
Then, there is always at least one node that will by altering its carrier phase improve the synchronisation with probability at least $\frac{1}{2}$.
The probability that in one iteration one such node alters its phase offset while all other $n-1$ nodes do not change it is at least
\begin{equation}
     \frac{1}{n}\cdot\left(1-\frac{1}{n}\right)^{n-1} \geq \frac{1}{e\cdot n}.
\end{equation}
We define the expected progress as the expected count of bits in the binary representation of search points that are altered in one iteration.
Since a new search point is drawn uniformly at random from all possible values in the neighbourhood of size $\mathcal{N}$, the expected progress when a node that has not the optimum within its neighbourhood randomly alters its carrier phase and improves the overall synchronisation is therefore at most
\begin{equation}
     \frac{1}{2}\cdot\frac{\mathcal{N}}{2}\cdot e\cdot n=\frac{e\cdot n \cdot \mathcal{N}}{4}.
\end{equation}
The expected upper bound on the iterations required to reach a global optimum is then determined by the maximum distance to an optimum.
The Hamming-distance to a binary representation that describes a global optimum is $n\cdot\log(k)$ at most.
Consequently, the expected number of these iterations until a binary representation is found for which all nodes have the optimum within their neighbourhood is at most 
\begin{equation}
\frac{n\cdot\log(k)\cdot4}{\mathcal{N}\cdot e\cdot n}=\mathcal{O}\left(\frac{\log(k)}{\mathcal{N}}\right).     
\end{equation}
In the second phase, each node has the optimum carrier phase offset within the neighbourhood around its current carrier phase.
Assume that a set of $i$ nodes has already reached an optimum synchronisation of their carrier phases.
In this case, the probability that one of the $n-i$ nodes which have not yet found the optimum phase offset applies a correct mutation which would alter the carrier phase to the optimum value with respect to all other carrier phases is 
\begin{eqnarray}
\left(\begin{array}{c}n-i\\1\end{array}\right)\cdot\frac{1}{n}\cdot\frac{1}{\mathcal{N}}\cdot\left(1-\frac{1}{n}\right)^{n-1}
\geq\frac{n-i}{n\cdot \mathcal{N}\cdot e}.     \label{equationEight}
\end{eqnarray}
In equation~(\ref{equationEight}), the term $\left(\begin{array}{c}n-i\\1\end{array}\right)\cdot\frac{1}{n}$ describes the number of possible cases that one node out of $n-i$ nodes which are not yet perfectly synchronised alters its phase offset with probability $\frac{1}{n}$.
Since all phases are with equal probability drawn from the Neighbourhood of size~$\mathcal{N}$, this alteration leads to the one optimum phase offset within the neighbourhood with probability $\frac{1}{\mathcal{N}}$.
The term $\left(1-\frac{1}{n}\right)^{n-1}$ describes the probability that all other $n-1$ nodes do not alter their phase offset in this iteration.
When this event happens $n-1$ times for each possible number of already synchronised nodes ($n-i$ with $i\in[1..n]$), the carrier phase offsets of all nodes are finally synchronised.
Therefore, an upper bound on the synchronisation time in the second phase is given by
\begin{eqnarray}
&&\sum_{i=0}^{n-1}\frac{n\cdot \mathcal{N}\cdot e}{n-i}\nonumber\\
&=&\sum_{i=1}^{n}\frac{n\cdot \mathcal{N}\cdot e}{i}\nonumber\\
&=&\mathcal{O}\left(n\cdot \mathcal{N}\cdot\log(n)\right).     
\end{eqnarray}
Overall, the expected asymptotic synchronisation time is then 
\begin{equation}
\mathcal{O}\left(n\cdot \mathcal{N}\cdot\log(n)+\frac{\log(k)}{\mathcal{N}}\right).     
\end{equation}
\end{proof}

\begin{proof}[Proof or Theorem~\ref{theoremTwo}]
After initialisation, the phases of the carrier-signals are identically and independently distributed.
Consequently for a superimposed received sum-signal $\zeta_{\mbox{\footnotesize sum}}(t)$, each of the $n\cdot\log(k)$ bits in the binary string $s_{\zeta_{\mbox{\tiny sum}}}$ that represents the corresponding search-point has an equal probability to be $1$ or $0$.
The probability to start from a search-point $s_{\zeta_{\mbox{\tiny sum}}}$ with Hamming-distance $h(s_{\mbox{\footnotesize opt}},s_{\zeta_{\mbox{\tiny sum}}})$ not larger than $l\in\mathds{N}\; ; \; l\ll n\cdot\log(k)$ to one of the global optima $s_{\mbox{\footnotesize opt}}$ is at most 
\begin{eqnarray}
	P[h(s_{\mbox{\footnotesize opt}},s_{\zeta_{\mbox{\tiny sum}}})\leq l]
	&=& \sum_{i=0}^l\left(\begin{array}{c}
	                                	n\cdot\log(k)\\ n\cdot\log(k)-i
	                                \end{array}
\right)
% \nonumber\\     &&
\cdot \frac{k}{2^{n\cdot\log(k)-i}}\nonumber\\
&\leq&\frac{k\cdot\left(n\cdot\log(k)\right)^{l+2}}{2^{n\cdot\log(k)-l}}
\end{eqnarray}
In this formula, 
\begin{equation}
\left(\begin{array}{c}
	                                	n\cdot\log(k)\\ n\cdot\log(k)-i
	                                \end{array}
\right)	
\end{equation}
is the count of possible configurations with $i$ bit-errors to a given global optimum, $\frac{1}{2^{n\cdot\log(k)-i}}$ represents the probability for all these bits to be identical to the respective bits in one of the $k$ global optima.
Observe that we have a global optimum for each possible~$k$ phase offsets since from one global optimum we reach an arbitrary other global optimum by shifting the carrier phases of all nodes by an equal amount.

This means that with high probability the Hamming-distance to the nearest global optimum is at least~$l$.
We will use the method of the expected progress to calculate a lower bound on the optimisation-time required to flip these $l$ bits.
The general idea is the following.

Let $(s_{\zeta_{\mbox{\tiny sum}}},\tau)$ denote the situation that search-point $s_{\zeta_{\mbox{\tiny sum}}}$ was achieved after $\tau$ iterations of the algorithm.
We define a progress measure $\Lambda:\left(\mathds{B}^{n\cdot\log(k)},t\right)\rightarrow\mathds{R}^+_0, t\in\mathds{N}$ such that $\Lambda(s_{\zeta_{\mbox{\tiny sum}}},\tau)<\Delta$ represents the case that a global optimum was not found in the first $\tau$ iterations.
For every $\tau\in\mathds{N}$ we have 
\begin{eqnarray}
	E[T_\mathcal{P}]&\geq& \tau\cdot P[T_\mathcal{P}>\tau]\nonumber\\ 
	&=& \tau\cdot P[\Lambda(s_{\zeta_{\mbox{\tiny sum}}},\tau)<\Delta]\nonumber\\
	&=& \tau\cdot (1-P[\Lambda(s_{\zeta_{\mbox{\tiny sum}}},\tau)\geq\Delta]).
\end{eqnarray}
With the help of the Markov-inequality we obtain 
\begin{equation}
P[\Lambda(s_{\zeta_{\mbox{\tiny sum}}},\tau)\geq\Delta]\leq \frac{E[\Lambda(s_{\zeta_{\mbox{\tiny sum}}},\tau)]}{\Delta}	
\end{equation}
and therefore 
\begin{equation}
	E[T_\mathcal{P}]\geq \tau\cdot\left(1-\frac{E[\Lambda(s_{\zeta_{\mbox{\tiny sum}}},\tau)]}{\Delta}\right).
\end{equation}
This means that we can obtain a lower bound on the optimisation-time by providing the expected progress after $\tau$ iterations.
The probability for $l$ bits to correctly flip is at most 
\begin{eqnarray}
	& & \left(1-\frac{1}{n\cdot \mathcal{N}}\right)^{n\cdot \mathcal{N}-l}\cdot\left(\frac{1}{n\cdot \mathcal{N}}\right)^l
     \leq
\frac{1}{(n\cdot \mathcal{N})^l}.
\end{eqnarray}
In this formula, $\left(1-\frac{1}{n\cdot \mathcal{N}}\right)^{n\cdot \mathcal{N}-l}$ describes the probability that all 'correct' bits do not flip while the remaining $l$ bits mutate with probability $\left(\frac{1}{n\cdot \mathcal{N}}\right)^l$.

This means that for all $n$ nodes in one iteration on average $1$ bit flips inside their neighbourhood of size $\mathcal{N}$.
The expected progress in one iteration is therefore 
\begin{eqnarray}
	E[\Lambda(s_{\zeta_{\mbox{\tiny sum}}},\tau),\Lambda(s_{\zeta_{\mbox{\tiny sum}}'},\tau+1)]&\leq& \sum_{i=1}^l\frac{i}{(n\cdot \mathcal{N})^i}
<\frac{2}{n\cdot \mathcal{N}}
\end{eqnarray}
and the expected progress in $\tau$ iterations is consequently not greater than $\frac{2\tau}{n\cdot \mathcal{N}}$.

When we choose $\tau=\frac{n\cdot N\cdot\Delta}{4}-1$, the double of the expected progress is still smaller than $\Delta$.
With the Markov inequality we can show that this progress is not achieved with probability~$\frac{1}{2}$.
Altogether we conclude that the expected optimisation-time is bound from below by
\begin{eqnarray}
	E[T_\mathcal{P}]&\geq& \tau\cdot\left(1-\frac{E[\Lambda(s_{\zeta_{\mbox{\tiny sum}}},\tau)]}{\Delta}\right) \nonumber\\
	&\geq&  \frac{n\cdot \mathcal{N}\cdot\Delta}{4}\cdot \left(1- \frac{\frac{2\cdot n\cdot \mathcal{N}}{4\cdot n\cdot \mathcal{N}}\cdot\Delta}{\Delta}\right)\nonumber\\
	&=&\Omega(n\cdot \mathcal{N}\cdot\Delta)
\end{eqnarray}
\end{proof}

\begin{proof}[Proof of Theorem~\ref{theoremThree}]
The asymptotically sharp bound is a result of theorem~\ref{theoremOne} and theorem~\ref{theoremTwo} with 
\begin{equation}
\Delta=\log(n)+\frac{\log(k)}{n\cdot \mathcal{N}^2}.     
\end{equation}
\end{proof}

\pagebreak

\section[RF-sensing of activities from non-cooperative subjects in device-free recognition systems using ambient and local signals]{RF-sensing of activities from non-cooperative subjects in device-free recognition systems using ambient and local signals \footnote{Originally published as ' Stephan Sigg, Markus Scholz, Shuyu Shi, Yusheng Ji and Michael Beigl: RF-sensing of activities from non-cooperative subjects in device-free recognition systems using ambient and local signals, in IEEE Transactions on Mobile Computing (TMC), Feb. 2013, vol. 13, no. 4 (DOI: http://doi.ieeecomputersociety.org/10.1109/TMC.2013.28)' (1536--1233 \copyright 2013 IEEE) }}\label{sectionOriginalRF01}
We consider the detection of activities from non-cooperating individuals with features obtained on the Radio Frequency channel.
Since environmental changes impact the transmission channel between devices, the detection of this alteration can be used to classify environmental situations.
We identify relevant features to detect activities of non-actively transmitting subjects. 
In particular, we distinguish with high accuracy an empty environment or a walking, lying, crawling or standing person, in case-studies of an active, device-free activity recognition system with software defined radios.
We distinguish between two cases in which the transmitter is either under the control of the system or ambient. 
For activity detection the application of one-stage and two-stage classifiers is considered.
Apart from the discrimination of the above activities, we can show that a detected activity can also be localised simultaneously within an area of less than $1$~meter radius.

\subsection{Introduction}\label{sectionIntroductionRF01}
In the approaching Internet of Things (IoT), virtually all entities in our environment will be enhanced by sensing, communication and computational capabilities~\cite{IoT_Li_2012,IoT_Haller_2010}.
These entities will provide information on environmental situations, interact in the computation and processing of data~\cite{FunctionComputation_Sigg_2012} and store information.
In order to sense environmental situations, common sensors in current applications are light, movement, pressure, audio or temperature~\cite{5840}.
Clearly, for reasons of cost and sensor size it is desired to minimise the count of distinct sensors in IoT entities.
The one sensor class that defines the minimum set naturally available in virtually all IoT devices is the Radio Frequency (RF)-transceiver to communicate with other wireless entities~\cite{Pervasive_Scholz_2011}. 
It is also shipped with nearly every contemporary electronic device like mobile phones, notebooks, media players, printers as well as keyboards, mouses, watches, shoes and rumour has spread about even media cups. 
Therefore, the RF transceiver is a ubiquitously available sensor class.
It is capable of sensing changes or fluctuation in a received RF-signal.
Radio waves are blocked, reflected or scattered at objects.
At a receiver, the signal components from distinct signal paths add up to form a superimposition.
When objects that block or reflect the signal path of some of these signal components are moved, this is reflected in the superimposition of signal waves at the receiver.
We assert that specific activities in the proximity of a receiver generate characteristic patterns in the received superimposed RF-signal.
By identifying and interpreting these patterns, it is possible to detect activities of non-cooperating subjects in an RF-receiver's proximity.

Although the wireless channel is occasionally utilised for location detection of other RF devices~\cite{RFSensing_Woyach_2006,RFSensing_Muthukrishnan_2007} or passive entities~\cite{Pervasive_Youssef_2007,Pervasive_Seifeldin_2013}, it is seldom used to detect other contexts like activities from entities which are not equipped with a RF-transceiver.

We consider the detection of activities of device-free entities from the analysis of RF-channel fluctuations induced by these very activities.
In analogy to the definition of device-free radio-based localisation systems (DFL)~\cite{Pervasive_Youssef_2007} we define device-free radio-based activity recognition systems (DFAR) as \textit{systems which recognise the activity of a person using analysis of radio signals while the person itself is not required to carry a wireless device} (cf.~\cite{Pervasive_Scholz_2011b}). 
In addition to the sensor type employed, we further categorise radio-based activity recognition systems by the parameters enlisted in table~\ref{tableClassification}.
In particular, we distinguish between passive and active systems depending on whether a transmitter is part of and under control of the radio-based recognition system. Also, an ad-hoc system can be installed in a new environment without re-training the classifier, while a non-ad-hoc system requires initial training or configuration.

In this work, we focus on the detection of static and dynamic activities of single individuals by active and passive, non-ad-hoc DFAR systems.
The active system employs a dedicated transmitter as part of the recognition hardware while the passive system utilises solely ambient FM radio from a transmitter not under the control of the system.
Compared to preliminary work on RF-based activity recognition~\cite{Pervasive_Shi_2012,Pervasive_Shi_2012b,4036,ContextAwareness_Sigg_2011,Pervasive_Scholz_2011}, the novel contributions are
\begin{enumerate}
     \item A comprehensive discussion of research campaigns utilising RF-channel based features for the detection of location or activities (section~\ref{sectionRelatedWorkRF01})
     \item A concise investigation on possible features for RF-based activity recognition (section~\ref{sectionFeaturesRF01})
     \item A case study on activity classification of a single individual from RF-channel based features for
     \begin{itemize}
          \item[3a)] an active DFAR system utilising 900 MHz software defined radio nodes (section~\ref{sectionActiveDetection}), and
          \item[3b)] a passive DFAR system utilising ambient FM radio signals at 82.5 MHz (section~\ref{sectionPassiveDetection}) considering in both cases
     \end{itemize}
     \item the classification accuracy with respect to activity and location.
\end{enumerate}

The majority of the features we consider are amplitude-based.
Since with the Received Signal Strength Indicator (RSSI), a related value is commonly provided by contemporary transceiver hardware, the features utilised in this study can be implemented similarly for most current mobile devices.

Our discussion is structured as follows.
In section~\ref{sectionRelatedWorkRF01} we review the related work on activity and location recognition with a particular focus on radio frequency based or related environmental features.
Section~\ref{sectionUseCases} then discusses use-cases and application scenarios for RF-based activity recognition.  
The features utilised in our case-studies are introduced, analysed and discussed in section~\ref{sectionFeaturesRF01}.
Based on some of these features, we report from the experiments in section~\ref{sectionActivityDetection}.
In particular, we demonstrate the detection of five activities with active and passive DFAR systems.
We can also show that a localisation of these activities is feasible simultaneously from the same set of features.
Section~\ref{sectionConclusionRF01} summarises the results and closes our discussion.
\begin{table}
\caption{Classification-parameters for radio-based context recognition systems {\scriptsize (1536--1233 \copyright 2013 IEEE)}}
\centering
\begin{tabular}{p{2.4cm}|p{5.0cm}}
         \textbf{Parameter} & \textbf{Values} \\\hline 
                Sensor type & Device-bound; Device-free\\\hline
                Sensing modality & Passive; Active \\\hline
            Setup & Ad-hoc; Non-ad-hoc (requires training)\\\hline
     \end{tabular}
\label{tableClassification}
\end{table}

\subsection{Related work}\label{sectionRelatedWorkRF01}
Activity recognition comprises the challenge to recognise human activities from the input of sensor data.
A broad range of sensors can be applied for this task.
Traditionally, accelerometer devices have evolved as the standard equipment for activity recognition both for their high diffusion and convincing recognition rates~\cite{Pervasive_Ravi_2005,Pervasive_Cao_2012}.
General research challenges for activity recognition regard the accurate classification of noisy data captured under real world conditions~\cite{Pervasive_Bao_2004} or the automation of recognition systems~\cite{Pervasive_Ploetz_2012}.
Another problem that is addressed in depth only recently is the creation of classification systems that scale to a large user base. 
With increasing penetration of sensor enriched environments and devices, the diversity in user population poses new challenges to activity recognition.
Abdullah et al. for instance address this challenge by maintaining several groups of similar users during training to identify inter-user differences without the need for individual classifiers~\cite{Pervasive_Abdullah_2012}.

Even more fundamental and aligned to this scaling problem is the required cost for accurately equipping subjects, training them to the system, equipping the environment or the users and most importantly, having them to actually wear the sensing hardware.
The classification accuracy is highly dependent on the accurate sensor location.
The integration of sensors in clothing as well as the recent remarkable progress in the robustness to rotation or displacement have improved this situation greatly~\cite{Pevasive_Chavarriaga_2011}.
However, a subject is still required to cooperate and at least wear the sensors~\cite{Pervasive_Cohn_2012}.
This requirement can not be assured generally in real-world applications. 
In particular, even devices as private as mobile phones, which are frequently assumed to be constantly in the same context as its owner~\cite{Pervasive_Patridge_2008, Pervasive_Varshavsky_2008, Pervasive_Lane_2010}, can not serve as a sensor platform suitable to accurately capture the context of an individual.   
Dey et al. investigated in~\cite{Pervasive_Dey_2011} that users have their mobile phone within arms reach only 54\% of the time.
This confirms a similar investigation of Patel et al. in 2006~\cite{Pervasive_Patel_2006} which reported a share of 58\% for the same measure.

These general challenges of activity recognition can be overcome be using an environmental sensing modality.
Naturally, vision-based approaches, such as video~\cite{Pervasive_Hongeng_2004} and recently also the Kinect and wii concepts have been employed by scientists to classify gestures and activities~\cite{Pervasive_Bannach_2008}.
However, the burden of installation and cost make such approaches hard to deploy at scale~\cite{Pervasive_Cohn_2012}.
Recently, researchers therefore explore alternative sensing modalities that are pre-installed and readily available in environments and therefore minimise installation cost.

Patel et. al. coined the term infrastructure-mediated sensing and demonstrated in 2007 that alterations in resistance and inductive electrical load in a residential power supply system due to human interaction can be automatically identified~\cite{Pervasive_Patel_2007}.
They leveraged transients generated by mechanically switched motor loads to detect and classify such human interaction from electrical events.
In a related work from 2010, Gupta et al. analysed electromagnetic interference (EMI) from switch mode power supplies~\cite{RFSensing_Gupta_2010}.
In~\cite{Pervasive_Gupta_2011} they showed that it is even possible to detect simple gestures near compact fluorescent light by analysing the EMI-structures, effectively turning common light bulbs in a house into sensors.
Environmental sensing with atypical sensing devices is also considered by Campbell et al. and Thomaz et al. who present an activity detection method utilising residential water pipes~\cite{Pervasive_Campbell_2010,Pervasive_Thomaz_2012}.
In~\cite{Pervasive_Cohn_2010}, Cohn et al. form a residual Power-line system into a large distributed antenna to sense low power signals from parasitic or distant devices.
These approaches all require explicit interaction between a cooperating individual and a specific sensing entity.
These approaches are bound to specific environments or installations and typically are only feasible indoors.
An infrastructure mediated sensing medium with greater range is the RF-channel.
Signal strength, amplitude fluctuation or noise level provide information that can be utilised to classify environmental situations.

Several authors considered the localisation of individuals based on measurements from the RF-sensor.
Results are typically achieved by analysing the RF-signal amplitude, namely the RSSI of a received signal.
Classical approaches are device-bound and utilise the RF-sensor for location estimation of an active entity equipped with a RF transceiver.
In these approaches, the impact of multi-path fading and shadowing on the transmission channel and therefore the strength of an RF signal is exploited.
These approaches were driven by the attempt to provide capabilities of indoor localisation.
The first promising work was the RADAR system presented by Bahl et al.~\cite{Pervasive_Bahl_2000}.
The authors took advantage of existing communications infrastructure, WiFi access points, and employed RSSI fingerprints to identify locations off-line. 
With location, this approach was then applied also with GSM networks~\cite{Pervasive_Otsason_2005,Pervasive_Varshavsky_2007}, FM radio signals~\cite{Pervasive_Krumm_2003,Pervasive_Youssef_2005} and domestic powerline~\cite{Pervasive_Patel_2006,Pervasive_Stuntebeck_2008}.
Recently, automations have been proposed for such fingerprinting approaches~\cite{Pervasive_Jiang_2012,Pervasive_Pulkkinen_2012}.
  
While these systems rely on a two staged approach in which first a map of fingerprints is created off-line, recent work achieves on-line real-time localisation of entities equipped with a wireless transceiver based on WiFi or FM radio~\cite{Pervasive_Schougaard_2012,Pervasive_Wang_2012,Pervasive_Chen_2012}. 
These studies were initiated in~2006 by Woyach et al. who detail various environmental changes and their effect on a transmit signal~\cite{RFSensing_Woyach_2006}.
The authors utilise MICAz nodes to show that motion detection based on RSSI measurements can be more accurate than accelerometer data when changes are below the sensitivity of the accelerometer.
They experimentally employ several indoor-settings in which a receive node analyses a signal obtained from a transmitter. 
In this study they focused on an increased fluctuation in the RSSI signal level.
Additionally, the authors showed that velocity of an entity can be estimated by analysing the RSSI pattern of continuously transmitted packets of a moving node.
This work was advanced by Muthukrishnan et al. who study in 2007 the feasibility of motion sensing in a WiFi network~\cite{RFSensing_Muthukrishnan_2007}.
They analyse fluctuation in the 1~byte 802.11 RSSI indicator to sense whether a device is moving.
The authors consider only the two cases of motion and no motion and achieve a classification accuracy of up to $0.94$.
A more fine grained distinction was made by Anderson et al. and Sohn et al. based on fluctuations in GSM signal strength~\cite{RFSensing_Anderson_2006,RFSensing_Sohn_2006}.
The authors of~\cite{RFSensing_Anderson_2006} implement a neural network to detect the travel mode of a mobile phone.
They monitor the signal strength fluctuation from cells in the active set to distinguish between walking, driving and stationary with an accuracy between $0.8$ and $0.9$.

Sohn et. al describe a system that extracts seven features from GSM signal strength measurements to distinguish six velocity levels with an accuracy of $0.85$~\cite{RFSensing_Sohn_2006}.
The features mainly build on distinct measures of variation in signal strength and the frequency of cell-tower changes in the active set.

While all previously mentioned results considered special installations of the wireless transmitters, Sen et al presented a system that allows the localisation of a wireless device with an accuracy of about 1 meter from WiFi physical layer information even when the receiver is carried by a person that might induce additional noise to the captured features~\cite{RFSensing_Sen_2012}.

Summarising, these studies are examples of device-bound and active velocity and location estimation approaches since they require that the located entity is equipped with a RF transceiver. 

Recently, some authors also consider RF-sensing to detect the presence or location of passive entities.
Since these systems require at least one active transmitter, they can be classified as active, device-free systems.

Youssef defines this approach as Device-Free Localisation (DFL) in~\cite{Pervasive_Youssef_2007} to localise or track a person using RF-Signals while the entity monitored is not required to carry an active transmitter or receiver.
They localised individuals by exchanging packets between 802.11b nodes in corners of a room and analysed the moving average and its variance of the RSSI~\cite{Pervasive_Youssef_2007}.
Classification accuracy reached up to $1.0$ for some configurations.
Additionally, they presented a fingerprint-based localisation system with an accuracy of $0.9$.
Later, they improved their approach using less nodes~\cite{Pervasive_Seifeldin_2013}.
A passive radio map was constructed offline before a Bayesian-based inference algorithm estimated the most probable location.
These experiments have been conducted under Line-of-Sight (LoS) conditions.
Also, Wilson and Patwari showed in conformance with the findings of Kosba et al.~\cite{Pervasive_Kosba_2012b} that the variance of the RSSI can be used as an indicator of motion of non-actively transmitting individuals regardless of the average path loss that occurs due to dense walls and stationary objects~\cite{RFSensing_Wilson_2009}.
The area in which environmental changes impact signal characteristics was then considered by Zhang et al.
They used 870~MHz nodes arranged in a grid to show that for each link an elliptical area of about $0.5$ to 1 meters diameter exists for which RSSI fluctuation caused by an object traversing this area exceeds measurements in a static environment~\cite{RFSensing_Zhang_2009}.
They identified a valid region for detecting the impact (i.e. the RSSI fluctuations exceeding the measured threshold in a static environment) for transceiver distances from 2~m to 5~m for the considered 870~MHz frequency range~\cite{RFSensing_Zhang_2011}.
By dividing a room into hexagonal cell-clusters with measurements following a TDMA scheduling, an object position could be derived with an accuracy of around 1~meter.
This accuracy was further improved by Wilson and Patwari in 2011~\cite{RFSensing_Wilson_2009}.
They utilised a dense node array to locate individuals within a room with an average error of about $0.5$~meters.
This was possible by instrumenting a tomographic image over the 2-way RSSI fluctuations of nodes~\cite{RFSensing_Wilson_2010}.
All these studies consider a single experimental setting.

In a related work, Lee et al. sense the presence of an individual in five distinct environments~\cite{RFSensing_Lee_2010}.
They showed that the RSSI peak is concentrated in a restricted frequency band in a vacant environment while it is spread and reduced in intensity in the presence of an individual.
In 2011, Kosba et al. presented a new system for the detection of human movement in a monitored area~\cite{Pervasive_Kosba_2011}.
Using anomaly detection methods they achieved 6\% miss detection and a 9\% false alarm rate when utilising the mean and standard deviation of the RSSI in two environments.
They further implemented techniques to counteract effects of dispersion.
This was accomplished by continuously adding newly measured data which did not trigger the detection.
The previous results all considered the localisation of a single individual.

The simultaneous localisation of multiple individuals at the same time was first mentioned and studied by Patwari and Wilson in~\cite{Pervasive_Patwari_2011b}.
The authors derive a statistical model to approximate the position of a person based on RSSI variance which can be extended to multiple persons.
This aspect together with the previously untackled problem that environmental changes over time might necessitate frequent calibration of the location system was approached by Zhang and others in~\cite{Pervasive_Zhang_2012}.
The authors isolate the LoS path by extracting phase information from the differences in the RSS on various frequency spectrums at distributed nodes.
Their experimental system is with this approach able to simultaneously and continuously localise up to 5 persons in a changing environment with an accuracy of 1~meter.

We summarise that most work conducted in the area of RF-based classification with passive participants is related to the localisation of individuals.
The feasibility of this approach was verified in various environmental settings and at various frequencies.
The features  utilised are mostly the RSSI, its moving average, mean or RSSI fingerprint.
Also, 2-way RSSI variance was employed.
With these features a localisation accuracy of about $0.5$~meters was possible or the simultaneous localisation of up to 5 persons in a changing environment with an accuracy of 1 meter.

While the localisation of individuals based on features from the radio channel can therefore generally be considered as solved, recently, some authors considered active DFAR approaches to also detect activities.

Patwari et al. monitor breathing based on RSS analysis~\cite{RFSensing_Patwari_2011}.
The monitored area was surrounded by twenty 2.4~GHz nodes and the two-way RSSI was measured.
Using a maximum likelihood estimator they approximated the breathing rate within $0.1$ to $0.4$~beats accuracy.

Recently, we also conducted preliminary studies regarding the use of features from a RF-transceiver to classify static environmental changes such as opened or closed doors, presence, location and count of persons with an accuracy of $0.6$ to $0.7$~\cite{4036,ContextAwareness_Sigg_2011,Pervasive_Scholz_2011,OrganicComputing_Sigg_2011}.
We utilised USRP Software defined radio devices (SDR)\footnote{http://www.ettus.com} from which one constantly transmits a signal that is read and analysed by other nodes.
Devices were equipped with 900~MHz transceiver boards.
With the software radios a higher sampling frequency than in previous studies is possible and we can also sample the actual channel instead of only tracking the RSSI.
In these studies we concentrated on features related to the signal amplitude and derivation of the instantaneous amplitude from its mean.
Furthermore, we conducted preliminary studies on passive device free situation awareness by utilising ambient signals from a FM radio station not under the control of the recognition system. 
In these studies, static environmental changes such as opened doors have been detected with an accuracy of about $0.9$~\cite{Pervasive_Shi_2012} and a first study on suitable features to detect human activities could achieve an accuracy of about $0.8$ with a two stage recognition approach~\cite{Pervasive_Shi_2012b}. 

DFAR is still a mostly unexplored research field.
Open research questions regard the optimum frequencies and the impact of the frequency on the classification accuracy, the optimum sampling rate of the signal, the detection range and the impact of this distance on the classification accuracy as well as the minimum Signal-to-Noise Ratio (SNR).
Furthermore, a set of activities that can be recognised by RF-based classification is yet to be identified as well as a suitable design of the detection system.
In particular, the impact of the count and height of transmitting and receiving nodes has not yet been considered comprehensively as well as even the actual necessity of a transmit node as part of the recognition system since potentially the system might utilise ambient radio.
Also, it is not clear whether and how activities of multiple persons can be identified simultaneously and if features exist that enable ad-hoc DFAR systems.
A more detailed discussion of most of these aspects is given by Scholz et al. in~\cite{Pervasive_Scholz_2011b}.

In the present study, we identify and evaluate features for the classification of activities from RF-signals in two frequency bands (900~MHz and 82.5~MHz) with systems utilising ambient radio as well as a system-generated signal.
Four activities, two dynamic and two static, together with the empty environment are considered.

\subsection{Application scenarios for DFAR}\label{sectionUseCases}
We believe that DFAR research can provide a foundation for the realisation of an IoT and for Ubicomp in general.
The RF-Sensor has a high penetration in common equipment and will be available in virtually all IoT devices.
To reduce cost and complexity, hardware designers and application developers might then rather investigate and utilise the common RF-transceiver to sense environmental stimuli than integrating additional sensing hardware.
Currently, the information provided by the RF-channel is, although available virtually for free, mostly disregarded and discarded unused.

Apart from modulated data, the signal strength, amplitude fluctuation or noise level provide additional information about environmental situations.
In the following sections we exemplify two applications for DFAR in emergency situations and elderly monitoring.

\subsubsection{Monitoring in disaster stricken areas}
Despite tremendous efforts, careful preparation and training for a 'worst case', increased security precautions and costly installations of early warning systems, disaster situations either caused by nature or human intervention frequently strike also highly developed countries.
Recent cautionary tales are the flooding in Thailand or also the Tohoku earthquake near Sendai, Japan that let to a devastating tsunami and was the cause of the atomic crisis around the Fukushima-Daichi power plant.

In the time since this event, research efforts have been taken in the search of systems that can assist auxiliary forces in areas where most of the infrastructure is destroyed.
One important and urgent issue in such situations is the search for survivors and injured persons that might reside, for instance, in partly destroyed buildings~\cite{Pervasive_Song_2010,Pervasive_Palmer_2012}.

When the existing infrastructure is destroyed, RF-sensing might provide a cheap and wide-ranging alternative to assist rescue forces.
With a single RF-transmitter such as an RF-radio tower or a base station, a large area can not only be supplied with voice and data communication but the fluctuation in RF-channel characteristics might be employed to detect individuals and identify their status from activities such as lying, crawling, standing or walking.
Auxiliary forces might bring out a network of RF-transceiver devices in order to monitor an area via RF-channel fluctuation as part of their professional routine while at the same time establishing communication means via this RF-transceiver infrastructure~\cite{Pervasive_Ramirez_2011}.
The range, optimum installation height and features for ad-hoc operation are still open research questions for DFAR but the results presented in this work show that assistance in such scenarios can be provided by RF-sensing (although due to the lack of prior training the set of activities recognised might be reduced, for instance, to 'some movement' and 'some static alteration'). 
These additional sensing capabilities come virtually for free on top of the installation of wireless communication.

\subsubsection{Supporting well-being in domestic areas}
Most accidents happen at home. 
The primary reason for these accidents are falls which make up about 40\% of the total number of accidents~\cite{HassLass2002}. 
Most of these accidents leave the affected person in an unusual posture such as lying at an unusual location. 
While the automatic detection of fall and fall prevention has gained large interest in the research community and various approaches have been proposed, these alarm system either need body-attached sensors, require the installation of a complex infrastructure or have strong privacy related implications as, for instance, video based systems~\cite{Pervasive_Nouri_2007, Pervasive_Cucchiara_2007,Pervasive_Lin_2007}.
By utilising the RF-sensor for this kind of detection we would reduce privacy issues, avoid the need of having to carry sensors and ideally reduce installation requirements to a minimum. 

The sensor could further become a crucial component of (Health) Smart Home systems~\cite{Noury_Book_2011,Pervasive_Vacher_2011} relieving users from the necessity to wear a device. 
In fact, for Smart Home systems, the sensor needs to provide a rough localisation capability as well as the recognition of at least a basic set of activities of daily living. Among such activities are walking, standing and sleeping~\cite{Pervasive_Lukovicz_2012}.

Considering the demographic change in developing and developed countries, the application of the RF-sensor for alarm systems or Smart Homes could further play an important role towards the extension of self-sustained living of the elderly.
The present study illustrates the potential of the ubiquitously available RF-sensor for the detection of relevant activities in Smart Home environments.

\subsection{Features for DFAR}\label{sectionFeaturesRF01}
In the following we discuss the RF-based features we considered and their achieved classification accuracy.
We identify a set of three most relevant features for active and passive DFAR systems.

For our active DFAR system, we deploy a USRP SDR transmit node constantly broadcasting a signal $m(t)$ at a frequency of $f_c=900$~MHz.
In the passive DFAR system a FM radio signal $m(t)$ from a local radio station at $f_c=82.5$~MHz is utilised.
In both cases, the received signal
\begin{equation}
\zeta_{\mbox{\footnotesize rec}}(t)=\Re\left(m(t)e^{j2\pi f_ct}\mbox{RSS}e^{j(\psi+\phi)}\right)\label{equationOneRF01}  
\end{equation}
is read by one USRP SDR node and is analysed for signal distortion and its fluctuation due to channel characteristics.
In equation~(\ref{equationOneRF01}) the RSS denotes the Received Signal Strength.
The value $\phi$ accounts for the phase offset in the received signal due to the signal propagation time.
This continuous received signal is sampled from the USRP devices $64\cdot10^9$ times per second at distinct time intervals $t=1,2,\dots$ in a resolution of 12 bits.

We considered the following features for activity classification.
For all features we employed a window $\mathcal{W}$ of $|\mathcal{W}|$ samples to calculate their value.
The blocking or damping of signal components by subjects or other entities impacts the amplitude of the received signal.
A feature to measure this property is the maximum peak of the signal amplitude. 
We calculate it by the difference between the maximum and minimum amplitude within one sample window
\begin{equation}
     \mathcal{P}_{\mbox{\footnotesize eak}}=\max_{t\in \mathcal{W}}\left(\zeta_{\mbox{\footnotesize rec}}(t)\right)-\min_{t\in\mathcal{W}}\left(\zeta_{\mbox{\footnotesize rec}}(t)\right)
\end{equation}

We utilise the Mean amplitude $\mu$ of the received signal frequently as a reference value to compare the current amplitude of a signal $\zeta_{\mbox{\footnotesize rec}}(t)$ to the average amplitude in a training situation:
     \begin{equation}
          \mu=\frac{\sum_{t=1}^{\mathcal{|W|}} \zeta_{\mbox{\footnotesize rec}}(t)}{\mathcal{|W|}}
     \end{equation}
The Root of the Mean Square (RMS) deviation of the signal amplitude $\left|\zeta_{\mbox{\footnotesize rec}}(t)\right|$ to the mean $\mu$ is also utilised.
With lower RMS we expect fewer alterations in an environment.
\begin{equation}
     \mbox{RMS}=\sqrt{\frac{\sum_{t=1}^{|\mathcal{W}|}{\left(\zeta_{\mbox{\footnotesize rec}}(t)-\mu\right)}}{|\mathcal{W}|}^2}\label{equationStd}
\end{equation}

Furthermore, we investigate the second and third central moment that express the shape of a cloud of measured points.
The second central moment describes the variance $\sigma^2$ of a set of points. 
It can be used to measure how far a set of points deviates from its mean.
\begin{equation}
     \sigma^2=\frac{\sum_{t=1}^{|\mathcal{W}|}{\left(\zeta_{\mbox{\footnotesize rec}}(t)-\mu\right)}}{|\mathcal{W}|}^2\label{equationVar}
\end{equation}

Additionally, we consider the third central moment.
\begin{equation}
     \gamma=\mbox{E}[\zeta_{\mbox{\footnotesize rec}}(t)-\mu)^3]\label{equationThirdMoment}
\end{equation}
In equation~(\ref{equationThirdMoment}), $\mbox{E}[x]$ defines the expectation of a value $x$.

All above features are taken from the time domain of the received signal. 
In the frequency domain, we consider the DC component $a_0$, the spectral energy~$\mathcal{E}$ and the entropy~H of the signal. 

The feature $a_0$ represents the average of all samples a Fast Fourier Transform (FFT) was applied to. 
It describes the vertical offset of an observed signal. 	

We calculate its $i^{th}$ frequency component as 
\begin{equation}
 \mbox{FFT}(i)=\sum_{t=1}^{|\mathcal{W}|}{{\zeta_{\mbox{\footnotesize rec}}}(t){e^{-j{\frac{2\pi}{N}}it}}}. \label{equationFFT}
\end{equation}
In equation~(\ref{equationFFT}) we choose the window size $|\mathcal{W}|$ as the quantity of the samples in the FFT.

The DC component is defined by the first Fourier coefficient FFT$(i)$ and is separately calculated as 
\begin{equation}
     a_0=\frac{\int_{-\frac{|\mathcal{W}|}{2}}^{\frac{|\mathcal{W}|}{2}}\left(\zeta_{\mbox{\footnotesize rec}}(t)\right)dt}{|\mathcal{W}|}
\end{equation}

The signal energy~$\mathcal{E}$ can be computed as the squared sum of its probability density of spectrum in each frame. 
The probability of each spectral FFT$(i)$ band is 
\begin{equation}
 \mbox{P}(i)= \frac{{\mbox{FFT}(i)}^2}{\sum_{j=1}^{{|\mathcal{W}|}/{2}}{\mbox{FFT}(j)}^2}. \label{equationPro}
\end{equation}
Consequently, we calculate the spectral energy as
\begin{equation}
 \mathcal{E}=\sum_{i=1}^{{|\mathcal{W}|}/{2}}{\mbox{P}(i)}^2. \label{equationEnergy}
\end{equation}

We compute the entropy of a set of points as 
\begin{equation}
     \mbox{H}=\sum_{i=1}^{{|\mathcal{W}|}/{2}}{\mbox{P}(i)}\cdot\ln\left(\mbox{P}(i)\right).
\end{equation}

For all possible combinations of up to three of these features we exploited their classification accuracy of the five activities considered in section~\ref{sectionActiveDetection} and in section~\ref{sectionPassiveDetection}.
Table~\ref{tableFeatureAccuracy} details the accuracy for the best five feature combinations\footnote{The complete table with all results is available at http://klab.nii.ac.jp/\textasciitilde sigg/TMC-2012-01-0047\_PassiveDFARAcc.pdf} of the passive DFAR system with~$|\mathcal{W}|=32$.
\begin{table}
     \caption{Best feature combinations for the passive DFAR system {\scriptsize (1536--1233 \copyright 2013 IEEE)}}
\subfloat[Distinction between dynamic and static activities]{\begin{footnotesize}\begin{tabular}{r|cccccccc}
acc&$\mathcal{P}_{\mbox{eak}}$&$\mu$&$a_0$&$\mathcal{E}$&H&$\sigma^2$&$\gamma$&RMS\\\hline
.866&x&&x&&&x&&\\
.863&x&&&&&x&x&\\
.861&x&&x&&&&x&\\
.861&x&&x&&&&&\\
.861&x&&&&&&x&\\

\end{tabular}\end{footnotesize}}
\hfill
\subfloat[Distinction between standing, lying and empty]{\begin{footnotesize}\begin{tabular}{r|cccccccccc}
acc&$\mathcal{P}_{\mbox{eak}}$&$\mu$&$a_0$&$\mathcal{E}$&H&$\sigma^2$&$\gamma$&RMS\\\hline
.902&x&&x&&&&&\\
.898&x&x&x&&&&&\\
.898&x&&x&&&&x&\\
.896&x&x&&&&&&x\\
.894&x&x&&&&&x&\\
\end{tabular}\end{footnotesize}
}

\subfloat[Distinction between walking and crawling]{\begin{footnotesize}\begin{tabular}{r|cccccccccc}
acc&$\mathcal{P}_{\mbox{eak}}$&$\mu$&$a_0$&$\mathcal{E}$&H&$\sigma^2$&$\gamma$&RMS\\\hline
.817&&&x&&x&&x&\\
.706&&x&x&&&&&x\\
.701&x&&x&&&x&&\\
.701&x&&&x&x&&&\\
.701&x&&&&&x&x&
\end{tabular}\end{footnotesize}
\label{tableFeatureAccuracyC}
}
\hfill
\subfloat[Distinction between all five activities: standing, walking, crawling, lying and empty]{\begin{footnotesize}\begin{tabular}{r|cccccccccc}
acc&$\mathcal{P}_{\mbox{eak}}$&$\mu$&$a_0$&$\mathcal{E}$&H&$\sigma^2$&$\gamma$&RMS\\\hline
.694&x&&&&x&&x&\\
.686&x&&x&&&&x&\\
.683&x&&x&&&&&\\
.679&x&x&x&&&&&\\
.679&x&&x&&&&&x\\
\end{tabular}\end{footnotesize}
}
\label{tableFeatureAccuracy}
\end{table}

The table distinguishes between one-stage and two-stage classification.
For the one-stage classification, all five activities are distinguished in one single classification step.
For two-stage classification, first the classifier distinguishes between dynamic (walking or crawling) and static (lying, standing or empty) activities.
Then, the final classification is done in one of these classes.
We observe that in particular $\mathcal{P}_{\mbox{eak}}$ and $a_0$ are well suited to achieve a high classification accuracy.

A high $\mathcal{P}_{\mbox{eak}}$ value indicates a dynamic activity.
The feature is therefore well suited to distinguish between dynamic and static activities.
Consequently, for the distinction between the two dynamic activities in table~\ref{tableFeatureAccuracyC}, this feature is less prominent.
The DC-component $a_0$ mostly represents the vertical offset of the signal.
In can therefore serve as an indicator to distinguish whether a person is standing or walking, lying or standing or whether the room is empty.

For the active DFAR system, the most significant features are the variance~$\sigma^2$, the third central moment $\gamma$ when applied twice and the minimum over a window of maximum values.
We achieved good results for a window size of $|\mathcal{W}|=20$ which translates to $|\mathcal{W}|=400$ for features applied on preprocessed data.
By the adding further combinations of features, the overall classification accuracy can be further improved slightly.

Generally, for the activities considered, the dynamic activities have a greater number of significant features as they also have characteristic alterations over time.
Static activities are therefore in principle harder to distinguish from each other and it will likely not be possible to re-use a trained classifier for static activities without re-training in another scenario.

\subsection{RF-based DFAR}\label{sectionActivityDetection}
In order to explore the activity recognition capabilities and limits of the RF-sensor, we conducted case studies for active and passive DFAR implementations.
In particular, three subjects have conducted the four activities lying, standing, crawling and walking in a corridor of our institute.
Additionally, the empty corridor was considered as a baseline activity. 
All experiments have been conducted in after-hours to ensure a controlled environment in which all important external parameters are kept stable.
In particular, no additional subjects have been present in the corridor or in adjacent rooms that could have interfered with the experimental conditions.
Figure~\ref{figureCaseStudy} depicts the setting employed for the case study.
\begin{figure*}
     \centering
\includegraphics[width=\textwidth]{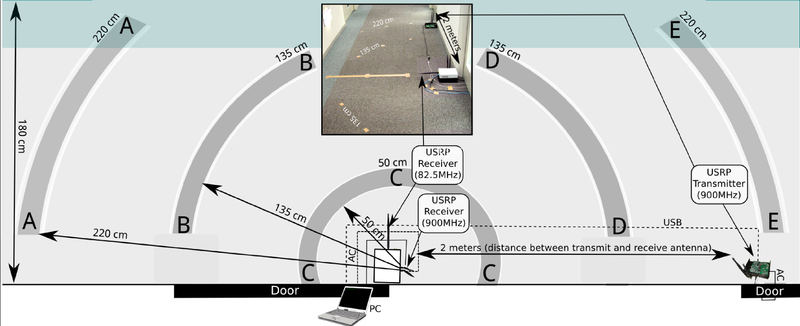}
      \caption{Schematic illustration of the corridor in which the case-study was performed. Locations at which activities were conducted are marked (A,B,C,D,E). Both receive nodes are located in the center of the recognition area on top of each other. {\scriptsize (1536--1233 \copyright 2013 IEEE)}}
     \label{figureCaseStudy}
\end{figure*}

The experimental space was divided into five areas with respect to their distance to the receiver.
For active and passive systems, the receiver was placed at the same location in the center of the detection area.
For the active DFAR system, the transmitter was positioned in two meters distance from the receiver.

The activities were conducted at the five locations which are labelled A, B, C, D, and E.
Locations A and E are in a distance of $2.20$~meters from the receiver, locations B and D are separated by $1.35$~meters and location C is $0.5$~meters apart.
All locations are arranged in a circle around the receiver in their center.

Each of the three subjects repeated all activities at every location for about 60 seconds.
We took arbitrary patterns from these sample sequences for classification.

For the active DFAR system the transmitter constantly modulated a signal to a 900~MHz carrier which was then sampled at the receiver at 70~Hz.
USRP~1 devices\footnote{https://www.ettus.com/product/details/USRP-PKG} were utilised as transmitter and receiver with RFX900 daughterboards\footnote{https://www.ettus.com/product/details/RFX900} and VERT900 Antennas\footnote{https://www.ettus.com/product/details/VERT900} with 3dB antenna gain.

The receiver of the passive DFAR system sampled a signal from an ambient FM radio station at 82.5 MHz with a sample rate of 255 kHz.
We employed a USRP N200\footnote{https://www.ettus.com/product/details/UN200-KIT} device with a WBX daughterboard\footnote{https://www.ettus.com/product/details/WBX} together with a VERT900 Antenna\footnote{https://www.ettus.com/product/details/VERT900} with 3dB antenna gain~\cite{Pervasive_Shi_2012b}.

\subsubsection{Active device-free activity recognition}\label{sectionActiveDetection}
For the detection of the described activities with our active DFAR system we utilise a one-stage classification approach.
In particular we use as features the mean~$\mu$, the variance~$\sigma^2$, the third central moment~$\gamma$, the RMS, the count of amplitude peaks within 90\% of the maximum, the distance of zero crossings, the Energy~$\mathcal{E}$ and the entropy~H, over a window of 400 samples.

For classification we utilise a k-nearest neighbour (k-NN) classifier with $k=10$ and a decision tree (DT).
Figure~\ref{figureRawFeatures} depicts values for the variance~$\sigma^2$ and the third central moment $\gamma$ applied twice for part of the sample data.
\begin{figure}
     \includegraphics[width=\textwidth]{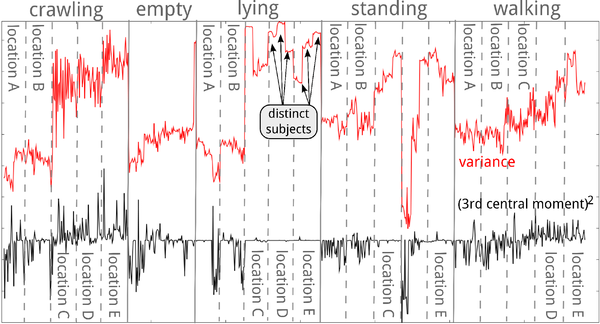}
     \caption{Exemplary feature samples (variance and twice applied 3rd central moment; over 400 samples each) from all activities, locations and subjects for active DFAR {\scriptsize (1536--1233 \copyright 2013 IEEE)}}
     \label{figureRawFeatures}
\end{figure}
Distinct activities are clearly distinguishable in this plot already. 

From this data we observe that activities conducted at locations~A and~B are seemingly harder to distinguish from the empty case.
The reason is that activities at these locations are conducted relative to the transmitter behind the receiver and therefore have less impact on the received signals. 

Classification results after 10-fold cross validation are depicted in table~\ref{tableConfusionOverall}.
\begin{table}\begin{footnotesize}
        \caption{Classification of activities conducted by three subjects at Locations A to E by a k-nearest neighbour and a decision tree classifier in an active DFAR system {\scriptsize (1536--1233 \copyright 2013 IEEE)}}
         \subfloat[Confusion matrix for the k-NN classifier over samples from all locations and subjects]{\begin{tabular}{l | c c c c c}
      &\multicolumn{5}{c}{Classification}\\
      &cr & em & ly & st & wa\\
      \hline
crawling&\textbf{.713}&.024&.06& &.204\\
empty&.022&\textbf{.593}&.187&.121&.077\\
lying&.042&.048&\textbf{.743}&.144&.024\\
standing&.011&.067&.078&\textbf{.777}&.067\\
walking&.166&.029&.034&.051&\textbf{.72}
    \end{tabular}}
  \hfill
         \subfloat[Confusion matrix for the classification tree classifier over samples from all locations and subjects]{\begin{tabular}{l | c c c c c}
      &\multicolumn{5}{c}{Classification}\\
      &cr & em & ly & st& wa \\
      \hline
crawling&\textbf{.659}&.006&.054&.024&.257\\
empty&.055&\textbf{.582}&.154&.121&.088\\
lying&.054&.042&\textbf{.784}&.102&.018\\
standing&.022&.056&.095&\textbf{.771}&.056\\
walking&.189&.023&.011&.057&\textbf{.72}
    \end{tabular}}
        \label{tableConfusionOverall}
\end{footnotesize}\end{table}
Table fields with very low values (i.e. $0.0$) are left blank.
The table depicts the classification accuracy when classifiers for the five activities empty, walking, standing, lying, crawling have been trained on features obtained for all five locations and subjects.
Due to the challenging feature value fluctuations for locations~A and~B we have not been able to achieve a higher accuracy in this case.
In particular, we notice that the distinction of the empty class is hard for the classifiers since other activities conducted at locations~A and~B have a similar feature value footprint.
The overall classification accuracies are $0.714$ and $0.722$ for the classification tree and the k-NN classifier as depicted in table~\ref{tableStatisticsOverall}.
\begin{table}\centering
        \caption{Accuracy, Information score and Brier score for the classification algorithms {\scriptsize (1536--1233 \copyright 2013 IEEE)}}
\begin{tabular}{l | c c c }
 &Accuracy&Information score&Brier score\\
      \hline
 Classification tree&0.716&1.529&0.567\\
 k-NN classifier&0.722&1.518&0.4
   \end{tabular}
        \label{tableStatisticsOverall}
\end{table}
The table also shows the Brier score and the Information score as defined by Kononenko and Bratko~\cite{MachineLearning_Kononenko_1991}.
These basic accuracies can be improved when classifiers are trained at specific locations and when the classification of activities is segmented for distinct locations as derived in the next sections.

\paragraph{Spatial impact on accuracy}
To improve accuracy we spatially restricted the classification area.
In particular, we utilised feature values only from activities conducted at one distinct location (A,B,C,D or E).
Table~\ref{tableDistinctLocationC} shows the classification results for location~C.
\begin{table}\begin{footnotesize}
        \caption{Classification of activities conducted by three subjects at Location C by a k-nearest neighbour and a decision tree classifier in an active DFAR system {\scriptsize (1536--1233 \copyright 2013 IEEE)}}
         \subfloat[Confusion matrix for the k-NN classifier over samples from all subjects at location C]{\begin{tabular}{l | c c c c c}
      &\multicolumn{5}{c}{Classification }\\
      &cr &em&ly&st&wa\\
      \hline
crawling&\textbf{.848}&&&.03&.121\\
empty&&\textbf{.978}&&&.022\\
lying&&&\textbf{.839}&.161&\\
standing&&&.108&\textbf{.892}&\\
walking&.143&&&&\textbf{.857}
    \end{tabular}}
  \hfill
         \subfloat[Confusion matrix for the classification tree classifier over samples from all subjects at location C]{\begin{tabular}{l | c c c c c}
      &\multicolumn{5}{c}{Classification }\\
 &cr&em&ly&st&wa\\      \hline
crawling&\textbf{.727}&.03&.03&.03&.182\\
empty&&\textbf{.978}&.022&&\\
lying&&.032&\textbf{.677}&.29&\\
standing&.054&.027&.135&\textbf{.784}&\\
walking&.2&&&&\textbf{.8}
    \end{tabular}}
        \label{tableDistinctLocationC}
\end{footnotesize}\end{table}
The classification accuracy is increased in this case compared to the previous general setting.
This is also due to the subjects conducting activities in only about 50~cm distance from the receive antenna.
The impact on the signal is therefore significant. 

With increasing distance to the receiver, the classification accuracy slowly deteriorates as visible in table~\ref{tableAllDistinctLocations}.
\begin{table}\begin{footnotesize}
        \caption{Classification of activities conducted by three subjects at Locations A, B, D or E by a k-nearest neighbour classifier in an active DFAR system {\scriptsize (1536--1233 \copyright 2013 IEEE)}}
         \subfloat[Confusion matrix for the k-NN classifier over samples from all subjects at location A]{\begin{tabular}{l | c c c c c}
      &\multicolumn{5}{c}{Classification }\\
      &cr & em & ly & st & wa \\
      \hline
crawling&\textbf{.556}&.167&.194&&.083\\
empty&.044&\textbf{.769}&.055&.066&.066\\
lying&.125&.25&\textbf{.625}&&\\
standing&&.286&&\textbf{.686}&.029\\
walking&.03&.394&&&\textbf{.576}
    \end{tabular}}
  \hfill
         \subfloat[Confusion matrix for the k-NN classifier over samples from all subjects location B]{\begin{tabular}{l |  c c c c c}
      &\multicolumn{5}{c}{Classification }\\
      &cr & em & ly & st & wa \\
      \hline
crawling&\textbf{.8}&.133&&&.067\\
empty&&\textbf{.78}&.132&.077&.011\\
lying&&.242&\textbf{.727}&&.03\\
standing&&.278&&\textbf{.722}&0\\
walking&.03&.242&.091&.061&\textbf{.576}
    \end{tabular}}

         \subfloat[Confusion matrix for the k-NN classifier over samples from all subjects at location D]{\begin{tabular}{l | c c c c c}
      &\multicolumn{5}{c}{Classification }\\
      &cr & em & ly & st & wa \\
      \hline
crawling&\textbf{.765}&&&.029&.206\\
empty&&\textbf{.967}&.033&&\\
lying&&&\textbf{.971}&.029&\\
standing&&.194&.056&\textbf{.722}&.028\\
walking&.111&&&&\textbf{.889}
    \end{tabular}}
  \hfill
         \subfloat[Confusion matrix for the k-NN classifier over samples from all subjects location E]{\begin{tabular}{l | c c c c c}
      &\multicolumn{5}{c}{Classification }\\
      &cr & em & ly & st & wa \\
      \hline
crawling&\textbf{.676}&&.029&&.294\\
empty&&\textbf{.967}&.033&&\\
lying&&&\textbf{.865}&.135&\\
standing&.029&&.114&\textbf{.829}&.029\\
walking&.263&&&&\textbf{.737}
    \end{tabular}}
        \label{tableAllDistinctLocations}
\end{footnotesize}\end{table}
The table depicts the classification accuracy of the k-NN classifier. 
Classification accuracies for the decision tree are comparable as shown in table~\ref{tableStatisticsAllLocations}.
\begin{table}\centering
        \caption{Accuracy, Information score and Brier score for the classification algorithms in an active DFAR system {\scriptsize (1536--1233 \copyright 2013 IEEE)}}
\begin{footnotesize}\begin{tabular}{l | c c c }
 &Accuracy&Information score&Brier score\\
      \hline
\textit{\hfill Location A} & & &\\
 Classification tree&0.674&1.284&0.652\\
 k-NN classifier&0.674&1.309&0.449\\\hline
\textit{\hfill Location B} & & &\\
 Classification tree&0.825&1.696&0.35\\
 k-NN classifier&0.735&1.385&0.383\\\hline
\textit{\hfill Location C} & & &\\
 Classification tree&0.841&1.702&0.317\\
 k-NN classifier&0.907&1.892&0.129\\\hline
\textit{\hfill Location D} & & &\\
 Classification tree&0.832&1.709&0.337\\
 k-NN classifier&0.887&1.846&0.168\\\hline
\textit{\hfill Location E} & & &\\
 Classification tree&0.779&1.549&0.442\\
 k-NN classifier&0.851&1.767&0.214
   \end{tabular}\end{footnotesize}
        \label{tableStatisticsAllLocations}
\end{table}
We observe that indeed locations E, D and C achieve best classification results.
The impact of reflected signals from an action conducted behind the receiver quickly diminishes, so that the classification accuracy of activities conducted at these locations (A~and~B) quickly worsens with distance.

\paragraph{Localising an action}
In the previous case, the classifiers were trained for actions of a specific location without considering actions taking place at other locations.
We now train the classifiers on all five activities at all five locations, respectively.
The action 'empty' is identical regardless of the location.
Overall, we then distinguish between 21 classes.
Table~\ref{tableCasestudyOverallSummary} depicts our results.
\begin{sidewaystable}
\centering
\setlength{\tabcolsep}{3pt}
\begin{footnotesize}
        \caption{Accuracy when training is accomplished including activities at all locations in an active DFAR system {\scriptsize (1536--1233 \copyright 2013 IEEE)}}
          \begin{tabular}{l |ccccc|c|ccccc|ccccc|ccccc}
      &\multicolumn{21}{c}{Classification}\\
     & \multicolumn{5}{c}{crawling}&empty&\multicolumn{5}{c}{lying}&\multicolumn{5}{c}{standing}&\multicolumn{5}{c}{walking}\\
 &A&B&C&D&E& &A&B&C&D&E&A&B&C&D&E&A&B&C&D&E\\\hline
crawling at A&\textbf{.528}&.028&&&&.111&.167&.028&&&&&&&&&&.111&.028&&\\
crawling at B&.033&\textbf{.867}&&&&.067&&&&&&&&&&&&.033&&&\\
crawling at C&.03&&\textbf{.455}&.152&.091&&&&&&&&&&&&&.03&.061&&.182\\
crawling at D&&&.176&\textbf{.176}&.147&&&&.029&&&&&&&&&&.029&.118&.324\\
crawling at E&&&.088&.176&\textbf{.529}&&&&&&&&&&&&&&&&.206\\\hline
empty&.022&&&&&\textbf{.681}&.011&.088&&.011&.022&.044&.077&&&&.044&&&&\\\hline
lying at A&.094&&&&&.188&\textbf{.594}&.125&&&&&&&&&&&&&\\
lying at B&.091&&&&&.152&.152&\textbf{.545}&&&&&&&&&.03&.03&&&\\
lying at C&&&&&&&&&\textbf{.677}&&.032&&&.097&.065&.129&&&&&\\
lying at D&&&&&&&&&.059&\textbf{.765}&.088&&&&&.088&&&&&\\
lying at E&&&&&&&&&&.162&\textbf{.595}&&&.135&&.108&&&&&\\\hline
standing at A&&&&&&.2&&&&&&\textbf{.686}&.057&&&&.057&&&&\\
standing at B&&&&&&.194&&&&&&.306&\textbf{.444}&&&&.056&&&&\\
standing at C&&&&&&&&&.054&.027&.054&&&\textbf{.676}&&.189&&&&&\\
standing at D&&.028&&&&.028&&&.111&&&.056&&.028&\textbf{.472}&.111&.028&.111&.028&&\\
standing at E&&&&&.029&&&&.057&.086&.114&&&.114&.057&\textbf{.514}&&&&.029&\\\hline
walking at A&&&&&&.152&&.061&&&&.061&.061&&&&\textbf{.364}&.242&.061&&\\
walking at B&.061&.03&&&&.03&.03&.03&&&&.03&&&.061&&.303&\textbf{.364}&.061&&\\
walking at C&&&.086&&&&&&&&&&&&.029&&.029&.086&\textbf{.657}&.114&\\
walking at D&&&.056&.028&&&&&&&&&&&&&&&.139&\textbf{.75}&.028\\
walking at E&&&.158&.184&.158&&&&.026&&&&&&&.026&&&.026&&\textbf{.421}
    \end{tabular}
        \label{tableCasestudyOverallSummary}
\end{footnotesize}\end{sidewaystable}

We observe that the right action is classified for the right location most often.
Moreover, we see a locality in the classifications.
Misclassifications are seldom in different activities but most often regarding the correct activity in a neighbouring location.
With increasing distance to the place where the action was trained, the misclassification error increases.
The distance between locations was $85$~cm. 
We therefore conclude that a localisation of activities is possible alongside classification with an error of less than 1~meter.
We further observe that the static activities standing and lying as well as the dynamic activities walking and crawling are harder to distinguish for the classifier since their features are not so well separated.

\paragraph{Summary on active DFAR studies}\label{sectionCaseStudySummary}
All five activities are classified with varying accuracy depending on the setting considered. 
Higher accuracy can be achieved when the activities are conducted near the receiver node or between the transmitter and receiver.
With increasing distance to the receiver the classification accuracy deteriorates.
When the activities are trained at various locations, a localisation of the classified activity within less than 1~meter radius is possible.

\subsubsection{Passive device-free activity recognition}\label{sectionPassiveDetection}
In the previous section we considered an active transmitter as one part of the classification system.
The disadvantage in such a system is that in a practical situation, a separate transmitter has to be brought out and positioned in the proximity of the receiver that is constantly transmitting.
However, since the freely available frequency spectrum is sparse, we can assume to be exposed to some kind of radio signals continuously.
The highest coverage is probably reached by FM radio.
We attempt to utilise ambient FM radio signals from a nearby FM radio station in order to detect the five activities described above in the same setting.
In our case study, the FM-receiver was placed on top of the 900~MHz USRP receiver to sample ambient signals.
Samples have been taken simultaneously to the active DFAR studies described above.
We utilise 10 fold cross validation with k-NN and decision tree classifiers.
A two-stage classification approach with the feature sets that reached best classification accuracy was shown in table~\ref{tableFeatureAccuracy}).

Table~\ref{tablePassiveAllC} details the classification accuracy when activities are conducted at location~C only and classifiers are trained only on these feature values.
\begin{table*}
\setlength{\tabcolsep}{1pt}
\centering
\caption{Classification of activities of all subjects conducted at Location~C by a k-NN and a classification tree classifier in a passive DFAR system {\scriptsize (1536--1233 \copyright 2013 IEEE)}}
\begin{footnotesize}
\setlength{\tabcolsep}{3pt}
 \subfloat[Confusion matrix for the k-NN classifier]{
     \begin{tabular}{l |ccccc}
&\multicolumn{5}{c}{Classification }\\
 & empty & lying & standing & walking &crawling\\
        \hline
 empty &\textbf{.942} (.152)&&.058 (.008)&& (.001)\\
 lying&&\textbf{.773} (.268)&.027 (.007)&.093 (.021)&\\
 standing &.093 (.024)&.027 (.005)&\textbf{.853} (.236)&.027 (.005)&\\
 walking & .0 (.001)&.026 (.008)&.077 (.014)&\textbf{.795} (.261)&.103 (.032)\\
 crawling & .0 (.001)&.121 (.038)&.014 (.004)&.176 (.071)&\textbf{.689} (.214)\\
      \end{tabular}
    }
\hfill
\subfloat[Confusion matrix for the classification tree classifier]{
     \begin{tabular}{l |ccccc}
&\multicolumn{5}{c}{Classification }\\
 & empty & lying & standing  & walking &crawling\\
        \hline
empty &\textbf{.986} (.008)&&.014 (.003)&&\\
 lying&&\textbf{.787} (.252)&.013 (.004)&.133 (.006)&.067\\
  standing&.027 (.006)&.013 (.004)&\textbf{.933} (.009)&&.027 (.005)\\
 walking&&.026 (.005)&.077 (.014)&\textbf{.769} (.210)&.128 (.033)\\
 crawling&&.108 (.031)&.027 (.006)&.135 (.045)&\textbf{.730} (.245)\\
      \end{tabular}
    }
\end{footnotesize}
\label{tablePassiveAllC}
\end{table*}
The table depicts the median classification accuracy and the variance over 10 separate classifications.
For ease of presentation, table entries with very low values ($0.0$ $(0.0)$) are left empty.

With increasing distance to the receiver, the classification accuracy deteriorates.
Naturally, since we utilise ambient signals, the direction in which activities are moved away from the receiver is of minor importance (cf. table~\ref{tablePassiveAllABDE}).
\begin{table*}
\centering
\caption{Classification of activities of all subjects at Locations A,B,D and E by a k-NN classifier {\scriptsize (1536--1233 \copyright 2013 IEEE)}}
\begin{footnotesize}
\setlength{\tabcolsep}{3pt}
 \subfloat[Confusion matrix for the classification at location A]{
     \begin{tabular}{l |ccccc}
&\multicolumn{5}{c}{Classification }\\
 & empty & lying & standing  & walking &crawling\\
        \hline
 empty &\textbf{.812} (.231)&.145 (.074)&&.014 (.006)&.029 (.010)\\
 lying&.155 (.091)&\textbf{.239} (.177) &.155 (.066)&.169 (.073)&.282 (.139)\\
 standing &&.27 (.098)&\textbf{.893} (.158)&&.080 (.004)\\
 walking &.054 (.003)&.162 (.055)&&\textbf{.649} (.294)&.135 (.042)\\
 crawling &.053 (.009)&.027 (.004)&.093 (.007)&.120 (.007)&\textbf{.707} (.302)\\
      \end{tabular}
    }
\hfill
 \subfloat[Confusion matrix for the classification at location B]{
     \begin{tabular}{l |ccccc}
&\multicolumn{5}{c}{Classification }\\
 & empty & lying & standing  & walking &crawling\\
        \hline
 empty &\textbf{.783} (.243)&.072 (.009)&.116 (.009)& .0 (.002)&.029 (.006)\\
 lying& .0 (.001)&\textbf{.507} (.419)&.440 (.338)&.053 (.006)&\\
 standing &&.214 (.110)&\textbf{.786} (.208)&&\\
 walking &.035 (.005)&.047 (.006)&.012 (.003)&\textbf{.659} (.361)&.247 (.157)\\
 crawling &.013 (.003)&.041 (.009)&&.230 (.108)&\textbf{.716} (.293)\\
      \end{tabular}
    }

 \subfloat[Confusion matrix for the classification at location D]{
     \begin{tabular}{l |ccccc}
&\multicolumn{5}{c}{Classification }\\
 & empty & lying & standing  & walking &crawling\\
        \hline
 empty &\textbf{.875} (.186)&&.125 (.045)&&\\
 lying&&\textbf{.758} (.308)&.242 (.112)&&\\
 standing &.222 (.097)&.25 (.113)&\textbf{.500} (.301)&&.027 (.012)\\
 walking & .0 (.001)&.120 (.034)&.173 (.044)&\textbf{.680} (.143)&.027 (.005)\\
 crawling &&.214 (.085)&.071 (.026)&.262 (.063)&\textbf{.452} (.296)\\
      \end{tabular}
    }
\hfill
 \subfloat[Confusion matrix for the classification at location E]{
     \begin{tabular}{l |ccccc}
&\multicolumn{5}{c}{Classification }\\
 & empty & lying & standing  & walking &crawling\\
        \hline
 empty &\textbf{.725} (.262)&.087 (.021)&.159 (.034)&.029 (.004)&\\
 lying&.289 (.103)&\textbf{.461} (.186)&.145 (.032)&.105 (.028)& .0 (.001)\\
 standing &.130 (.041)&.273 (.125)&\textbf{.558} (.371)&.039 (.004)&\\
 walking &.025 (.007)&.120 (.058)&.025 (.004)&\textbf{.667} (.286)&.160 (.071)\\
 crawling &.026 (.007)&.013 (.004)&&.171 (.068)&\textbf{.789} (.228)\\
      \end{tabular}
    }
\end{footnotesize}
\label{tablePassiveAllABDE}
\end{table*}
These tables show classification results of the k-NN classifier. 
Classification accuracies of the classification tree have been comparable.
However, we observe that especially the detection of static activities, in particular lying and standing suffer from the increased distance to the receiver.
We explain this with the missing LoS signal component between transmitter and receiver.
Without a dominant signal component which could have high impact on the received signal when, for instance, blocked, all incoming signal components have equal or similar impact on the signal at the receiver.

For the utilisation of an ambient signal, the distance to a receiver is more critical than in the active DFAR case.
In particular, short of the empty corridor, all classification accuracies drop significantly when activities are conducted at locations remote to the location at which the classifier was trained.
Table~\ref{tableAllLocationsClassifierC} shows this property for a case in which the classifier is trained from feature values of activities conducted at location~C but applied to activities conducted at all five locations.
\begin{table}
\centering
\caption{Accuracy for the k-NN classifier when training is accomplished with activities from all subjects conducted at Location C only in a passive DFAR system. {\scriptsize (1536--1233 \copyright 2013 IEEE)}}
\begin{footnotesize}\begin{tabular}{l | ccccc}
&\multicolumn{5}{c}{Classification }\\
 & empty & lying & standing& walking &crawling\\
        \hline
 empty at location A&\textbf{1.0}&&&&\\
 lying at location A&.355&\textbf{.156}&.298&.085&.106\\
 standing at location A &&.691&\textbf{.215}&.054&.04\\
 walking at location A &.068&.102&.245&\textbf{.251}&.333\\
 crawling at location A &.094&.168&.309&.342&\textbf{.087}\\
   \hline
empty at location B &\textbf{1.0}&&&&\\
 lying at location B&.207&&.427&.327&.04\\
 standing at location B &.446&&\textbf{.331}&.223&\\
 walking at location B &.053&.059&.112&\textbf{.647}&.129\\
 crawling at location B &.034&.047&135&.608&\textbf{.176}\\
 \hline
 empty at location C &\textbf{.986}&&.014&&\\
 lying at location C&&\textbf{.787}&.013&.133&.067\\
 standing at location C &.027&.013&\textbf{.933}&&.027\\
 walking at location C &&.026&.077&\textbf{.769}&.128\\
 crawling at location C &&.108&.027&.135&\textbf{.730}\\
 \hline
empty at location D &\textbf{1.0}&&&&\\
 lying at location D&.113&\textbf{.035}&.423&.373&.056\\
 standing at location D &.165&.152&\textbf{.468}&.171&.044\\
 walking at location D &.04&.228&.201&\textbf{.262}&.268\\
 crawling at location D &.054&.118&.344&.269&\textbf{.215}\\
 \hline
empty at location E &\textbf{1.0}&&&&\\
 lying at location E&.408&&.355&.224&.013\\
 standing at location E &.442&.013&\textbf{.325}&.201&.019\\
 walking at location E &.204&.043&.228&\textbf{.42}&.105\\
 crawling at location E &.099&.132&.152&.503&\textbf{.113}\\
      \end{tabular}\end{footnotesize}
\label{tableAllLocationsClassifierC}
\end{table}

When training the classifier with feature values from activities conducted at various locations the accuracy decreases (cf.~table~\ref{tableClassificationAllC}).
\begin{table}
\caption{Classification of activities at all locations by a k-NN and a classification tree classifier trained on activities conducted by all subjects on location C only in a passive DFAR system {\scriptsize (1536--1233 \copyright 2013 IEEE)}}
\centering
 \subfloat[Confusion matrix for the k-NN classifier]{
     \begin{footnotesize}\begin{tabular}{l | ccccc}
&\multicolumn{5}{c}{Classification }\\
 & em & ly & st  & wa &cr\\
        \hline
 empty &\textbf{.949}&&.043&.007&\\
 lying&.154&\textbf{.473}&.241&.065&.067\\
 standing &.192&.156&\textbf{.577}&.049&.025\\
 walking &.054&.128&.126&\textbf{.476}&.216\\
 crawling &.032&.153&.126&.199&\textbf{.490}\\
      \end{tabular}\end{footnotesize}
    }
\hfill
\subfloat[Confusion matrix for the classification tree]{
     \begin{footnotesize}\begin{tabular}{l | ccccc}
&\multicolumn{5}{c}{Classification }\\
 & em & ly & st  & wa &cr\\
        \hline
 empty &\textbf{.978}&&.022&&\\
 lying&&\textbf{.500}&.198&.301&.001\\
 standing &&&\textbf{.787}&.212&.001\\
 walking &&.001&&\textbf{.824}&.130\\
 crawling &&&.051&.433&\textbf{.516}\\
      \end{tabular}\end{footnotesize}
    }
\label{tableClassificationAllC}
\end{table}

Furthermore, a localisation of the conducted activities as it was feasible for the active DFAR case is hardly possible with the passive DFAR system. 
The classifier can at most give a hint on the possible location as it can be observed from table~\ref{tablePassiveAllAll}.
\begin{sidewaystable}
\caption{Passive DFAR classification accuracy of the k-NN classifier and localisation of activities when training is accomplished including activities at all locations {\scriptsize (1536--1233 \copyright 2013 IEEE)}}
\centering
\setlength{\tabcolsep}{3pt}
     \begin{tabular}{l | c c c c c| c|c c c c c|ccccc|ccccc}
      &\multicolumn{21}{c}{Classification}\\
     & \multicolumn{5}{c}{crawling}&empty&\multicolumn{5}{c}{lying}&\multicolumn{5}{c}{standing}&\multicolumn{5}{c}{walking}\\
&A&B&C&D&E& &A&B&C&D&E&A&B&C&D&E&A&B&C&D&E\\\hline
crawling at A &.134&.027&.027&.067&.013&.014&.027&.04&.027&.027&.02&.06&.047&.013&.067&.04&.08&.034&.06&.081&.094\\
 crawling at B&.007&.304&.041&.014&.155&.014&.027&.007&&.02&&&&.027&.007&.007&.007&.135&.101&.054&.074\\
 crawling at C&.02&.048&.476&.02&.014&&.007&.027&.109&&&&&.007&.007&&.041&.061&.068&.082&.014\\
 crawling at D&.086&.022&.043&.097&.054&.011&.022&.075&&.065&.011&.108&.022&.022&.043&.043&.032&.065&.075&.065&.043\\
 crawling at E&.007&.106&.013&.046&.285&.02&&.007&.04&.106&&&&&.007&&&.172&.079&.02&.093\\
\hline
empty&&&&&&1.0&&&&&&&&&&&&&&&\\
\hline
lying at A&.035&.014&.007&.021&.007&.141&.206&.092&.014&.043&.043&.014&.028&.071&.035&.05&.071&.021&.014&.050&.021\\
 lying at B&.08&.013&.02&.047&.007&.027&.1&.147&&.033&.087&&.073&.047&.06&.133&.007&&.02&.027&.073\\
 lying at C&.04&&.107&&.047&&.013&&.638&&&.054&&.007&.02&&.013&.013&.007&.04&\\
 lying at D&.014&.035&.007&.056&.134&.007&.014&.035&&.282&.035&.028&.028&.042&.049&.028&.035&.014&.063&.063&.028\\
 lying at E&.026&&&.007&&.125&.013&.072&&.033&.191&&.118&.086&.066&.145&.02&&.02&.013&.066\\
\hline
 standing at A&.06&&&.067&&&.013&&.054&.013&&.544&&.007&.121&.013&.04&&&.067&\\
 standing at B&.036&&&.014&&.115&.014&.101&&.036&.144&&.223&.029&.072&.173&.007&&.007&&.029\\
 standing at C&.007&.027&.007&.013&&0.18&.08&.053&.033&.027&.093&.007&.033&.293&.067&.007&.04&.007&.007&.007&.013\\
 standing at D&.063&.006&&.025&.013&.013&.044&.063&.013&.025&.063&.114&.063&.07&.228&.07&.013&&.025&.082&.006\\
 standing at E&.039&&&.032&.006&.102&.071&.123&&.013&.117&&.156&.026&.065&.13&.013&.013&.013&.026&.052\\
\hline
 walking at A&.061&.007&.048&.027&.007&.021&.048&.007&.027&.034&.034&.034&.014&.054&.02&.02&.395&.027&.007&.088&.02\\
 walking at B&.006&.153&.071&.018&.147&.006&.029&.012&.012&.029&&&&.012&&.012&.024&.224&.147&.024&.076\\
 walking at C&.032&.097&.045&.039&.084&&.013&.013&.013&.045&.013&&&.006&.006&.013&.006&.155&.258&.058&.103\\
 walking at D&.06&.04&.067&.04&.027&&.074&.04&.04&.054&.007&.087&&.027&.067&.02&.067&.020&.054&.154&.054\\
 walking at E&.093&.074&.012&.024&.148&.037&.006&.056&&.031&.062&&.031&.024&.006&.043&.019&.056&.08&.062&.136\\
      \end{tabular}
\label{tablePassiveAllAll}
\end{sidewaystable}
The table details the classification accuracy for all 21 classes considering the activities and their respective locations.

\paragraph{Summary on passive DFAR studies}
Summarising, we conclude that activity classification is also feasible with a passive DFAR system utilising ambient FM radio signals. 
We achieved best classification accuracies when the activity was conducted within $0.5$ to $1$~meters from the receiver.
At higher distances, however, the classification accuracy quickly deteriorated and is hardly usable.
A passive DFAR system must therefore employ a higher count of receive devices but can omit a dedicated transmitter.
In short distance, classification accuracy is comparable to active DFAR systems.

\subsection{Conclusion} \label{sectionConclusionRF01}
We have proposed a classification scheme for device-free radio-based activity recognition systems.
Following this scheme, we considered non-ad-hoc, active and passive, device-free activity recognition systems. 

Classification was achieved by k-NN and decision tree classifiers with similar classification accuracy.
For one-stage and two-stage active and passive DFAR systems we derived a set of most significant features with respect to their classification accuracy in our case studies.
The presented work is the first to detect the considered activities from RF-channel measurements and also the first to do this with active and passive DFAR systems.
Despite some recent advances on device-free radio-based localisation systems, this is also the first study to combine an activity recognition and localisation in one classification algorithm on a common set of features.
For the activities lying, crawling, standing and walking we were able to localise them within less than $1$~meter in USRP-SDR-based case-studies with the active DFAR system.

The results of this study effectively enable the use of arbitrary wireless devices as sensing equipment.

Still, open challenges remain and present future research questions for radio-based activity recognition systems. 
Among them are the development of algorithms and features which reduce the amount of training effort or the amount of additional required knowledge in order to use the RF-sensor in a different setting. 
We further need to investigate the required coverage, height and relative location of the sensor in order to deduce how the number of sensor entities affect the resolution of the system. 
Other questions include the activity detection of multiple persons or the inclusion of mobile nodes within a DFAR system. 

Nevertheless, with the presented investigation results it could be shown that the RF-sensor can support applications such as monitoring of emergency situations or the creation of Smart Home systems. 
In both applications the sensor could not only provide a classification accuracy comparable to the currently used technologies but also provide novel services, such as detecting non-cooperating persons, and increases the level of convenience, for instance, by not having to wear an actual monitoring device. 
Based on these findings and the truly pervasive character of the underlying physical entity we believe that RF-based sensing can be essential in the pervasive systems of the upcoming Internet of Things.
\vfill
\pagebreak

\section[Monitoring of Attention Using Ambient FM-radio Signals]{Monitoring of Attention Using Ambient FM-radio Signals \footnote{Originally published as 'Shuyu Shi, Stephan Sigg, Wei Zhao, and Yusheng Ji: Monitoring of Attention from
Ambient FM-radio Signals, IEEE Pervasive Computing, Los Alamitos, CA, USA, IEEE Computer Society, Jan-Mar 2014, vol. 13, no. 1, pp. 30-36, 2014 (DOI: http://dx.doi.org/10.1109/MPRV.2014.13)' (Published by the IEEE CS n 1536-1268/14/\$31.00 \copyright 2014 IEEE}}\label{sectionOriginalRF02}
We investigate the classification of FM-radio signal fluctuation for the monitoring of attention by individuals in motion towards a static object.
In particular, we distinguish in a corridor, whether it is empty or populated by moving or standing individuals as well as the attention of these subjects towards poster frames in that corridor.
We consider the distinction in front of which poster these subjects are walking or standing as well as their walking speeds or changes therein.
This information can provide some hint whether a person is paying attention to a specific poster in this corridor as well as the location of the particular poster. 

\subsection{Introduction}
Attention determines for a system the potential to impact the actions and decisions taken by an 
individual~\cite{AttentionMonitoring_Xu_2012}.
The management of attention covers the activation of attention as well as its detection and timely exploitation.
The same action of the same system might be considered either as annoyance or be 
appreciated as helpful depending on whether the individual was focusing part or all of her attention towards the system or not.
In the literature, we find various definitions that classify attention as well as its determining 
characteristics~\cite{AttentionMonitoring_Wu_2007}. %,AttentionMonitoring_Wickens_1984}.
A straightforward measure of attention might be the tracking of gaze~\cite{AttentionMonitoring_Yonezawa_2007}. 
In general, aspects such as Saliency, Effort, Expectancy and Value are important indicators of attention~\cite{AttentionMonitoring_Wickens_2008}. %AttentionMonitoring_Wickens_1984,
Alois Ferscha and others extended this model and put a greater stress on the effort a person takes towards an object~\cite{AttentionMonitoring_Ferscha_2012}.

We consider the following scenario.
In a corridor, a series of electronic poster frames are installed while people are walking by these frames.
From the perspective of a specific poster, a significant part of its message shall be recognised by passers-by.
Therefore, the poster should draw the attention of people passing by and, when this is achieved, it might possibly 
transport additional information.
Consequently, the poster frame should be aware of people passing by, know where people are in 
order to attract attention at the right moment and detect whether attention is attracted.
In this work we assume that the monitored individuals are not cooperating with the system and hence are not equipped with any part of the sensing hardware.

Such detection and management of attention may require elaborate installations and very 
specific sensors in order to accurately sense quantities such as Saliency, 
Effort, Expectancy and Value~\cite{AttentionMonitoring_Xu_2012}. %,AttentionMonitoring_Gollan_2011}.
However, we believe that for many commercial installations, cost and ease of installation and not primarily the highest achievable accuracy are most important.
Also more general, environmental sensors can provide sufficient information to estimate the attention state of individuals.

We propose to utilise ambient FM-radio signals for the detection of attention since it has a nearly perfect coverage in populated areas and features cheap receiver hardware~\cite{Percom_Popleteev_2012}.
\begin{center}
\textsl{\dots but which aspects of attention can actually be captured by an FM-receiver?}
\end{center}

Ferscha and others~\cite{AttentionMonitoring_Ferscha_2012} discuss various aspects of attention and identify as most distinguishing factors changes in walking speed, direction or orientation.
From FM-radio signals it is hard to detect the orientation of a person.
However, it is feasible to classify walking speeds, walking direction or location of individuals.   
We show that with a straightforward installation, we can distinguish 
\begin{itemize}
     \item an empty corridor, a person walking by and a person standing in front of a poster frame
     \item the specific poster the person is observing
     \item the location where a person is walking
     \item the walking speed of a person
\end{itemize}
We therefore argue that attention levels can be inferred upon interpretation of the changes in walking speed or direction as derived from our system.
This information can provide some indication on the attention of persons towards a poster frame.
Also, it can enable a frame to take action in order to catch the attention of a person just in the right moment.

\subsection{Sensing passive entities from the RF-channel}\label{sectionRelatedWorkRF02}
The RF-channel has been recently utilised by various authors for the detection of location or activities of passive entities, not equipped with a transmitter or receiver (See~\cite{Pervasive_Sigg_2013} and references therein). 
These studies exploit fluctuation of received signal strength or signal amplitude conditioned on changes (for instance, movement, altered location of objects) in the physical proximity of a receiver~\cite{RFSensing_Woyach_2006}. 
This work is related to the passive radar literature.
In an early publication, Kaipin Tan et. al presented the potential of a GSM-based passive radar prototype for detecting and tracking different types of ground-moving targets~\cite{DeviceFreeRecognition_Tan_2005} in an outdoor environment. 
For the simple binary detection of presence in an indoor environment, Masahiro~Nishi et. al proposed an indoor human detection system leveraging the multi-path radio propagations of VHF-FM and UHF-TV broadcasting signals~\cite{PIMRC_Nishi_2006}. 
In their work, they leveraged incoming signal waves.
Moustafa~Youssef and others then demonstrated the localisation of individuals by analysing the Received-Signal-Strength indicator (RSSI) in received packets at 802.11b nodes~\cite{Pervasive_Youssef_2007}.
Neal~Patwari and Joey~Wilson introduced a statistical, empirically verified model to approximate the position of a person based on the variance in the Received Signal Strength  Indicator~\cite{RFSensing_Patwari_2011}. 
The aspect of localising up to five individuals at a time, together with the previously untackled problem that environmental changes over time might necessitate frequent calibration of the location system was approached by Dian~Zhang and others using a grid of wireless sensor nodes in~\cite{Pervasive_Zhang_2012}.
They isolated the Line-of-Sight path by extracting phase information from the differences in the received signal strength on various frequency spectrums at distributed nodes.

These studies on localisation of individuals assume a transmitter as part of and under the control of the recognition system. 
However, we recently demonstrated that a recognition is also possible when an ambient signal source is utilised. 
In particular, we analysed fluctuation in ambient signals from an FM-radio station not under the control of the recognition system. 
Static environmental changes such as opened doors have been detected with an accuracy of about $0.9$ %~\cite{Pervasive_Shi_2012} 
and a first study on suitable features to detect human activities could achieve an accuracy of about $0.8$ with a two-staged recognition approach~\cite{Pervasive_Shi_2012b}. 
In this article we extend this work towards the monitoring of attention of a single subject.

\subsection{Monitoring attention from FM-radio} \label{feasibilitystudy}
For the monitoring of attention of non-cooperating individuals, an environmental signal source is required.
We propose the utilisation of RF-signals for their ease of deployment and nearly perfect coverage in indoor and outdoor locations through existing, pre-installed systems in populated areas~\cite{Percom_Popleteev_2012}, argue why we favour FM-radio above other RF-technologies and detail the features we utilise for the detection of 
attention.

\subsubsection{Radio-based monitoring of attention}\label{sectionRFAttention}
Radio waves are electromagnetic waves, defined by their amplitude, phase and frequency. During signal 
propagation from the transmitter to the receiver, the radio waves are impacted by physical phenomena, for instance, damping, reflection and scattering.
Assume a signal observed at a receiver, at some frequency $f_c$~[Hz].
Naturally, as we are considering an indoor environment in which no direct-line-of-sight exists, incoming signals arrive over mutliple paths at roughtly equal strenth from all directions.
In the event that a signal wave encounters any structure such as an object or individual, the main signal component will be damped (continue its path with reduced energy) or even completely blocked.
Additionally, the signal is typically reflected or scattered at this occasion. 
Reflection describes the event that the signal wave bounces away from an object in a modified direction.
Typically, the signal will also experience scattering, which is the splitting of a signal wave due to the not perfectly even structure of an encountered object and the propagation of these signal components into diverse directions. 

Therefore, the composition of incoming signal components at a receiver is conditioned on the movement and position of objects in its proximity~\cite{Pervasive_Sigg_2013}.
A change in position of objects or static activities like standing will generally affect the mean amplitude, while movement, such as walking, induces a characteristic pattern on the signal over time~\cite{Pervasive_Shi_2012b}.
Figure~\ref{rawdata} illustrates the received signal's strength and some extracted feature sequences from an ambient FM-radio station over 1~minute for 3 different situations of a single subject in a corridor~(the setting is illustrated in~Figure~\ref{corridor}). 
\begin{figure}
	\centering
	\includegraphics[width=\columnwidth]{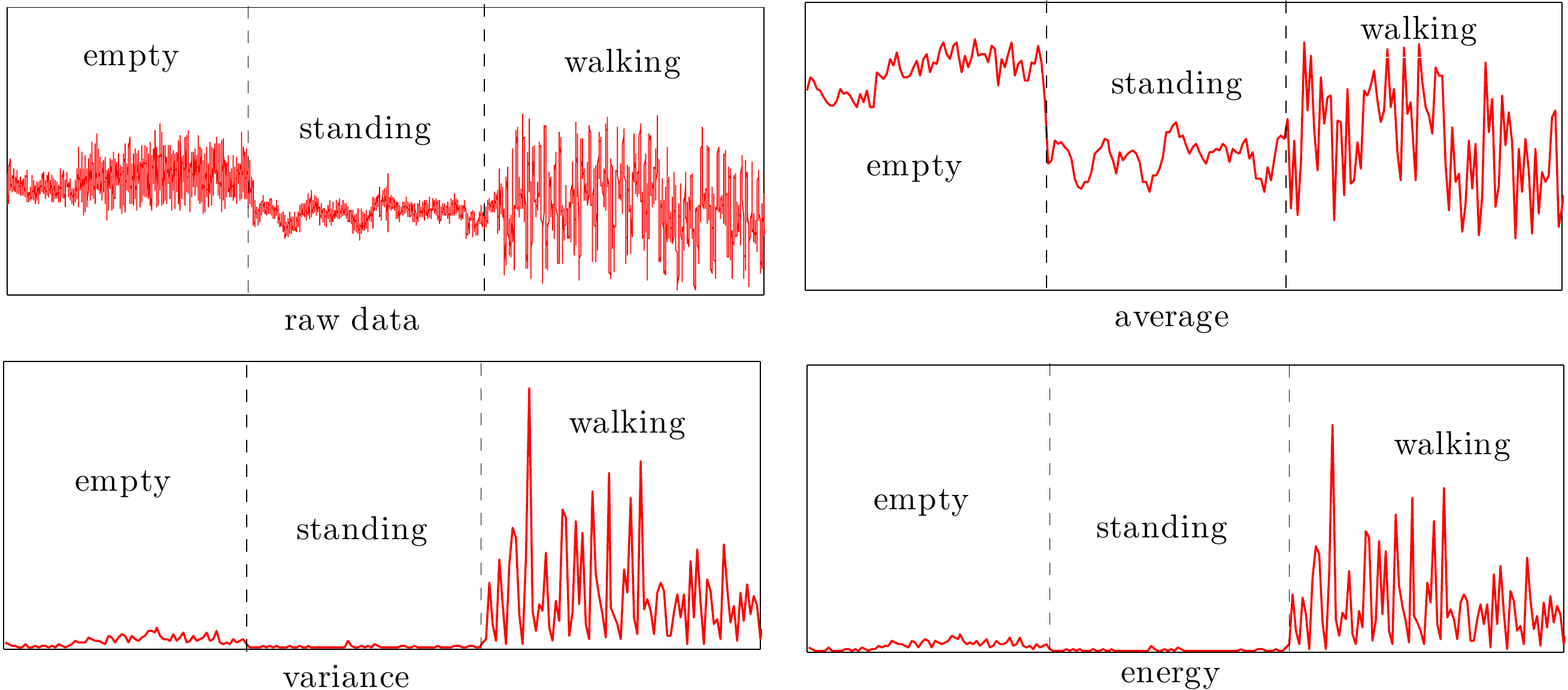}
	\caption{
Evolution of signal strength for empty corridor, standing and walking at the area B, performed by a single subject {\scriptsize (Published by the IEEE CS n 1536-1268/14/\$31.00 \copyright 2014 IEEE}}
	\label{rawdata}
\end{figure}
Figure~\ref{rawdata} indicates a correlation between the characteristics of an RF signal and the activities conducted. 

\subsubsection{Consideration of various RF-technologies}\label{sectionArgueFM}
In the literature, RF-sensing is applied on various signal frequencies and technologies such as WiFi, GSM or FM-radio~\cite{RFSensing_Zhang_2009,Pervasive_Youssef_2007}.
We believe that FM-radio is best suited for attention monitoring for the following reasons.

Since FM-radio features a low operating frequency, a simple modulation mechanism and a wide area of coverage, it is possible to design more robust and discriminative signatures for RF-fingerprinting than for WiFi and GSM~\cite{Percom_Popleteev_2012}. 

FM-radio signals experience, when compared with WiFi, 3G or 4G signals, lower variation in signal strength over time~\cite{Percom_Popleteev_2012}. 
Consequently, for attention monitoring, FM-radio signals induce a lower process noise than signals from WiFi, 3G or 4G systems.
Also, FM-radio is, compared to the other named systems which operate at higher frequencies, less susceptible to weather conditions, such as rain and fog~\cite{Percom_Popleteev_2012}.

Additionally, in order to increase spectrum efficiency, spread spectrum techniques such as frequency hopping or code divisioning are employed in WiFi, 3G and 4G access points.
CDMA interleaves the transmissions to multiple devices including additional potential noise and for hopping schemes, a passive recognition system would need to follow this RF-signal activity variations in order to extract the signal fluctuations on-top which are caused by activities in proximity. 
This would be a more difficult task. 
Therefore, the use of FM-radio signals which are not conditioned on the use of such data transfer schemes appears to be a more appropriate choice.

Furthermore, FM-radio stations are widely implemented and continuously broadcast signals with higher coverage than WiFi, 3G or 4G systems. 
Finally, an FM-radio receiver is less costly than receivers for the other mentioned systems.

For these reasons, we believe that FM-radio is best suited for the utilisation in a passive attention monitoring system.

\subsubsection{Features for FM-based attention monitoring}~\label{fmbasedactivitiesrecognition}
We extract features for attention monitoring from FM-signals continuously broadcast by an FM-radio station (cf. figure~\ref{corridor}).
\begin{figure*}
    \center
	  \includegraphics[width=\textwidth]{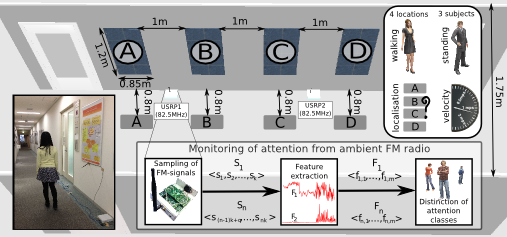}
     \caption{Sketch of the evaluation setting. The attention-monitoring system extracted features combining the data acquired by both USRP devices. {\scriptsize (Published by the IEEE CS n 1536-1268/14/\$31.00 \copyright 2014 IEEE}}
     \label{corridor}
\end{figure*}
The features are obtained from a series of continuous measurements $s_1,s_2,\dots,s_t$ which are samples of the amplitude of ambient FM signals and grouped in windows $S_1,\dots, S_n$ of $k$ consecutive samples each $\langle s_{(i-1)k+1}$,$s_{(i-1)k+2},\dots,s_{ik}\rangle$. 
From these $\mbox{S}_i$, sets of features $\mbox{F}_i=\langle f_{i,1}$,$f_{i,2},\dots,f_{i,m} \rangle$ are extracted and used for the monitoring of attention. 
The features we utilized are the mean ($\mbox{Avg}_i$), the variance ($\mbox{Var}_i$) and the energy ($\mbox{E}_i$) of $\mbox{S}_i$ as detailed in figure~\ref{figureFeaturesAttentionMonitoring}. 
\begin{figure}
      \centering
           \includegraphics[width=\textwidth]{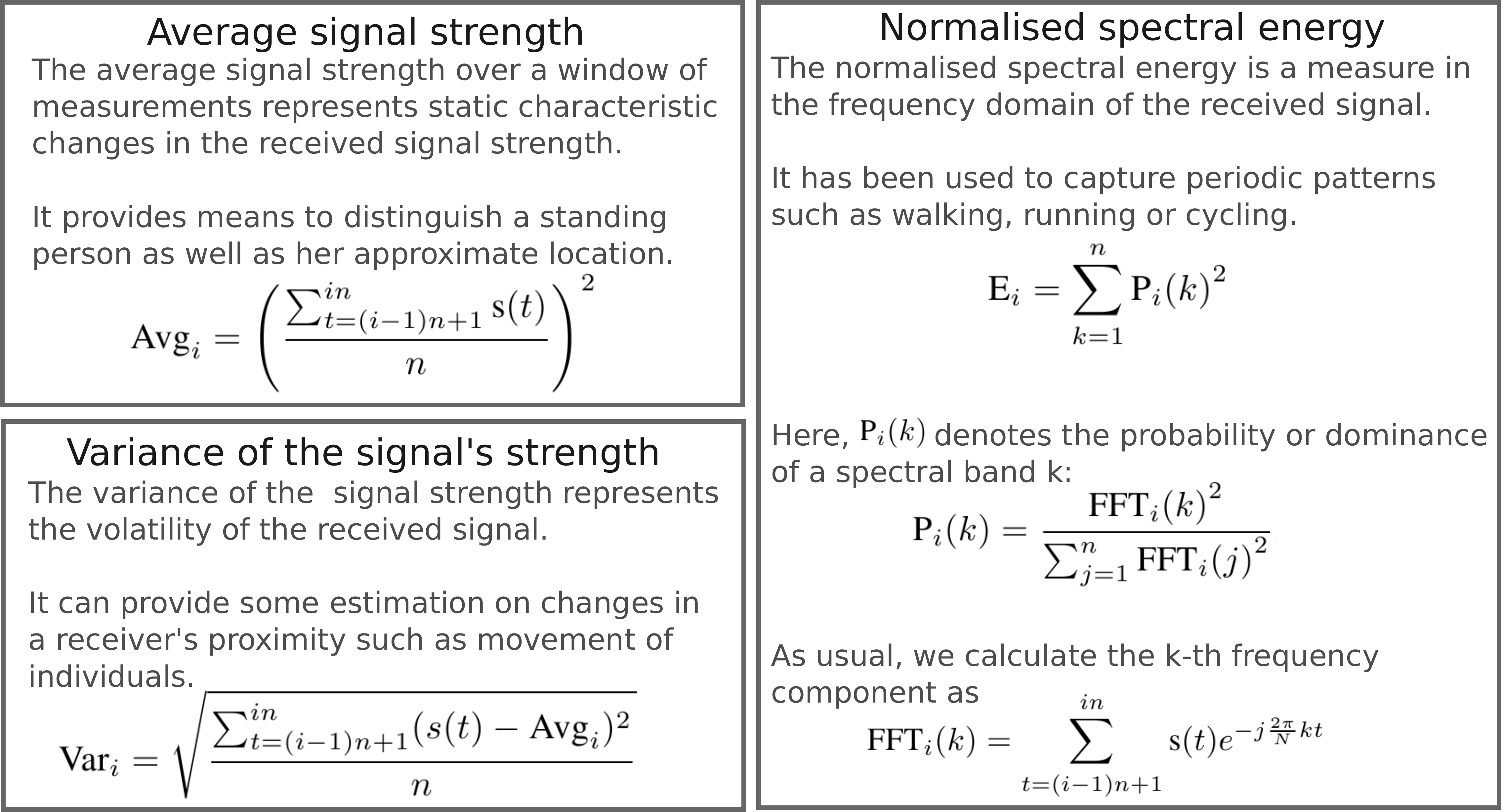}
  \caption{Features utilised for the monitoring of attention via received FM-signals {\scriptsize (Published by the IEEE CS n 1536-1268/14/\$31.00 \copyright 2014 IEEE}}
       \label{figureFeaturesAttentionMonitoring}
  \end{figure}
These features have been derived among a greater set of features as well suited to achieve high accuracy for activity recognition conditioned on passive RF-based recognition systems in~\cite{Pervasive_Sigg_2013}.
After the extraction of features, we randomly divided the collection~F of all feature sets~$\mbox{F}_i$ into a training set~Tr and a classification set~Cl that met the conditions $\mbox{Tr}\cup \mbox{Cl}=\mbox{F}$ and $\mbox{Tr}\cap \mbox{Cl}=\emptyset$. 
The set~Tr is used to train the classifiers. 
After the training, classifiers will process~Cl.

For a set of $k$ activities $\mathcal{A}=\{a_1,\dots,a_k\}$ let $\mathcal{I}(a_i)$, with $\mathcal{I}(a_1)\cup \mathcal{I}(a_2)\dots\cup\mathcal{I}(a_k)=\mbox{Cl}$, be the total number of instances for activity $a_i$ and $\mathcal{I}_{\mbox{\footnotesize cor}}(a_i)$ be the number of correctly classified instances for this activity in which the classification matches the ground truth.
We define the accuracy by which an activity $a_i$ can be detected as 
\begin{equation}\nonumber
     \mathcal{ACC}(a_i)=\frac{\mathcal{I}_{\mbox{\footnotesize cor}}(a_i)}{\mathcal{I}(a_i)}\nonumber
\end{equation}

\subsection{Evaluation}\label{sectionEvaluation}
In this section, we discuss case studies to demonstrate the viability of monitoring attention of people passing by several poster frames towards these frames. 
In all cases, we consider a corridor with posters attached along one side (cf. figure~\ref{corridor}).

Four posters of $0.85\mbox{ m}\times1.2\mbox{ m}$ which are separated by $1\mbox{ m}$ are attached alongside one wall of the corridor.
We place the USRP devices between the two leftmost and rightmost poster frames on the floor.
These N210\footnote{https://www.ettus.com/content/files/2987\_Ettus\_N200-210\_DS\_FINAL\_1.27.12.pdf} USRP devices are equipped with WBX\footnote{https://www.ettus.com/product/details/WBX} daughter boards and VERT900\footnote{https://www.ettus.com/product/details/VERT900} antennas with 3~dBi antenna gain.
Both devices continuously recorded the signal strength with a sample rate of $64\mbox{ Hz}$, emitted by an ambient FM-radio station at 82.5~MHz while the attention of subjects towards the poster frames is monitored. 
We distinguish between four locations, $0.8\mbox{ m}$ in front of the posters (labelled A, B, C, and D) and the rest of the corridor.
During the case studies, the subjects were walking along the corridor and through the marked areas or standing in front of one of the posters at the marked locations.
As a baseline, the received signal from the empty corridor was recorded.
For each action and all three subjects about two minutes of sample data each have been collected.

For all features detailed in figure~\ref{figureFeaturesAttentionMonitoring}, we utilise a window of 128~signal measurements, spanning a total of 2~seconds. 
Features are extracted from the data sets collected by USRP1 and USRP2 (cf. figure~\ref{corridor}) and are merged for the distinction of attention classes. 
For this, we utilise a decision tree (DT) and a k-nearest-neighbour (k-NN) classifier from the Orange data mining Toolkit\footnote{http://orange.biolab.si/}. 
The k-NN classifier utilised 5~neighbours and weights their distance by the Euclidean distance. 
The decision tree utilises at minimum 10 instances in its leaves for pre-pruning and a recursive merge of leaves of the same major class with an m-estimate of~2 for post pruning. 
We apply a 10-fold cross validation. 

\subsubsection{State of the corridor}\label{sectionEvaluationA}
In our scenario, a corridor is equipped with electronic poster frames which shall detect the attention of passers-by and act accordingly.
The most basic case to distinguish for the frames is the state of the corridor.
In particular, we consider whether the corridor is empty or occupied by a person and, when it is occupied, whether this person is walking or standing.
In the case of electronic poster frames, the devices might change into an energy saving mode when the corridor is empty or also display more or less complex information conditioned on whether the person in the corridor is walking or standing.

Table~\ref{3activities} depicts the classification accuracy for these classes.
\begin{table}
\setlength{\tabcolsep}{2pt}
\caption{Mean accuracy for the distinction of the corridor states 'empty', (person)'standing' and (person)'walking' {\scriptsize (Published by the IEEE CS n 1536-1268/14/\$31.00 \copyright 2014 IEEE}}
  \subfloat[
Classification accuracy achieved by a k-NN classifier]{
     \begin{tabular}{r l | ccc}
 &&\multicolumn{3}{c}{Classification}\\
&&empty& standing & walking\\\hline
 \multirow{3}{6pt}{\begin{sideways}Truth\end{sideways}}
  &empty&\textbf{.906}&.034&.06\\
  &standing&.136&\textbf{.765}&.099\\
  &walking&.021&.100&\textbf{.879}\\
 \end{tabular}
}\hfill
 \subfloat[Classification accuracy achieved by a DT classifier]{
       \begin{tabular}{r l | ccc}
 &&\multicolumn{3}{c}{Classification}\\
&&empty& standing & walking\\\hline
 \multirow{3}{6pt}{\begin{sideways}Truth\end{sideways}}
  &empty&\textbf{.877}&.064&.059\\
  &standing&.041&\textbf{.852}&.107\\
  &walking&&.071&\textbf{.929}\\
 \end{tabular}
   }
  \label{3activities}
\end{table} 
For all classes, the mean classification accuracy over the sample windows of 2 seconds is near or above $0.8$. 
In a second stage, we can now obtain information related to the attention of passers-by.

\subsubsection{Focused attention towards specific frames}\label{sectionEvaluationB}
While walking by poster frames, brief snippets of the content can be grasped by individuals. 
However, an intense engagement with the more complex content of a poster requires a person to slow down her walking speed~\cite{AttentionMonitoring_Ferscha_2012} and possibly come to a stand in front of the poster.
We demonstrate the distinction in front of which poster a person is standing in the scenario depicted above. 
All parameters of the recognition system remain identical to section~\ref{sectionEvaluationA}.

All subjects have been standing and observing a poster at one of the locations labelled A, B, C or D in figure~\ref{corridor}.

The most characteristic feature to distinguish these cases is the mean of the signal strength.
The average classification accuracy after 10-fold cross validation is depicted in table~\ref{localisationofstand}
\begin{table}
\setlength{\tabcolsep}{3pt}
\caption{Mean accuracy for the distinction in front of which poster a person is standing {\scriptsize (Published by the IEEE CS n 1536-1268/14/\$31.00 \copyright 2014 IEEE}}
  \subfloat[
Classification accuracy achieved by a k-NN classifier]{
     \begin{tabular}{r l | cccc}
 &&\multicolumn{4}{c}{Classification (standing at)}\\
&& Loc.A&Loc.B&Loc.C&Loc.D \\\hline
 \multirow{4}{6pt}{\begin{sideways}Truth\end{sideways}}
  &Loc.A&\textbf{.876}&.011&.022&.090\\
  &Loc.B&&\textbf{.788}&.203&.008\\
  &Loc.C&&.063&\textbf{.929}&.009\\
  &Loc.D&.009&.027&&\textbf{.964}\\  
 \end{tabular}
}\hfill
 \subfloat[Classification accuracy achieved by a DT classifier]{
  \begin{tabular}{r l | cccc}
 &&\multicolumn{4}{c}{Classification (standing at)}\\
&&Loc.A&Loc.B&Loc.C&Loc.D \\\hline
 \multirow{4}{6pt}{\begin{sideways}Truth\end{sideways}}
  &Loc.A&\textbf{.933}&&.011&.056\\
  &Loc.B&&\textbf{.735}&.257&.009\\
  &Loc.C&&.080&\textbf{.911}&.009\\
  &Loc.D&.054&&.027&\textbf{.919}\\ 
 \end{tabular}
}
  \label{localisationofstand}
\end{table}    
We observe that the classification accuracy is in most cases above $0.9$, in all cases it is near or above $0.8$.

\subsubsection{Tracking individuals in motion}\label{sectionEvaluationC}
While people are passing by poster frames in a corridor, a specific poster frame might have the intention to actively attract the attention of a passer-by.
This attempt is most successful when the person is in the proximity of the poster, facing towards it.
In order to optimally schedule such action, the location of the walking person has to be available at the system.
We show that the location of a single person walking along a corridor can be traced by analysing fluctuation of an incoming FM-radio signal.
Similar to the case study detailed in section~\ref{sectionEvaluationB} we detect in front of which poster a person walking in the corridor is located.

Table~\ref{localisationofwalk} depicts our results.
\begin{table}
\setlength{\tabcolsep}{3pt}
\caption{Mean accuracy for the distinction of walking at location~A, B, C or D in the environment depicted in figure~\ref{corridor} {\scriptsize (Published by the IEEE CS n 1536-1268/14/\$31.00 \copyright 2014 IEEE}}
  \subfloat[
Classification accuracy achieved by a k-NN classifier]{
     \begin{tabular}{r l | cccc}
 &&\multicolumn{4}{c}{Classification (walking at)}\\
&& Loc.A &Loc.B & Loc.C & Loc.D \\\hline
 \multirow{4}{6pt}{\begin{sideways}Truth\end{sideways}}
  &Loc.A&\textbf{.779}&.118&.015&.088\\
  &Loc.B&.107&\textbf{.804}&.071&.018\\
  &Loc.C&&.017&\textbf{.933}&.05\\
  &Loc.D&.139&&&\textbf{.785}\\ 
 \end{tabular}
}\hfill
 \subfloat[Classification accuracy achieved by a DT classifier]{
  \begin{tabular}{rl| cccc}
 &&\multicolumn{4}{c}{Classification (walking at)}\\
&& Loc.A &Loc.B & Loc.C & Loc.D \\\hline
 \multirow{4}{6pt}{\begin{sideways}Truth\end{sideways}}
  &Loc.A&\textbf{.754}&.228&.018&\\
  &Loc.B&.175&\textbf{.772}&.035&.018\\
  &Loc.C&.029&.017&\textbf{.953}&\\
  &Loc.D&.125&&.071&\textbf{.804}\\ 
 \end{tabular}
}
  \label{localisationofwalk}
\end{table}  
We observe that the classification of the location where a person is walking is harder than the classification of the location where a person is standing. 
However, the classification accuracy reached is still near or above $0.8$.

\subsubsection{Changes in the walking speed}\label{sectionEvaluationD}
As detailed in~\cite{AttentionMonitoring_Ferscha_2012}, an important indicator of the attention state of a person are changes in the walking speed.
When a person is interested in a specific content of a poster, she might likely slow down to better perceive the content.

We obtain the walking speed of a passer-by from the fluctuation in ambient FM-radio signals.
We collected for all three subjects and for three different velocities (0.5~m/s, 1~m/s, 2~m/s) samples of a duration of 2~minutes each.
Again, k-NN and DT classifiers are utilised for training and classification. 

Table~\ref{tableWalkingSpeedOne} illustrates our results.
\begin{table}
\setlength{\tabcolsep}{4pt}
\caption{Confusion matrices for the discrimination between walking speeds (0.5~m/s, 1~m/s, 2~m/s) achieved by k-NN and Decision Tree classifiers {\scriptsize (Published by the IEEE CS n 1536-1268/14/\$31.00 \copyright 2014 IEEE}}
  \subfloat[
Classification accuracy achieved by a k-NN classifier]{
     \begin{tabular}{r l | ccc}
 &&\multicolumn{3}{c}{Classification}\\
&&0.5m/s&1m/s&2m/s\\\hline
 \multirow{3}{6pt}{\begin{sideways}Truth\end{sideways}}
  &0.5m/s&\textbf{.641}&.219&.141\\
  &1m/s&.109&\textbf{.453}&.438\\
  &2m/s&.134&.284&\textbf{.582}\\
 \end{tabular}
}\hfill
 \subfloat[Classification accuracy achieved by a DT classifier]{
       \begin{tabular}{r l | ccc}
 &&\multicolumn{3}{c}{Classification}\\
&&0.5m/s&1m/s&2m/s\\\hline
 \multirow{3}{6pt}{\begin{sideways}Truth\end{sideways}}
  &0.5m/s&\textbf{.844}&.094&.063\\
  &1m/s&.094&\textbf{.578}&.328\\
  &2m/s&.164&.343&\textbf{.493}\\
 \end{tabular}
   }
\label{tableWalkingSpeedOne}
\end{table}
We observe that, although there is an indication towards the correct velocity in all cases, the accuracy greatly drops compared to the previous considerations.
The confusion of these velocity levels especially for higher walking speeds is owing to the reduced duration an individual is located in front of a single poster during her walk.

We can, however, achieve a higher recognition accuracy without increasing the distance between posters by abstracting from the 1~m/s walking speed (cf. table~\ref{2speed}), distinguishing only between a slow walk and a running person.
\begin{table}
\setlength{\tabcolsep}{4pt}
\caption{Confusion matrices for the discrimination between walking speeds (0.5~m/s, 2~m/s) achieved by k-NN and Decision Tree classifiers {\scriptsize (Published by the IEEE CS n 1536-1268/14/\$31.00 \copyright 2014 IEEE}}
  \subfloat[
Classification accuracy achieved by a k-NN classifier]{
     \begin{tabular}{r l | l l}
 &&\multicolumn{2}{c}{Classification}\\
&&0.5m/s&2m/s\\\hline
 \multirow{2}{6pt}{\begin{sideways}Truth\end{sideways}}
  &0.5m/s&\textbf{.896}&.104\\
  &2m/s&.219&\textbf{.782}\\
 \end{tabular}
}\hfill
 \subfloat[Classification accuracy achieved by a DT classifier]{
        \begin{tabular}{r l | l l}
 &&\multicolumn{2}{c}{Classification}\\
&&0.5m/s&2m/s\\\hline
 \multirow{2}{6pt}{\begin{sideways}Truth\end{sideways}}
  &0.5m/s&\textbf{.925}&.075\\
  &2m/s&.025&\textbf{.750}\\
 \end{tabular}
   }
  \label{2speed}
\end{table} 
Although we are then not able to distinguish the medium walking speed, note that the attraction of attention of a person in a hurry is not the intention of the considered system.
Rather, we are focusing towards individuals in a relaxed, open state of mind to receive external stimuli and information. 

Since the change in walking speed at a particular location might correspond to the attention level of passer-by, the information on the walking speed, monitored over time, can be utilised to grasp her attention level.

\subsubsection{Altering the count of receive devices}\label{sectionTwoUSRP}
In the above considerations, we have utilised two USRP devices since the experimental setting spans over five meters and the classification accuracy deteriorates with increasing distance to the receive antenna~\cite{Pervasive_Shi_2012b}.
However, for economic reasons, a simple installation might be designed in favour of only one receive device at the cost of a slightly reduced recognition accuracy for greater distances. 
In order to evaluate this impact for the monitoring of attention of passers-by within a corridor, we also consider the classification accuracy when the data from only one of the receive devices is utilised.
The classification system and location of receive devices was not changed.
Figure~\ref{accuracycomparison} depicts our results.
\begin{figure}
      \centering
           \includegraphics[width=9cm]{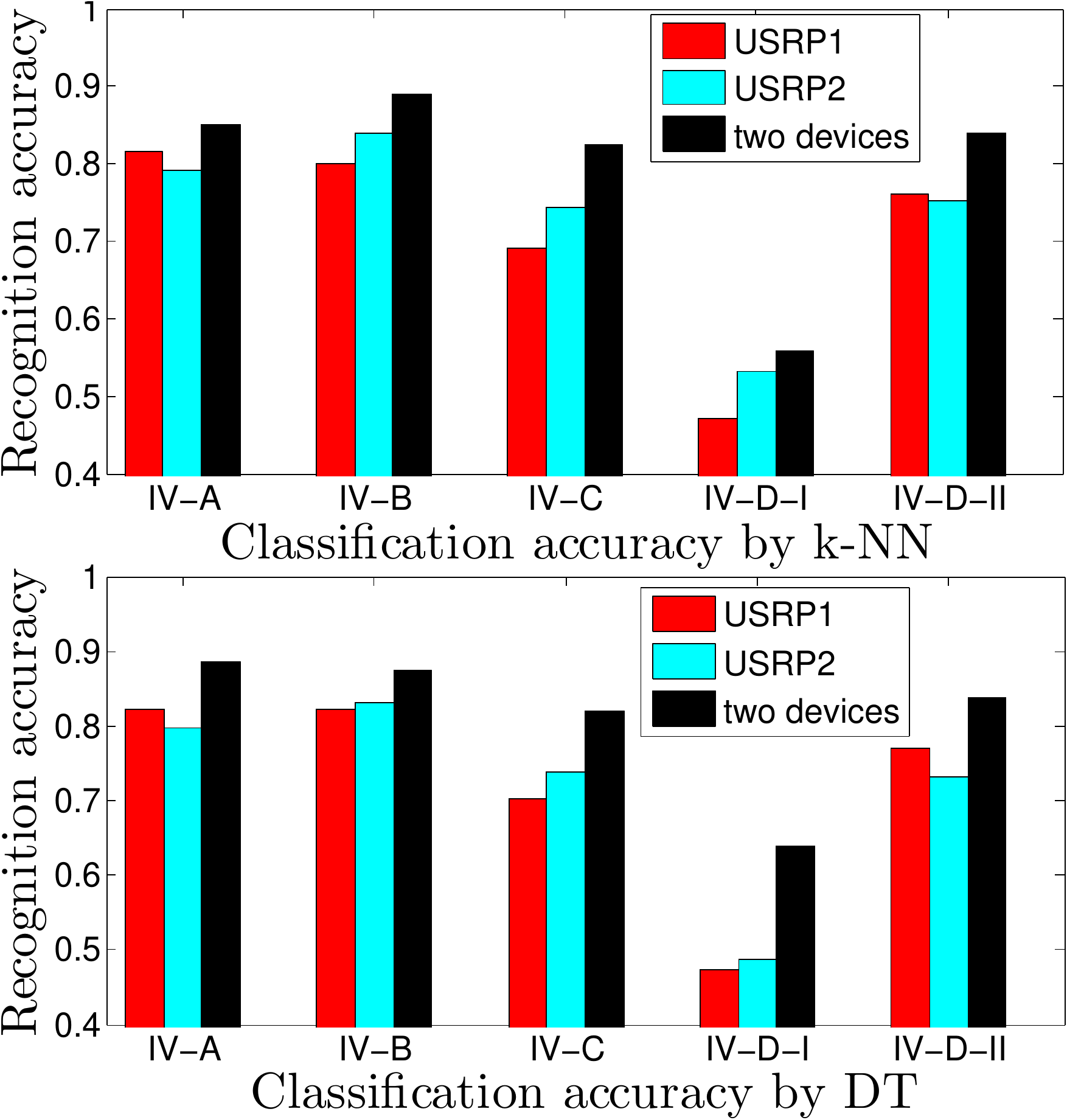}
  \caption{Comparison of the classification accuracy for the 4 cases described 
in section~\ref{sectionEvaluationA} when only one receive device is utlised {\scriptsize (Published by the IEEE CS n 1536-1268/14/\$31.00 \copyright 2014 IEEE}}
       \label{accuracycomparison}
  \end{figure}
We observe that the classification accuracy benefits from the addition of the second device in all cases.
With only one device, the overall classification accuracy drops by about $0.05$ to $0.1$ since the classification accuracy for individuals in greater distance deteriorates.

\subsection{Discussion}\label{sectionConclusionandFutureWork}
The monitoring of the attention state of passers-by towards interactive poster frames can provide additional information 
to the display system when to display which information. 

We demonstrated the distinction of attention classes from features extracted from ambient FM-radio signals.
In particular, we utilised the mean, variance and energy of a signal received at 82.5 MHz in order to distinguish occupancy states in a corridor 
as well as locations at which persons are walking or standing and finally walking speeds.
The attention level of persons is, among other factors, related to walking speeds or changes in velocity or acceleration.
Therefore, we can use the information extracted from the fluctuation in the received FM-signal as an indicator towards various attention states.
This information might control the information provided by a poster frame, conditioned on the attention of passers-by.

Due to the low cost of FM-receiver hardware and the high coverage of FM-radio, the described attention-monitoring system has the potential to be widely deployed in systems that benefit from knowing the attention levels of people in proximity.
Future directions cover the simultaneous detection of attention levels of multiple persons as well as the implementation of the system using off-the-shelf receiver hardware.
\vfill
\pagebreak

\section[The Telepathic Phone: Frictionless Activity Recognition from WiFi-RSSI]{The Telepathic Phone: Frictionless Activity Recognition from WiFi-RSSI \footnote{Originally published as 'Stephan Sigg, Ulf Blanke and Gerhard Troester: The Telepathic Phone: Frictionless Activity Recognition from WiFi-RSSI, IEEE International Conference on Pervasive Computing and Communications (PerCom), Budapest, Hungary, March 24-28, 2014 (DOI: http://dx.doi.org/10.1109/PerCom.2014.6813955)' (978-1-4799-3445-4/14/\$31.00 \copyright 2014 IEEE)}}\label{sectionOriginalRF03}

We investigate the use of WiFi Received Signal Strength Information (RSSI) at a mobile phone for the recognition of situations, activities and gestures.
In particular, we propose a device-free and passive activity recognition system that does not require any device carried by the user and uses ambient signals.
We discuss challenges and lessons learned for the design of such a system on a mobile phone and propose appropriate features to extract activity characteristics from RSSI. 
We demonstrate the feasibility of recognising activities, gestures and environmental situations from RSSI obtained by a mobile phone. 
The case studies were conducted over a period of about two months in which about 12 hours of continuous RSSI data was sampled, in two countries and with 11 participants in total.
Results demonstrate the potential to utilise RSSI for the extension of the environmental perception of a mobile device as well as for the interaction with touch-free gestures.
The system achieves an accuracy of 0.51 while distinguishing as many as 11 gestures and can reach 0.72 on average for four more disparate ones.

\subsection{Introduction}\label{sectionIntroductionRF03}
Mobile phones are a popular sensing platform for the multitude of sensors they incorporate and for their status as personal device kept close to or on the body~\cite{Pervasive_Yang_2012,Pervasive_Lukovicz_2012-2}.
However, these mobile sensing platforms focus on inertial motion to recognize physical activity. 
When a device is no longer worn on the body, its sensing capabilities are greatly reduced.
Indeed, although people are in the same room with their mobile device almost 90\% of the time, their device is within arms reach less than 55\% of a day~\cite{Pervasive_Dey_2011,Pervasive_Patel_2006}.
Therefore, the mobile phone can hardly serve as a continuous sensing platform with sensors such as accelerometers or gyroscopes.

To still obtain information about situations or activities, we need to exploit sensors that react on ambient stimuli.
Possible choices are video~\cite{ActivityRecognition_Aggarwal_2011}, or audio for the classification of device-locations based on audio signatures~\cite{ContextAwareness_Kunze_2007} as well as localisation via audio-based fingerprinting~\cite{Cryptography_Schuerman_2011}.
However, video is restricted by the sensor's field of vision while audio is limited to general locations or situations~\cite{Pervasive_Chaquet_2013}.

\begin{figure}
     \begin{center}
     \includegraphics[width=\columnwidth]{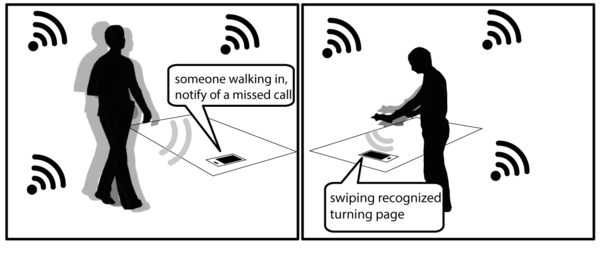}
     \caption{Activity obtained from RSSI-signatures. Two example use-cases: user walking in with the smartphone implicitly reacting (left) and a no-touch explicit interaction (right). {\scriptsize (978-1-4799-3445-4/14/\$31.00 \copyright 2014 IEEE)}}
     \label{teaser}
     \end{center}
\end{figure}
We propose the use of another environmental sensor: the wireless interface to the Radio Frequency (RF) channel.
By monitoring the fluctuation in the received signal strength indicator (RSSI) that is calculated at a receiver for each incoming packet, we attempt to classify the situation (e.g. crowd size), activities or gestures performed in proximity of a mobile phone
(See figure~\ref{teaser}).
This approach allows operation even when the device is not carried by the user but near to her -- a scenario where most activity recognition systems fail.

We can utilise RSSI also in dark or quiet environments when audio or video might not provide sufficient information.
In urban spaces, WiFi connectivity can be presumed (cf. section~\ref{sectionWiFiTraffic}). 
In addition, RF might be perceived as less privacy intrusive when compared to audio or video.

While there is some work on the device-free recognition of activities from RF-channel fluctuation~\cite{Pervasive_Adib_2013,RFsensing_Pu_2013,Pervasive_Sigg_2012}, these systems require sophisticated Software Defined Radio (SDR) devices in order to obtain frequency domain features.
In contrast, we attempt to utilise signal strength fluctuation on off-the shelf mobile phone hardware and from ambient WiFi traffic.
On such devices, already the capturing of RSSI data in sufficient frequency is challenging. 
In addition, the data captured is less accurate and bursty.
We discuss necessary pre-processing as well as the design of features suitable for highly bursty and low-resolution environmental RSSI data together with the final recognition step.
In case studies we demonstrate the potential and limitations of using RSSI for recognition.
The contributions of this work are:
\begin{enumerate}
\item System design and definition of feature space for RSSI-based `frictionless` recognition
\item Analysis of RSSI-influencing factors in a controlled setting (e.g. direction, distance)
\item Feasibility study of situation, activity, and gesture recognition with off-the-shelf mobile phones.
\end{enumerate}
Our results indicate that RF-based sensing of environmental {\em situations}, {\em crowd} and {\em individual activity} provides additional information for activity or context classification tools.

\subsection{Related Work}\label{sectionRelatedWorkRF03}
Device-free RF-based recognition was introduced by Youssef and others~\cite{Pervasive_Youssef_2007} as the localisation of an entity not equipped with any transmitter or receiver.
In recent years, some groups work in this direction using hardware that ranges from SDR devices~\cite{DeviceFreeRecognition_Popleteev_2013}, laptop-class computers~\cite{Pervasive_Seifeldin_2013} over sensor nodes~\cite{Pervasive_Zhang_2012} or RFID tags~\cite{DeviceFreeRecognition_Wagner_2013} and achieve high accuracies of about 1~meter.
This work is also related to a considerable body of practical and theoretical results on passive radar (cf.~\cite{DeviceFreeRecognition_Tan_2005,DeviceFreeRecognition_Colone_2012} and references therein) where vehicles and individuals are detected and tracked from signals such as HF radio, UHF television broadcasts or DAB, DVB and GSM.

Recognition utilising signals on the wireless channel has been generalised in~\cite{Pervasive_Sigg_2012} to activities and we can further imagine also situations~\cite{Pervasive_Scholz_2011}, gestures~\cite{RFsensing_Pu_2013} or attention~\cite{Pervasive_Shi_2014} to be identified by RF-based device-free implementations.
These systems can be grouped into active and passive approaches conditioned on the presence of an active transmitter.
Most previous work in this direction uses SDR devices. 

Kassem et al. sense traffic situations by tracking frequency and speed of passing cars that intercept the direct line of sight between a pair of nodes~\cite{RFSensing_Kassem_2012}. 
The authors of~\cite{Pervasive_Sigg_2012} classify simple activities in an SDR-based active device-free system by extracting and interpreting features from a continuous signal between two nodes. 
Their approach explores also the multipath effects induced by persons that are not intercepting the direct path between nodes.
It was later demonstrated that also simultaneously conducted activities from multiple persons can be distinguished by leveraging purely signal-strength based features~\cite{DeviceFreeRecognition_Sigg_2013}.
Furthermore, it was shown by Pu and others that simultaneous detection of gestures from multiple individuals is possible utilising multi-antenna nodes and micro Doppler fluctuations~\cite{RFsensing_Pu_2013,RFsensing_Kim_2009}.
In a related system, Adib and Katabi employ MIMO interference nulling and combine samples taken over time to achieve the same result while compensating for the missing spatial diversity in a single-antenna system~\cite{Pervasive_Adib_2013}.

While the above are active approaches that require a dedicated transmitter, Ding and others have presented a passive system leveraging RF noise from engines of vehicles~\cite{RFSensing_Ding_2011}. 
In addition, Shi et al. recognised activities and locations from fluctuation in the signal strength of broadcast FM radio~\cite{Pervasive_Shi_2012b}.

Also, active systems utilising non-SDR nodes have been studied.
Most notably, Patwari and others estimated the breathing frequency of an individual surrounded by nodes from the RSSI of exchanged packets~\cite{Pervasive_Patwari_2011b}.
Following other directions, Xu et al. have counted crowd~\cite{RFsensing_Xu_2013} from RSSI within a field of sensor nodes.
Their unsupervised learning approach is able to predict the count of up to 10 stationary or moving individuals.
Recently, the recognition of general activities from RSSI in a sensor network has been considered~\cite{Pervasive_Scholz_2013}.
In particular, the activities standing, sitting, lying, walking and empty have been distinguished with an accuracy of 0.8-0.9.

For these studies, either a sophisticated SDR device or transmit-receive pairs of nodes were required.
Both cases are hard to establish with end-user equipment in spontaneous use.
We propose a usable RF-based device-free recognition approach on phones by leveraging received RSSI from packets of WiFi access points (APs).
We are not aware of previous work on such RSSI-based passive device-free recognition system.

\subsection{Capturing RSSI on Phones}\label{sectionMPRecognition}
In IEEE 802.11, data is exchanged in packets on 11 partly overlapping frequency channels.
In normal communication, a WiFi receiver discards all packets not addressed to itself.
However, we can force the interface into monitor mode to log all traffic. 
For each packet, the receiver calculates the signal strength from the 8 bit preamble.
Due to the lower data rates, control packets differ in their estimated RSSI significantly. 

While the APIs of contemporary mobile phone operating systems (OSs) provide means to access the RSSI, this information is averaged and refreshed at about 1~Hz only. 
Another access to the RSSI is possible via the interface directly with tools such as \texttt{airodump-ng} or \texttt{tcpdump}.
This requires root permissions to access the interface in monitor mode.\footnote{Monitor mode is obligatory in our case since otherwise the tool is executed in Ethernet emulation which does not provide RSSI information}
WiFi-firmware with sufficient access to relevant parameters is sparse.
More severe even for mobile phones, most handsets implement a similar chipset family (e.g. Broadcom bcm4329, bcm4330(B1/B2), bcm4334, bcm4335) for which the default firmware does not provide access to the desired information (even as root).
The only solution to avoid root access and which abstracts from this chipset family is via an external antenna\footnote{github.com/brycethomas/liber80211/blob/master/README.md}.
However, this considerably extends the dimensions and complexity of the hardware, so that we decided against it.
Instead, we used a modified firmware for the above mentioned chipset family~\cite{Wardriving_Ildis_2013-PerCom} on a Nexus One phone running Cyanogen mod 7.2 and executed tcpdump on the interface in monitor mode to capture RSSI of packets.
In monitor mode, no data can be transmitted and consequently no impact can be taken on the frequency in which packets are received.
We can, however, adjust the channel we listen on and might utilise data from multiple APs transmitting on the same channel.
In summary, while it is practically possible to monitor RSSI, the support of manufacturers for the operating systems to perform this out of the box is limited. 
However, we can track RSSI fluctuation with a modified OS, but without hardware modifications.

Figure~\ref{figureRSSISamples} shows an exemplary snippet of sampled RSSI.
\begin{figure}
     \begin{center}
     \includegraphics[width=.9\columnwidth]{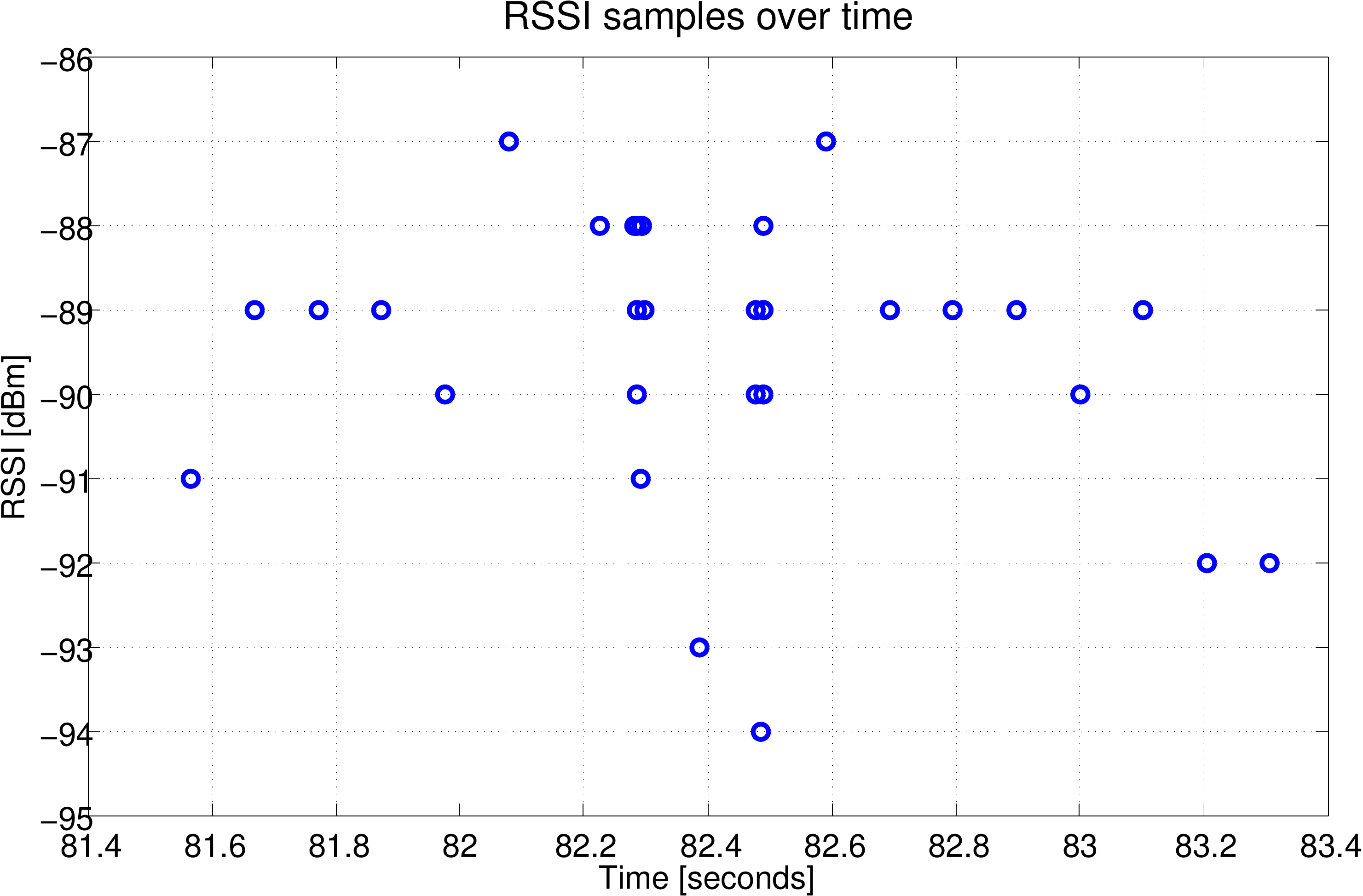}
     \caption{RSSI from packets of a single AP {\scriptsize (978-1-4799-3445-4/14/\$31.00 \copyright 2014 IEEE)}}
     \label{figureRSSISamples}
     \end{center}
\end{figure}
In the experiments conducted, the RSSI usually ranged from -98dBm to -47dBm.
Since the RSSI calculated for control packets differs, we disregarded them for the generation of this data. 
At the time of this recording, the phone was lying on a table within approximately 0.5 meters distance of a person sitting at that table. 
We observe that the data is very bursty.
While there might be only one packet within 0.1 seconds at times, we can also observe five or more packets in the same interval.
Clearly, when compared to SDR-based recognition systems that have direct access to the physical channel, the amount of information available from RSSI is severely reduced. 
Even compared to active RSSI-based systems that contain a transmitter omitting packets at high rate, our passive approach has to deal with more bursty traffic and a lower packet arrival rate.
In addition, the granularity of RSSI is low.
In our case, the 1dB granularity observed in the figure could not be improved for the WiFi interface.

We conclude that it would be hard to apply any curve fitting that could successfully predict the RSSI evolution at a higher sample rate.

\subsection{Features for RSSI-based Recognition}\label{sectionFeatures}
Considering this structure of the data, we used simple features that express general properties such as the overall weight or mass as well as their spread.\footnote{No frequency domain features could be used;  Features as zero crossings or direction changes were not meaningful on the undersampled signal.}
As a tribute to the bursty traffic, the low granularity (cf. figure~\ref{figureRSSISamples}) and a fluctuating packet arrival rate, we simply fixed non-overlapping windows of two seconds and then utilised all RSSI values that would arrive during this period for feature calculation.
The window length was set to $2$s since we aim to design a system that would be practically usable with a good response time.
A higher accuracy can be achieved with increased window size or via majority votes over successively calculated features (cf. section~\ref{sectionWiFiTraffic}).
In total, 18 different features have been considered.
On a data set with the three basic cases 
\begin{enumerate}
 \item A phone lying on a table in an empty room
 \item A phone lying on a table with a person moving
 \item A person holding and handling the phone
\end{enumerate}
we applied a feature selection from the orange data mining toolkit~\footnote{http://orange.biolab.si/}.
From the remaining 9 features, we manually tweaked a combination that achieves good accuracy.
Several combinations of {\em mean, median, variance, maximum} and the {\em difference between minimum and maximum} could achieve best and comparable classification results. 
For the case studies (section~\ref{sectionCaseStudyRF03}), we decided for a combination of {\em mean, variance, maximum} and {\em difference between maximum and minimum}.
For the gesture recognition, also the slope was considered.

\subsection{Case Studies}\label{sectionCaseStudyRF03}
We conducted case studies in indoor environments at ETH Zurich and TU Braunschweig (cf. figure~\ref{figureCaseStudies}).
\begin{figure}
  \subfloat[Office environment at ETH]{\includegraphics[height=5.2cm]{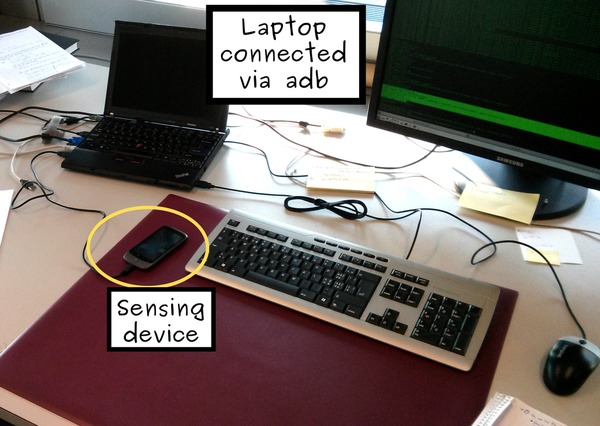}\label{figureCaseStudiesETH}}\hfill
  \subfloat[Lecture room at TU-BS]{\includegraphics[height=5.2cm]{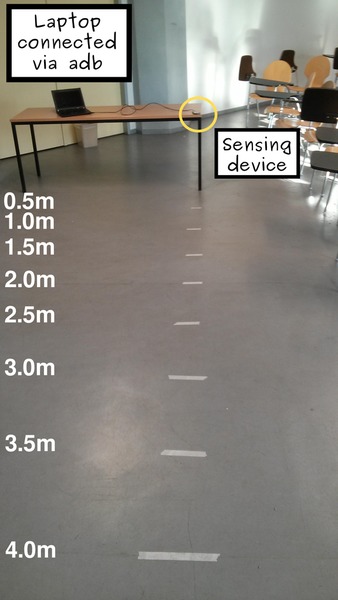}\label{figureCaseStudiesTU}}\hfill
  \subfloat[Scenario for the distinction of walking speed]{\includegraphics[height=5.2cm]{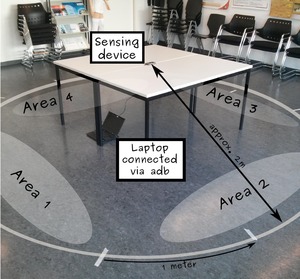}\label{figureCaseStudiesETH2}}
  
  \subfloat[Activities conducted behind a closed door]{\includegraphics[height=4.7cm]{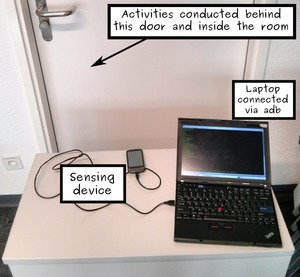}\label{figureCaseStudiesETH3-behind}}\hfill
  \subfloat[Sensing device inside pocket]{\includegraphics[height=4.7cm]{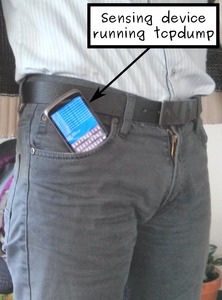}\label{figureCaseStudiesETH4-pocket}}\hfill
  \subfloat[Meeting room at ETH]{\includegraphics[height=4.7cm]{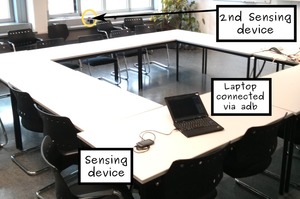}\label{figureCaseStudiesETH5-crowd}}
  \caption{Environments for our case studies. Surrounding furniture and objects were intentionally altered in all cases. {\scriptsize (978-1-4799-3445-4/14/\$31.00 \copyright 2014 IEEE)}}
  \label{figureCaseStudies}
\end{figure}
Occasionally the phone was connected to a computer via \texttt{adb shell} as an alternative to the slow on-screen keyboard which made no difference for the recorded data.
All recordings were conducted multiple times and over several days.
We intentionally altered the environments between recordings (e.g. moving furniture, placing the device slightly different).

Data was processed off-line.
However, we have developed a toolchain for the processing and classification that is sufficiently lightweight to be executed on the phone in realtime.\footnote{The python tools to extract and process RSSI information from pcap files and to classify situations are available at http://www.stephansigg.de/DeviceFree/pcapTools.tar.gz}
The tool groups packets for their source address (since the mean RSSI differs among senders) and disregards control packets (since also their RSSI level differs). 

We now consider general RSSI properties and then investigate limits of RSSI-based recognition.
The studies were conducted over two months in Braunschweig, Germany and Zurich, Switzerland. 
A total of 11 persons (9 male and 2 female; 26 to 37 years) have participated and overall about twelve hours of continuously sampled data has been produced.  

First, we investigate properties of urban WiFi with respect to traffic and sampling rate (section~\ref{sectionWiFiTraffic}). 
Then, we study coarse characteristics with respect to the presence of a user. 
Finally, we provide experiments on fine-grained gesture recognition.

\subsubsection{WiFi Traffic in Urban Spaces}\label{sectionWiFiTraffic}
For the recognition of activities and gestures from RSSI, the rate of incoming packets is essential since this is the rate of fluctuation induced by environmental stimuli.
We sampled packets over some days at various locations in Zurich to estimate a typical rate of packets in urban places.
Figure~\ref{tableIncomingPackets} shows the number of packets per second from the most active AP at various locations on all channels.
Short packets, such as acknowledgements, were removed (cf. section~\ref{sectionFeatures}).
\begin{figure}
\centering
 \includegraphics[width=.6\columnwidth]{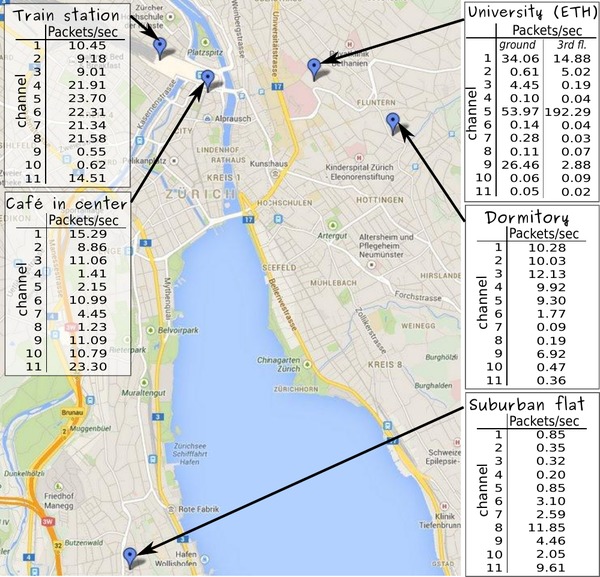}
 \caption{Packets per second from the most active AP at various locations and over all 11 WiFi channels {\scriptsize (978-1-4799-3445-4/14/\$31.00 \copyright 2014 IEEE)}}
 \label{tableIncomingPackets}
\end{figure}

The locations span a University building at two distinct floors, a dormitory, a caf\'e in the city center, the main train station and a flat in a suburb of Zurich.
Only the University locations share APs.
All other locations are well separated over the city.
All locations have characteristic properties.
While the caf\'e has the most equally distributed traffic over all channels, in the dormitory, traffic is clustered in few channels.
University locations feature few, heavily trafficked channels while at the suburban flat only few channels are frequented.
In all cases, we find at least one channel with 10 or more packets per second from a single AP. 
While this most frequented channel might differ spatially, a brief scan easily reveals most suited channels.

Since the receiver has no impact on the packet arrival rate, it relies on traffic from other devices.
We considered the impact of the RSSI samples per second on the classification accuracy. 
In the case study (cf. figure~\ref{figureCaseStudiesETH}), we distinguish an empty office with the mobile phone lying on a table, the same room with a person walking next to the table and a person holding and handling the phone.
Recordings were taken over four days at different times of day. 
Each activity is sampled for five minutes in a row. 
This was repeated on each day twice for all activities. 

Table~\ref{tableResultsSampleRateStatisticsFirst} shows the classification accuracy (CA), information score (IS), Brier score and area under the ROC\footnote{Receiver Operating Characteristic} curve (AUC)~\cite{Statistics_Kononenko_1991,Statistics_Spackman_1989}.
\begin{table}
\centering
\setlength{\tabcolsep}{1pt}
\subfloat[Performance of a k-NN classifier with distinct sample rates]{\begin{tabular}{r|cccc}
&CA&IS&Brier&AUC\\\hline
5 samples/s&.593&.594&.512&.813\\
\cellcolor{gray!20}7 samples/s&\cellcolor{gray!20}.607&\cellcolor{gray!20}.622&\cellcolor{gray!20}.502&\cellcolor{gray!20}.814\\
10 samples/s&.652&.703&.446&.831\\
\cellcolor{gray!20}15 samples/s&\cellcolor{gray!20}.671&\cellcolor{gray!20}.806&\cellcolor{gray!20}.408&\cellcolor{gray!20}.856\\
20 samples/s&.836&1.127&.229&.957
\end{tabular}
\label{tableResultsSampleRateStatisticsFirst}
}\hfill
\subfloat[Confusion matrix for the k-NN classifier with 20 RSSI samples/sec]{\begin{tabular}{cr|cccc}
&&\multicolumn{3}{c}{\begin{scriptsize}Classification\end{scriptsize}}& \\
&&activity&empty&holding&\cellcolor{blue!40}\color{white}{\textbf{recall}}\\\hline
\multirow{3}{5pt}{\begin{sideways}\begin{scriptsize}Gr. truth
\end{scriptsize}\end{sideways}}&activity&\cellcolor{gray!40}{\textbf{.829}}&.014&.157&\cellcolor{blue!40}\color{white}{\textbf{.829}}\\
&empty&.021&\cellcolor{gray!40}{\textbf{.921}}&.057&\cellcolor{blue!40}\color{white}{\textbf{.921}}\\
&holding&.207&.036&\cellcolor{gray!40}{\textbf{.757}}&\cellcolor{blue!40}\color{white}{\textbf{.757}}\\%\hline
\multicolumn{2}{r}{ \cellcolor{blue!40}\color{white}{\textbf{precision}}}& \cellcolor{blue!40}\color{white}{\textbf{.784}}&\cellcolor{blue!40}\color{white}{\textbf{.949}} &\cellcolor{blue!40}\color{white}{\textbf{.779}}&\cellcolor{blue!40}
\end{tabular}
\label{tableResultsSampleRateConfusion}
}		
\caption{Impact of the sample rate on the classification {\scriptsize (978-1-4799-3445-4/14/\$31.00 \copyright 2014 IEEE)}}
\label{tableResultsSampleRateStatistics}
\end{table}
The IS measures how well a classifier learned a data set. 
It is higher when the correct class is predicted more often.
Brier score measures the mean squared difference between a predicted probability for an outcome and the actual class.
AUC is the probability that a classifier ranks a random positive instance higher than a random negative one.

For these results, we used a k-NN with $k=20$ (best results reached with $k\in[10..20]$), and a 10-fold cross validation.
While higher sample rates improve accuracy, also 10 to 15 samples per second allow an indication about a class.
The Confusion matrix for 20 samples per second is depicted in table~\ref{tableResultsSampleRateConfusion}. 
Observe that activity and holding suffer from slight confusion. 
In the empty room almost no confusion is seen. 
Then the signal is stable and not influenced by movement.

The classification accuracy is impacted by the sampling window size (cf. figure~\ref{figureMajorityVotes}).
\begin{figure}
\centering
     \includegraphics[width=.7\columnwidth]{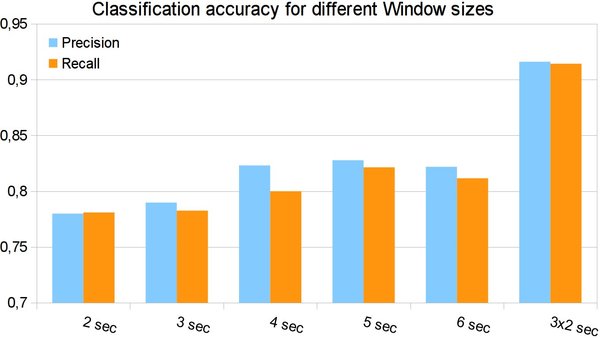}
     \caption{Accuracy for the distinction between three basic cases with varying feature window size. A majority vote over three windows of 2 seconds outperforms greater windows {\scriptsize (978-1-4799-3445-4/14/\$31.00 \copyright 2014 IEEE)}}
     \label{figureMajorityVotes}
\end{figure}
A majority vote over three successive windows of two seconds can reach higher accuracy than a greater window size. 
However, since the system is more responsive with shorter windows, we choose 2s windows.

\subsubsection{Distance to the phone}\label{sectionCaseStudyDistance}
% We were further interested in the sensing range.
How does the distance to the sensing hardware impact the capability to detect an activity.
The case studies depicted in figure~\ref{figureCaseStudiesTU} were conducted at TU-Braunschweig over two consecutive days with repetitions of experiments on both days.
On the floor, locations were marked in increasing distance of 0.5m up to 4.0m.
At these locations, an individual walked around or move for at least 5min for each distance and day.

We investigated the distinction between an empty environment, a person moving in 4 meters distance and a person moving closer to the mobile phone (cf. table~\ref{tableDistanceStatisticsBest}).
The classification accuracy deteriorates when the locations are closer together (cf. table~\ref{tableDistanceStatistics}). 
\begin{table}
\centering
\setlength{\tabcolsep}{1pt}
\subfloat[Performance using 1 (x) and 2 (O) APs 
]{
\begin{tabular}{c|ccccccccc|cccc}
&\multicolumn{9}{c}{\textit{Distance [meters]}}&&\\
&\begin{sideways}.5\end{sideways}&\begin{sideways}1.0\end{sideways}&\begin{sideways}1.5\end{sideways}&\begin{sideways}2.0\end{sideways}&\begin{sideways}2.5\end{sideways}&\begin{sideways}3.0\end{sideways}&\begin{sideways}3.5\end{sideways}&\begin{sideways}4.0\end{sideways}&\begin{sideways}em\end{sideways}&CA&IS&Brier&AUC\\\hline
\multirow{5}{5pt}{\begin{sideways}\begin{scriptsize}1 AP
\end{scriptsize}\end{sideways}}&x&&&&&&&x&x&.809&1.115&.258&.939\\
&\cellcolor{gray!20}&\cellcolor{gray!20}&\cellcolor{gray!20}&\cellcolor{gray!20}x&\cellcolor{gray!20}&\cellcolor{gray!20}&\cellcolor{gray!20}&\cellcolor{gray!20}x&\cellcolor{gray!20}x&\cellcolor{gray!20}.730&\cellcolor{gray!20}.796&\cellcolor{gray!20}.434&\cellcolor{gray!20}.866\\
&&&&&&&x&x&x&.528&.472&.599&.743\\
&\cellcolor{gray!20}&\cellcolor{gray!20}x&\cellcolor{gray!20}&\cellcolor{gray!20}x&\cellcolor{gray!20}&\cellcolor{gray!20}x&\cellcolor{gray!20}&\cellcolor{gray!20}x&\cellcolor{gray!20}x&\cellcolor{gray!20}.483&\cellcolor{gray!20}.933&\cellcolor{gray!20}.644&\cellcolor{gray!20}.831\\
&x&x&x&x&x&x&x&x&x&.379&1.19&.762&.823\\\hline
\begin{tiny}2AP
\end{tiny}&\cellcolor{gray!20}O&\cellcolor{gray!20}O&\cellcolor{gray!20}O&\cellcolor{gray!20}O&\cellcolor{gray!20}O&\cellcolor{gray!20}O&\cellcolor{gray!20}O&\cellcolor{gray!20}O&\cellcolor{gray!20}O&\cellcolor{gray!20}.427&\cellcolor{gray!20}1.329&\cellcolor{gray!20}.722&\cellcolor{gray!20}.857
\end{tabular}
\label{tableDistanceStatistics}}
\hfill
\subfloat[Classification accuracy with fairly separated locations]{\begin{tabular}{cr|cccc}
&&\multicolumn{3}{c}{\begin{scriptsize}Classification\end{scriptsize}}\\
&&.5m&4.0m&empty&\cellcolor{blue!40}\color{white}{\textbf{recall}}\\\hline
\multirow{3}{5pt}{\begin{sideways}\begin{scriptsize}Gr. truth
\end{scriptsize}\end{sideways}}&.5m&\cellcolor{gray!40}{\textbf{.981}}&.019&&\cellcolor{blue!40}\color{white}{\textbf{.981}}\\
&4.0m&.026&\cellcolor{gray!40}{\textbf{.768}}&.206&\cellcolor{blue!40}\color{white}{\textbf{.768}}\\
&empty&.013&.310&\cellcolor{gray!40}{\textbf{.677}}&\cellcolor{blue!40}\color{white}{\textbf{.677}}\\
\multicolumn{2}{r}{ \cellcolor{blue!40}\color{white}{\textbf{precision}}}& \cellcolor{blue!40}\color{white}{\textbf{.962}}&\cellcolor{blue!40}\color{white}{\textbf{.700}} &\cellcolor{blue!40}\color{white}{\textbf{.766}}&\cellcolor{blue!40}
\end{tabular}\label{tableDistanceStatisticsBest}
}

\subfloat[Confusion matrix over all distances]{
 \includegraphics[height=3cm,width=.6\columnwidth]{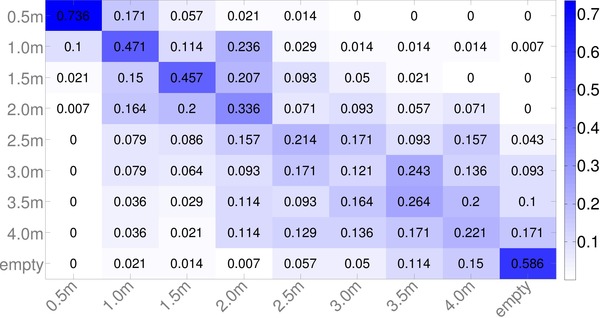}\label{tableDistanceStatisticsConfusion}}
 \subfloat[Accuracy with two APs]{\includegraphics[width=.38\columnwidth]{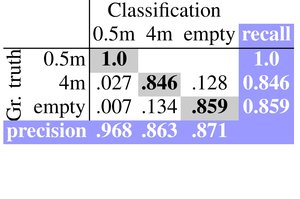}\label{tableDistanceStatisticsMultipleAPs}}
 
 \caption{Classification of activity in various distances. {\scriptsize (978-1-4799-3445-4/14/\$31.00 \copyright 2014 IEEE)}
}
\label{tableDistanceStatisticsAll}
\end{table}
However, when we tolerate an error of about 0.5-1m, reasonable accuracy can be achieved (cf. table~\ref{tableDistanceStatisticsConfusion}).
Furthermore, distance to an activity can be estimated from RSSI.
In conclusion, there is good potential to classify activities also in this distance so that for indoor environments a mobile phone can cover a typical room sufficiently.

In addition, we employed another equally active AP operating at the same frequency.
Although the signal strength between both differed by about 10 dB, classification accuracy was comparable using packets from either AP.
In addition, when features are created from RSSI information of both APs, the accuracy can be further improved (cf. table~\ref{tableDistanceStatisticsMultipleAPs}).
We used the same features for both access points, effectively doubling the number of features for one time window.

\subsubsection{Direction of Movement or Activity} \label{sectionMovement}
To identify locations of performed activities, in addition to distance also relative direction must be distinguished. 
We conducted a study in the environment depicted in figure~\ref{figureCaseStudiesETH2} in which the mobile was placed in the center of a 2m$\times$2m table. 
In parallel to the four borders of the table a subject conducted activity (walking up and down) in approximately 1m distance. 
In figure~\ref{figureCaseStudiesETH2}, the regions are marked as 'Area~1--4'.
The experiment was repeated multiple times for each side and each time for at least five minutes continuously.

We then attempted to distinguish at which side the activity was performed. 
However, it turned out that it is hardly possible to tell this from the RSSI. 
We were not able to find a subset of features that would achieve reasonable accuracy with three distinct classifiers\footnote{For the results depicted in this table, we utilise a Naive Bayes classifier with 100 sample points and a Loess window of $.5$, a classification tree with two or more instances at its leaves and a k-NN classifier with $k=20$} (cf. table~\ref{tableConfusionDistance} for exemplary results).
\begin{table}
\centering
\setlength{\tabcolsep}{1pt}
\begin{footnotesize}
\subfloat[Naive Bayes]{\begin{tabular}{cr|ccccc}
&&\multicolumn{4}{c}{\begin{scriptsize}Classification\end{scriptsize}}\\
&&s~1&s~2&s~3&s~4&\cellcolor{blue!40}\color{white}{\textbf{recall}}\\\hline
\multirow{4}{5pt}{\begin{sideways}\begin{scriptsize}Gr. truth
\end{scriptsize}\end{sideways}}&side~1&\cellcolor{gray!40}{\textbf{.486}}&.193&.121&.2&\cellcolor{blue!40}\color{white}{\textbf{.486}}\\
&side~2&.3&\cellcolor{gray!40}{\textbf{.321}}&.086&.293&\cellcolor{blue!40}\color{white}{\textbf{.321}}\\
&side~3&.286&.136&\cellcolor{gray!40}{\textbf{.179}}&.4&\cellcolor{blue!40}\color{white}{\textbf{.179}}\\
&side~4&.221&.121&.214&\cellcolor{gray!40}{\textbf{.443}}&\cellcolor{blue!40}\color{white}{\textbf{.443}}\\
\multicolumn{2}{r}{ \cellcolor{blue!40}\color{white}{\textbf{precision}}}& \cellcolor{blue!40}\color{white}{\textbf{.376}}&\cellcolor{blue!40}\color{white}{\textbf{.417}} &\cellcolor{blue!40}\color{white}{\textbf{.298}}&\cellcolor{blue!40}\color{white}{\textbf{.332}}&\cellcolor{blue!40}
\end{tabular}
}\hfill
\subfloat[Classification tree]{\begin{tabular}{|ccccc}
\multicolumn{3}{c}{\begin{scriptsize}Classification\end{scriptsize}}\\
s~1&s~2&s~3&s~4&\cellcolor{blue!40}\color{white}{\textbf{recall}}\\\hline
\cellcolor{gray!40}{\textbf{.471}}&.243&.15&.136&\cellcolor{blue!40}\color{white}{\textbf{.471}}\\
.314&\cellcolor{gray!40}{\textbf{.279}}&.236&.171&\cellcolor{blue!40}\color{white}{\textbf{.279}}\\
.3&.314&\cellcolor{gray!40}{\textbf{.214}}&.171&\cellcolor{blue!40}\color{white}{\textbf{.214}}\\
.293&.257&.171&\cellcolor{gray!40}{\textbf{.279}}&\cellcolor{blue!40}\color{white}{\textbf{.279}}\\
\cellcolor{blue!40}\color{white}{\textbf{.342}}&\cellcolor{blue!40}\color{white}{\textbf{.255}} &\cellcolor{blue!40}\color{white}{\textbf{.278}}&\cellcolor{blue!40}\color{white}{\textbf{.368}}&\cellcolor{blue!40}
\end{tabular}
}\hfill
\subfloat[k-NN classifier]{\begin{tabular}{|ccccc}
\multicolumn{3}{c}{\begin{scriptsize}Classification\end{scriptsize}}\\
s~1&s~2&s~3&s~4&\cellcolor{blue!40}\color{white}{\textbf{recall}}\\\hline
\cellcolor{gray!40}{\textbf{.421}}&.207&.221&.15&\cellcolor{blue!40}\color{white}{\textbf{.421}}\\
.271&\cellcolor{gray!40}{\textbf{.336}}&.214&.179&\cellcolor{blue!40}\color{white}{\textbf{.335}}\\
.271&.186&\cellcolor{gray!40}{\textbf{.279}}&.264&\cellcolor{blue!40}\color{white}{\textbf{.279}}\\
.221&.143&.264&\cellcolor{gray!40}{\textbf{.371}}&\cellcolor{blue!40}\color{white}{\textbf{.371}}\\
\cellcolor{blue!40}\color{white}{\textbf{.355}}&\cellcolor{blue!40}\color{white}{\textbf{.385}} &\cellcolor{blue!40}\color{white}{\textbf{.285}}&\cellcolor{blue!40}\color{white}{\textbf{.385}}&\cellcolor{blue!40}
\end{tabular}
}
\end{footnotesize}
     \caption{Confusion matrices for the distinction of the direction in which a person was performing activities {\scriptsize (978-1-4799-3445-4/14/\$31.00 \copyright 2014 IEEE)}}
\label{tableConfusionDistance}
\end{table}

\subsubsection{Detection of activity behind a door/wall}\label{sectionBehind}
WiFi signals can traverse obstacles such as walls or doors but the signal will be damped at this occasion so that the recognition of activity based on this data might be more challenging.
We distinguished activity inside or outside a room.
As depicted in figure~\ref{figureCaseStudiesETH3-behind}, we placed the phone inside a room next to the door.
Then, a person was present and moving either inside or on the other side of the door.
In the third case, nobody was present in the room or outside on the corridor.
For each case, RSSI samples have been recorded for at least five minutes.
Table~\ref{tableResultsDoor} depicts the results.
\begin{table}
\centering
\setlength{\tabcolsep}{3pt}
\subfloat[Performance of various classifiers]{\begin{tabular}{r|cccc}
&CA&IS&Brier&AUC\\\hline
Naive Bayes&.710&.784&.423&.880\\
\cellcolor{gray!20}Classification tree&\cellcolor{gray!20}.669&\cellcolor{gray!20}.855&\cellcolor{gray!20}.663&\cellcolor{gray!20}.795\\
k-NN&.724&.843&.393&.903
\end{tabular}
\label{tableResultsStatisticsDoor}
}\hfill
\setlength{\tabcolsep}{1pt}
\subfloat[Confusion matrix]{\begin{tabular}{cr|ccccc}
&&\multicolumn{4}{c}{\begin{scriptsize}Classification\end{scriptsize}}& \\
&&empty&inside&outside&\cellcolor{blue!40}\color{white}{\textbf{recall}}\\\hline
\multirow{3}{5pt}{\begin{sideways}\begin{scriptsize}Gr. truth
\end{scriptsize}\end{sideways}}&
empty&\cellcolor{gray!40}{\textbf{.814}}&.036&.150&\cellcolor{blue!40}\color{white}{\textbf{.814}}\\
&inside&.064&\cellcolor{gray!40}{\textbf{.743}}&.193&\cellcolor{blue!40}\color{white}{\textbf{.743}}\\
&outside&.2&.186&\cellcolor{gray!40}{\textbf{.614}}&\cellcolor{blue!40}\color{white}{\textbf{.614}}\\
\multicolumn{2}{r}{ \cellcolor{blue!40}\color{white}{\textbf{precision}}}& \cellcolor{blue!40}\color{white}{\textbf{.755}}&\cellcolor{blue!40}\color{white}{\textbf{.770}} &\cellcolor{blue!40}\color{white}{\textbf{.642}}&\cellcolor{blue!40}
\end{tabular}
\label{tableResultsConfusionDoor}
}		
\caption{Classification of activity inside and outside a room {\scriptsize (978-1-4799-3445-4/14/\$31.00 \copyright 2014 IEEE)}}
\label{tableResultsDoor}
\end{table}
While all three cases can be distinguished, the activity conducted outside the room is indeed most confused.
This is, because although there is increased fluctuation, signals are weak so that classes are more likely confused for one of the other classes which represent either stronger activity or weakly fluctuating signals.

\subsubsection{Detection of Walking Speed} \label{sectionSpeed}
Walking speed can be derived from signal strength with an SDR-based active device-free system~\cite{Pervasive_Shi_2014}.
We investigate the performance of a passive RSSI-based system.
In the setting shown in figure~\ref{figureCaseStudiesETH2}, a person walked around the table with the mobile phone in its center in a distance of about 2m.
The phone sampled the RSSI while the person was moving at 0.5$\frac{m}{\mbox{sec}}$, 1$\frac{m}{\mbox{sec}}$ and 2$\frac{m}{\mbox{sec}}$.
This experiment was conducted for at least 5 minutes at each recording and repeated for each velocity twice and also clockwise and counter-clockwise.
The speed was controlled autonomously by the subject.
For this purpose we marked the circle with an interleaving of 1m and equipped the subject with a stopwatch so that she could adjust her speed.
Best accuracy was achieved considering median, mean, minimum and standard deviation.
Results are depicted in table~\ref{tableConfusionSpeed}.
\begin{table}
\centering
\setlength{\tabcolsep}{2pt}
\subfloat[Performance for different sampling rates]{\begin{tabular}{r|cccc}
&CA&IS&Brier&AUC\\\hline
5 samples/s&.681&.777&.409 &.881\\
\cellcolor{gray!20}10 samples/s &\cellcolor{gray!20}.717&\cellcolor{gray!20}.823&\cellcolor{gray!20}.388&\cellcolor{gray!20}.894\\
15 samples/s &.767&.910&.355&.905
\end{tabular}
}\hfill
\setlength{\tabcolsep}{1pt}
\subfloat[Confusion of walking speeds]{\begin{tabular}{cr|ccccc}
&&\multicolumn{3}{c}{\begin{scriptsize}Classification\end{scriptsize}}\\
&&$.5\frac{m}{\mbox{sec}}$&$1.0\frac{m}{\mbox{sec}}$&$2.0\frac{m}{\mbox{sec}}$&\cellcolor{blue!40}\color{white}{\textbf{recall}}\\\hline
\multirow{3}{5pt}{\begin{sideways}\begin{scriptsize}Gr. truth
\end{scriptsize}\end{sideways}}&$.5\frac{m}{\mbox{sec}}$&\cellcolor{gray!40}{\textbf{.864}}&.071&.064&\cellcolor{blue!40}\color{white}{\textbf{.864}}\\
&$1\frac{m}{\mbox{sec}}$&.121&\cellcolor{gray!40}{\textbf{.657}}&.221&\cellcolor{blue!40}\color{white}{\textbf{.657}}\\
&$2\frac{m}{\mbox{sec}}$&.050&.171&\cellcolor{gray!40}{\textbf{.779}}&\cellcolor{blue!40}\color{white}{\textbf{.779}}\\
\multicolumn{2}{r}{ \cellcolor{blue!40}\color{white}{\textbf{precision}}}& \cellcolor{blue!40}\color{white}{\textbf{.834}}&\cellcolor{blue!40}\color{white}{\textbf{.730}} &\cellcolor{blue!40}\color{white}{\textbf{.732}}&\cellcolor{blue!40}
\end{tabular}
}
     \caption{Classification of walking speed ($k=18$; 10$\frac{\mbox{\tiny samp.}}{\mbox{\tiny sec}}$) {\scriptsize (978-1-4799-3445-4/14/\$31.00 \copyright 2014 IEEE)}}
\label{tableConfusionSpeed}
\end{table}
All velocities can be well distinguished.
The confusion is greater for velocity pairs that are closer to each other.

\subsubsection{Sensing Crowd}
An important ingredient for context-recognition is the size of the surrounding crowd. 
Different sizes can indicate different situations. For instance, having a conversation between 
few people or listening to or giving a talk in a meeting with multiple people.
% How many people are around?
We attempted to distinguish between the empty room depicted in figure~\ref{figureCaseStudiesETH5-crowd} and the same room occupied by 1, 5 or 10 persons.
In the room, two phones where placed to record the RSSI.
Phone~1 is located near the entrance on a table and the second one is placed beside a window across the room.
The latter was farther away from the nearest AP which is located right next to the door outside the room.
For the case study, 10, 5, 1 or no person would be present for at least five minutes.
Participants were instructed not to stand still for longer periods of time but otherwise should move or act freely.
They have then, for instance, moved around, stood in front of a poster and discussed it or leaned over a map to plan a weekend trip.
Table~\ref{tableResultsCrowd} shows that this broad distinction of the number of persons present is possible with reasonable accuracy.
\begin{table}
\centering
\setlength{\tabcolsep}{3pt}
\subfloat[Performance of a k-NN classifier with data from various phones]{\begin{tabular}{r|cccc}
&CA&IS&Brier&AUC\\\hline
Phone~1&.759&1.309&.354&.946\\
\cellcolor{gray!20}Phone~2&\cellcolor{gray!20}.805&\cellcolor{gray!20}1.397&\cellcolor{gray!20}.304&\cellcolor{gray!20}.937
\end{tabular}
\label{tableResultsStatisticsCrowd}
}\hfill
\setlength{\tabcolsep}{1pt}
\subfloat[Confusion matrix (Phone~2)]{\begin{tabular}{cr|ccccc}
&&\multicolumn{4}{c}{\begin{scriptsize}Classification\end{scriptsize}}& \\
&&0P&1P&5P&10P&\cellcolor{blue!40}\color{white}{\textbf{recall}}\\\hline
\multirow{4}{5pt}{\begin{sideways}\begin{scriptsize}Gr. truth
\end{scriptsize}\end{sideways}}&
0~Persons&\cellcolor{gray!40}{\textbf{1.0}}&&&&\cellcolor{blue!40}\color{white}{\textbf{1.0}}\\
&1~Persons&&\cellcolor{gray!40}{\textbf{.857}}&.129&.014&\cellcolor{blue!40}\color{white}{\textbf{.857}}\\
&5~Persons&&.129&\cellcolor{gray!40}{\textbf{.671}}&.2&\cellcolor{blue!40}\color{white}{\textbf{.671}}\\
&10~Persons&&.114&.193&\cellcolor{gray!40}{\textbf{.693}}&\cellcolor{blue!40}\color{white}{\textbf{.693}}\\
\multicolumn{2}{r}{ \cellcolor{blue!40}\color{white}{\textbf{precision}}}& \cellcolor{blue!40}\color{white}{\textbf{1.0}}&\cellcolor{blue!40}\color{white}{\textbf{.779}} &\cellcolor{blue!40}\color{white}{\textbf{.676}}&\cellcolor{blue!40}\color{white}{\textbf{.764}}&\cellcolor{blue!40}
\end{tabular}
\label{tableResultsConfusionCrowd}
}		
\caption{Classification of crowd (k-NN;~20~samples/s) {\scriptsize (978-1-4799-3445-4/14/\$31.00 \copyright 2014 IEEE)}}
\label{tableResultsCrowd}
\end{table}
Empty room is perfectly recognised with 100\% of accuracy. While different crowd sizes are confused in particular for the 5 and 10 persons, the performance is still far above random guess.

Observe that with the data captured by the phone placed near the window (Phone~2) the recognition accuracy is higher. 
We account this to the fact that individuals in the room continuously have resided in the area between the AP outside the room and the window. 
Therefore, the impact on the WiFi packets due to blocking or damping was greater for this phone.

\subsubsection{Detect Activity while the Device is Carried}
When the phone is carried, we expect significant noise for the recognition of situations from packets blocked by the user carrying the phone.
We investigated whether RSSI can still be utilised to classify simple situations.
For instance, it might be possible to derive whether a person is alone or in company. 
For this study, the phone was carried in the pocket of a person (cf. figure~\ref{figureCaseStudiesETH4-pocket}).
Then, the person was standing or walking alone and while another person was walking in proximity. 
For each class, data has been recorded for at least five minutes. 
Table~\ref{tableResultsPocket} depicts the results.
\begin{table}
\centering
\setlength{\tabcolsep}{1pt}
\begin{tabular}{cr|ccccc}
&&\multicolumn{4}{c}{\begin{scriptsize}Classification\end{scriptsize}}& \\
&&Stat.--Empty&Stat.--pres.&Walking--Empty&Walking--pres.&\cellcolor{blue!40}\color{white}{\textbf{recall}}\\\hline
\multirow{4}{5pt}{\begin{sideways}\begin{scriptsize}Gr. truth
\end{scriptsize}\end{sideways}}&
Stationary--Empty&\cellcolor{gray!40}{\textbf{.921}}&.079&&&\cellcolor{blue!40}\color{white}{\textbf{.921}}\\
&Stationary--presence&.15&\cellcolor{gray!40}{\textbf{.693}}&.079&.079&\cellcolor{blue!40}\color{white}{\textbf{.693}}\\
&Walking--Empty&&.171&\cellcolor{gray!40}{\textbf{.457}}&.371&\cellcolor{blue!40}\color{white}{\textbf{.457}}\\
&Walking--presence&&.107&.364&\cellcolor{gray!40}{\textbf{.529}}&\cellcolor{blue!40}\color{white}{\textbf{.529}}\\
\multicolumn{2}{r}{ \cellcolor{blue!40}\color{white}{\textbf{precision}}}& \cellcolor{blue!40}\color{white}{\textbf{.860}}&\cellcolor{blue!40}\color{white}{\textbf{.660}} &\cellcolor{blue!40}\color{white}{\textbf{.508}}&\cellcolor{blue!40}\color{white}{\textbf{.540}}&\cellcolor{blue!40}
\end{tabular}
\caption{Classification of presence when device is carried {\scriptsize (978-1-4799-3445-4/14/\$31.00 \copyright 2014 IEEE)}}
\label{tableResultsPocket}
\end{table}
We are well able to detect whether the person wearing the device is stationary and alone or if there is movement either by the device holder or by someone else. 
However, when the device holder is herself moving, the distinction of other activity is more confused. 

\subsubsection{Recognition of gestures}
Finally, we considered the recognition of gestures.
For this study, we placed the phone on the table as depicted in figure~\ref{figureCaseStudiesETH} and performed 10 single-handed gestures (figure~\ref{figureGestures}).
\begin{figure}
\centering
 \includegraphics[width=0.6\columnwidth]{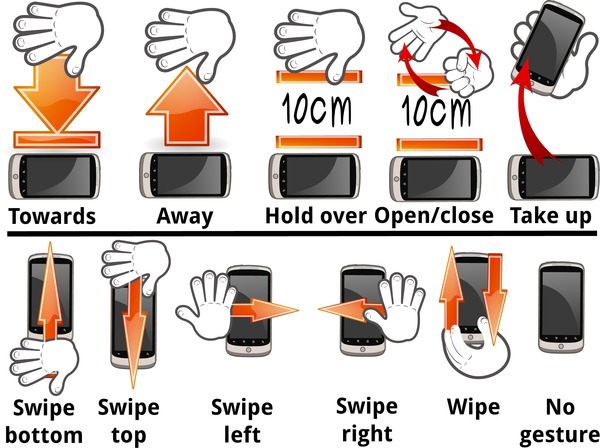}
 \caption{Gestures performed {\scriptsize (978-1-4799-3445-4/14/\$31.00 \copyright 2014 IEEE)}}
 \label{figureGestures}
\end{figure}
Each gesture lasted for approximately 0.4 to 0.7 seconds and was performed in a distance of about 1cm to 20cm.
Only for two gestures, {\em Take up} and {\em Wipe}, the hand was in actual physical contact with it.
Each gesture was repeated 100 times for a total of 1100 recordings of the distinct cases.

Best results have been achieved with the features {\em mean, variance, signal peaks within 10\% of the maximum} and the {\em fraction between the mean of the first and second half of a feature window}. 
Table~\ref{tableConfusionGesture01} shows the classification results.
\begin{table}
\centering
\setlength{\tabcolsep}{2pt}
% \begin{scriptsize}
\begin{tabular}{cr|cccccccccccc}
&&\multicolumn{11}{c}{\begin{scriptsize}Classification\end{scriptsize}}\\
&&\begin{sideways}\begin{scriptsize}Away
\end{scriptsize}\end{sideways}&\begin{sideways}\begin{scriptsize}Hold over
\end{scriptsize}\end{sideways}&\begin{sideways}\begin{scriptsize}Towards
\end{scriptsize}\end{sideways}&\begin{sideways}\begin{scriptsize}No gesture
\end{scriptsize}\end{sideways}&\begin{sideways}\begin{scriptsize}Open/close
\end{scriptsize}\end{sideways}&\begin{sideways}\begin{scriptsize}Take up
\end{scriptsize}\end{sideways}&\begin{sideways}\begin{scriptsize}S. bottom
\end{scriptsize}\end{sideways}&\begin{sideways}\begin{scriptsize}S. left
\end{scriptsize}\end{sideways}&\begin{sideways}\begin{scriptsize}Wipe
\end{scriptsize}\end{sideways}&\begin{sideways}\begin{scriptsize}S. right
\end{scriptsize}\end{sideways}&\begin{sideways}\begin{scriptsize}S. top
\end{scriptsize}\end{sideways}&\cellcolor{blue!40}\color{white}{\textbf{recall}}\\\hline
\multirow{11}{5pt}{\begin{sideways}\begin{scriptsize}Ground truth
\end{scriptsize}\end{sideways}}&Away&\cellcolor{gray!40}{\textbf{.54}}&&&&.06&.03&.15&.12&&.05&.05&\cellcolor{blue!40}\color{white}{\textbf{.540}}\\
&Hold over&&\cellcolor{gray!40}{\textbf{.26}}&.16&.05&.16&.08&.03&.06&.04&.14&.02&\cellcolor{blue!40}\color{white}{\textbf{.260}}\\
&Towards&&.09&\cellcolor{gray!40}{\textbf{.71}}&.07&.04&.01&.01&&.06&&.01&\cellcolor{blue!40}\color{white}{\textbf{.710}}\\
&No gesture&&.04&.06&\cellcolor{gray!40}{\textbf{.67}}&.05&.01&&.01&.15&&.01&\cellcolor{blue!40}\color{white}{\textbf{.670}}\\
&Open/close&&.1&.07&.09&\cellcolor{gray!40}{\textbf{.47}}&.07&.01&&.14&.03&.02&\cellcolor{blue!40}\color{white}{\textbf{.470}}\\
&Take up&.01&.08&.03&.02&.09&\cellcolor{gray!40}{\textbf{.46}}&.06&.03&.09&.06&.07&\cellcolor{blue!40}\color{white}{\textbf{.460}}\\
&S. bottom&.13&.06&&.01&.04&.09&\cellcolor{gray!40}{\textbf{.36}}&.06&&.2&.05&\cellcolor{blue!40}\color{white}{\textbf{.360}}\\
&S. left&.12&.01&.01&.01&&.08&.07&\cellcolor{gray!40}{\textbf{.49}}&&.07&.14&\cellcolor{blue!40}\color{white}{\textbf{.490}}\\
&Wipe&&.04&.1&.08&.16&.09&.01&&\cellcolor{gray!40}{\textbf{.51}}&&.01&\cellcolor{blue!40}\color{white}{\textbf{.510}}\\
&S. right&.03&.03&&.01&.03&.01&.1&.01&.01&\cellcolor{gray!40}{\textbf{.68}}&.09&\cellcolor{blue!40}\color{white}{\textbf{.680}}\\
&S. top&.07&.02&.01&&.03&.08&&.21&&.11&\cellcolor{gray!40}{\textbf{.47}}&\cellcolor{blue!40}\color{white}{\textbf{.470}}\\
\multicolumn{2}{r}{ \cellcolor{blue!40}\color{white}{\textbf{precision}}}& \cellcolor{blue!40}\color{white}{\textbf{.600}}&\cellcolor{blue!40}\color{white}{\textbf{.356}} &\cellcolor{blue!40}\color{white}{\textbf{.617}}&\cellcolor{blue!40}\color{white}{\textbf{.663}}&\cellcolor{blue!40}\color{white}{\textbf{.416}}&\cellcolor{blue!40}\color{white}{\textbf{.455}}&\cellcolor{blue!40}\color{white}{\textbf{.450}}&\cellcolor{blue!40}\color{white}{\textbf{.495}}&\cellcolor{blue!40}\color{white}{\textbf{.510}}&\cellcolor{blue!40}\color{white}{\textbf{.507}}&\cellcolor{blue!40}\color{white}{\textbf{.500}}&\cellcolor{blue!40}
\end{tabular}
     \caption{Confusion matrices for the distinction of gestures {\scriptsize (978-1-4799-3445-4/14/\$31.00 \copyright 2014 IEEE)}}
\label{tableConfusionGesture01}
\end{table}
We observe that, while some gestures have a reasonable accuracy and average results are far above guess, a high confusion for other classes inhibits correct classification.
In particular, the gesture {\em Hold over} can hardly be be distinguished. 
Furthermore, some of the swiping gestures are confused.  
Therefore, we merged cases into a single gesture. 
Table~\ref{tableConfusionGesture02} shows two levels of merging gestures. 
When merging to 7 distinct gestures\footnote{{\em Hold over, Open/close, Take up} and {\em Wipe} were labelled as {\em No gesture}} we achieve a mean accuracy of about 0.56. 
In the table, labels are shortened to the first two letters for space limitations.
While most gestures are well recognized, especially the swipe gestures still achieve mediocre performance.
When further reducing to the four gestures {\em away, towards, no gesture} and {\em swipe} by merging all swipe gestures, an average accuracy of 0.66 is achieved.

\begin{table}
\centering
\setlength{\tabcolsep}{1pt}
\subfloat[Confusion of 7 distinct gestures -- all remaining gestures shifted to 'no gesture']{\begin{tabular}{r|cccccccc}
&\multicolumn{7}{c}{\begin{scriptsize}Classification\end{scriptsize}}\\
&\begin{sideways}\begin{scriptsize}Aw
\end{scriptsize}\end{sideways}&\begin{sideways}\begin{scriptsize}No
\end{scriptsize}\end{sideways}&\begin{sideways}\begin{scriptsize}To
\end{scriptsize}\end{sideways}&\begin{sideways}\begin{scriptsize}Sb
\end{scriptsize}\end{sideways}&\begin{sideways}\begin{scriptsize}Sl
\end{scriptsize}\end{sideways}&\begin{sideways}\begin{scriptsize}Sr
\end{scriptsize}\end{sideways}&\begin{sideways}\begin{scriptsize}St
\end{scriptsize}\end{sideways}&\cellcolor{blue!40}\color{white}{\textbf{recall}}\\\hline
Aw&\cellcolor{gray!40}{\textbf{.58}}&.09&&.13&.11&.05&.04&\cellcolor{blue!40}\color{white}{\textbf{.58}}\\
No&&\cellcolor{gray!40}{\textbf{.872}}&.05&.014&.012&.034&.018&\cellcolor{blue!40}\color{white}{\textbf{.872}}\\
To&&.4&\cellcolor{gray!40}{\textbf{.59}}&&&&.01&\cellcolor{blue!40}\color{white}{\textbf{.59}}\\
Sb&.15&.22&&\cellcolor{gray!40}{\textbf{.32}}&.04&.22&.05&\cellcolor{blue!40}\color{white}{\textbf{.32}}\\
Sl&.12&.11&.01&.06&\cellcolor{gray!40}{\textbf{.48}}&.08&.14&\cellcolor{blue!40}\color{white}{\textbf{.48}}\\
Sr&.04&.15&&.06&.01&\cellcolor{gray!40}{\textbf{.67}}&.07&\cellcolor{blue!40}\color{white}{\textbf{.67}}\\
St&.03&.18&.01&.01&.24&.1&\cellcolor{gray!40}{\textbf{.43}}&\cellcolor{blue!40}\color{white}{\textbf{.43}}\\
\cellcolor{blue!40}\color{white}{\textbf{prec}}& \cellcolor{blue!40}\color{white}{\textbf{.630}}&\cellcolor{blue!40}\color{white}{\textbf{.791}} &\cellcolor{blue!40}\color{white}{\textbf{.686}}&\cellcolor{blue!40}\color{white}{\textbf{.492}}&\cellcolor{blue!40}\color{white}{\textbf{.511}}&\cellcolor{blue!40}\color{white}{\textbf{.519}}&\cellcolor{blue!40}\color{white}{\textbf{.518}}&\cellcolor{blue!40}
\end{tabular} 
}\hfill
\subfloat[Confusion of 4 gestures]{\begin{tabular}{r|ccccc}
&\multicolumn{4}{c}{\begin{scriptsize}Classification\end{scriptsize}}\\
&\begin{sideways}\begin{scriptsize}Away
\end{scriptsize}\end{sideways}&\begin{sideways}\begin{scriptsize}Towards 
\end{scriptsize}\end{sideways}&\begin{sideways}\begin{scriptsize}No gesture
\end{scriptsize}\end{sideways}&\begin{sideways}\begin{scriptsize}Swipe
\end{scriptsize}\end{sideways}&\cellcolor{blue!40}\color{white}{\textbf{recall}}\\\hline
Away&\cellcolor{gray!40}{\textbf{.45}}&.06&&.49&\cellcolor{blue!40}\color{white}{\textbf{.45}}\\
Towards&&\cellcolor{gray!40}{\textbf{.834}}&.052&.114&\cellcolor{blue!40}\color{white}{\textbf{.834}}\\
No gesture&&.41&\cellcolor{gray!40}{\textbf{.56}}&.03&\cellcolor{blue!40}\color{white}{\textbf{.56}}\\
Swipe&.063&.128&.005&\cellcolor{gray!40}{\textbf{.805}}&\cellcolor{blue!40}\color{white}{\textbf{.805}}\\
\cellcolor{blue!40}\color{white}{\textbf{precision}}&\cellcolor{blue!40}\color{white}{\textbf{.643}}&\cellcolor{blue!40}\color{white}{\textbf{.810}}&\cellcolor{blue!40}\color{white}{\textbf{.667}}&\cellcolor{blue!40}\color{white}{\textbf{.747}}&\cellcolor{blue!40}
\end{tabular}
}
     \caption{Performance with fewer gestures {\scriptsize (978-1-4799-3445-4/14/\$31.00 \copyright 2014 IEEE)}}
\label{tableConfusionGesture02}
\end{table}
This suggests that some gestures can indeed be utilised to interact with phones or other WiFi capable devices. 
Possible applications cover a touch-free, frictionless interface to control mobile devices also through clothes, an extended interface for wearable devices or interface-free devices in an Internet of Things.

\subsubsection{Discussion}
We have investigated a passive, device-free RSSI-based activity recognition system considering several situations captured by a mobile phone. 
Figure~\ref{figureSummary} summarises shows the accuracies achieved in our case studies relative to a random classifier\footnote{The random Classifier takes each possible choice with equal probability} as a baseline.
\begin{figure}
\centering
\includegraphics[width=.8\columnwidth]{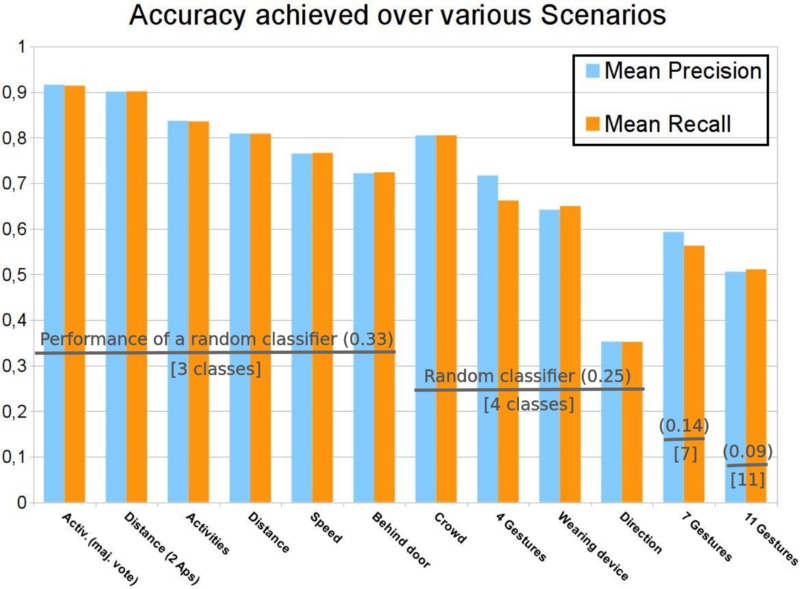}
\caption{Accuracies achieved for various scenarios considered {\scriptsize (978-1-4799-3445-4/14/\$31.00 \copyright 2014 IEEE)}}
\label{figureSummary}
\end{figure}
In general, the overall accuracy falls with increasing number of classes to distinguish.
However, short of the recognition of direction, the results are far above random guess in all cases.
The simple distinction of distance and three well separated situations reached best results and could be further improved by considering multiple APs or majority votes over several windows of features.

The failure in the distinction in which direction activity was performed indicates limitations of passive device-free RSSI-based recognition.
Since the system has to rely on data transmitted from an AP which can be located in an arbitrary direction and the device might be in arbitrary orientation, it is hardly possible to obtain fine grained information on environmental situation. 
The sequence of received RSSI samples is highly bursty and of low granularity and rate. 
Consequently, the classes that can be distinguished are limited too. 
Additional studies we conducted on the recognition of further activities (sitting, standing, walking, reading, typing on a computer) could not yield a useful recognition accuracy.

However, our results show that an RSSI-based passive device-free recognition system can provide basic environmental awareness when classical phone-based recognition systems fail (e.g. when the phone is not carried on the body).
In addition, for special cases such as the distinction of gestures where movement is conducted in close proximity to the device, RSSI-based passive recognition might provide an innovative ad-hoc alternative to more complex solutions.

Unfortunately, our solution requires a modified WiFi firmware, root access and is currently limited to a small set of phones. 
Much work is still required in order to allow operating-system supported non-root access to RSSI information in sufficient frequency. 

\subsection{Conclusion}\label{sectionConclusion}
We have proposed and discussed the utilisation of RSSI information from mobile phones for the characterisation of situations, activities and gestures.
We reported problems to be solved for the acquisition of RSSI from received packets on mobile phones and discussed the structure of the data as well as features suited for the recognition of activities and gestures.

In case studies we investigated the feasibility of RSSI-based recognition on mobile phones for multiple scenarios.
Summarising, these results show that it is possible to distinguish simple activities and to some extent also gestures from RSSI fluctuation captured by a mobile phone.
However, it also shows the limitations of this device-free recognition approach for instance, regarding a localisation of activities.
Furthermore, the accuracies achieved stay below what would be possible with classical sensors such as accelerometers.

However, we could demonstrate, that there is a good potential to extend the perception of a phone beyond its boundaries into the environment.
A recognition in distances of 4 meters is still feasible.
RSSI-based recognition can cover cases where classical sensors can not provide meaningful results.

Regarding the recognition of gestures, we see a good potential to extend the interface of body-worn devices with RSSI-based gesture recognition.
\vfill
\pagebreak

\section[Secure communication based on ambient audio]{Secure communication based on ambient audio \footnote{Originally published as 'Dominik Schuermann and Stephan Sigg: Secure communication based on ambient audio, in IEEE Transactions on Mobile Computing (TMC), Feb. 2013, vol. 12 no. 2 (DOI: http://doi.ieeecomputersociety.org/10.1109/TMC.2011.271)' (1536-1233/13/\$31.00 \copyright 2013 IEEE
 Published by the IEEE CS, CASS, ComSoc, IES, SPS)
}}\label{sectionOriginalSE01}
  We propose to establish a secure communication channel among devices based on similar audio patterns.
     Features from ambient audio are used to generate a shared cryptographic key between devices without exchanging information about the ambient audio itself or the features utilised for the key generation process.
     We explore a common audio-fingerprinting approach and account for the noise in the derived fingerprints by employing error correcting codes.
     This fuzzy-cryptography scheme enables the adaptation of a specific value for the tolerated noise among fingerprints based on environmental conditions by altering the parameters of the error correction and the length of the audio samples utilised.
     In this paper we experimentally verify the feasibility of the protocol in four different realistic settings and a laboratory experiment.
     The case-studies include an office setting, a scenario where an attacker is capable of reproducing parts of the audio context, a setting near a traffic loaded road and a crowded canteen environment.
     We apply statistical tests to show that the entropy of fingerprints based on ambient audio is high.
     The proposed scheme constitutes a totally unobtrusive but cryptographically strong security mechanism based on contextual information.

\subsection{Introduction}
An important factor in the set of security risks is typically the human impact.
People are occasionally careless or incompletely understanding the underlying technology.
This is especially true for wireless communication.
For instance, the communication range or the number of potential communication partners might be underestimated.
This is natural since humans typically base trust on the situation or context they perceive~\cite{Context_Dupuy_2006}.
Nevertheless, the range of a communication network most likely bridges devices in various contexts.

As context, proximity and trust are related~\cite{Context_Dupuy_2006}, a security scheme that utilises common contextual features among communicating devices might provide a sense of security which is perceived as natural by individuals and reduce the number of human errors related to security.

Consider, for instance, a meeting with co-workers of a specific project.
Naturally, workers trust the others based on working agreements.
Every group member needs the permission to access common information like mobile phone numbers or shared files.
Communication between group members, however, should be guarded against access from external devices or individuals.
The meeting room defines the borders which shall not be crossed by any confidential data.
Context information that is unique inside these borders, such as ambient audio, can be exploited as the seed to generate a common secret for the secure information exchange and authentication.

Mobile phones can then synchronise their ID-cards ad-hoc without user interaction and secured by their physical proximity.
Similarly, access to shared files on computers of co-workers and communication links among co-workers can be secured.

Another reason why security cautions might be discarded occasionally is the effort required and inconvenience to establish a secure connection.
This is especially true between devices that communicate seldom or for the first time.

We propose a mechanism to unobtrusively (zero interaction) establish an ad-hoc secure communication channel between unacquainted devices which is conditioned on the surrounding context.
In particular, we consider audio as a source of spatially centred context.
We exploit the similarity of features from ambient audio by devices in proximity to create a secure communication channel exclusively based on these features.
At no point in the protocol the secret itself or information that could be used to derive audio feature values is made public.
In order to do so, we generate synchronised audio-fingerprints from ambient sounds and utilise error correcting codes to account for noise in the feature vector.
On each communicating device the feature vector is then used to create an identical key.
The proposed protocol is non-interactive, unobtrusive and does not require specific or identical hardware at communication partners.

The remainder of this document is structured as follows.
In section~\ref{section2} we introduce related work on context-based security mechanisms and security with noisy input data.
Section~\ref{section3} discusses the algorithmic background required for ambient audio-based key generation and implementation details.
In section~\ref{section4} we discuss the noise and entropy of audio-fingerprints achieved in an offline-experiment with sampled audio sequences.
We show that the similarity in audio-fingerprints is sufficient for authentication but can not be utilised as secure key directly.
In particular, we utilise fuzzy-cryptography schemes to account for noise in the input data.
Section~\ref{section5} presents four case-studies in different environments that exploit the feasibility of the approach in various settings.
The general feasibility of the approach is demonstrated in section~\ref{section5-1} in a controlled office environment. 
Section~\ref{section5-2} then shows that the audio context can be separated between two offices even when a synchronised audio source is located in both places.
Additionally, we studied the feasibility of ambient audio-based key generation at the side of a heavily trafficked road in section~\ref{section5-4} and in a canteen environment in section~\ref{section5-3}.
The entropy of the ambient audio-based characteristic binary sequences generated by our method is discussed in section~\ref{section6}.
In section~\ref{section7} we draw our conclusion.

\subsection{Related work}\label{section2}
In the literature, several authors consider spontaneous authentication or the establishing of a secure communication channel among mobile and ad-hoc devices based on environmental stimuli~\cite{mayrhofer2008spontaneous,Cryptography_Bichler_2007-2,ContextAwareness_Holmquist_2001,Cryptography_Varshavsky_2007}.
So far, shaking processes from accelerometer data and RF-channel measurements have been utilised as unique context source that contains shared characteristic information.

This concept was presented 2001 by Holmquist et al.~\cite{ContextAwareness_Holmquist_2001}. 
The authors propose to utilise the accelerometer of the Smart-It~\cite{5836} device to extract characteristic features from simultaneous shaking processes of two devices.
Later, Mayrhofer et al. presented an authentication mechanism based on this principle~\cite{mayrhofer2007shake}.
The authors demonstrated, that an authentication is possible when devices are shaken simultaneously by a single person, while an authentication was unlikely for a third person trying to mimic the correct movement pattern remotely.
Also, Mayrhofer derived in~\cite{mayrhofer2007candidate} that the sharing of secret keys is possible with a similar protocol.
The proposed protocol that can be utilised with arbitrary context features repeatedly exchanges hashes of key-sub-sequences until a common secret is found.
In this instrumentation, exponentially quantised fast Fourier transformation (FFT) coefficients of a sequence of accelerometer samples are utilised.
In contrast, Bicher et al. describe an approach in which noisy acceleration readings can be utilised to establish a secure communication channel among devices~\cite{Cryptography_Bichler_2007,Cryptography_Bichler_2007-2}.
They utilise a hash function that maps similar acceleration patterns to identical key sequences.
However, their approach suffers from the required exact synchronisation among devices so that the authors computed the correct hash-values offline.
Additionally, the hash function utilised required that the keys computed exactly match and that the neighbourhood around these keys is precisely defined.
When patterns are located at the border of one of the region's neighbourhoods, the tolerance for noise in the input is biased in the direction of the centre of this region.
Additionally, key generation by simultaneous shaking is not unobtrusive.

We utilise an error correction scheme to account for noise in the input data which can be fine-tuned for any Hamming distance desired which is centred around the noisy characteristic sequences generated instead of an artificially defined centre value.
We implement a Network Time Protocol (NTP) based synchronisation mechanism that establishes sufficient synchronisation among nodes.

Another sensor class utilised for context-based device authentication is the RF-channel.
Varshavsky et al. present a technique to authenticate co-located devices based on RF-measurements since channel measurements from devices in near proximity are sufficiently similar to authenticate devices against each other~\cite{Cryptography_Varshavsky_2007}.
Hershey et al. utilise physical layer features to derive secret keys for a pair of devices~\cite{Cryptography_Hershey_1995}.
In the absence of interference and non-linear components, transmitter and receiver experience identical channel response~\cite{Cryptography_Smith_2004}.
This information is utilised to generate a secret key among a node pair.
Since channel characteristics are spatially sharply concentrated and not predictable at a remote location~\cite{Cryptography_Madiseh_2008}, an eavesdropper is not capable of guessing information about the secret.
This scheme was validated in an indoor environment in~\cite{Cryptography_BenHamida_2009}.
Although we consider the keys generated by this scheme as strong, it does not preserve spatial properties.
A device at arbitrary distance could pretend to be a nearby communication partner.

Kunze and Lukowicz recently demonstrated, that audio information indeed suffices to derive spatial information~\cite{ContextAwareness_Kunze_2007}.
They combine audio readings with accelerometer data to classify locations of mobile devices.
In their work, the noise emitted by a vibrating mobile phone was utilised to distinguish among 35 specific locations in three different rooms with over 90\,\% accuracy.

Instead, we utilise purely ambient noise to establish a secure communication channel among devices in spatial proximity.
We record NTP-synchronised audio samples at two locations, generate a characteristic audio-fingerprint and map this fingerprint to a unique secret key with the help of error correcting codes.

The last step is necessary since the similarity between fingerprints is typically not sufficient to establish a secure channel.
With fuzzy-cryptography schemes, the generation of an identical key based on noisy input data~\cite{tuyls2007security} is possible.
Li et al. analyse the usage of biometric or multimedia data as part of an authentication process and propose a protocol~\cite{li2006robust}.
Due to the use of error-tolerant cryptographic techniques, this protocol is robust against noise in the input data.
The authors utilise a secure sketch~\cite{Dodis04fuzzyextractors} to produce public information about an input without revealing it.
The input can then be recovered given another value that is close to it.
A similar study is presented by Miao et al.~\cite{miao2009biometrics}.
The authors establish a key distribution based on a fuzzy vault~\cite{Juels02fuzzyvault} using data measured by devices worn on the human body.
The fuzzy vault scheme, also utilised in~\cite{dodis2006robust}, enables the decryption of a secret with any key that is substantially similar to the key used for encryption.

\subsection{Ad-hoc audio-based encryption}\label{section3}
Originally, audio-fingerprinting was proposed to classify music or speech.
In our work binary fingerprints from ambient audio are used to establish an encrypted connection based on the surrounding audio context.
Due to differences between fingerprints generated by participating devices, a cryptographic protocol is needed that tolerates a specific amount of noise in these keys.

We propose the following scheme. 
A set of devices willing to establish a common key conditioned on ambient audio take synchronised audio samples from their local microphones.
Each device then computes a binary characteristic sequence for the recorded audio: An audio-fingerprint (cf.~section~\ref{section3-1}).
This binary sequence is designed to fall onto a code-space of an error correcting code (cf.~section~\ref{section3-2}).
In general, a fingerprint will not match any of the codewords exactly.
Fingerprints generated from similar ambient audio resemble but due to noise and inaccuracy in the audio-sampling process, it is unlikely that two fingerprints are identical.
Devices therefore exploit the error correction capabilities of the error correcting code utilised to map fingerprints to codewords (cf.~section~\ref{section3-3}).
For fingerprints with a Hamming-distance within the error correction threshold of the error correcting code the resulting codewords are identical and then utilised as secure keys (cf.~section~\ref{section3-4}).
This scheme is in principle not limited in the number of devices that participate.
When devices are synchronised in their local times, they agree on a point in time when audio shall be recorded and proceed with the fingerprint creation and error correction autonomously as described above.
All similar fingerprints will map to an identical codeword.
As detailed in section~\ref{section5-3}, the Hamming distance tolerated in fingerprints rises with increasing distance of devices.

The following sections provide an overview over audio-fingerprinting, our fuzzy commitment implementation, problems we experienced and possible solutions.

\subsubsection{Audio-fingerprinting}\label{section3-1}
Audio-fingerprinting is an approach to derive a characteristic pattern from an audio sequence~\cite{cano2005review}.
Generally, the first step involves the extraction of features from a piece of audio.
These features are usually isolated in a time-frequency analysis after application of Fourier or Cosine transforms.
Some authors also utilise wavelet-transforms~\cite{AudioFingerprinting_Baluja_2008,AudioFingerprinting_Ghouti_2006,AudioFingerprinting_Sukittanon_2002}.
Common applications include the retrieval of a specific music file in an audio database~\cite{Haitsma2003highlyrobust}, duplicate detection in such a database~\cite{AudioFingerprinting_Burges_2005} as well as identification of music based on short samples~\cite{AudioFingerprinting_Bellettini_2010}.
The capabilities of detecting similar audio sequences in the presence of heavy signal distortion are prominently demonstrated by applications such as query by humming~\cite{AudioFingerprinting_Ghias_1995}.
The authors utilise autocorrelation, maximum likelihood and Cepstrum analysis to describe the pitch of an audio sequence as a Parsons encoded music contour~\cite{AudioFingerprinting_Parsons_1975}.
Similar audio sequences are detected by approximate string matching~\cite{AlgorithmsStringMatching_BeezaYates_1992}.
McNab et al. added rhythm information by analysing note duration to match the beginning of a song~\cite{AudioFingerprinting_McNab_1996}.
A similar approach is presented by Prechelt et al.~\cite{AudioFingerprinting_Prechelt_2001}. 
They achieved more accurate results for query by whistling since the frequency range of whistling is much lower than for humming or singing.
In 2002, Chai et al. computed a rough melodic contour by counting the number of equivalent transitions in each beat~\cite{AudioFingerprinting_Chai_2002}.
Notes are detected by amplitude-based note segmentation.
Later, Shiffrin et al. showed that songs can be described by Markov-chains \cite{AudioFingerprinting_Shifrin_2002} where states represent note transitions.
Retrieval of songs is then achieved by the HMM Forward algorithm~\cite{AlgorithmsHMM_Rabiner_1989} so that no database query is required.
In 2003, Zhu et al. addressed practical problems of recently proposed approaches such as the accuracy of the derived description by utilising a dynamic time-warping mechanism~\cite{AudioFingerprinting_Zhu_2003}.

Most of these studies are based on music-specific properties such as rhythm information, pitch or melodic contour. 
Since such features might be missing in ambient audio, these methods are not applicable in our case.
Haitsma et al. presented in~\cite{AudioFingerprinting_Haitsma_2001} an approach applicable for the classification of general audio sequences by extracting a binary representation of audio from changes in the energy of successive frequency bands.
This system was later shown to be highly robust to noise and distortion in audio data~\cite{Haitsma2003highlyrobust}.
Due to its reported robustness, several authors employ slightly modified versions of this approach~\cite{AudioFingerprinting_Bellettini_2010}.
Leboss\'e et al, for instance, add further redundant sub-samples taken from the beginning and the end of an overlapping time window in order to reduce the number of bits in the fingerprint representation~\cite{AudioFingerprinting_Lebosse_2007}.
Alternatively, Burges et al. enhance the former approach by utilising a distortion discriminant analysis~\cite{AudioFingerprinting_Burges_2003}.
Generally, time frames taken from the audio source are mapped successively on smaller time windows in order to generate a condensed characteristic representation of the audio sequence.
An alternative approach based on spectral flatness of a signal is proposed Herre et al.~\cite{AudioFingerprinting_Herre_2001}.

Also, Yang presented a method to utilise characteristic energy peaks in the signal spectrum in order to extract a unique pattern~\cite{AudioFingerprinting_Yang_2001}.
A general framework that supports this scheme was later presented by Yang et al.~\cite{AudioFingerprinting_Yang_2002}.
Building on these ideas, a similar algorithm was then successfully applied commercially by Avery Wang on a huge data base of audio sequences~\cite{AudioFingerprinting_Wang_2006,AudioFingerprinting_Wan_2003}.

To create audio-fingerprints for our studies, we split an audio sequence $S$ with length $|S|=l$ and sample rate $r$ up into $n$ frames $F_1,\dots,F_n$ of identical length $d=|F_i|=r\cdot\frac{l}{n}$.
On each frame a discrete Fourier transformation (DFT) weighted by a Hanning window (HW) is applied: 
\begin{align}
  \forall i &\in\{0, \dots ,n-1 \},\nonumber\\
  S_i&=DFT\left(HW(F_i)\right)\nonumber
\end{align}
  The frames are divided into $m$ non-overlapping frequency bands of width
\begin{align}
  b&=\frac{\mbox{\footnotesize maxfreq}(S_i) - \mbox{\footnotesize minfreq}(S_i)}{m} \mathrm{.}
\end{align}
On each band the sum of the energy values is calculated and stored to an energy matrix $E$ with energy per frame per frequency band.
\begin{align}
  \forall j &\in \{0,\dots ,m-1\},\nonumber\\
  S_{ij}&=\mbox{bandfilter}_{b\cdot j,b\cdot (j+1)}(S_i)\\
  E_{ij}&=\sum_{k} S_{ij}[k]
\end{align}

Using the matrix $E$, a fingerprint $f$ is generated, where $\forall i \in \{1, \dots, n-1\},\forall j \in \{0, \dots, m-2\}$ each bit describes the difference between the energy on frequency bands between two consecutive frames:
\begin{eqnarray}
  f(i,j)=\left\{\begin{array}{rl}
    1, &\begin{array}{ll}
              (E(i,j)-E(i,j+1))-\\
        (E(i-1,j)-E(i-1,j+1))>0
         \end{array}\\[.4cm]
    0, &\mbox{otherwise.}
  \end{array}\right.
\end{eqnarray}
The complete algorithm is detailed in the appendix.

For each synchronisation, we sampled $l=6.375$ seconds of ambient audio at a sample rate of $r=44100$~Hz. 
We split the audio stream into $n=17$ frames of $d=0.375$ seconds each and divide every frame into $m=33$ frequency bands, to obtain a 512 bit fingerprint.
Due to the extensive recording duration, the generated fingerprints show great robustness in real world experiments (cf.~section~\ref{section4} and section~\ref{section5}).
We used a Fast Fourier Transform (FFT) with fixed values on the length of the segments as detailed above.

This audio-fingerprinting scheme utilised in our studies utilises energy differences between frequency bands, as proposed by Haitsma et al.~\cite{Haitsma2003highlyrobust}. 
However, we take a more general approach of classifying ambient audio instead of music.
Commonly, in the literature, the characteristic information is found in a smaller frequency band and a logarithmic scaling is suggested to better represent properties of the human auditory system.
Since our system is not restricted to musical recordings, we expect that all frequency bands are equally important.
Therefore, we divide frames into frequency bands at a linear scale rather than a logarithmic one.
Additionally, we do not use overlapping frames since this has not shown improvements in our case. 
Also, the entropy and therefore the security features of the generated fingerprint is likely to become impaired with overlapping frames~\cite{2108,2109}.

\subsubsection{Audio-fingerprints as cryptographic keys}\label{section3-2}
To use the audio-fingerprints directly as keys for a classic encryption scheme the concurrence of fingerprints generated from related audio sequences has to be $1$ with a considerably high probability~\cite{schneider1996applied}.
Since we experienced a substantial difference in the audio-fingerprints created (cf. section~\ref{section4}) we consider the application of fuzzy-cryptography schemes.
Note that a perfect match in fingerprints is unlikely since devices are spatially separated, not exactly synchronised and utilise possibly different audio hardware.

The proposed cryptographic protocol shall be feasible unattended and ad-hoc with unacquainted devices.
For an eavesdropper in a different audio context it shall be computationally infeasible to use any intercepted data to decrypt a message or parts of it.
Additionally, we want to control the threshold for the tolerated offset between fingerprints based on contextual conditions of different physical locations.

With fuzzy encryption schemes, a secret $\varsigma$ is used to hide the key $\kappa$ in a set of possible keys $\mathcal{K}$ in such a way that only a similar secret $\varsigma'$ can find and decrypt the original key $\kappa$ correctly.
In our case, the secrets which ought to be similar for all communicating devices in the same context are audio-fingerprints.

A Fuzzy Commitment scheme can, for instance, be implemented with Reed-Solomon codes~\cite{Reed60polynomial}.
The following discussion provides a short introduction to these codes.

Given a set of possible words $\mathcal{A}$ of length $m$ and a set of possible codewords $\mathcal{C}$ of length $n$, Reed-Solomon codes $RS(q,m,n)$ are initialised as:
\begin{align}
     \mathcal{A}&=\mathbb{F}^m_{q},\\
     \mathcal{C}&=\mathbb{F}^n_{q},     
\end{align}
with $q=p^k, p \text{ prime}, k \in \mathbb{N}$.
These codes are mapping a word $a\in\mathcal{A}$ of length $m$ uniquely to a specific codeword $c\in \mathcal{C}$ of length $n$:
\begin{equation}
     a \xrightarrow{Encode} c,  
\end{equation}
This step adds redundancy to the original words with $n>m$, based on polynomials over Galois fields~\cite{Reed60polynomial}.

Decoding utilises the error correction properties of the Reed-Solomon-based encoding function to account for differences in the fingerprints created.
The decoding function maps a set of codewords from one group $C=\{c,c',c'',\dots \} \subset \mathcal{C}$ to one single original word. 
It is
\begin{align}
  \tilde{c} \xrightarrow{Decode} a\in \mathcal{A}.
\end{align}
The value
\begin{equation}
     t=\left\lfloor \frac{n-m}{2} \right\rfloor\label{equationT}
\end{equation}
defines the threshold for the maximum number of bits between codewords that can be corrected in this manner to decode correctly to the same word $a$~\cite{Juels99fuzzycommitment}.
In the following algorithms the fingerprints $f$ and $f'$ are used in conjunction with codewords to make use of this error correction procedure.
Dependent on the noise in the created fingerprints, $t$ can then be chosen arbitrarily.

\subsubsection{Commit and Decommit algorithms}\label{section3-3}
We utilise Reed-Solomon error correcting codes in the following scheme to generate a common secret among devices.
A fingerprint $f$ is used to hide a randomly chosen word $a$ as the basis for a key in a set of possible words $a\in\mathcal{A}$.
This is a commit method.
A decommit method is constructed in such a way that only a fingerprint $f'$ with maximum Hamming distance 
\begin{equation}
     \mbox{Ham}(f,f')\leq t
\end{equation}
can find $a$ again.
We use Reed-Solomon $RS(q,m,n)$ codes, with $q=2^k$, $k \in \mathbb{N}$ and $n<2^k$, for our commit and decommit methods.
After initialisation, a private word $a\in\mathcal{A}$ is randomly chosen.
It is then encoded following the Reed-Solomon scheme to a specific codeword $c$.
For a subtract-function $\ominus$ in $\mathcal{C}=\mathbb{F}^n_{2^k}$, the difference to the fingerprint is calculated as
\begin{equation}
     \delta=f\ominus c\mathrm{.}
\end{equation}
Then, a SHA-512 hash~\cite{fips2008180} \texttt{h($a$)} is generated from $a$.
Afterwards, the tuple $(\delta, \mbox{\texttt{h($a$)}})$ containing the difference and the hash is made public.
Note that the transmission of $\mbox{\texttt{h($a$)}}$ is optional and is only required to check whether the decommitted $a'$ on the receiver side equals $a$.
However, provided a sufficiently secure hash function, an eavesdropper does not learn additional information about the key $a$ within reasonable time provided that she is ignorant of a fingerprint sufficiently similar to $f$.

The decommitment algorithm uses the public tuple $(\delta, \mbox{\texttt{h($a$)}})$ together with the secret fingerprint $f'$ to verify the similarity between $f$ and $f'$ and to obtain a shared word $a$.
A codeword $c'$ is calculated by subtracting~$f'$ by~$\delta$ in~$\mathbb{F}^n_{2^k}$.
\begin{equation}
     c'=f'\ominus \delta.
\end{equation}
Afterwards~$c'$ is decoded to~$a'$ as
\begin{equation}
     a' \in \mathcal{A} \xleftarrow{\text{Decode}} c' \in \mathcal{C}.
\end{equation}
From $\mbox{\texttt{h($a$)}}=\mbox{\texttt{h($a'$)}}$ we can conclude $a=a'$ with high probability.
This procedure is capable of correcting up to $t$ (cf. equation~(\ref{equationT})) differing bits between the fingerprints.
The decommitment was then successful and differences between $f$ and $f'$ are $t$ at most. 
The decommitted word~$a'$ is privately saved.

Participants can use their private words to derive keys for encryption.
A simple example for using $a=a'=(a_0, \dots , a_{m-1})$ to generate an encryption key for the Advanced Encryption Standard (AES)~\cite{fips2001197} is to sum over blocks of values of $a$.
For example, when $m=256$ we would sum over blocks with the length $8$ and take these values modulo $2^8-1$ to represent characters for a string with the length $32$, that can be used as a key $\kappa$:
\begin{align*}
  \text{Let } \kappa=(\kappa_0, \dots , \kappa_{31}) \text{, whereas }\\ \kappa_i = \left( \sum_{j=0}^{7} c_{(i*8)+j} \right) \mod 2^8-1
\end{align*}

In our study, for fingerprints of 512 bits we apply Reed-Solomon codes with $RS(q=2^{10},m,n=512)$. 
Given a maximum acceptable Hamming distance $t^*$ (cf. equation~(\ref{equationT})) between fingerprints we can then set~$m$ flexibly to define the minimum required fraction $u$ of identical bits in fingerprints as
\begin{align}
  t^*&=\left\lceil (1-u) \cdot n \right\rceil,\\
  m&=n-2 \cdot t^*\mathrm{.}
\end{align}
Experimentally, we found $u=0.7$ as a good trade-off for common audio environments to allow a sufficient amount of differences among the used fingerprints to pair devices successfully while at the same time providing sufficient cryptographic security against an eavesdropper in a different audio context (cf.~section~\ref{section4}).
\begin{align}
  m&=512-2 \cdot \left\lceil (1-0.7) \cdot 512 \right\rceil\\
  \nonumber &= 204
\end{align}
We therefore use Reed-Solomon codes with 
\begin{equation}
RS(2^{10},204,512)\mathrm{.}
\end{equation}

The commit and decommit algorithms are further detailed in the appendix.

\subsubsection{Synchronising communicating devices}\label{section3-4}
Since audio is time-dependent, a tight synchronisation among devices is required.
In particular, we experienced that fingerprints created by two devices were sufficiently similar only when the synchronisation offset among devices was within tens of milliseconds.
For synchronisation, any sufficiently accurate time protocol such as the Network Time Protocol (NTP)~\cite{rfc5905,AlgorithmsNTP_Mills_1995}, the Precision Time Protocol (PTP)~\cite{Meier_Synchronisation_1998} or a similar time protocol can be utilised. 
Also, synchronisation with GPS time might be a valid option.

When two participants, Alice and Bob, are willing to communicate securely with each other, Alice starts the protocol by requesting a pairing with Bob.
Then, they synchronise their absolute system times using a sufficiently accurate time protocol. 
Afterwards, Alice sends a start time $\tau_{start}$ to Bob.
When their clocks reach $\tau_{start}$, the recording of ambient audio is initiated and audio-fingerprinting is applied.

In our case-studies, synchronisation of devices was a critical issue.
Since the approach bases the binary fingerprints on energy differences of sub-samples of 0.375 seconds width, a misalignment of several hundreds of milliseconds results in completely different fingerprints.
For best results, the start times of the audio recordings should not differ more than about 0.001 seconds.
We successfully tested this with a remote NTP-server and also with one of the devices hosting the server.

Still, since NTP is able to synchronise clocks with an error of several milliseconds~\cite{AlgorithmsNTP_Mills_1994,rfc5905}, some error in the synchronisation of audio samples remains.
For instance, the usage of sound subsystems, like GStreamer \cite{gstreamer2010doc}, to record ambient audio introduces new delays.

Figure~\ref{figure1} illustrates this aspect in the frequency spectrum of two NTP-synchronised recordings.
\begin{figure}
     \centering
    \includegraphics[width=3.5in]{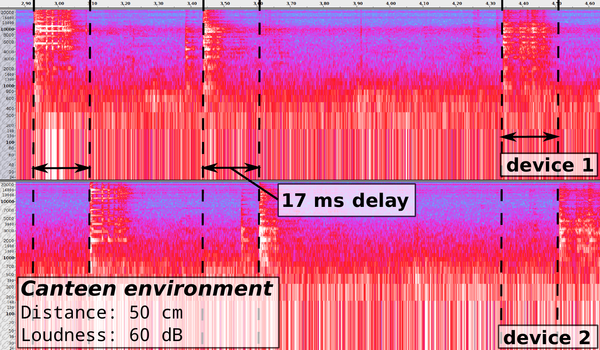}
     \caption{Synchronisation offset of NTP synchronised audio recordings {\scriptsize (1536-1233/13/\$31.00 \copyright 2013 IEEE Published by the IEEE CS, CASS, ComSoc, IES, SPS)
}}
     \label{figure1}
\end{figure}

As a solution, we had the decommiting node create 200 additional fingerprints by shifting the audio sequence in both directions in steps of 0.001 seconds.
The device then tried to create a common key with each of these fingerprints and uses the first successful attempt.
In this way, we could compensate for an error of about 0.2 seconds in the clock synchronisation among nodes.

\subsubsection{Security Considerations and Attack Scenarios}
Privacy leakages translate to leaking partial information about the used audio-fingerprints.
This can simplify the attack when further details of the ambient audio of Alice and Bob is available.

Possible attacks on fuzzy-cryptography are reviewed by Scheirer et al.~\cite{scheirer2007cracking}.
In particular, fuzzy commitment is evaluated regarding information leakage by Ignatenko et al.~\cite{ignatenko2010privacy}.
It was found that the scheme can leak information about the secret key.
However, this is attributable to helper data, a bit sequence at random distance to the secret key, which is made public in traditional fuzzy commitment schemes.
In our case, we do not utilise helper data and only optionally provide the hash of a data sequence with similar purpose.

The publicly available distance $\delta$ between $f$ and $c$ might, however leak information when either the fingerprints $f$, the code-sequences $c\in\mathcal{C}$ or the random word $a\in\mathcal{A}$ are not distributed uniformly at random or have insufficient entropy. 
Generally, it is important that 
\begin{enumerate}
     \item the random function to generate $a$ has a sufficiently high entropy
     \item the codewords $c\in\mathcal{C}$ are independently and uniformly distributed over all possible bit sequences of length $n$
     \item The entropy of the generated fingerprints is high
\end{enumerate}
We address these issues in the following.

1)~The choice of $a \in \mathcal{A}$ has to be done by using a random source with sufficient entropy.
In Linux-based systems \texttt{/dev/urandom} should provide enough entropy for using the output for cryptographic purposes~\cite{fenzi2004linux}.
For generating \texttt{h($a$)} a one-way-function has to be chosen to make sure that no assumptions on $a$ can be made based on \texttt{h($a$)}.
We utilise SHA-512 which is certified by the NIST and was extensively evaluated~\cite{fips2008180}.

2)~We are using 512 bit fingerprints and the Reed-Solomon code $RS(2^{10},204,512)$.
Consequently, sets of words and codewords are defined as $\mathcal{A}=\mathbb{F}^{204}_{2^{10}}$ and $\mathcal{C}=\mathbb{F}^{512}_{2^{10}}$.
A word $a$ out of $2^{10^{204}}=1024^{204}$ possible words is randomly chosen and encoded to $c$.

3)~In order to test the entropy of generated fingerprints we applied the dieHarder \cite{statistical_Brown_0000} set of statistical tests.
Generally, we could not find any bias in the fingerprints created from ambient audio.
Section~\ref{section6} discusses the test results in more detail.

A relevant attack scenario valid in our case is that the attacker is in the same audio context as Alice and Bob.
In this case, no security is provided by the proposed protocol.
Although this is a plausible threat, it can hardly be avoided that the leaking of contextual information poses a thread to a protocol that is designed to base the secure key generation exclusively on exactly this information. 
This principle is essential for the desired unobtrusive and ad-hoc operation.
An overview over possible attack scenarios when the attacker is not inside the same context is listed below.

\paragraph{Brute force}
The set of possible words $\mathcal{A}$ has to be large enough.
It should be computationally infeasible to test every combination to get the used word $a$.
The probability to guess the right $a$ is $1024^{-204}$ in our implementation.
Note that even with $u=0.6$, this probability is still $1024^{-102}$.
  
\paragraph{Denial-of-service (DoS)}
An attacker could stress the communication while Alice and Bob are using the fuzzy pairing.
The pairing would fail if $(\delta, \text{\texttt{h($a$)}})$ is not transmitted correctly.
DoS preventions should be implemented to provide an accurate treatment.
As part of these preventions a maximum number of attempts to pair two devices should be defined.
Generally, this type of attack is only possible when $(\delta, \text{\texttt{h($a$)}})$ or $\delta$ is transmitted.
As mentioned in section~\ref{section3-3}, with a careful choice of the fingerprint mechanism the exchange of data can be avoided.

\paragraph{Man-in-the-middle}
An Eavesdropper Eve could be located in such a way, that she can intercept the wireless connection but is not located in the same physical context as Alice and Bob.
When Eve intercepts the tuple $(\delta, \text{\texttt{h($a$)}})$, she must generate an audio-fingerprint $\overline{f}$ that is sufficiently close to the fingerprints $f$ and $f'$ of Alice and Bob to intercept successfully.
With no knowledge on the audio context, a brute force attack is then required.
This has to be done while Alice and Bob are currently in the phase of pairing.
Therefore Eve is limited by a strict time frame.
Again, this attack can be prevented by avoiding the transmission of $(\delta, \text{\texttt{h($a$)}})$ or $\delta$.

\paragraph{Audio amplification}
An Eavesdropper Eve could be located in physical proximity where the ambient audio used by Alice and Bob to generate their fingerprints is replicated.
Eve can utilise a directional microphone to amplify these audio signals.
In fact, this is a security threat which increases the chance that Eve can reconstruct the fingerprint partly to have a greater probability of guessing the secure secret.
Since our scheme inherently relies on contextual information we can not completely eliminate this threat.
However, we show in section~\ref{section5-2} that the acoustic properties in two rooms are at least sufficiently different to prevent a device with access to the dominant audio source to be successful in more than 50\,\% of all cases.

\subsection{Fingerprint-based authentication}\label{section4}
In a controlled environment we recorded several audio samples with two microphones placed at distinct positions in a laboratory.
The samples were played back by a single audio source.
Microphones were attached to the left and right ports of an audio card on a single computer with audio cables of equal lengths. 
They were placed at 1.5\,m, 3\,m, 4.5\,m and 6\,m distance to the audio source.
For each setting, the two microphones were always located at non-equal distances.
In several experiments, the audio source emitted the samples at quiet, medium and loud volume.
The audio samples utilised consisted of several instances of music, a person clapping her hands, snapping her fingers, speaking and whistling.
Dependent on the specific sample, the mean dB for these loudness levels varied slightly.
The loudness levels for several sample classes experienced in 1.5\,m distance are detailed in table~\ref{tableLoudnessLevels}.
\begin{table}
\renewcommand{\arraystretch}{1.3}
\caption{Approximate mean loudness experienced for several sample classes at 1.5\,m distance {\scriptsize (1536-1233/13/\$31.00 \copyright 2013 IEEE Published by the IEEE CS, CASS, ComSoc, IES, SPS)
}}
\label{tableLoudnessLevels}
\centering
\begin{tabular}{|c||c|c|c|}
\hline
 & loud & median & quiet \\
\hline
 Clap    & 40\,dB & 35\,dB &25\,dB\\
 Music   & 35\,dB & 25\,dB &15\,dB\\
 Snap    & 30\,dB & 25\,dB &10\,dB\\
 Speak   & 25\,dB & 20\,dB &15\,dB\\
 Whistle & 45\,dB & 35\,dB &25\,dB\\
\hline
\end{tabular}
\end{table}     

For these samples recorded by both microphones we created audio-fingerprints and compared their Hamming distances pair-wise.
We distinguish between fingerprints created for audio sampled simultaneously and non-simultaneously.
Overall, 7500 distinct comparisons between fingerprints are conducted in various environmental settings.
From these, 300 comparisons are created for simultaneously recorded samples.

Figure~\ref{figure2} depicts the median percentage of identical bits in the fingerprints for audio samples recorded simultaneously and non-simultaneously for several positions of the microphones and for several loudness levels.
The error bars depict the variance in the Hamming distance.
\begin{figure}
\centerline{
\subfloat[Loud, microphones at 1.5\,m and 3\,m]{\includegraphics[width=2.2in]{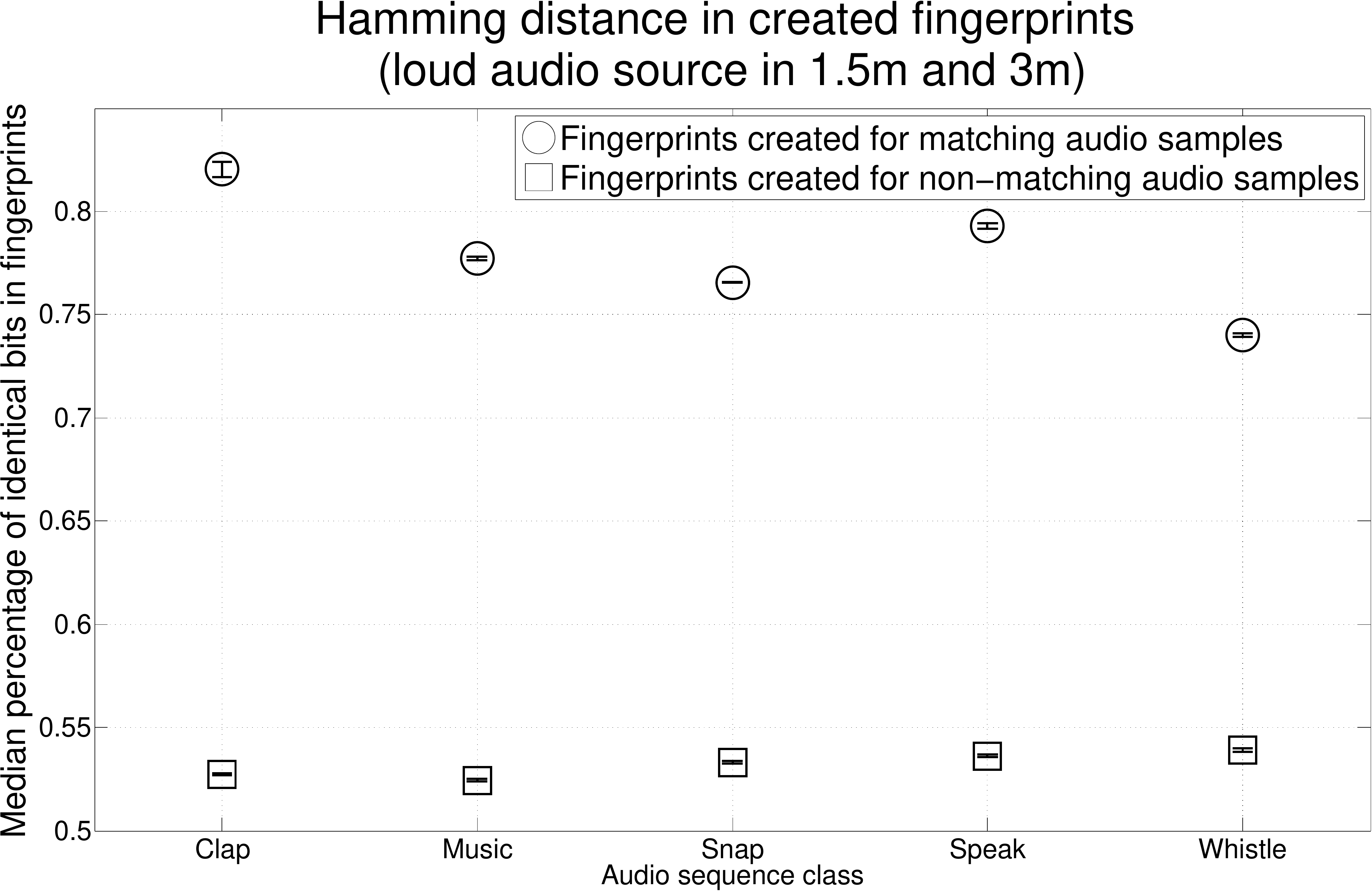}
\label{fig_first_case}}
\hfil
\subfloat[Medium, microphones at 1.5\,m and 3\,m]{\includegraphics[width=2.2in]{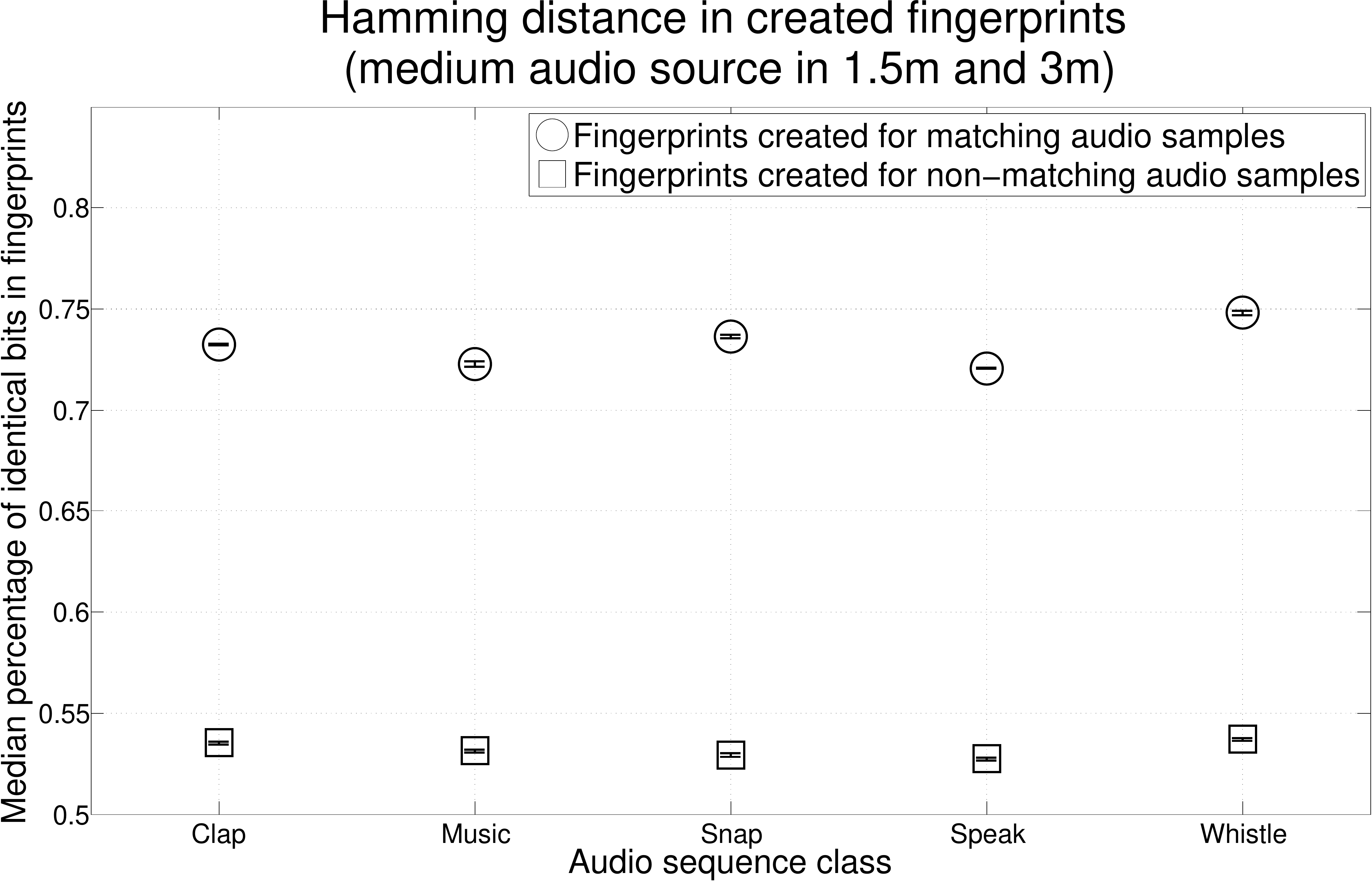}
\label{fig_second_case}}
\hfil
\subfloat[Quiet, microphones at 1.5\,m and 3\,m]{\includegraphics[width=2.2in]{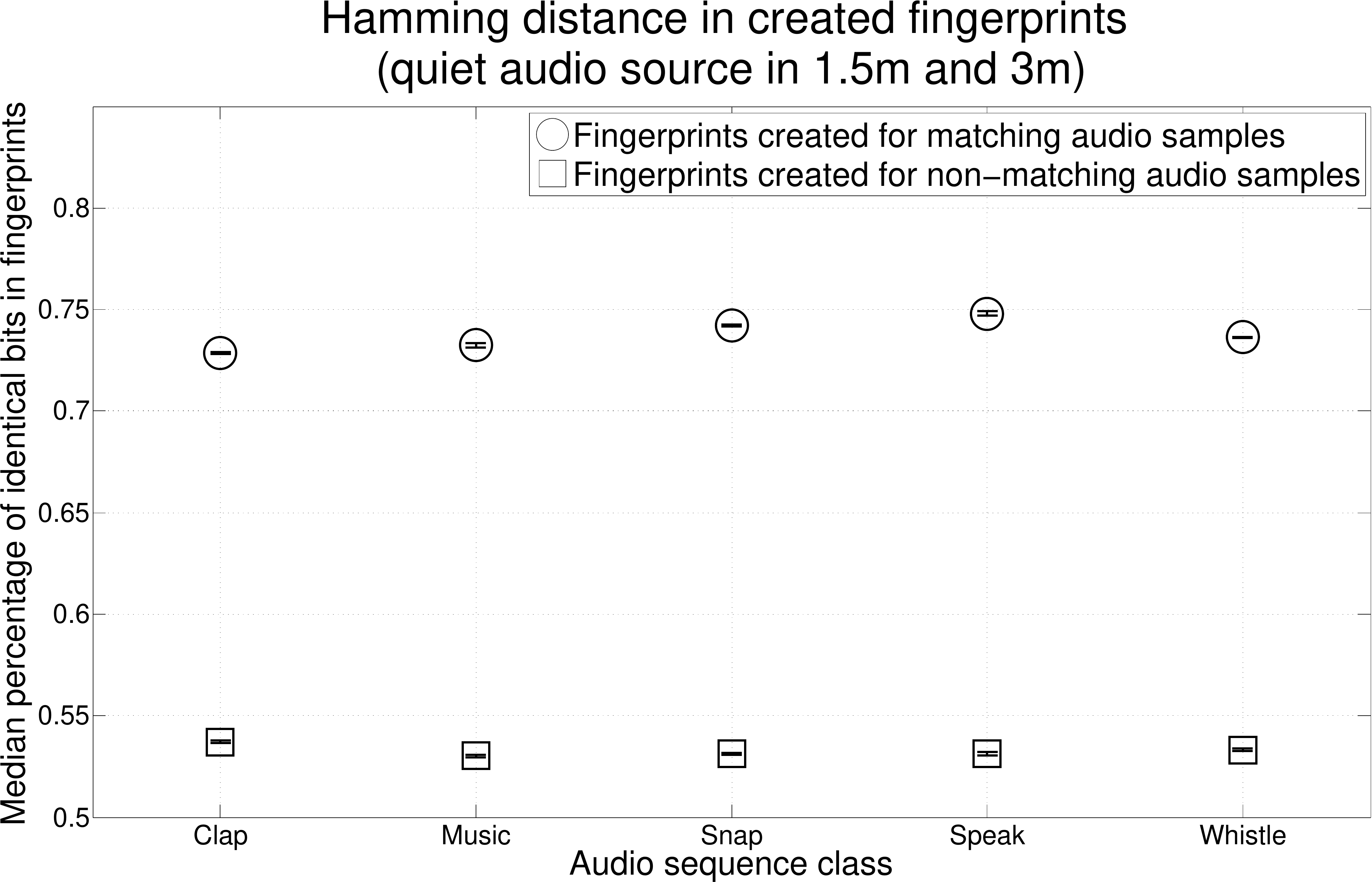}
\label{fig_third_case}}}
\centerline{
\subfloat[Loud, microphones at 3\,m and 4.5\,m]{\includegraphics[width=2.2in]{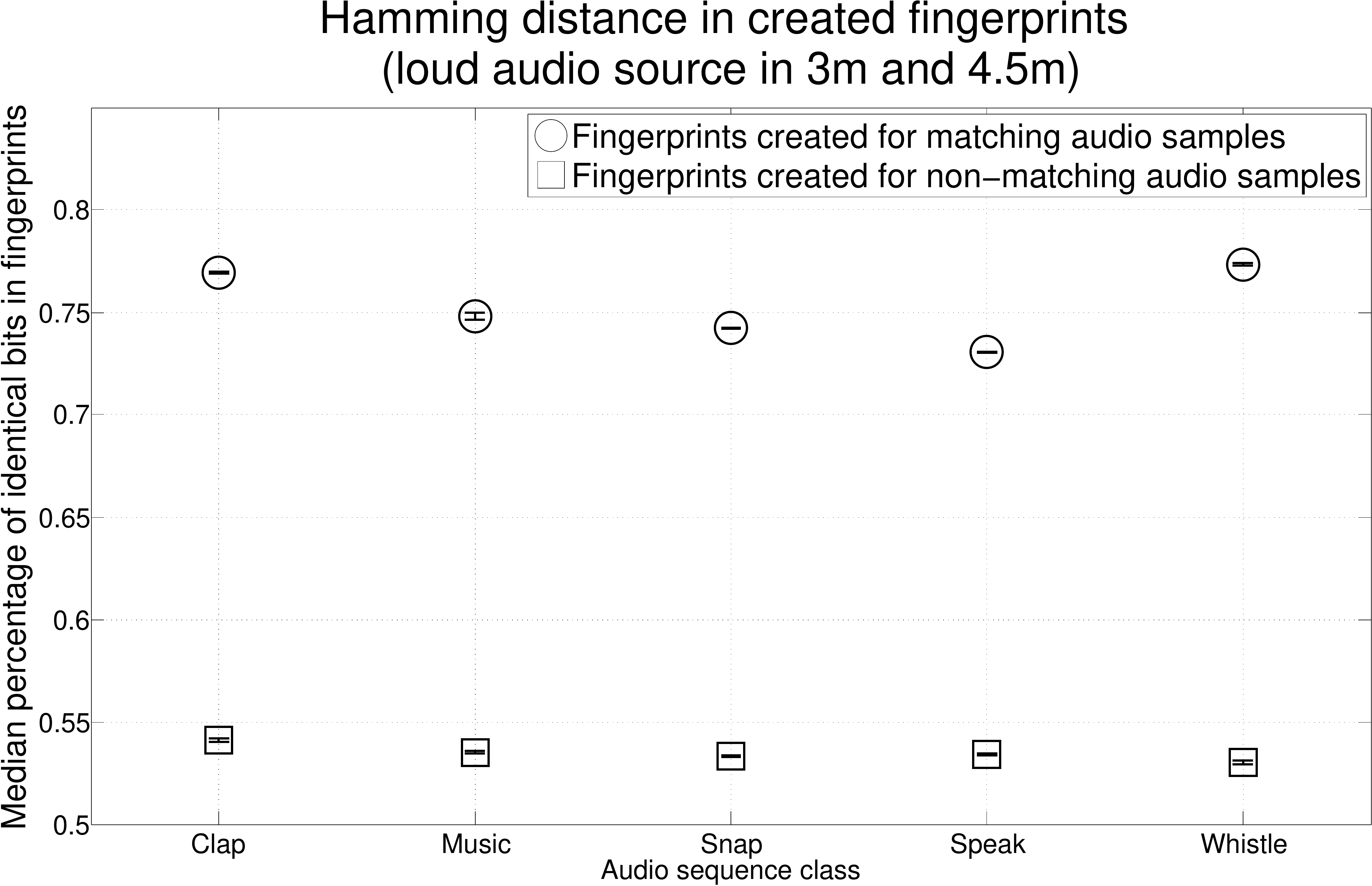}
\label{fig_fourth_case}}
\hfil
\subfloat[Medium, microphones at 3\,m and 4.5\,m]{\includegraphics[width=2.2in]{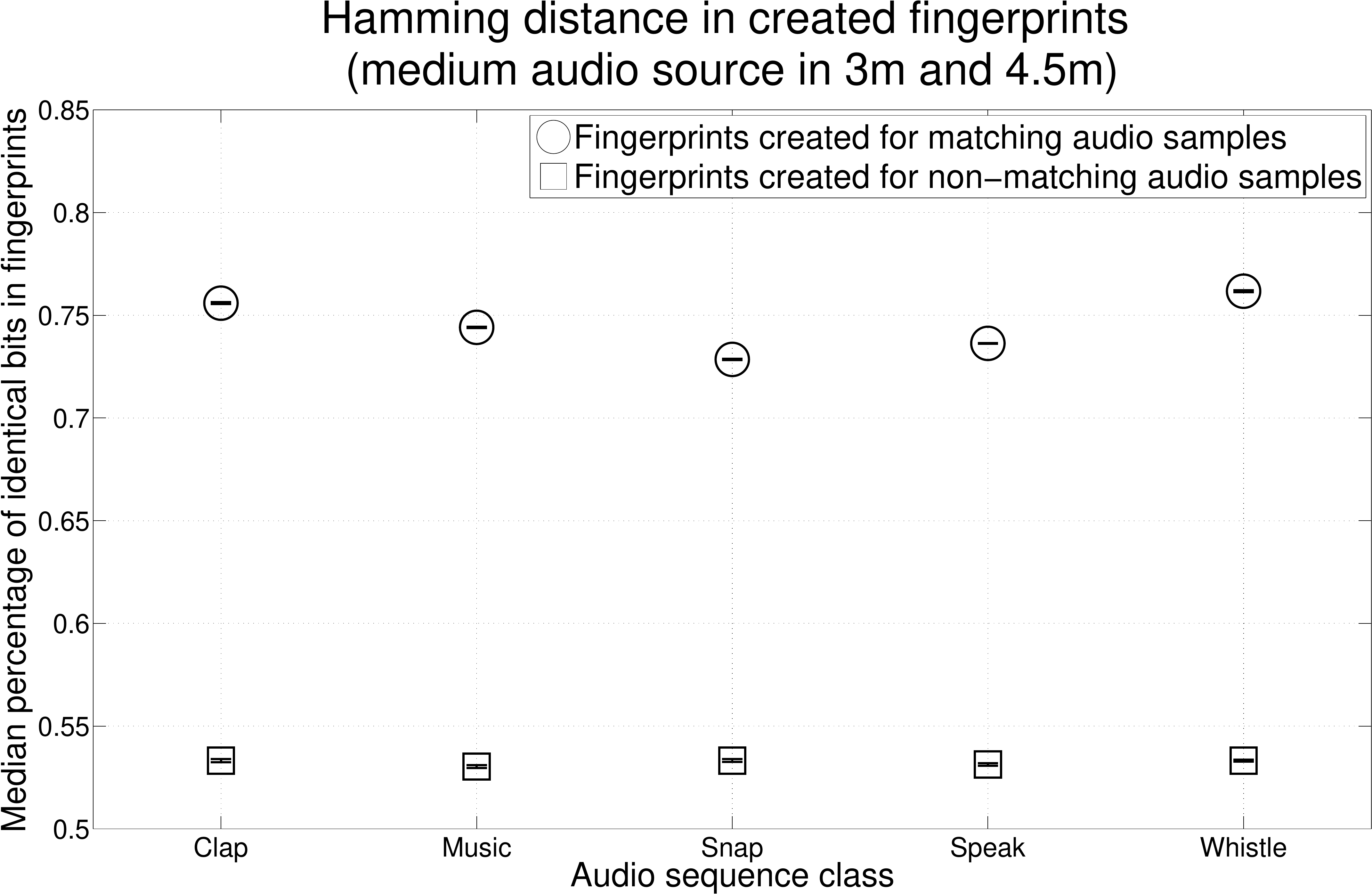}
\label{fig_fifth_case}}
\hfil
\subfloat[Quiet, microphones at 3\,m and 4.5\,m]{\includegraphics[width=2.2in]{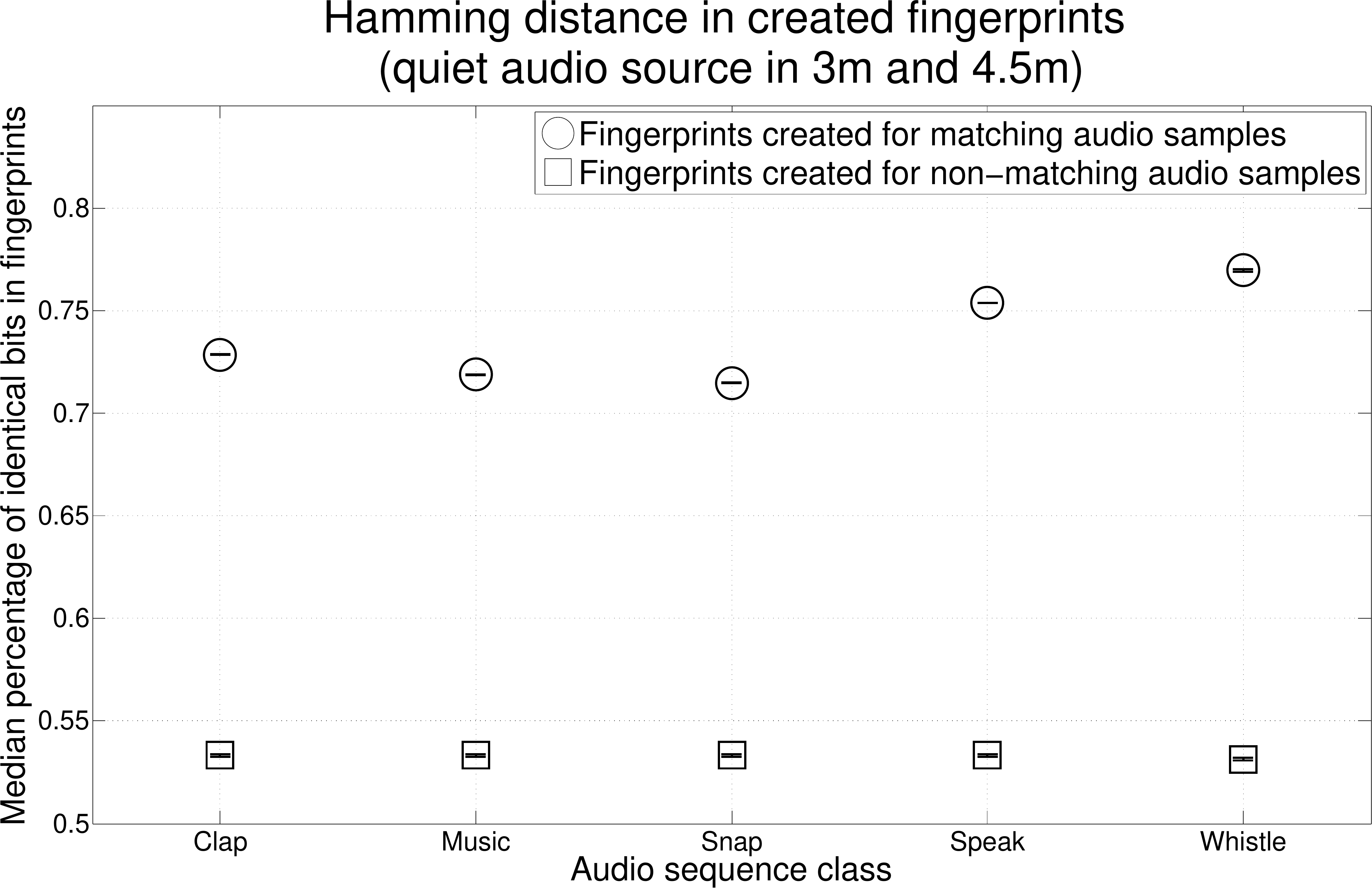}
\label{fig_sixth_case}}}
\centerline{
\subfloat[Loud, microphones at 4.5\,m and 6\,m]{\includegraphics[width=2.2in]{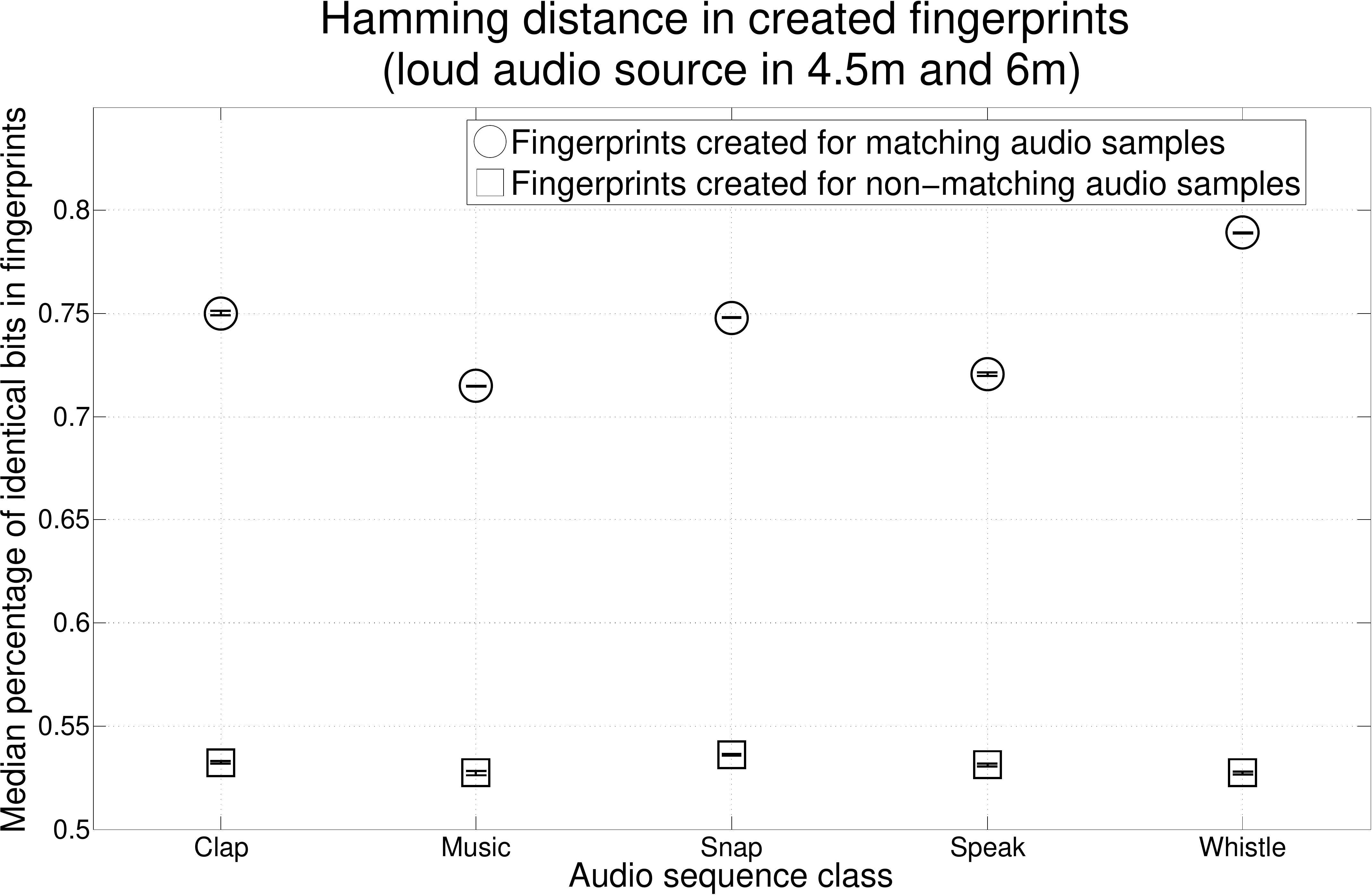}
\label{fig_seventh_case}}
\hfil
\subfloat[Medium, microphones at 4.5\,m and 6\,m]{\includegraphics[width=2.2in]{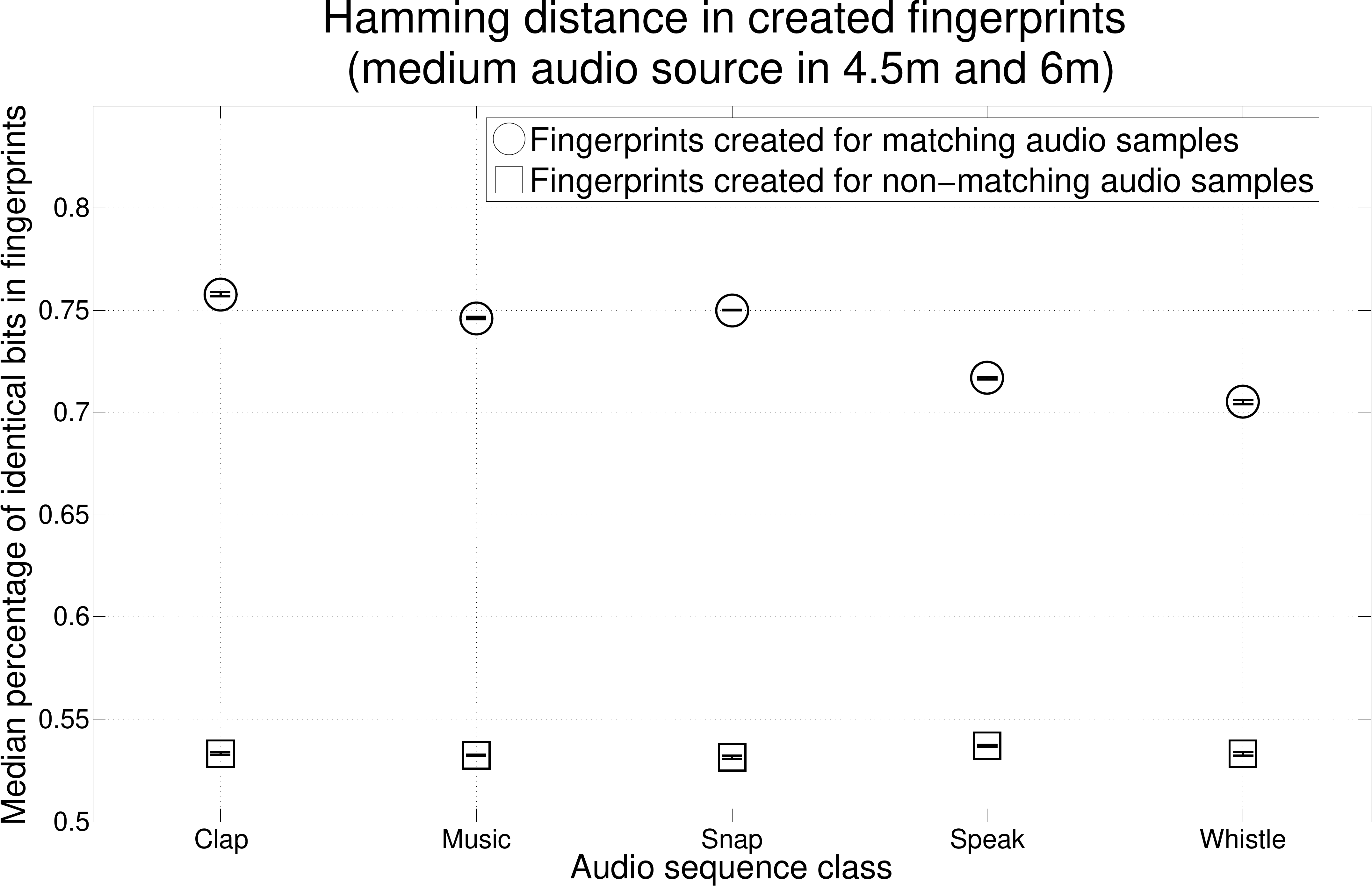}
\label{fig_eigth_case}}
\hfil
\subfloat[Quiet, microphones at 4.5\,m and 6\,m]{\includegraphics[width=2.2in]{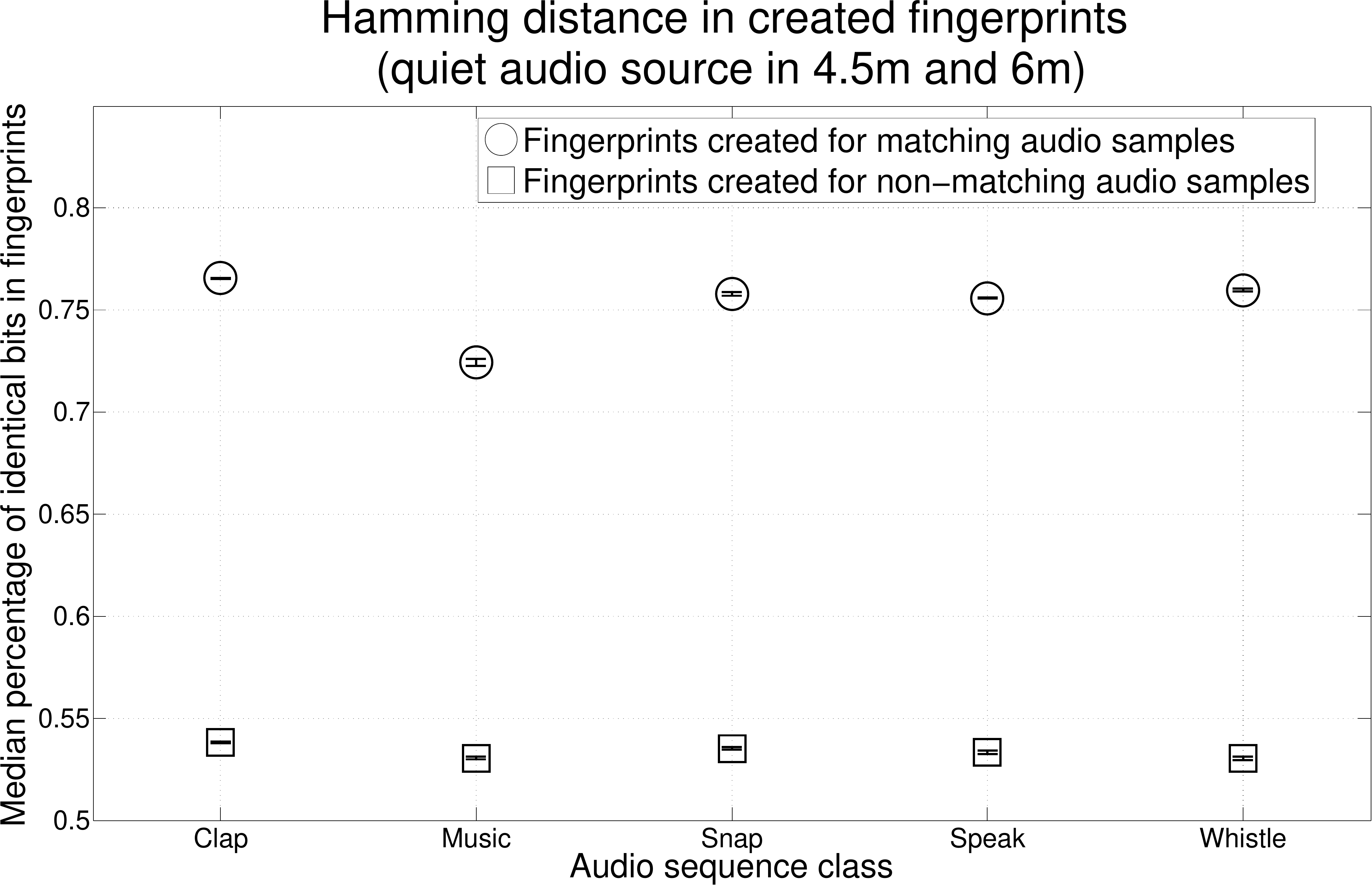}
\label{fig_ninth_case}}}
\centerline{
\subfloat[Loud, microphones at 3\,m and 6\,m]{\includegraphics[width=2.2in]{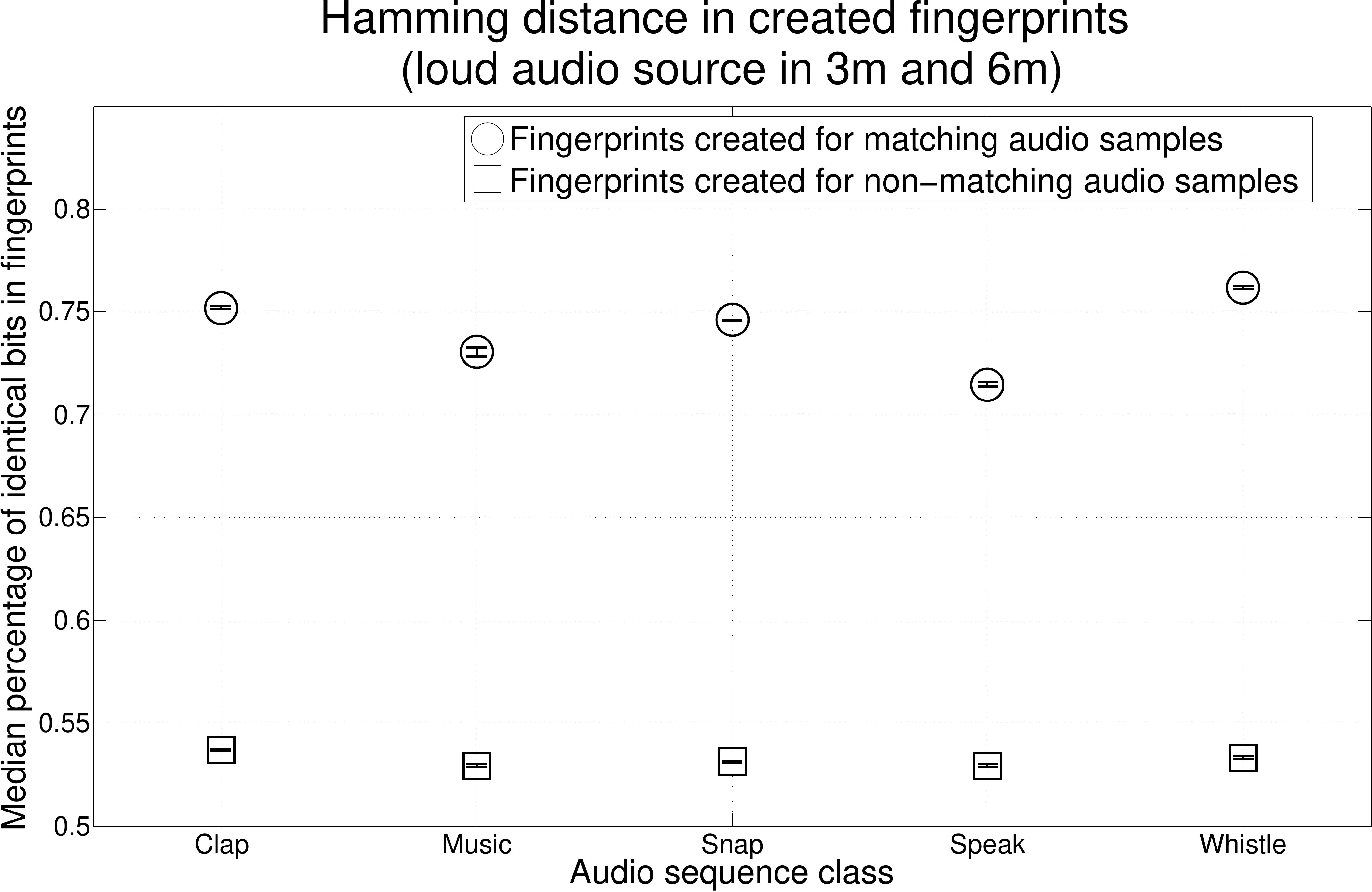}
\label{fig_tenth_case}}
\hfil
\subfloat[Medium, microphones at 3\,m and 6\,m]{\includegraphics[width=2.2in]{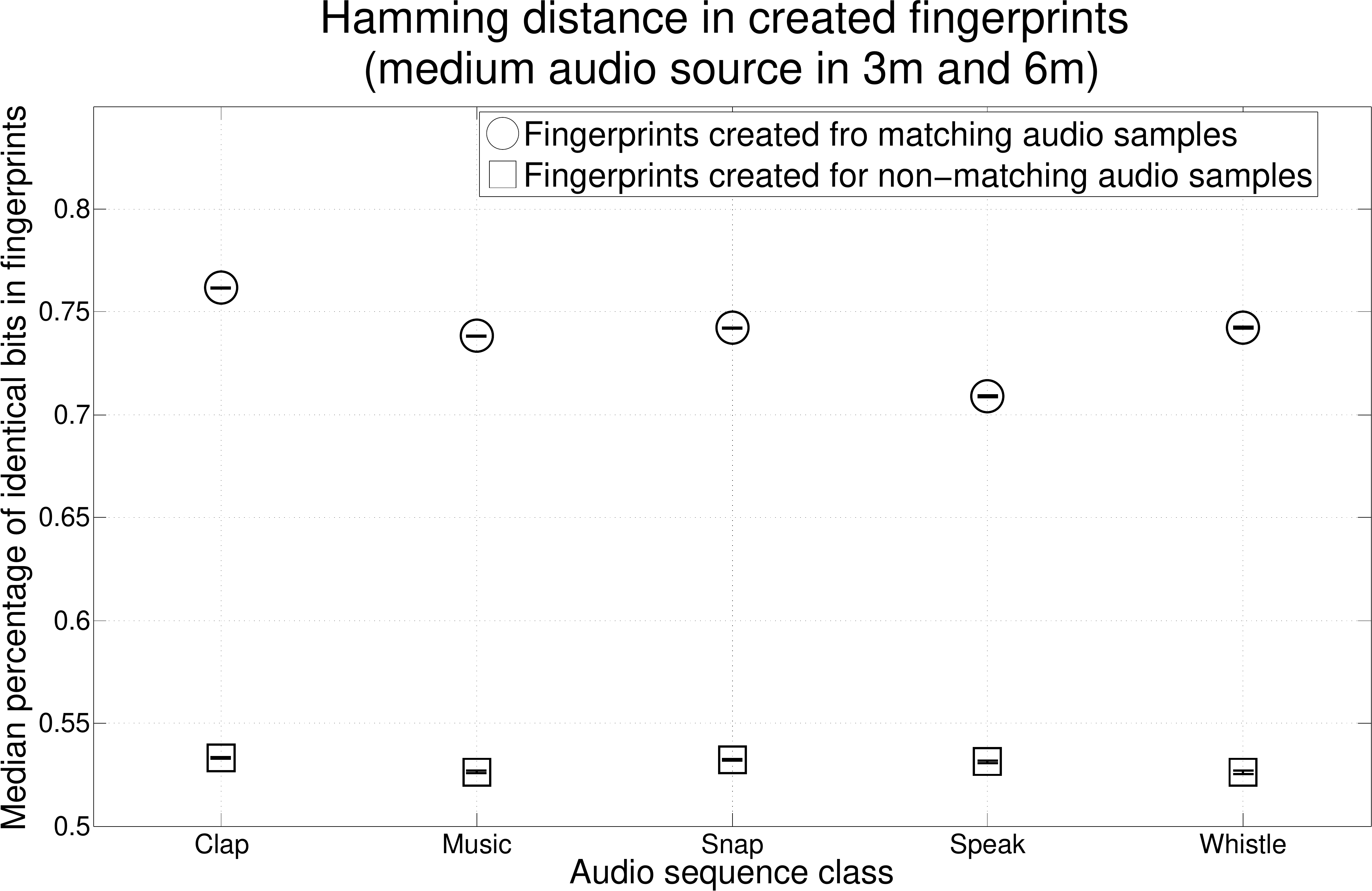}
\label{fig_eleventh_case}}
\hfil
\subfloat[Quiet, microphones at 3\,m and 6\,m]{\includegraphics[width=2.2in]{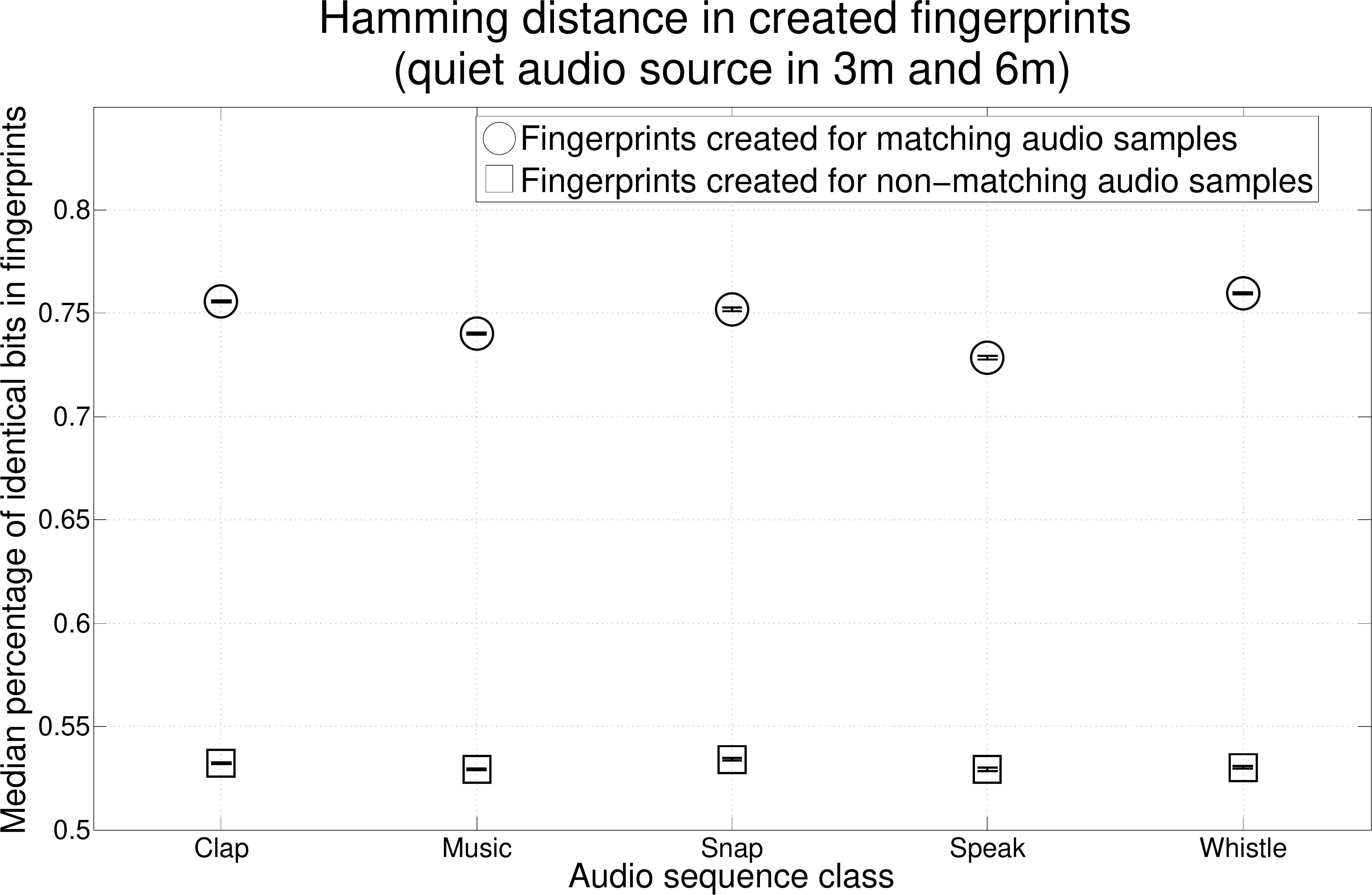}
\label{fig_twelveth_case}}}
\caption{Hamming distance observed for fingerprints created for recorded audio samples at distinct loudness levels and distances between microphones and the audio source {\scriptsize (1536-1233/13/\$31.00 \copyright 2013 IEEE Published by the IEEE CS, CASS, ComSoc, IES, SPS)
}}
\label{figure2}
\end{figure}

First, we observe that the similarity in the fingerprints is significantly higher for simultaneously sampled audio in all cases.
Also, notably, the similarity in the fingerprints of non-simultaneously recorded audio is slightly higher than 50\,\%, which we would expect for a random guess.
The small deviation is a consequence of the monotonous electronic background noise originated by the recording devices consisting of the microphones and the audio chipsets.

Additionally, the distance of the microphones to the audio source has no impact on the similarity of fingerprints.
Similarly, we can not observe a significant effect of the loudness level.
This confirms our expectation since for the fingerprinting approach not the absolute energy on frequency bands but changes in energy over time were considered (cf. section~\ref{section3-1}).
Therefore, changes in the loudness level as, for instance, by altering the distance to the audio source or by changing the volume of the audio, have minor impact on the fingerprints.

Table~\ref{tableAuthentication} depicts the maximum and minimum Hamming distance among all experiments.
\begin{table} 
\renewcommand{\arraystretch}{1.3}
\caption{Percentage of identical bits between fingerprints {\scriptsize (1536-1233/13/\$31.00 \copyright 2013 IEEE Published by the IEEE CS, CASS, ComSoc, IES, SPS)
}}
\label{tableAuthentication}
\centering
  \begin{tabular}{|c||c|c|}
    \hline
    & matching samples & non-matching samples \\
    \hline
    Median   & 0.7617 & 0.5332     \\
    Mean     & 0.7610 & 0.5322     \\
    Variance & 0.0014 & 0.00068342 \\
    Min      & 0.6777 & 0.4414     \\
    Max      & 0.8750 & 0.6484     \\
    \hline
  \end{tabular}
\end{table}     
We observe that one of the comparisons of fingerprints for non-simultaneously recorded audio yielded a maximum similarity of $0.6484$. 
This value is still fairly separated from the minimum bit-similarity observed for fingerprints from simultaneously recorded samples.
Also, this event is very seldom in the $7200$ comparisons since the mean is sharply concentrated around the median with a low variance.
Therefore, by repeating this process for a small number of times, we reduce the probability of such an event to a negligible value.
For instance, only about $3.8$\,\% of the comparisons between fingerprints from non-matching samples have a similarity of more than $0.58$; only $0.4583$\,\% have a similarity of more than $0.6$.
Similarly, only $2.33$\,\% of the comparisons of synchronously sampled audio have a similarity of less than $0.7$.

With these results, we conclude that an authentication based on audio-fingerprints created from synchronised audio samples in identical environmental contexts is feasible.
However, since it is unlikely that the fingerprints match in all bits, it is not possible to utilise the audio-fingerprints directly as a secret key to establish a secure communication channel among devices.
We therefore considered error correcting codes to account for the noise in the fingerprints created.

\subsection{Case-studies}\label{section5}
We implemented the described ambient audio-based secure communication scheme in Python and conducted case-studies in four distinct environments.
The experiments feature differing loudness levels, different background noise figures as well as distinct common situations.
In section~\ref{section5-1}, we observe how the proposed method can establish an ad-hoc secure communication based on audio from ongoing discussions in a general office environment.
Since an adversary able to sneak into the audio context of a given room might be better positioned to guess the secure key, we demonstrate in section~\ref{section5-2} that even for an adversary device that is able to establish a similar dominant audio context in a different room by listening to the same FM-radio-channel, the gap in the created fingerprints is significant.
In these two experiments, we utilised artificial audio sources in a sense that they were specifically placed to create the ambient audio context.
In section~\ref{section5-3} and section~\ref{section5-4} we describe experiments in common environments where ambient audio was utilised exclusively.
In section~\ref{section5-3} we placed devices at distinct locations in a canteen and studied the success probability based on the distance between devices. 
In section~\ref{section5-4} we study the feasibility of establishing a secure communication channel with road-traffic as background noise.
Figure~\ref{figure3} summarises all settings considered.
\begin{figure}
     \centering
      \subfloat[Office setting. Devices and speakers located at distinct positions.]{\includegraphics[width=.4\textwidth,height=6cm]{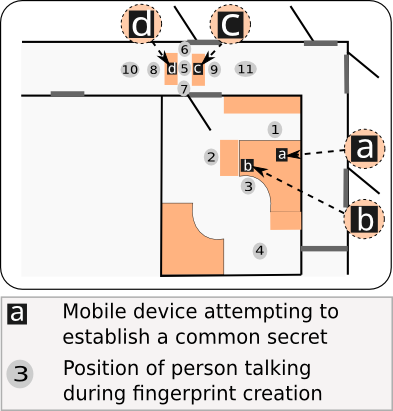}\label{figure3-1}}\hfill
     \subfloat[Office setting. Synchronised dominant audio source established in both rooms via FM radios.]{\includegraphics[width=.4\textwidth,height=6cm]{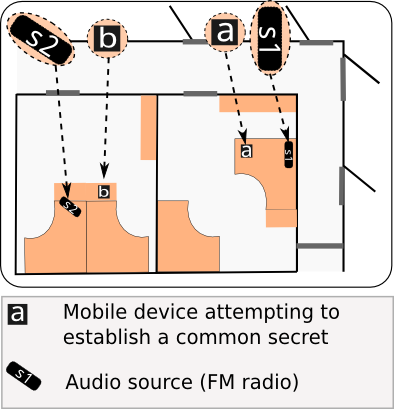}\label{figure3-2}}
     
     \subfloat[Canteen setting. Devices located at various distances. No dominant noise source.]{\includegraphics[width=.4\textwidth,height=6cm]{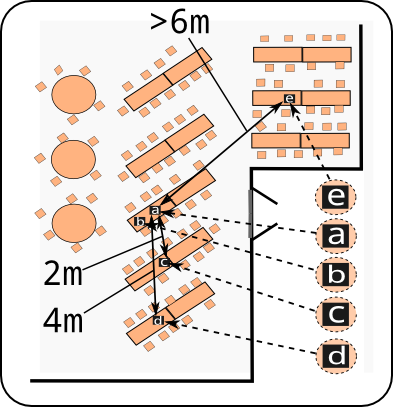}\label{figure3-3}}\hfill
     \subfloat[Roadtraffic setting. Devices arranged alongside a road. No dominant noise source.]{\includegraphics[width=.4\textwidth,height=6cm]{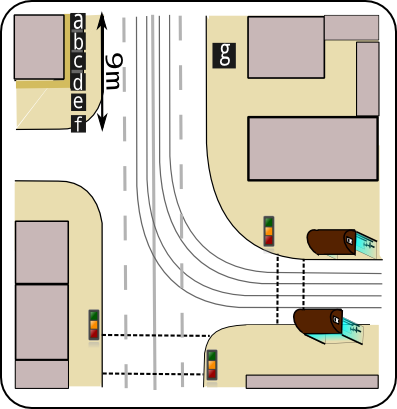}\label{figure3-4}}
     \caption{Environmental settings of the case-studies conducted. {\scriptsize (1536-1233/13/\$31.00 \copyright 2013 IEEE Published by the IEEE CS, CASS, ComSoc, IES, SPS)
}}
     \label{figure3} 
\end{figure}
To capture audio we utilised the build-in microphones of the computers.
The only exception is the reference scenario~\ref{figure3-1} in which simple off-the-shelf external microphones have been utilised.
For both devices, the manufacturer and audio device types differed.
Table~\ref{tableConfigurationCaseStudy} details further configuration of the scenarios conducted and the hardware utilised.
\begin{table}
\renewcommand{\arraystretch}{1.1}
	\centering
		\caption{Configuration of the four scenarios considered {\scriptsize (1536-1233/13/\$31.00 \copyright 2013 IEEE Published by the IEEE CS, CASS, ComSoc, IES, SPS)
}}
	 \begin{small}
	\begin{tabular}{r|ccc}
		\textbf{Microphones (external)}&\\\hline
		Impedance &\multicolumn{3}{r}{$\leq 22\,\mathrm{k}\Omega$}\\
		Current consumption&\multicolumn{3}{r}{$\leq 0.5\,\mathrm{mA}$}\\
		Frequency response &\multicolumn{3}{r}{$100\,\mathrm{Hz} \sim 16\,\mathrm{KHz}$}\\
		Sensitivity &\multicolumn{3}{r}{$ -38\,\mathrm{dB} \pm 2\,\mathrm{dB}$}\\
		&\\
		\textbf{Microphones (internal)}&\\\hline
		Device A &Intel G45 DEVIBX\\
		Device B&Intel 82801I\\
	\end{tabular}
	\end{small}
	\label{tableConfigurationCaseStudy}
\end{table}

\subsubsection{Office environment}\label{section5-1}
In our first case-study, we position two laptops in an office environment.
Ambient audio was originated from individuals speaking inside or outside of the office room.
We conducted several sets of experiments with differing positions of laptop computers and audio sources as depicted in figure~\ref{figure3-1}.
We distinguish four distinct scenarios
\begin{description}
     \item[\ref{figure3-1}$_1$] Both devices inside the office at locations a and b. 
     1-2 Individuals speaking at locations 1 to 4.
     \item[\ref{figure3-1}$_2$] One device inside and one outside the office in front of the open office door at locations a and c. 
     1-2 Individuals speaking at locations 1 and 5.
     \item[\ref{figure3-1}$_3$] Both devices in the corridor in front of the office at locations c and d. 
     1-2 Individuals speaking at locations 5 to 11.
     \item[\ref{figure3-1}$_4$] One device inside and one outside the office in front of the closed office door at locations a and c. 
     1-2 Individuals speaking (damped but audible behind closed door) at locations 1 and\,5.
 \end{description}

In all cases the devices were synchronised over NTP.
For each synchronisation, one device indicated at which point in time it would initiate audio recording.
Both devices then sample ambient audio at that time and create a common key following the protocol described in section~\ref{section3-1}.
For each scenario the key synchronisation process was repeated 10~times with the persons located at different locations.
From these persons, either person~1, person~2 or both were talking during the synchronisation attempts in order to provide the audio context.

The settings \ref{figure3-1}$_1$ and \ref{figure3-1}$_3$ represent the situation of two friendly devices willing to establish a secure communication channel.
The setting \ref{figure3-1}$_2$ could constitute the situation in which a person passing by is accidentally witnessing the communication and part of the audio context.
In setting \ref{figure3-1}$_4$, the communication partners might have closed the office door intentionally in order to keep information secure from persons outside the office.

In scenario \ref{figure3-1}$_1$, where both devices share the same audio context a fraction of $0.9$ of all synchronisation attempts have been successful.
Also, for scenario \ref{figure3-1}$_3$, the fraction of successful synchronisation attempts was as high as $0.8$.
Consequently, when both devices are located in the same audio context, a successful synchronisation is possible with high probability.

For scenario \ref{figure3-1}$_2$, where the device outside the ajar door could partly witness the audio context, we had a success probability of $0.4$.
Although this means that less than every second approach was successful, this is clearly not acceptable in most cases.
Still, this low success probability it is remarkable since the person speaking in the office or on the corridor was clearly audible at the respective other location.

In scenario \ref{figure3-1}$_4$, however, when the audio context was separated by the closed door, no synchronisation attempt was successful.
Remarkably in this case, the person speaking was, although hardly comprehensible, still audible at the other side of the door. 

Finally, we attempted to establish a synchronisation in the scenarios \ref{figure3-1}$_1$, \ref{figure3-1}$_2$ and \ref{figure3-1}$_3$ when only background noise was present.
This means that no sound was emitted from a source located in the same location as one of the devices.
Some distant voices and indistinguishable sounds could occasionally be observed.
After a total of twelve tries in these three scenarios, not a single one resulted in a successful synchronisation between devices.
We conclude that a dominant noise source or at least more dominant background noise needs to be present in the same physical context as the devices that want to establish a common key.

\subsubsection{Context replication with FM-radio}\label{section5-2}
A straightforward security attack for audio-based encryption could be for the attacker to extract information about the audio context and use this in order to guess the secret key created.
We studied this threat by trying to generate a secret key between two devices in different rooms but with similar audio contexts.
In particular, we placed two FM-radios, tuned to the same frequency in both rooms (cf. figure~\ref{figure3-2}).

The audio context was therefore dominated by the synchronised music and speech from the FM-radio channel.
No other audio sources have been present in the rooms so that additional background noise was negligible.
We conducted two experiments in which the devices were first located in the same room and then in different rooms with the same distance to the audio source.
The loudness level of the audio source was tuned to about 50\,dB in both rooms.
Figure~\ref{figure4} depicts the median bit-similarity achieved when the devices were placed in the same room and in different rooms respectively.
\begin{figure}
     \centering
     \includegraphics[width=8.9cm]{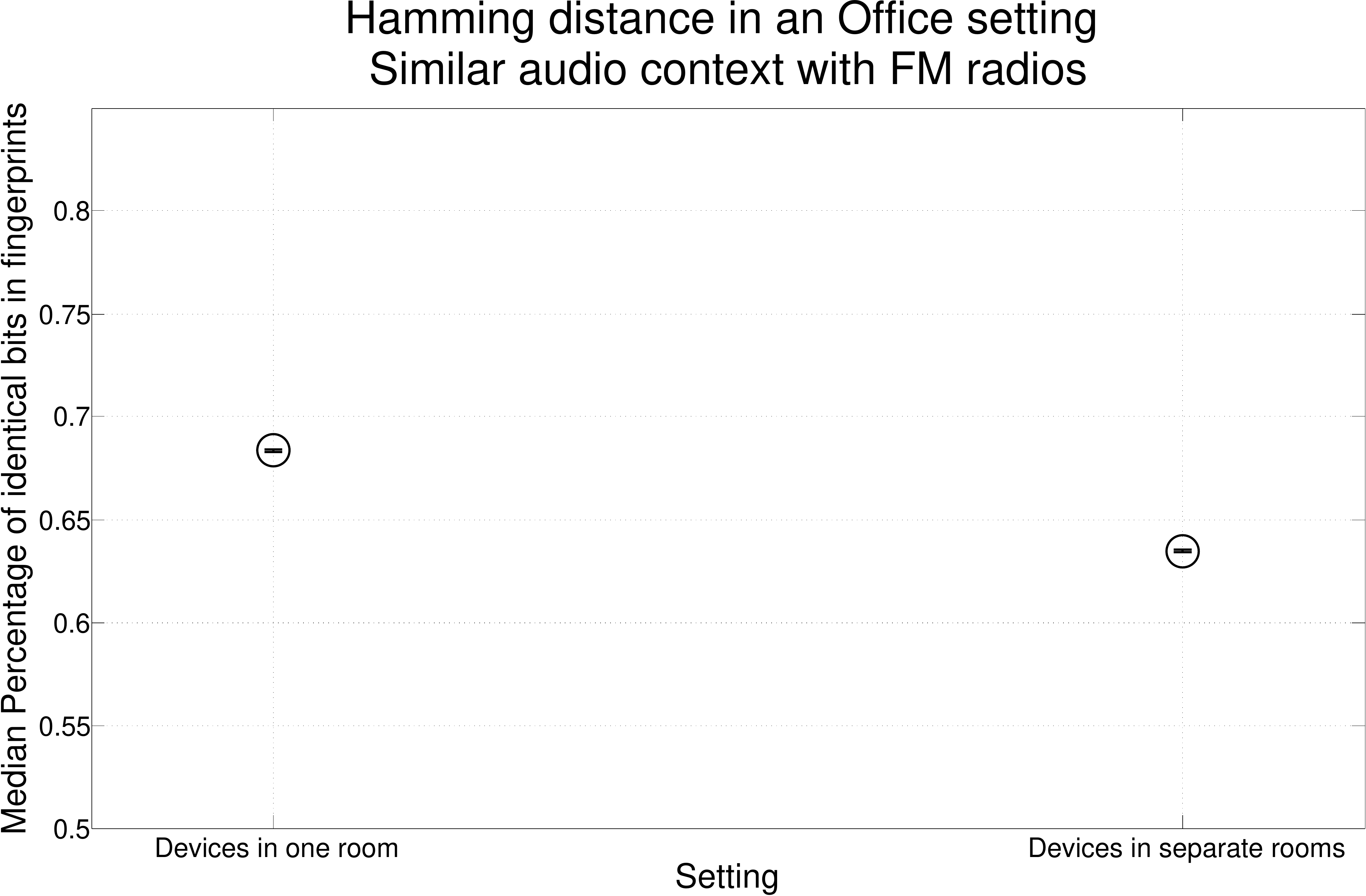}
     \caption{Median percentage of bit errors in fingerprints generated by two mobile devices in an office setting. The audio context was dominated by an FM radio tuned to the same channel. {\scriptsize (1536-1233/13/\$31.00 \copyright 2013 IEEE Published by the IEEE CS, CASS, ComSoc, IES, SPS)}}
     \label{figure4}
\end{figure}

We observe that in both cases the variance in the bit errors achieved is below $0.01$\,\%.
When both devices are placed in the same room, the median Hamming distance between fingerprints is only $31.64$\,\%.
We account this high similarity and the low variance to the fact that background noise was negligible in this setting since the FM-radio was the dominant audio source.

When the devices are placed in different rooms, the variance in bit error rates is still low with $0.008$\,\%.
The median Hamming distance rose in this case to $36.52$\,\%.	

Consequently, although the dominant audio source in both settings generated identical and synchronised content, the Hamming distance drops significantly when both devices are in an identical room.
With sufficient tuning of the error correction method conditioned on the Hamming distance, an eavesdropper can be prevented from stealing the secret key even though information on the audio context might be leaking.

\subsubsection{Canteen environment}\label{section5-3}
We studied the accuracy of the approach in the canteen of the TU Braunschweig (cf.~figure~\ref{figure3-3}).
At different tables, laptop computers have been placed.
For each configuration we conducted 10 attempts to establish a unique key based on the fingerprints.
We conducted all experiments during 11:30 and 14:00 on a business day in a well populated canteen. 
The ambient noise in this experiment was approximately at 60\,dB. 
Apart from the audible discussion on each table, background noise was characterised by occasional high pitches of clashing cutlery.

Figure~\ref{figure5} depicts the results achieved.
The figure shows the median percentage of bit errors between the fingerprints generated by both devices.
\begin{figure}
     \centering
     \includegraphics[width=8.9cm]{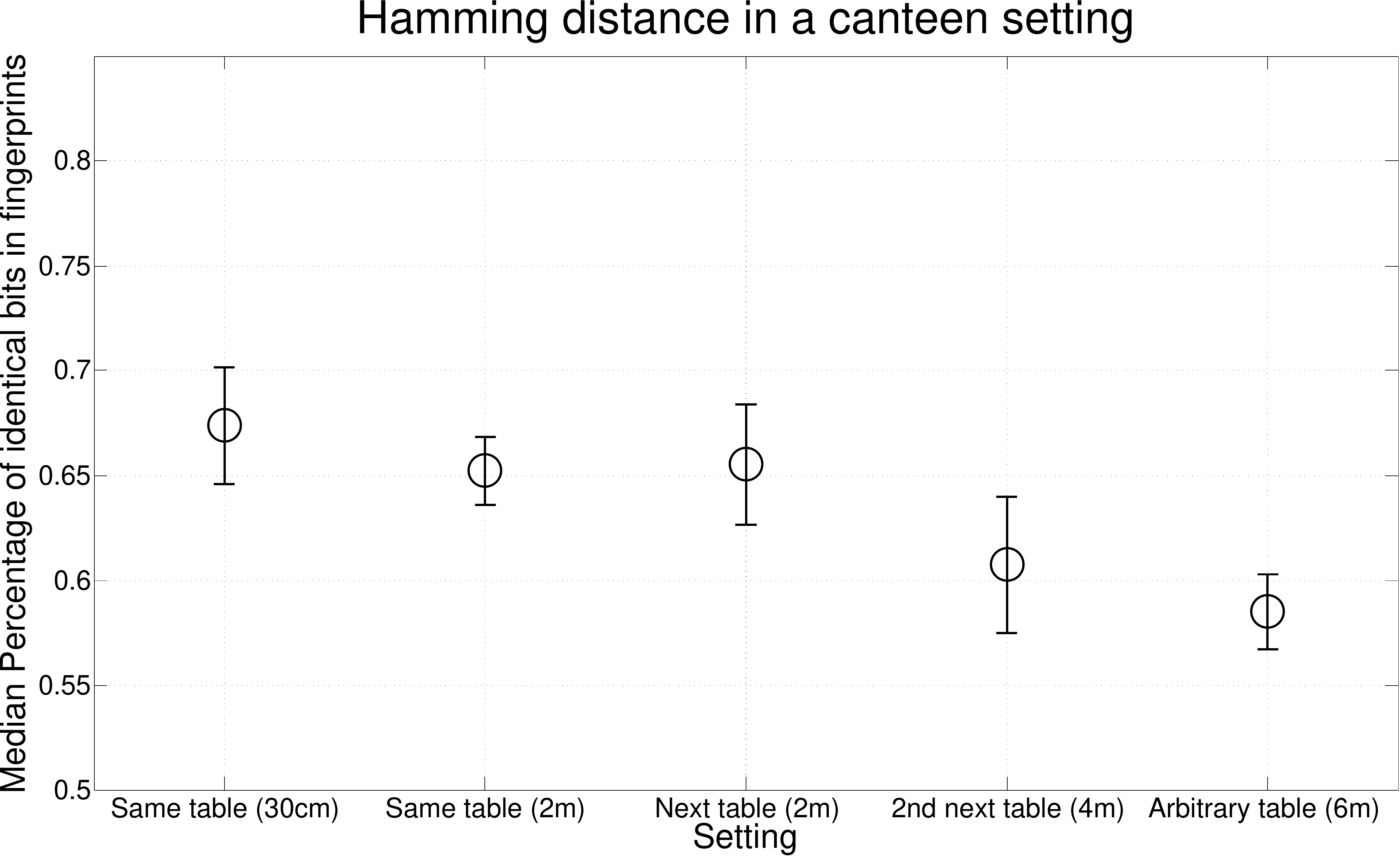}
     \caption{Median percentage of bit errors in fingerprints generated by two mobile devices in a canteen environment. {\scriptsize (1536-1233/13/\$31.00 \copyright 2013 IEEE Published by the IEEE CS, CASS, ComSoc, IES, SPS)}}
     \label{figure5}
\end{figure}

We observe that generally the percentage of identical bits in the fingerprint decreases with increasing distance. 
With about 2\,m distance the percentage of identical bits is still quite similar to the similarity achieved when devices are only 30\,cm apart.
This is also true when one of the devices is placed at the next table.
However, with a distance of about 4 meters and above, the percentage of bit errors are well separated so that also the error correction could be tuned such that a generation of a unique key is not feasible at this a distance.

\subsubsection{Outdoor environment}\label{section5-4}
In this instrumentation the two computers were located at the side of a well trafficked road.
The study has been conducted during the rush hour between 17:00 and 19:00 at a regular working day.
The road was frequented by pedestrians, bicycles, cars, lorries and trams.
The data was measured not far off a headlight so that traffic occasionally stopped with running motors in front of the measurements.
The loudness level was about $60$\,dB for both devices.
The setting is depicted in figure~\ref{figure3-4}.
We gradually increased the distance among devices.
Devices have been placed with a distance between their microphones of 0.5\,m, 3\,m, 5\,m, 7\,m and 9\,m at one side of the road.
Additionally, for one experiment devices are placed at opposite sides of the road.
For each configuration 10 to 13 experiments have been conducted.
The results are depicted in figure~\ref{figure6}.
\begin{figure}
     \centering
     \includegraphics[width=8.9cm]{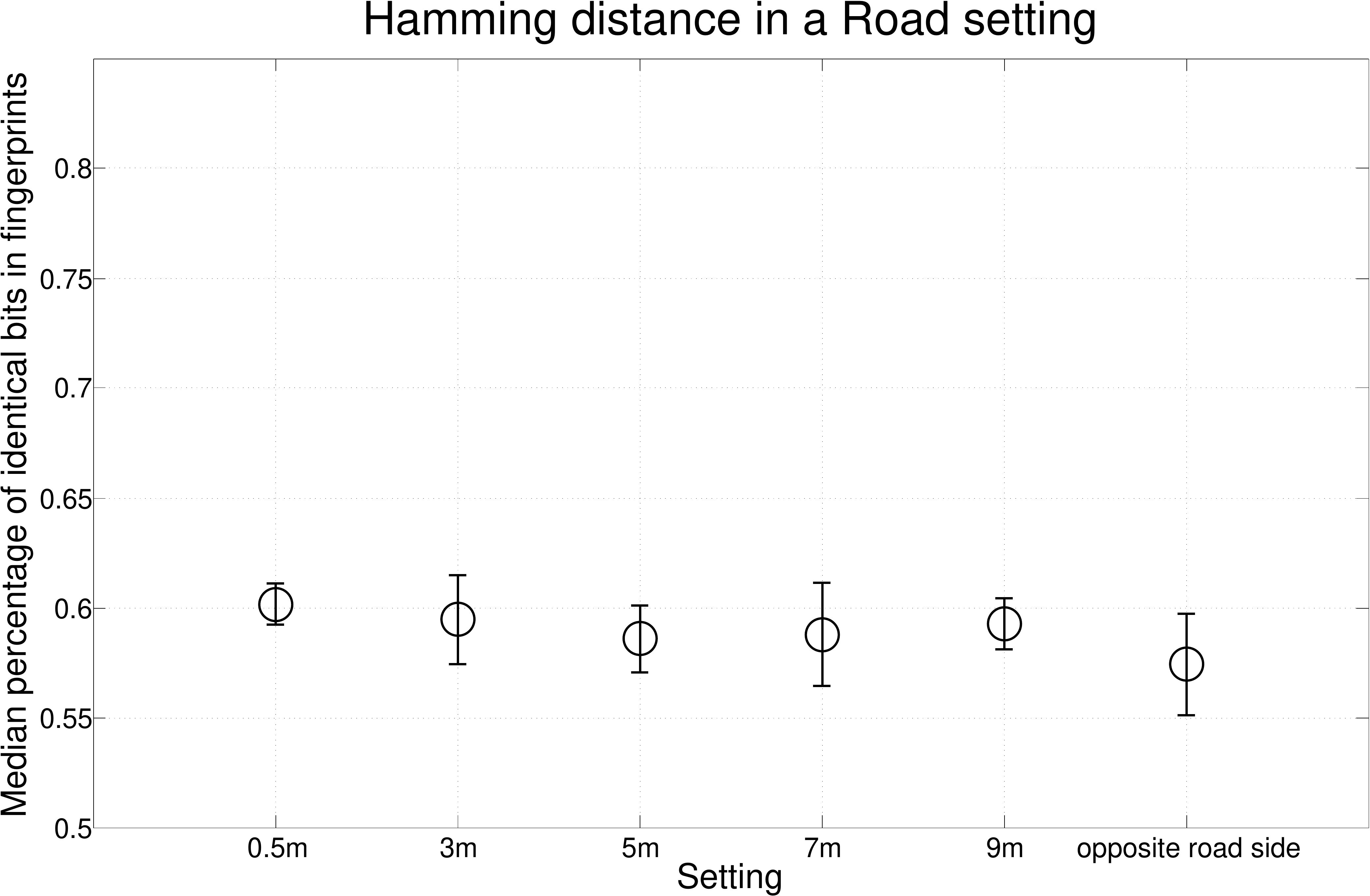}
     \caption{Median percentage of bit errors in fingerprints from two mobile devices beside a heavily trafficked road. {\scriptsize (1536-1233/13/\$31.00 \copyright 2013 IEEE Published by the IEEE CS, CASS, ComSoc, IES, SPS)}}
     \label{figure6}
\end{figure}
The figure depicts the median Hamming distance and variance for the respective configurations applied.

Not surprisingly, we observe that the Hamming distance between fingerprints generated by both devices is lowest when devices are placed next to each other.
With increasing distance, the Hamming distance increases slightly but then stays similar also for greater distances.
 
At the opposite side of the road, however, the Hamming distance drops more significantly.
When both devices are at the same side of the road, the probability to guess the secret key is high even for greater distances between the devices.
We believe that this property is attributable to the very monotonic background noise generated by the vehicles on the road.
The audio-context is therefore similar also in greater distances.

Only when one of the devices is located at the opposite side of the road, a more significant distinction between the generated fingerprints is possible.
This may account to the different reflection of audio off surrounding buildings and to the fact that vehicles on the other lane generate a different dominant audio footprint.

Generally, these results suggest that audio-based key generation is hardly feasible in this scenario.
Audio-based generation of secret keys is not well suited in an environment with very monotonic and unvaried background noise.
Although a light protection from intruders on the different side of the road is possible, the radius in which similar fingerprints are generated on one side of the road is unacceptably high.

\subsection{Entropy of fingerprints}\label{section6}
Although these results suggest that it is unlikely for a device in another audio context to generate a fingerprint which is sufficiently similar, an active adversary might analyse the structure of fingerprints created to identify and explore a possible weakness in the encryption key.
Such a weakness might be constituted by repetitions of subsequences or by an unequal distribution of symbols.
A message encrypted with a key biased in such a way may leak more information about the encrypted message than intended.

We estimated the entropy of audio-fingerprints generated for audio-sub-sequences by applying statistical tests on the distribution of bits.
In particular, we utilised the dieHarder~\cite{statistical_Brown_0000} set of statistical tests.
This battery of tests calculates the p-value of a given random sequence with respect to several statistical tests.
The p-value denotes the probability to obtain an input sequence by a truly random bit generator~\cite{Statistical_Kuiper_1962}. 
All tests are applied to a set of fingerprints of 480 bits length.
We utilised all samples obtained in section~\ref{section4} and section~\ref{section5}.

From 7490 statistical-test-batches consisting of 100 repeated applications of one specific test each, only 173, or about 2.31\%\ resulted in a p-value of less than 0.05\footnote{All results are available at http://www.ibr.cs.tu-bs.de/users/sigg/StatisticalTests/TestsFingerprints\_110601.tar.gz}.  
Each specific test was repeated at least 70 times.
The p-values are calculated according to the statistical test of Kuiper~\cite{Statistical_Kuiper_1962,Statistical_stephens_1965}.

\begin{figure}
	\centering

	\subfloat[Proportion of sequences from an indoor laboratory environment passing a test]{\includegraphics[width=0.48\textwidth, height=6cm]{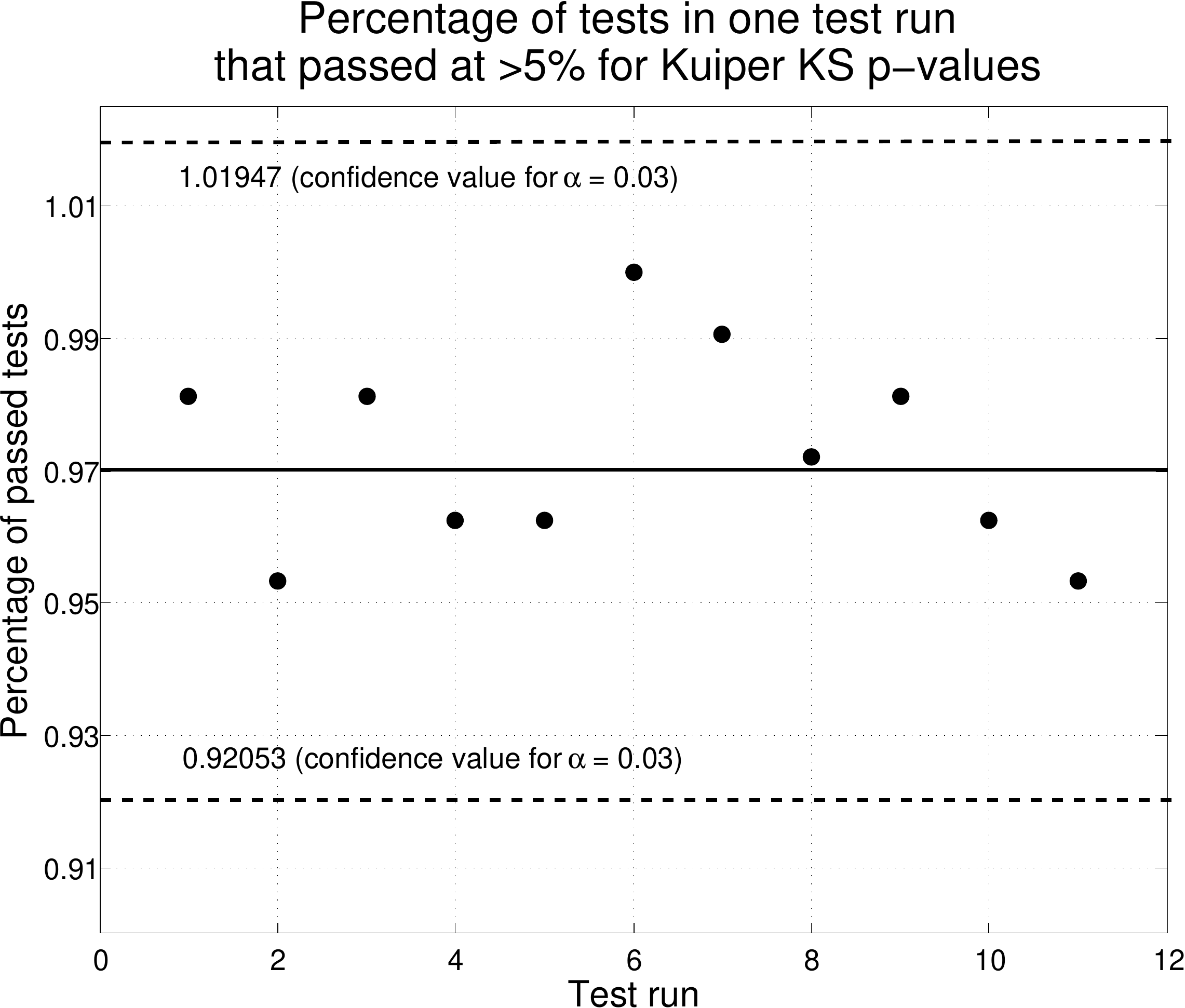}\label{figure7-1}}\hfill 
	\subfloat[Proportion of sequences from various outdoor environments passing a test]{\includegraphics[width=0.48\textwidth, height=6cm]{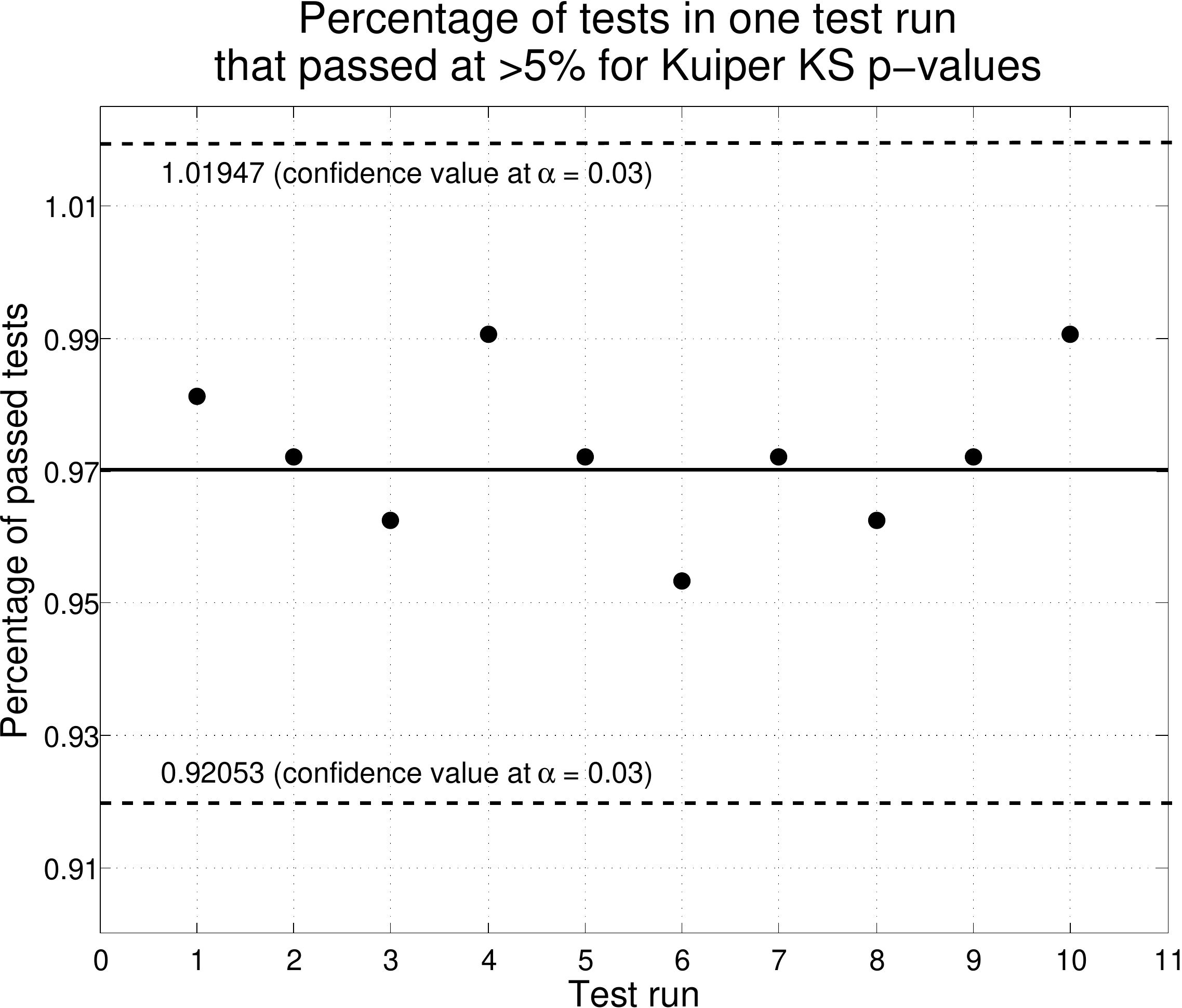}\label{figure7-2}}
	
	\subfloat[Proportion of sequences from all but music samples passing a test]{\includegraphics[width=0.48\textwidth, height=6cm]{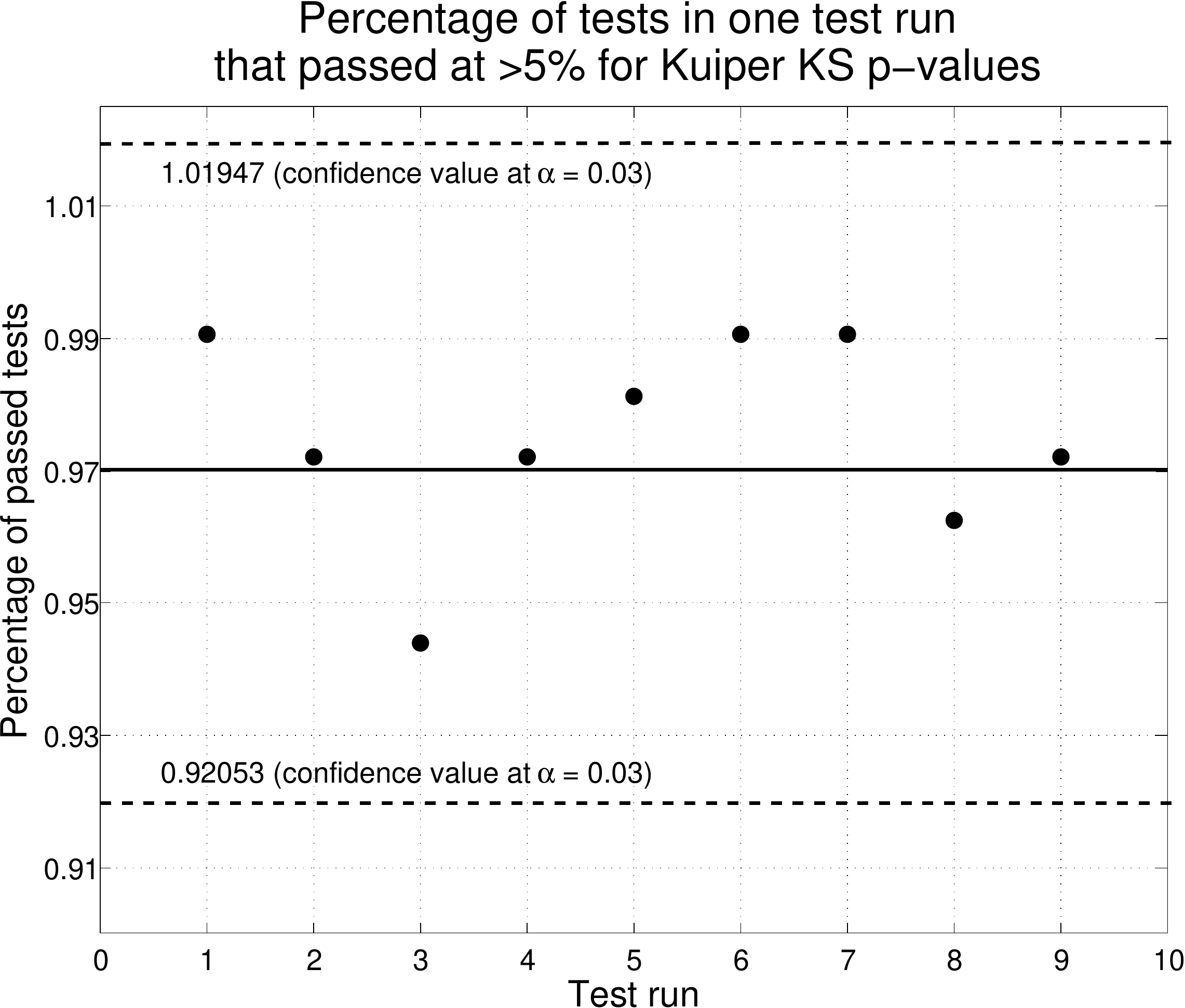}\label{figure7-3}}\hfill
	\subfloat[Proportion of sequences belonging to a specific audio class passing a test]{\includegraphics[width=0.48\textwidth, height=6cm]{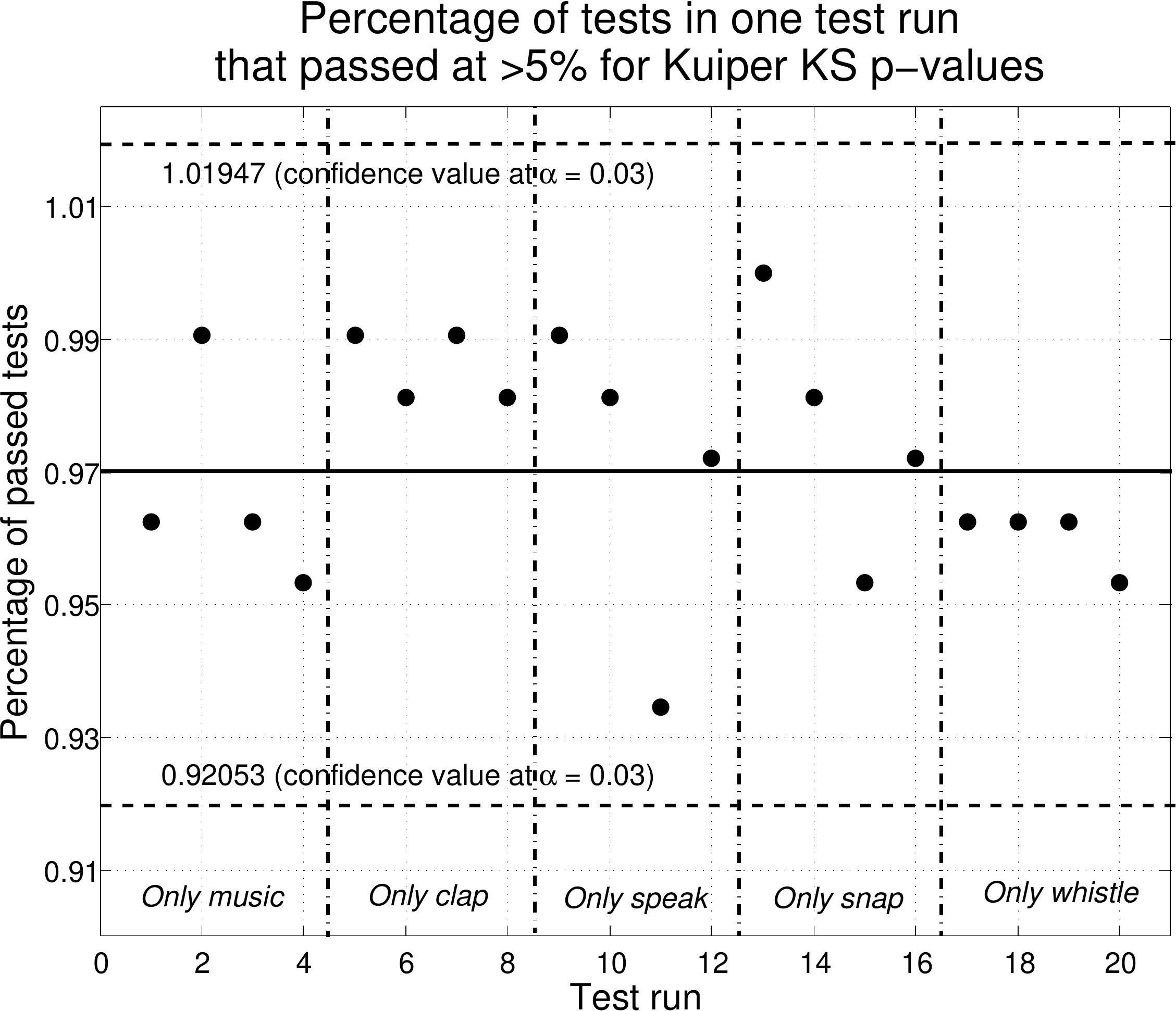}\label{figure7-4}} 
	\caption{Illustration of P-Values obtained for audio-fingerprints by applying the DieHarder battery of statistical tests. {\scriptsize (1536-1233/13/\$31.00 \copyright 2013 IEEE Published by the IEEE CS, CASS, ComSoc, IES, SPS)}}
	\label{figure7}
\end{figure}
Figure~\ref{figure7} depicts for all test-series conducted the fraction of tests that did not pass a sequence of 100 consecutive runs at $>5\%$ for Kuiper KS p-values~\cite{Statistical_Kuiper_1962} for all 107 distinct tests in the DieHarder battery of statistical tests. 
Generally, we observe that for all test-runs conducted, the number of tests that fail is within the confidence interval with a confidence value of $\alpha=0.03$.
The confidence interval was calculated for $m=107$ tests as 
\begin{equation}
     1-\alpha\pm3\cdot\sqrt{\frac{(1-\alpha)\cdot\alpha}{m}}.     
\end{equation}
Alternatively, we could not observe any distinction between indoor and outdoor settings (cf. figure~\ref{figure7-1} and figure~\ref{figure7-2}) and conclude that also the increasing noise figure and different hardware utilised \footnote{Overall, the microphones utilised (2~internal, 2~external) were produced by three distinct manufacturers} does not impact the test results.
Since music might represent a special case due to its structured properties and possible repetitions in an audio sequence, we considered it separately from the other samples.
We could not identify a significant impact of music on the outcome of the test results (cf.~figure~\ref{figure7-3}).

Additionally, we separated audio samples of one audio class and used them exclusively as input to the statistical tests.
Again, there is no significant change for any of the classes (cf.~figure~\ref{figure7-4}).

We conclude that we could not observe any bias in fingerprints based on ambient audio.
Consequently, the entropy of fingerprints based on ambient audio can be considered as high.
An adversary should gain no significant information from an encrypted message eavesdropped.

\subsection{Conclusion}\label{section7}
We have studied the feasibility to utilise contextual information to establish a secure communication channel among devices.
The approach was exemplified for ambient audio and can be similarly applied to alternative features or context sources.
The proposed fuzzy-cryptography scheme is adaptable in its noise tolerance through the parameters of the error correcting code utilised and the audio sample length.

In a laboratory environment, we utilised sets of recordings for five situations at three loudness levels and four relative positions of microphones and audio source.
We derived in 7500 experiments the expected Hamming distance among audio-fingerprints.
The fraction of identical bits is above $0.75$ for fingerprints from the same audio context and below $0.55$ otherwise.
This gap in the Hamming distance can be exploited to generate a common secret among devices in the same audio context.
We detailed a protocol utilising fuzzy-cryptography schemes that does not require the transmission of any information on the secure key.
The common secret is instead conditioned on fingerprints from synchronised audio-recordings.
The scheme enables ad-hoc and unobtrusive generation of a secure channel among devices in the same context.
We conducted a set of common statistical tests and showed that the entropy of audio-fingerprints based on energy differences in adjacent frequency bands is high and sufficient to implement a cryptographic scheme.

In four case-studies, we verified the feasibility of the protocol under realistic conditions.
The greatest separation between fingerprints from identical and non-identical audio-contexts was observed indoor with low background noise and a single dominant audio source.
In such an environment we could distinguish devices in the same and in different audio contexts.
It was even possible to clearly identify a device that replicated dominant audio from another room with an equally tuned FM-radio at similar loudness level.

In a case-study conducted in a crowded canteen environment, we observed that the synchronisation quality was generally impaired due to the absence of a dominant audio source.
However, it was still possible to establish a privacy area of about 2\,m inside which the Hamming distance of fingerprints was distinguishably smaller than for greater distances.
The worst results have been obtained in a setting conducted beside a heavily trafficked road.
In this case, when the noise component becomes dominant and considerably louder, the synchronisation quality was further reduced.
Additionally, due to the increased loudness level, a similar synchronisation quality was possible also at distances of about 9\,m.
We conclude that in this scenario, a secure communication channel based purely on ambient audio is hard to establish.

We claim that the synchronisation quality in scenarios with more dominant noise components can be further improved with improved features and fingerprint algorithms.
Currently, most ideas are lent from fingerprinting algorithms and features designed to distinguish between music sequences.
Although algorithms have been adapted to better capture characteristics of ambient audio, we believe that features and fingerprint generation to classify ambient audio might be further improved.
Additionally, the consideration of additional contextual features such as light or RF-channel-based should improve the robustness of the presented approach.

In our implementation we faced difficulties to achieve sufficiently accurate (in the order of few milliseconds) time-synchronisation among wireless devices.
In our current studies we tested several sample windows of NTP-synchronised recordings in order to achieve a feasible implementation on standard hardware.
However, a more exact time synchronisation would further reduce the accuracy and computational complexity of the approach.

\subsection*{APPENDIX A: Fingerprint creation}\label{sectionFingerprintingImplementation}
The fingerprinting method implemented is detailed in algorithm~\ref{algorithm:audio_fingerprinting}.
\begin{algorithm}
\SetKwFunction{FFT}{FFT}
\SetKwFunction{Abs}{Abs}
\SetKwFunction{HW}{HW}
\SetKwFunction{Flatten}{Flatten}
\KwIn{Audio sequence $S$ with sample rate $r$}
\KwData{$l$: length of $S$ in seconds, $n$: number of frames, $m$: number of frequency bands}
\KwOut{fingerprint $f$ as bit sequence}
\BlankLine
\Begin{
  $d \leftarrow r \cdot \frac{l}{n}$ \tcp*[l]{length of one frame}
  $\mathcal{F} \longleftarrow \{F_0, \dots , F_{n-1}\}$\;
  \BlankLine
  \For{$i \leftarrow 0$ \KwTo $n-1$}{
    $s \leftarrow i\cdot d$\;
    $F_i \leftarrow S[s:s+d]$ \tcp*[l]{split into frames}
  }
  \BlankLine
  \ForEach{$F_i$ in $\mathcal{F}$}{
    $F_i \leftarrow$ \HW{$F_i$}\;
    $F_i \leftarrow$ \Abs{\FFT{$F_i$}}\;
  }
  \tcp{calculate energy per frequency band on every frame}
  \For{$i \leftarrow 0$ \KwTo $n-1$}{
    \emph{divide into frequency bands} $B_0,\dots,B_{m-1}$\;
    \For{$j \leftarrow 0$ \KwTo $m-1$}{
      $E[i,j] \leftarrow \sum_{k} B_j[k]$\;
    }
  }
  \BlankLine
  \For{$i \leftarrow 1$ \KwTo $n-1$}{
    \For{$j \leftarrow 0$ \KwTo $m-2$}{
      \If{$E[i,j]-E[i,j+1]-(E[i-1,j]-E[i-1,j+1])>0$}{
        $f \leftarrow f||1$\;
      }
      \Else{
        $f \leftarrow f||0$\;
      }
    }
  }
  \BlankLine
  \Return $f$
}
\caption{Fingerprint($S$)\label{algorithm:audio_fingerprinting} {\scriptsize (1536-1233/13/\$31.00 \copyright 2013 IEEE Published by the IEEE CS, CASS, ComSoc, IES, SPS)}}
\end{algorithm}

After initialisation, the audio sequence is split into frames $F_i$. 
For each frame, a Hanning window weighted absolute Fourier transform is then applied.
Afterwards, the energy difference between successive frequency bands is calculated and concatenated to a fingerprint.
When the energy was increased, the corresponding position in the binary fingerprint is associated with the value 1, and else with 0.

\subsection*{APPENDIX B: Commit}\label{sectionCommitImplementation}
Algorithm~\ref{algorithm:commit} details the commit function we utilised to create a public pair $(\delta,\mbox{\texttt{h}}(a))$ describing the difference between a fingerprint and a randomly chosen code value.
\begin{algorithm}
\SetKwFunction{h}{h}\SetKwFunction{SavePrivate}{SavePrivate}
\KwIn{fingerprint $f$}
\KwData{$k,m,n$ for initialising $RS(2^k,m,n)$}
\KwOut{$(\delta, \text{\h{a}})$}
\BlankLine
\Begin{
  $\mathcal{A}=\mathbb{F}^m_{2^k}$\;
  $\mathcal{C}=\mathbb{F}^n_{2^k}$\;
  \BlankLine
    \emph{randomly choose }$a=(a_0,\dots ,a_{m-1})\in\mathcal{A}$\;
    \emph{generate} \h{a}\;
    \SavePrivate{$a$}\;
  \BlankLine
  $c \in \mathcal{C} \xleftarrow{\text{Encode}} a \in \mathcal{A}$\;
  $\delta \leftarrow f \ominus c$ \tcp*[l]{$\ominus :$ subtraction in $\mathcal{C}=\mathbb{F}^n_{2^k}$}
  \BlankLine
  \Return $(\delta, \text{\h{a}})$
}
\caption{Commit($f$)\label{algorithm:commit} {\scriptsize (1536-1233/13/\$31.00 \copyright 2013 IEEE Published by the IEEE CS, CASS, ComSoc, IES, SPS)}}
\end{algorithm}
Generally, after randomly choosing $a$ from $\mathcal{A}$, a hash $\mbox{\texttt{h}}(a)$ is generated and $a$ is stored privately. 
Following a suitable error correcting code (in our case Reed-Solomon codes with $RS(2^{10},204,512)$), a corresponding codeword $c$ is derived from $a$.
After calculating the distance $\delta$ between $c$ and $f$, pair $(\delta,\mbox{\texttt{h}}(a))$ is then published as a verification that sufficiently similar fingerprints have been created by both devices.

\subsection*{APPENDIX C: Decommit}\label{sectionDecommitImplementation}
The decommitment algorithm, utilised to verify the similarity of fingerprints created by remote devices, is detailed in algorithm~\ref{algorithm:decommit}.
\begin{algorithm}
\SetKwFunction{h}{h}\SetKwFunction{SavePrivate}{SavePrivate}
\KwIn{fingerprint $f'$, received $(\delta, \text{\h{$a$}})$}
\KwData{$k,m,n$ for initialising $RS(2^k,m,n)$}
\BlankLine
\Begin{
  $\mathcal{A}=\mathbb{F}^m_{2^k}$\;
  $\mathcal{C}=\mathbb{F}^n_{2^k}$\;
  \BlankLine
  $c' \leftarrow f' \ominus \delta$ \tcp*[l]{$\ominus :$ subtraction in $\mathcal{C}=\mathbb{F}^n_{2^k}$}
  $a' \in \mathcal{A} \xleftarrow{\text{Decode}} c' \in \mathcal{C}$\;
  \emph{generate} \h{$a'$}\;
  \BlankLine
  \If{\h{$a'$}$==$\h{$a$}}{
    \Return \emph{"decommitment successful"}\;
    \SavePrivate{$a'$}\;
  }
  \Else{
    \Return \emph{"decommitment failed"}\;
  }
}
\caption{Decommit($f',(\delta, \text{\texttt{h($a$)}})$)\label{algorithm:decommit} {\scriptsize (1536-1233/13/\$31.00 \copyright 2013 IEEE  Published by the IEEE CS, CASS, ComSoc, IES, SPS)}}
\end{algorithm}
The public pair $(\delta,\mbox{\texttt{h}}(a))$ provided by algorithm~\ref{algorithm:commit} is used in order to verify the similarity between $f$ and $f'$.
Given the fingerprint $f'$, the codeword $c'$ is calculated as the codeword with distance $\delta$ to $f'$.
This value is then decoded to a word $a'\in\mathcal{A}$.
Due to the properties of the error correction methods, up to $t=\left\lfloor \frac{n-m}{2} \right\rfloor$ bits difference between the fingerprints can be corrected.
Since $c'$ and $c$ have the distance $\delta$ to the fingerprints $f'$ and $f$ in common, they also share the same Hamming distance.
Consequently, from the Hamming distance $\mbox{Ham}(f,f')$ between the two fingerprints we obtain
\begin{eqnarray}
     & &\mbox{Ham}(f,f')\leq t\\
&\Rightarrow& \mbox{\texttt{h}}(a')=\mbox{\texttt{h}}(a).\\
&\Rightarrow& a'=a
\end{eqnarray}
Therefore, when the hash values are observed to be identical, $a$ can be used as the common secret among devices.
Otherwise, the pairing failed.
\vfill
\pagebreak

\section[Pattern-based Alignment of Audio Data for Ad-hoc Secure Device Pairing]{Pattern-based Alignment of Audio Data for Ad-hoc Secure Device Pairing \footnote{Originally published as ' Ngu Nguyen, Stephan Sigg, An Huynh and Yusheng Ji: Pattern-based Alignment of Audio Data for Ad-hoc Secure Device Pairing, in 2012 16th International Symposium on Wearable Computers (ISWC), pp.88-91, 18-22 June 2012 (DOI: http://dx.doi.org/10.1109/ISWC.2012.14)' (1550-4816/12 \$26.00 \copyright 2012 IEEE)}}\label{sectionOriginalSE02}
When studying the use of ambient audio to generate a secure cryptographic shared key among mobile phones, we encounter a misalignment problem for recorded audio data.
The diversity in software and hardware causes mobile phones to produce badly-aligned audio chunks.
It decreases the identical fraction in audio samples recorded in nearby mobile phones and consequently the common information available to create a secure key.
Unless the mobile devices are real-time capable, this problem can not be solved with standard distributed time synchronisation approaches.
We propose a pattern-based approximative matching process to achieve synchronisation without communication independently on each device.
Our experimental results show that this method can help to improve the similarity of the audio fingerprints, which are the source to create the communication key.

\subsection{Introduction}
\label{sectionIntroductionSE02}
With recent advances in smart-phone dissemination and their computational capabilities, smart-phones can be seen as a kind of wearable device for the masses.
These general-purpose devices are capable of solving several wearable computing tasks.
Due to their high penetration, security in communication among devices becomes a relevant issue.
Common security schemes for mobile devices require explicit user input to provide a shared piece of information.
A wearable device, however, should not distract its holder from other tasks.
How can we provide security among possibly unacquainted devices without any user interaction?

We consider an interaction-free common key generation scheme for proximate devices conditioned on ambient audio.
Since the seed to the key is implicit with the context, no information that could be used to reconstruct the key is transmitted on a wireless channel during key generation.
Each device computes a binary characteristic sequence for a synchronised recording: An audio-fingerprint~\cite{Haitsma2003highlyrobust, AudioFingerprinting_Wan_2003, wavelets, Audio_fingerprint_review_Cano_2005, Audio_fingerprint_review_Chandrasekhar_2011}.
This binary sequence is designed to fall onto a code-space of an error correcting code~\cite{Reed60polynomial}.
Devices then exploit the error correction capabilities of the error correcting code to map fingerprints to codewords as described in~\cite{Juels99fuzzycommitment,schneider1996applied}.
For fingerprints with a Hamming-distance within the error correction threshold of the error correcting code the resulting codewords are identical and then utilized as secure keys.
The Hamming distance in fingerprints rises with increasing distance of devices so that distant devices are unlikely to guess the correct key.
Our fingerprint extraction scheme is adapted from \cite{Haitsma2003highlyrobust} to extract fingerprints from synchronized ambient audio recordings in a noisy environment without exchanging any information about the resources among devices.
However, when audio sequences utilized are not well aligned, similarity in fingerprints decreases.
This is due to the fingerprint generation which exploits the relative fluctuation of energy over time.
Small shifts in the audio data will likely produce completely differing fingerprints.
This is a relevant problem since simple time synchronisation approaches, such as for instance the network time protocol (NTP), are not suitable to sufficiently synchronise audio recordings due to the delays in the recording hardware.

This paper addresses the accurate alignment of recorded audio sequences from remote devices.
The challenging point here is to achieve an alignment between audio samples taken from distinct devices interaction free and without any inter-device communication other than an initial plain pairing request.

We will in section~\ref{sectionRelatedWorkAF01} discuss related work on secure ad-hoc pairing of mobile devices.
Problems that prevent accurate audio sequence alignment and our pattern-based approximative matching method to reduce the mismatching are detailed in in section~\ref{sectionProblemsAddressed}.
%We present our implementation on android-based mobile devices in section~\ref{sectionImplementation}.
Section~\ref{sectionResultsAF01} describes a case study conducted with smart-phone devices to investigate the accuracy of our approach in a realistic setting and results of our alignment scheme.
Section~\ref{sectionConclusionAF01} draws our conclusion.

\subsection{Related Work}
\label{sectionRelatedWorkAF01}
Contextual or sensor information of mobile devices can be incorporated as a solution for authentication~\cite{ContextAwareness_Holmquist_2001}.
When the seed to the key is implicit with the context, no information that could be used to reconstruct the key is transmitted during key generation.
For instance, McCune et al.~\cite{McCune_Cryptography_2005} introduced \textit{Seeing-Is-Believing}, utilizing the camera of a mobile device to capture a 2D barcode which is displayed on the screen of another device.
\textit{Loud and Clear} of Goodrich et al.~\cite{Goodrich_Cryptography_2006} implements a similar scheme but exploits spoken audio.
A user reads aloud a text message displayed on one device and a second device recognizes the speech for authentication.
A further example mechanism by Mayrhofer et al.~\cite{mayrhofer2007shake} uses accelerometer readings when devices are shaken simultaneously by a single person.
Also, Mayrhofer derived in~\cite{mayrhofer2007candidate} that the sharing of secret keys is possible with a similar protocol by repeatedly exchanging hashes of key-sub-sequences until a common secret is found.
Bichler et al. generalise this approach to noisy acceleration readings~\cite{Cryptography_Bichler_2007,Cryptography_Bichler_2007-2}.
They utilize a hash function that maps similar acceleration patterns to identical key sequences.
These approaches require explicit user interaction.

By utilising a context source that provides a sufficient amount of unique, context-related information, such as audio or radio frequency (RF), it is possible to get the user out of the loop.
Mathur et al. introduced ProxiMate that enables wireless devices in proximity to pair automatically and securely using their shared ambient RF-signals~\cite{Cryptography_Mathur_2011}. 
They generate fingerprints from RF-channel fluctuations and map these onto a codespace of an error-correcting code.
By correcting potential errors in the fingerprints, they are mapped onto the closest regular codeword in the codespace.
When the similarity between fingerprints is high, codewords are identical.
Sigg et. al proposed to use  audio instead of RF in a similar implementation~\cite{Cryptography_Sigg_2011-5}.
They study the entropy of audio fingerprints and identify time synchronisation as a main hindrance to practically apply the method for mobile devices.
Their instrumentation requires idealized conditions regarding the synchronisation of devices and to account for this a high number of fingerprints must be created (201 in their experiments) in order to find one matching fingerprint.
For extensive computational load, this is feasible only in an offline approach.
The high number of fingerprints created, however, was necessary since the utilized NTP synchronisation is not sufficiently accurate.

In this paper, we present an alignment mechanism that enables a synchronisation accuracy of recorded audio in the order of less than 10 milliseconds among unsynchronised mobile devices in the same context without transmitting information about the audio sequence over the wireless channel.
The synchronisation is achieved by processing a weakly NTP-synchronised recording without additional communication among the devices. 

\subsection{Extracting fingerprints from ambient audio}
\label{sectionExtractingAudioFingerprints}
Most of the previous studies applied audio fingerprints to a large audio database of musical information.
In this type of data, there exists a dominant sound, which is the song itself.
In our research, we need to extract a fingerprint that can represent all of the characteristics of ambient sounds around the devices.
For instance, when the user is in a lobby room, there are various audio sources such as human voice, music, noise of opening and closing doors, etc.
All of the recorded sounds are potentially equally important to describe a context and should contribute to form an audio fingerprint.

Our approach is adapted from \cite{Haitsma2003highlyrobust} to extract the most similar fingerprint from synchronized ambient audio recordings in a noisy environment without exchanging any information about the resources among devices.
In our research, each bit in the binary audio-fingerprint expresses the difference between energy on frequency bands.

An ambient audio chunk is split into non-overlapping equal-length frames. 
Then we perform a Discrete Fast Fourier Transform on these frames. 
After that, each frame is divided into a set of frequency bands with identical width. 

An energy matrix $E$ is created as follows. 
Its size is \textit{the number of frames} $\times$ \textit{the number of frequency bands per frame}. 
Each value of the matrix represents the total energy of a frequency band in the corresponding frame.

From the matrix $E$, we generate the binary fingerprint $f$ whose bits contain the information about the energy change of frequency bands on two successive frames.
\begin{displaymath}
   f(i,j) = 
   	 \begin{cases}
       1 & \text{, if $\begin{array}{l}
	       (E[i,j] - E[i,j+1]) \\
		 - (E[i-1,j] - E[i-1,j+1]) > 0
		 \end{array}$} \\
       0 & \text{, otherwise.}
     \end{cases}
\end{displaymath}

In the above formula, if the energy gain between two consecutive frames is positive, the corresponding element in the binary sequence is assigned with the value of 1; otherwise, it has the value of 0.
Each audio fingerprint contains 512 bits.
Because the fingerprint extraction is performed based on the relative fluctuation of energy, there is no effect caused by the difference of raw intensity values in time domain for distinct devices.
However, since the audio signal is of consecutive time windows, small shifts in the audio data will likely produce completely differing fingerprints.
When this method is utilized for the generation of identical binary strings on remote mobile devices, the timing difference likely prevents the generation of sufficiently similar information on both devices.

Also, if there are more audio sources around the device, the audio fingerprints are more sensitive with inter-device distance.
The closer the mobile devices to each other, the more similar their fingerprints are.

To use the audio-fingerprints directly as keys for a classic encryption scheme the concurrence of fingerprints generated from related audio sequences has to be $1$ with a considerably high probability~\cite{schneider1996applied}.
Since we experienced a substantial difference in the audio-fingerprints created we consider the application of fuzzy cryptography schemes.
Note that a perfect match in fingerprints is unlikely since devices are spatially separated, not exactly synchronised and utilize possibly different audio hardware.

The proposed cryptographic protocol shall work completely ad-hoc with devices previously not known to each other.
For an eavesdropper in a different audio context it shall be computationally infeasible to use any intercepted data to decrypt a message or parts of it.
Apart from these requirements we want to choose the threshold for a working encrypted communication based on contextual conditions of different physical locations.

With fuzzy encryption schemes, we are able to overcome these challenges.
Generally, a secret $\varsigma$ is used to hide the key $\kappa$ in a set of possible keys $\mathcal{K}$ in such a way that only a similar secret $\varsigma'$ can find and decrypt the original key $\kappa$ correctly.
In our case, the secrets which are similar for all communicating devices in the same context are constituted by fingerprints generated from ambient audio.

We implemented a Fuzzy Commitment scheme with Reed-Solomon codes~\cite{Reed60polynomial}.
The following discussion provides a short introduction into these codes.

Given a set of possible words $\mathcal{A}$ of length $m$ and a set of possible codewords $\mathcal{C}$ of length $n$, Reed-Solomon codes $RS(q,m,n)$ are initialized as:
\begin{eqnarray}
     \mathcal{A}&=\mathcal{F}^m_{q},\\
     \mathcal{C}&=\mathcal{F}^n_{q},     
\end{eqnarray}
with $q=p^k, p \text{ prime}, k \in \mathcal{N}$.
These codes are mapping a word $a\in\mathcal{A}$ of length $m$ uniquely to a specific codeword $c\in \mathcal{C}$ of length $n$:
\begin{equation}
     a \xrightarrow{Encode} c,  
\end{equation}
This step adds redundancy to the original words with $n>m$, based on polynomials over Galois fields~\cite{Reed60polynomial}.

Decoding utilizes the error correction properties of the Reed-Solomon-based encoding function to account for differences in the fingerprints created.
The decoding function maps a set of codewords from one group $C=\{c,c',c'',\dots \} \subset \mathcal{C}$ to one single original word. 
It is 
\begin{eqnarray}
  \tilde{c} \xrightarrow{Decode} a\in \mathcal{A}.
\end{eqnarray}
The value
\begin{eqnarray}
     t=\left\lfloor \frac{n-m}{2} \right\rfloor
\end{eqnarray}
defines the threshold for the maximum number of bits between codewords that can be corrected in this manner to decode correctly to the same word $a$~\cite{Juels99fuzzycommitment}.
In the algorithm implemented, the fingerprints $f$ and $f'$ are used in conjunction with codewords to make use of this error correction procedure.
Dependent on the noise in the created fingerprints, $t$ can then be chosen arbitrarily.

Our current fingerprint extraction scheme, which is based on energy in frequency domain, can deal with the difference in amplitude values.
However, when the audio chunks are shifted in time, for example, since the ambient sounds are not recorded at the same time, the audio fingerprints misalign so that we can not create the common key.
In our experiments, we observed this effect.
In particular, although the NTP-synchronised devices intend to start their recordings at the same time, we observed significant differences that exceed the time offset due to NTP in accuracy by several milliseconds.
This effect can be observed in figure~\ref{figurebeforeSync_DiffDevices}.
\begin{figure}\centering
     \includegraphics[width=9.5cm,height=6cm]{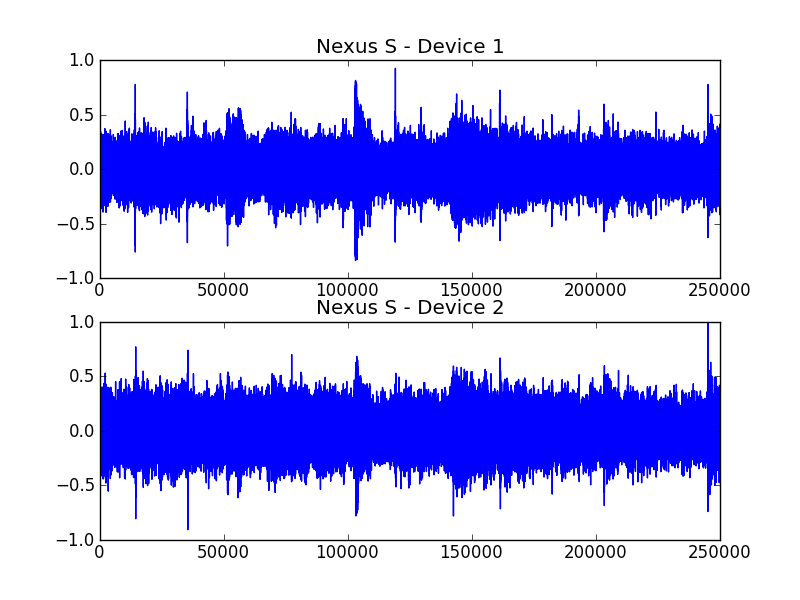}
     \caption{Audio files recorded on two Samsung Nexus S mobile phones. {\scriptsize (1550-4816/12 \$26.00 \copyright 2012 IEEE)}}
     \label{figurebeforeSync_SameDevices}
\end{figure}
\begin{figure}
\centering
     \includegraphics[width=9.5cm,height=6cm]{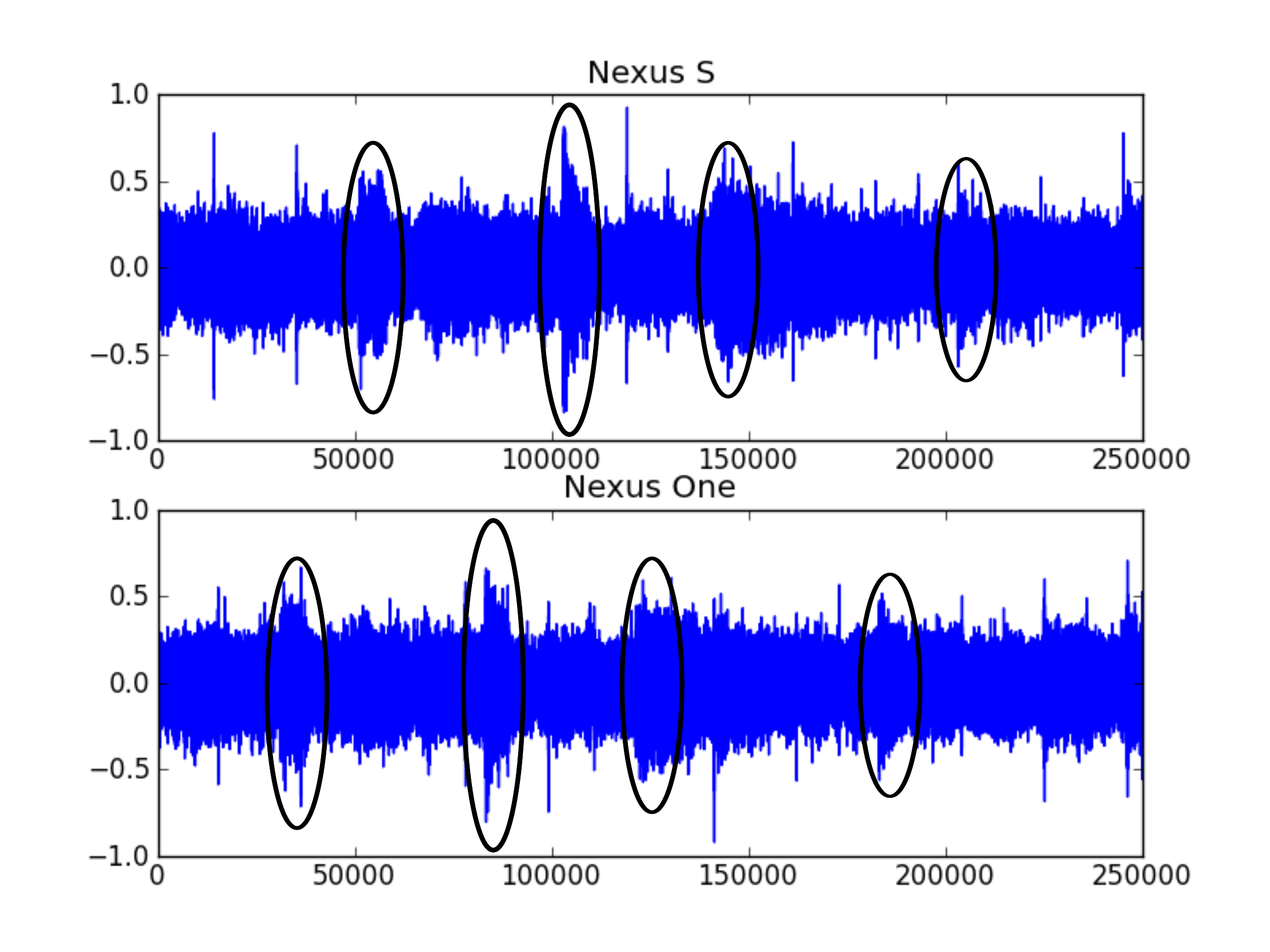}
     \caption{Misalignment of audio files from difference mobile phones. The above is a Samsung Nexus S and the below is a HTC Nexus One. {\scriptsize (1550-4816/12 \$26.00 \copyright 2012 IEEE)}}
     \label{figurebeforeSync_DiffDevices}
\end{figure}

\subsection{Pattern-based alignment of audio data}\label{sectionProblemsAddressed}
When developing the scheme of unobtrusive secure device pairing with audio fingerprints for Android-based mobile devices, we encountered practical issues not evident when considering the problem theoretically.
One issue in practical implementation is differing audio hardware.
For instance, the Samsung Google Nexus~S\footnote{Nexus S: \url{http://www.google.com/nexus/tech-specs.html}} and HTC Google Nexus~One\footnote{Nexus One: \url{http://www.google.com/phone/detail/nexus-one}} devices we utilized apply different audio-pre-processing routines that render the unprocessed audio outputs on these devices unusable for the generation of identical fingerprints.
Furthermore, time synchronisation is a serious problem for the approach.
In particular, not only the clocks on remote devices have to be synchronised as usual for distributed devices, but also the generally unknown and possibly non-constant hardware specific delays on both devices need to be taken into account.

\subsubsection{Misalignment of audio data}
Due to the differences in software and hardware of the devices, the recorded audio sequences are not exactly the same.
An example of this phenomenon is shown in figure~\ref{figureProblems}.
All audio files are recorded with the same settings and at the same time (clocks synchronised by a NTP service\footnote{Navy Clock II application: \url{https://market.android.com/details?id=com.cognition.navyclock}}).
Basing on some similar visual appearances in the waveform format of the signals in figure~\ref{figureProblems}, the recording start time of the Nexus One was heavily delayed when compared to the Nexus S.
In our recorded audio files, time difference values can be from 0.3 second to more than 1 second.
The time difference in some pairs of audio files is shown in Table~\ref{tableOriginalTimeDiff}.

\begin{table}
	\caption{Time Difference before Pattern-based Alignment. {\scriptsize (1550-4816/12 \$26.00 \copyright 2012 IEEE)}}
	\label{tableOriginalTimeDiff}
	\begin{center}
		\begin{tabular}{| c | c |}
		\hline
		Audio file pair & Time difference [seconds]\\
		\hline
		1 & 0.320 \\
		\hline
		2 & 0.849\\
		\hline
		3 & 0.740\\
		\hline
		4 & 0.459\\
		\hline
		5 & 1.341\\
		\hline
		6 & 0.450\\
		\hline
		7 & 0.765\\
		\hline
		8 & 0.562\\
		\hline
		9 & 0.744\\
		\hline
		10 & 0.375\\
		\hline
		\end{tabular}
	\end{center}
\end{table}

We observe that the offset when both recordings start is fluctuating and in all cases much higher than the accuracy expected from NTP synchronisation.
\begin{figure}
     \includegraphics[width=17cm]{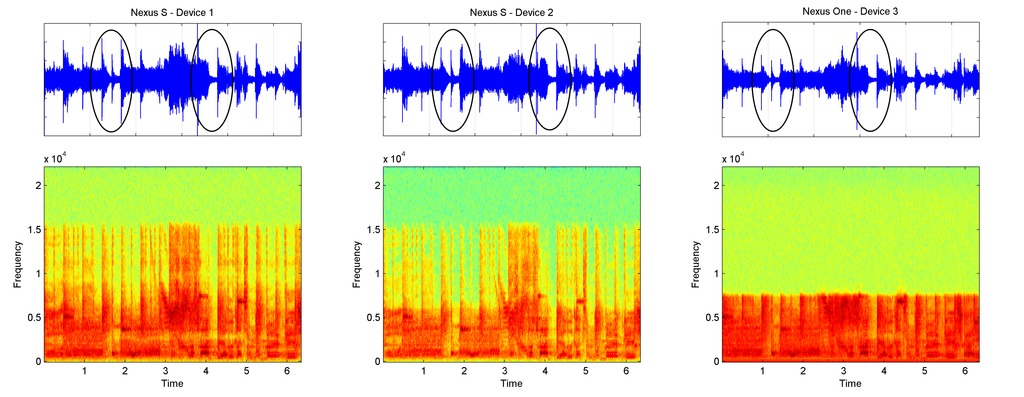}
     \caption{Waveform format (upper) and spectrogram (lower) representation of audio recordings from three devices. For the Nexus One, the hardware noise cancellation cuts higher parts of the frequency spectrum. The recording of the Nexus One is delayed comparing to the two Nexus S devices which are tightly synchronised. {\scriptsize (1550-4816/12 \$26.00 \copyright 2012 IEEE)}}
     \label{figureProblems}
\end{figure}
Additionally, we observed from the figure that clearly, the higher frequency bands of the signal available at the Nexus One device completely differ from that of Nexus S readings because the Nexus~One employs a hardware noise cancellation. 
There is no way to bridge the noise cancellation on that device to obtain the unmodified signal.
Both these effects are unfortunate for our fingerprinting method. 
\subsubsection{Aligning recorded audio data}\label{sectionAlignment}
One important condition for our secure device pairing scheme is that no information regarding the recorded audio shall be exchanged between the devices. 
Otherwise, the security of the key might be impaired by information leaking from these transmissions.
We propose a synchronization scheme to reduce the mismatch between audio data from neighbouring devices.
Our solution is based on the Smith-Waterman algorithm~\cite{SmithWatermanAlg}, an approximative pattern matching technique.
In the algorithm, two strings are compared for the best approximative pattern among them, as depicted in figure~\ref{figurePatternMatching}.
\begin{figure}
\centering
      \includegraphics[width=10cm]{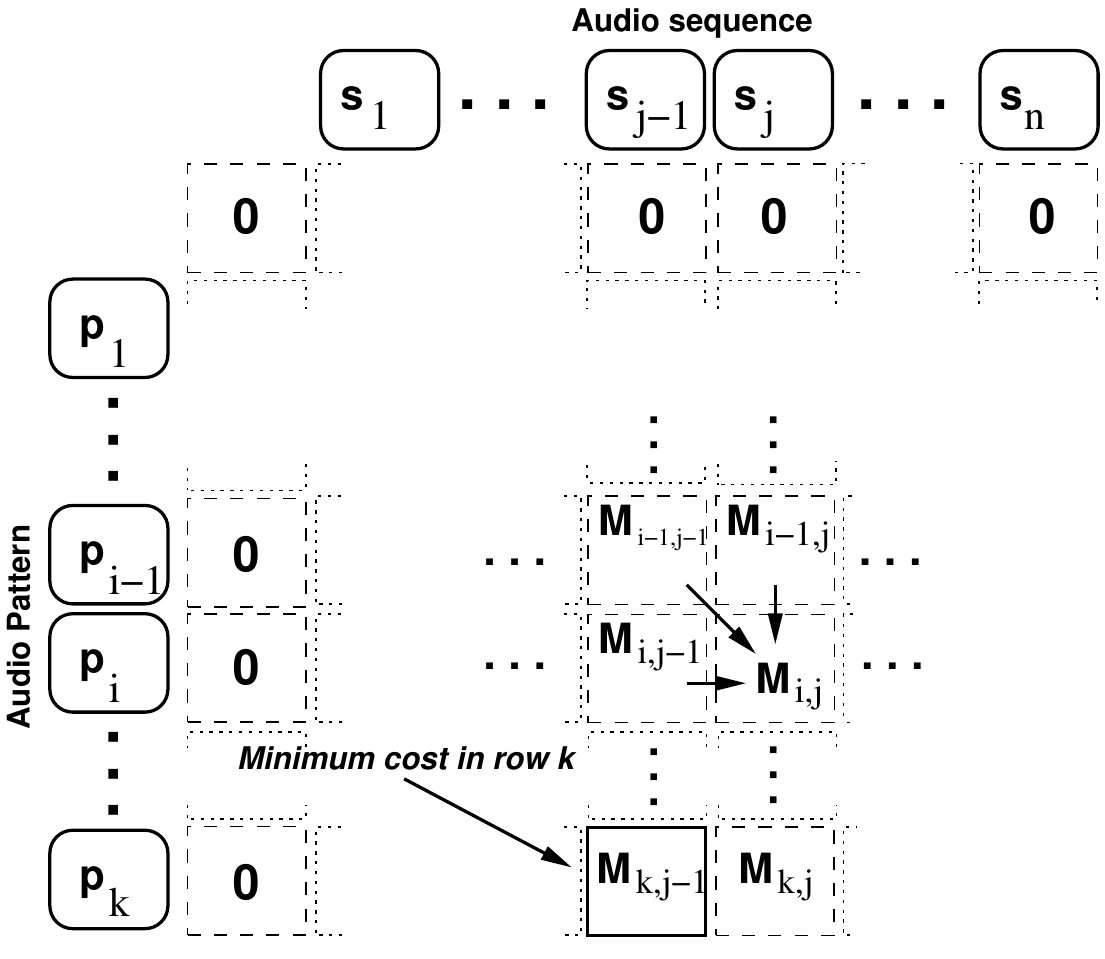}
     \caption{Schematic illustration of the Smith-Waterman pattern matching {\scriptsize (1550-4816/12 \$26.00 \copyright 2012 IEEE)}}
     \label{figurePatternMatching}
\end{figure}
The figure illustrates the operation to find a sub-sequence similar to a pattern $p=p_1\dots p_k$ in a longer sequence $s=s_1\dots s_n$.
The matrix is filled from left to right and from top to bottom.
Each matrix entry $M_{ij}$ contains the minimum cost to align $p_1\dots p_i$ with $s_1\dots s_j$.
It is obtained iteratively from the cost of the cells $M_{i-1,j-1},M_{i-1,j},M_{i,j-1}$ by choosing the minimum cost for an extension of any of these three alignments (extension of both strings, extension of only the first string or extension of only the second string).
The first row is initialised with $0$ to allow the pattern to start at any position within $s$.
The careful choice of the gap-penalty $g$ is crucial for the approach to produce good matching results.
The best matching string is obtained by backtracking the smallest entry in the last row.

Since the matching is approximative, we will always find a best matching position even though the absolute similarity of this very position might not be high.
We exploit this property to be able to resign from any information transmitted among devices on the actual recorded audio.
In particular, we utilize a predefined, characteristic pattern on both devices.
since the pattern is known in advance, no information need to be transmitted.
The downside to this implementation is, of course, that the pattern utilised might be very different from the actual audio recorded.
However, since the Smith-Waterman algorithm always computes a best matching, we can speculate that this best matching is found at similar positions in the audio recordings, provided that the data has significant similarity.
The best matching, although it might not be a good matching in absolute terms, is likely to be found at a similar position in the recorded audio of the remote devices.
The specific pattern used for matching is extracted randomly from consecutive samples of an arbitrary audio sequence. 
In our experiments, its length is 100 samples (longer patterns increase running time of the algorithm).
The pattern $p$ is matched in the first $100000$-sample part of each audio file. 
The matching score of $p$ and a local part $l$ of $s$ is the difference between amplitude values in $p$ and $l$. 
The less the score, the more similar $p$ and $l$ are. 
According to our experiments, the gap penalty that can yield an acceptable matching is $150$. 
Then, we eliminate all samples preceding the matching positions and generate the audio fingerprints from the remaining ones.

It turned out that we achieve only seldom a perfect synchronizing result among two best matching points on both devices. 
We therefore calculated a set of $k$ best matching parts.
We assume that one device chooses the best matching locations, encodes a data sequence with the resulting keys and transmits this to the second device. 
This receiver device then attempts to decode the data with the keys generated from its best matching alignment points.
This process might require the transmit device to transmit a data sequence several times, encrypted with different keys each time.
Although this increases the transmit load, this process could be implemented in an iterative fashion and is required only for the first encounter of the devices to derive the secret key.
Since all data blocks are encrypted, only marginal additional information is provided to a potential adversary.
%\subsection{Restriction of the frequency spectrum}

\subsection{Experimental results}
\label{sectionResultsAF01}
\label{sectionInstrumentation}
We utilize the Android-based mobile phones Samsung Google Nexus~S and HTC Google Nexus~One.
The Nexus One smart-phone has a secondary microphone dedicated for dynamic noise suppression, while the Nexus S devices utilise software noise cancellation.
The ambient audio data are recorded for $6375$ milliseconds at a sampling rate of $44100~Hz$.
In our experiments, we attempt to generate a secure key among a Nexus~S and a Nexus~One device after applying a bandpass filter to the recorded data and aligning audio sequences with the pattern matching approach.
The smart-phones are put at the same distance $d$ from an audio source and we increase the distance between two devices from $10 cm$ to $100 cm$.
With each distance, we record 10 audio files on each phone.
Before matching with the arbitrary pattern, we perform a bandpass filtering which only retains signals whose frequency is between $4000 Hz$ and $4500 Hz$.

\subsection{Key generation without sequence alignment}
\label{sectionCaseStudySE02}
We conducted a case study with two Nexus~S and one Nexus~One device to study the alignment of  audio data at various inter-device distances.
The bit-errors in fingerprints were not related to the distance to an audio source or the loudness of the sound.
Therefore, we consider the deterioration in the similarity of fingerprints while the distance among recording devices is gradually increased.
We are interested in the Hamming distance between the generated fingerprints. 

As detailed in section~\ref{sectionExtractingAudioFingerprints}, we can use error correcting codes (ECC) to account for the Hamming distance in fingerprints and generate an identical cryptographic key at devices independently without communication, provided that the Hamming distance is significantly smaller than 50\%\ of the sequence length (which would be the Hamming distance to a random binary sequence).
We implemented the fingerprinting method based on energy differences as described above. 

Devices were placed in an angle of $22.5^\circ$ and $-22.5^\circ$ to an audio source and the distance between devices was altered gradually.
In several sets of experiments, we increased the distance among devices and to the audio source while keeping the angle to the audio source fixed.
For each distance the experiment was repeated at least 10 times.

\begin{figure}
\centering
     \includegraphics[width=12cm]{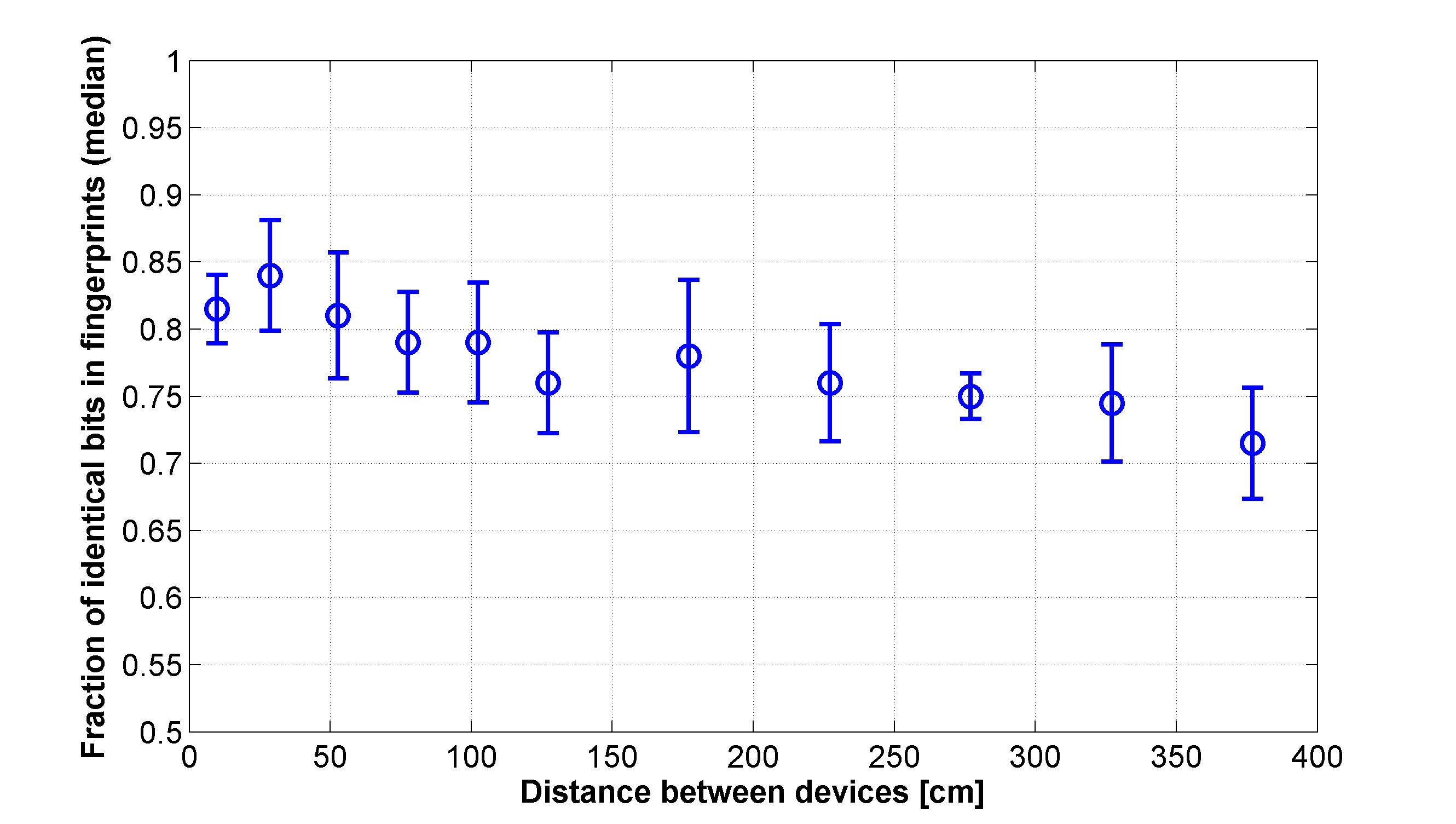}
     \caption{Median fraction of identical bits in fingerprints created from synchronised audio recordings at two Nexus S devices {\scriptsize (1550-4816/12 \$26.00 \copyright 2012 IEEE)}}
     \label{figureResults}
\end{figure}

The figure shows the median fraction of identical bits in fingerprints created by the devices.
Each point was created from 10-12 separate experiments with identical environmental conditions.
The method achieves an accuracy of about $0.75$ to $0.85$ in all tests.
In particular, the Hamming distance in fingerprints only slightly decreases with increasing distance.
This demonstrates that a good synchronisation is feasible on mobile devices.
Also, devices in about 2~m distance can be considered to be in the same security aura for this environment.
Devices farther away can be excluded by properly configuring the error correction threshold of the ECC utilized.

In particular, when the ECC is configured to correct up to about 17\%\ of the bits in a fingerprint, devices in more than 60~cm distance would only occasionally be able to successfully generate the same key.
For devices in more than 2 meters distance, it is highly unlikely that they generate a matching key.
Observe that, for a 512 bit key, although a Hamming distance of 1\%\ of all bits translates to only about 32 different possibilities, the positions of the respective 5 bits in the 512 bits are not known.
This leaves a total of 
\begin{equation}
     \left(
	  \begin{array}{c}
	       5\\
	       512
     \end{array}\right)\cdot 32 > 9\cdot10^{12}
\end{equation}
different possibilities for an attacker to reduce the Hamming distance further by only 1\%.

For the Nexus~One, only about 50\%\ of the bits were identical in all cases, regardless of the distance among devices and the specific Nexus~S device used for pairing.
Since this is the similarity, we would expect for a random guess of the fingerprint, we conclude that a secure key can not be derived with this approach from these devices which differ in hardware and software.

Figure~\ref{Sync_Receiver3-10} depicts the median time difference after alignment when increasing the number of trials in the transmitter. The receiver  attempts to decode the data with its sequences from the top 3 and 10 matching positions.

We can achieve a synchronisation in the order of $10$ milliseconds already with 10 trials at the transmitter and only 3 trials at the receiver (cf. figure~\ref{Sync_Receiver3-10}).

Moreover, when the receiver tries 10 times, we observe that, already with the best matching trial at the transmitter, the matching is greatly improved compared to time difference between misaligned audio sequences in table~\ref{tableOriginalTimeDiff}.
With 3 trials of the transmitter, the synchronisation time is accurate enough for our secret key generation approach.
Since the alignment approach calculates all possible alignments at once, the cost for additional trials at the receiver is low.
Additional trials at the transmitter, however, directly impact the communication load.

\begin{figure}
\centering
     \includegraphics[width=12cm]{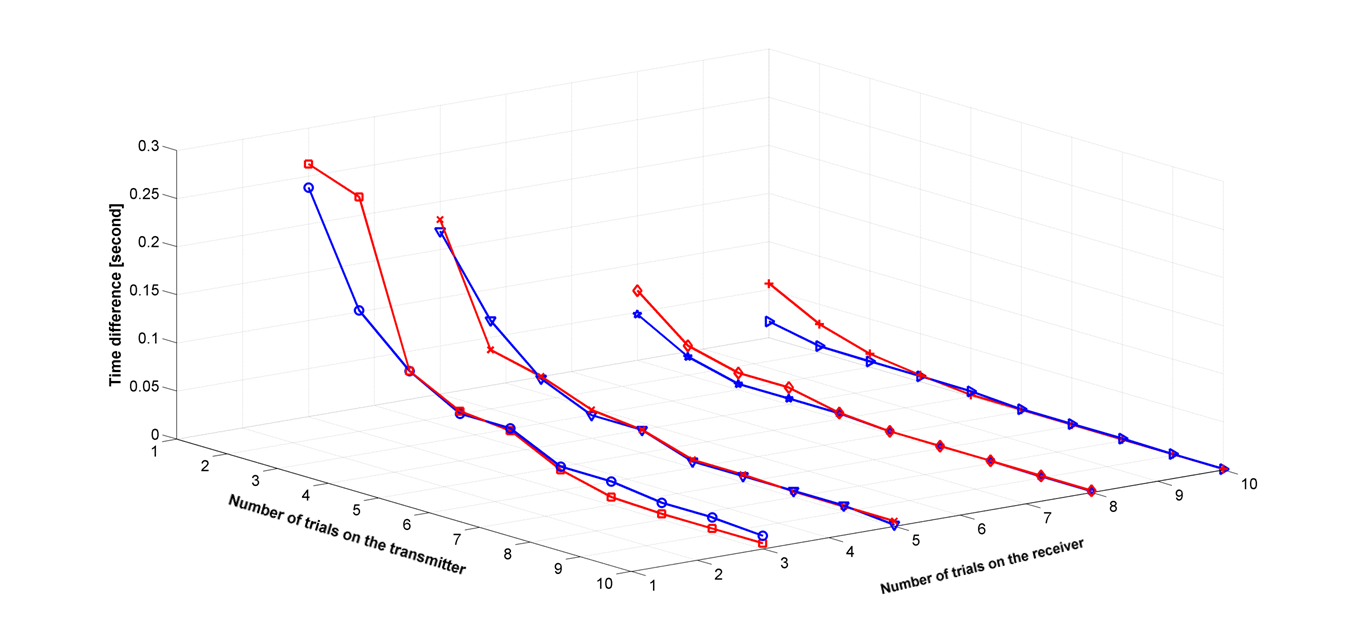}
     \caption{Time difference of the audio after alignment with increasing number of trials on the transmitter. The number of trials on the receiver is respectively 3 and 10. {\scriptsize (1550-4816/12 \$26.00 \copyright 2012 IEEE)}}
     \label{Sync_Receiver3-10}
\end{figure}

\begin{figure}
\centering
     \includegraphics[width=12cm]{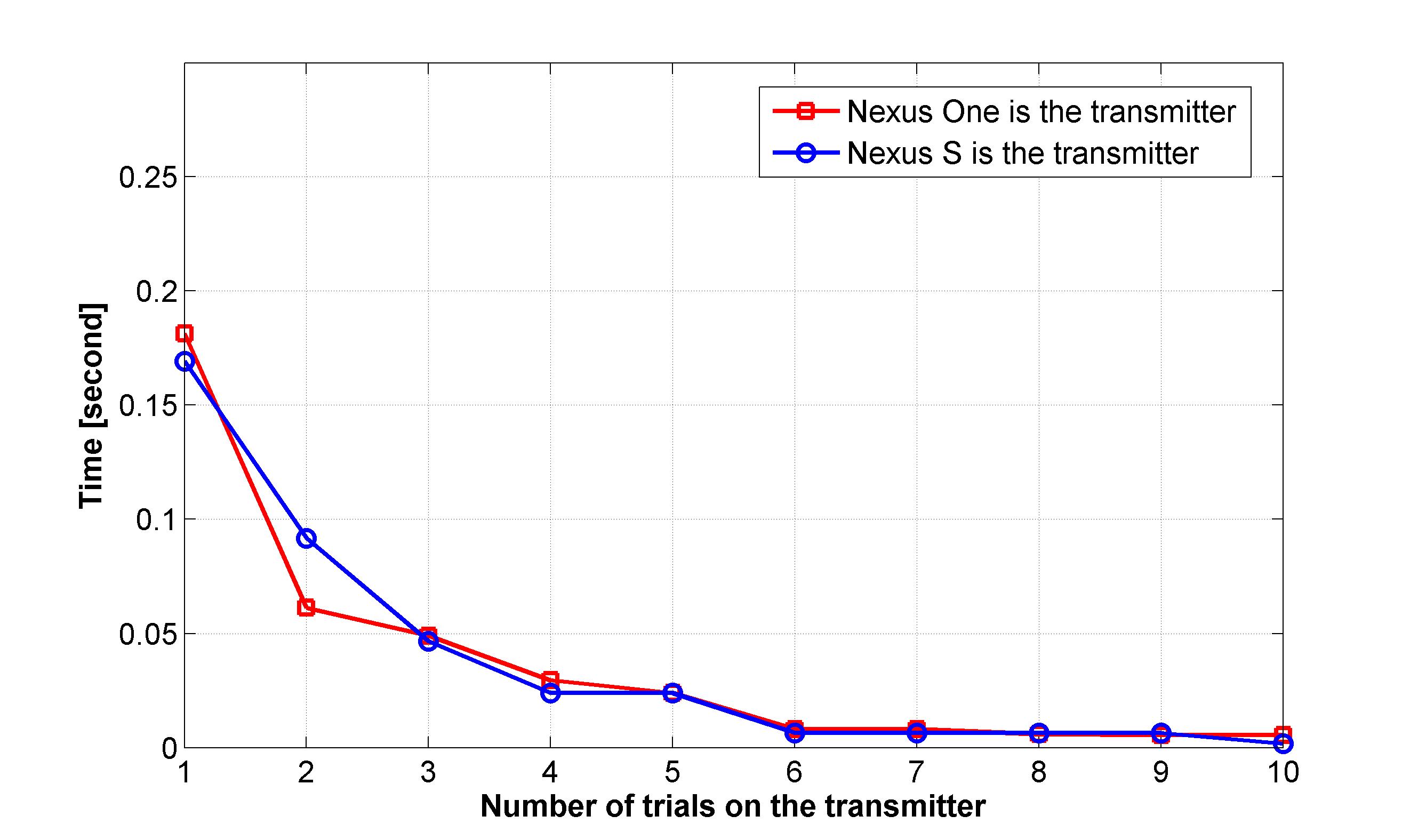}
     \caption{Time offset of the audio after synchronisation with increasing number of trials on the transmitter.
	       The number of trials on the receiver is always 5. {\scriptsize (1550-4816/12 \$26.00 \copyright 2012 IEEE)}}
     \label{Sync_Receiver5}
\end{figure}
\begin{figure}
\centering
     \includegraphics[width=12cm]{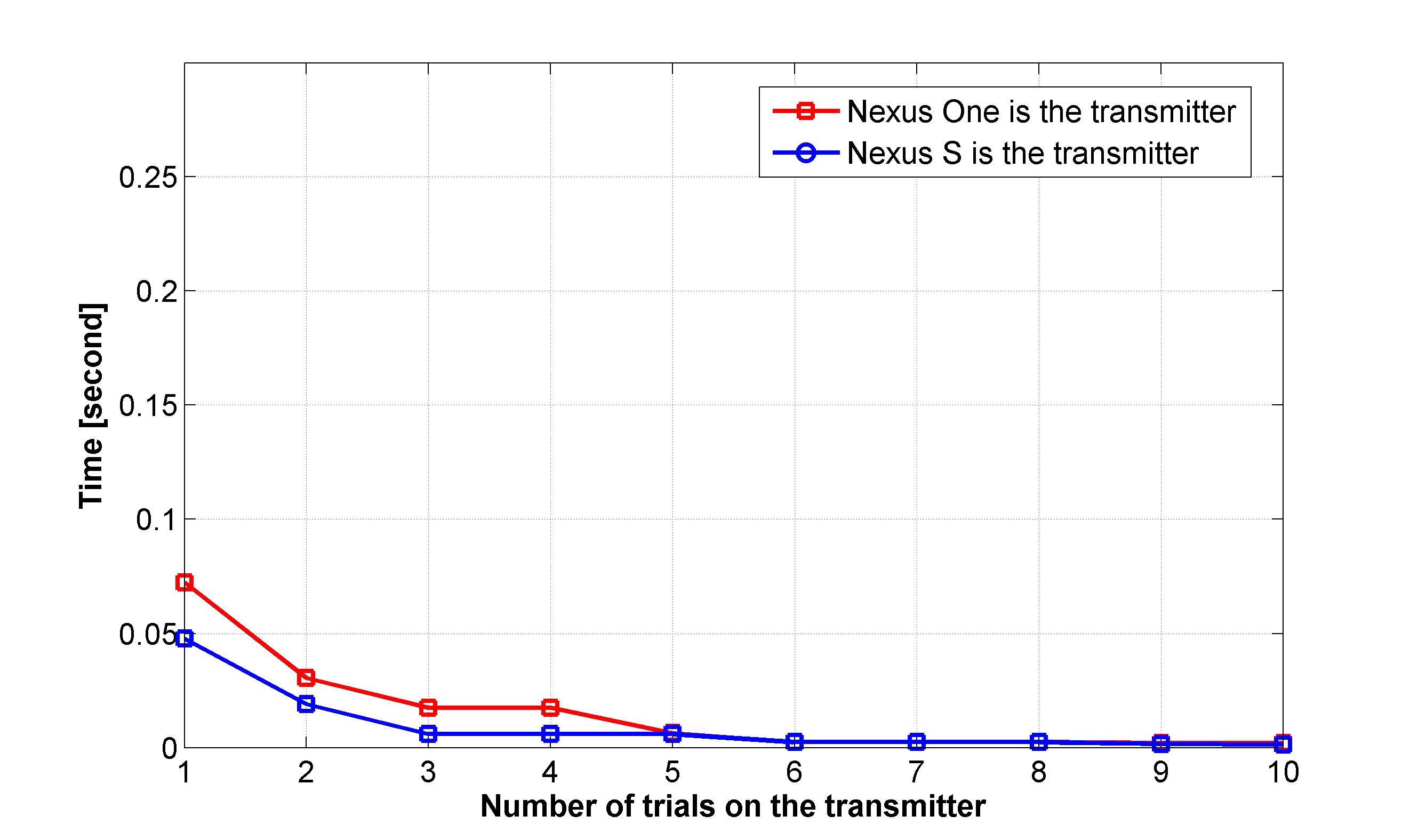}
     \caption{Time offset of the audio after synchronisation with increasing number of trials on the transmitter.
	       The number of trials on the receiver is always 8. {\scriptsize (1550-4816/12 \$26.00 \copyright 2012 IEEE)}}
     \label{Sync_Receiver8}
\end{figure}

\begin{figure}
\centering
     \includegraphics[width=12cm]{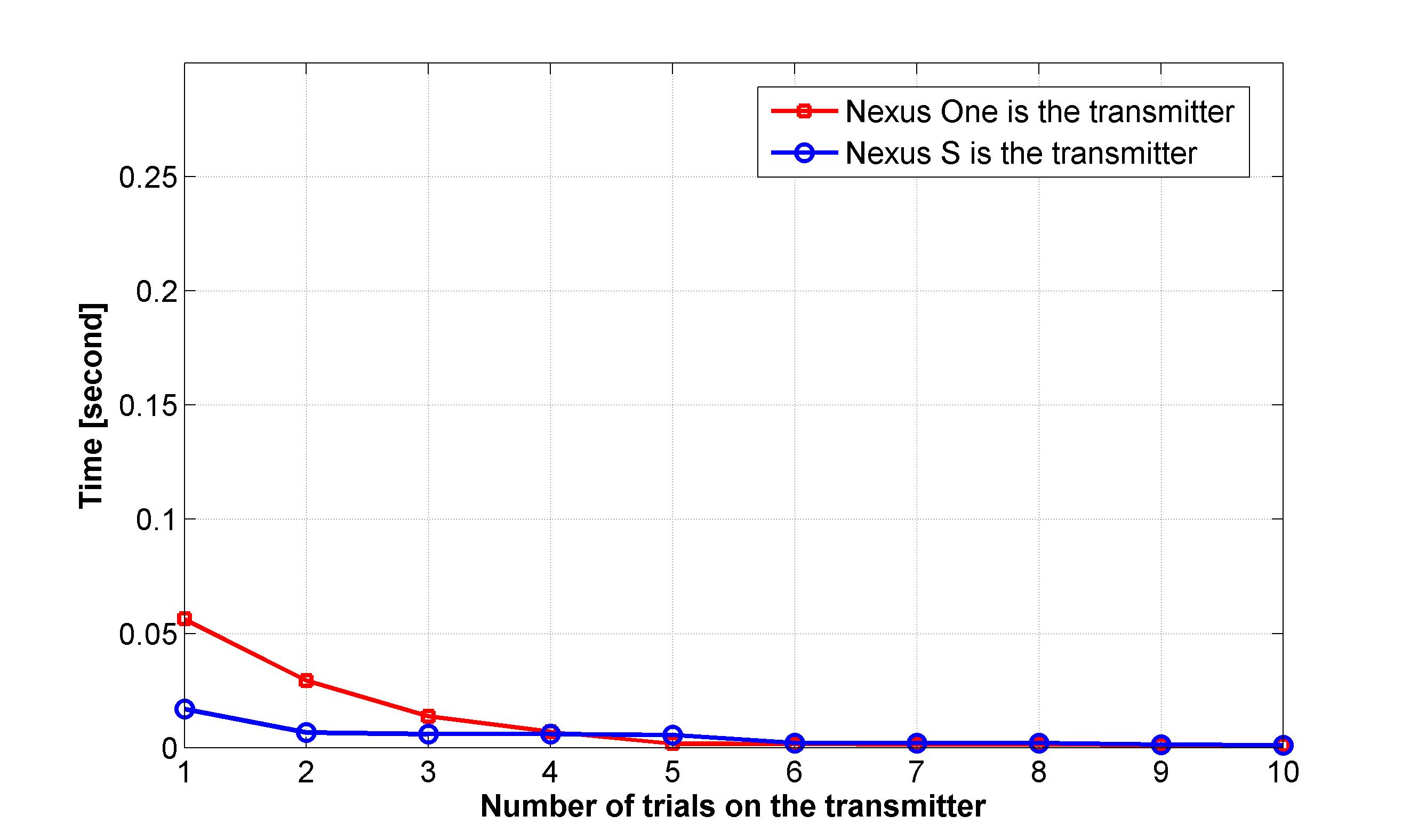}
     \caption{Time offset of the audio after synchronisation with increasing number of trials on the transmitter. 
	       The number of trials on the receiver is always 10. {\scriptsize (1550-4816/12 \$26.00 \copyright 2012 IEEE)}}
     \label{Sync_Receiver10}
\end{figure}

We conclude, that the alignment approach greatly improves the synchronisation of audio recordings on remote devices. 
This can be seen from figure~\ref{AudioFingerprint_BeforeAfterSync}.
The figure depicts the fraction of identical bits among fingerprints before and after the alignment matching was applied.
While the fingerprint similarity only marginally deviates from the similarity to a random sequence, it is greatly improved after the alignment matching was applied.
The fraction of identical bits in the audio fingerprints decreases when inter-device distance increases.
In particular, also the variance in the data is reduced.
These characteristics allow for a sharper threshold of the error correcting code.

We also extracted the pattern from one file in each misaligned audio file pair and found it in the other.
The quality of the audio fingerprints is not much higher than with the arbitrary patterns, even with the best results of matching. The comparison is shown in figure~\ref{AudioFingerprint_BeforeAfterSync}.
Moreover, in this approach, there is data transmission between the devices.
\begin{figure}
\centering
     \includegraphics[width=12cm]{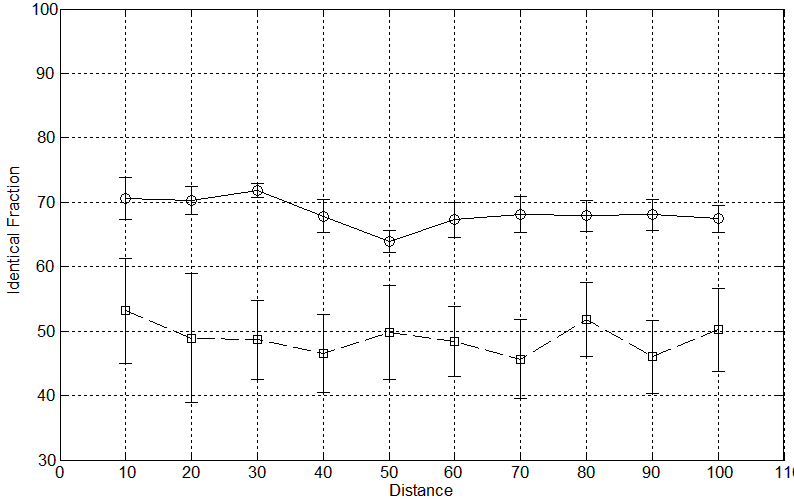}
     \caption{Identical fraction in bits of audio fingerprints created on Nexus~S and Nexus~One devices when the inter-device distance increases. {\scriptsize (1550-4816/12 \$26.00 \copyright 2012 IEEE)}}
     \label{AudioFingerprint_BeforeAfterSync}
\end{figure}
\subsection{Conclusion}
\label{sectionConclusionAF01}
When implementing an audio-based ad-hoc secure device pairing mechanism for previously unacquainted mobile devices, the diversity of hardware and software can affect the offset in audio recordings of even clock-synchronised mobile devices. 
We propose an approximative pattern matching algorithm to align the corresponding audio without communication between the devices.
The devices synchronise their audio sequences without any knowledge about the recorded audio on the remote device other than their own recorded contextual information.
Hence, no information about the audio utilized as a seed for the secure key generation, can leak. 
To improve the alignment quality, we can choose more than one matching position on each device at the cost of increasing the communication load for one device. 
We can obtained a synchronization among devices of less than $2$ milliseconds when both devices utilize up to 10 trials.
With moderate additional communication load, a synchronisation in the order of $10$ milliseconds is reasonable. 

\chapter{Discussion}
\section{Feedback based closed-loop carrier synchronisation: A sharp asymptotic bound, an asymptotically optimal approach, simulations and experiments}
Solutions proposed for carrier synchronisation comprise open-loop synchronisation methods such as round-trip synchronisation \cite{5931,5932,5933}.
In this scheme, the destination transmits beacons in opposed directions along a multi-hop circle in which each of the nodes appends its part of the overall message to the beacons.
Beamforming is achieved when the processing time along the multi-hop chain is identical in both directions.
This approach, however, does not scale with the size of a network.

Closed loop feedback based approaches include full-feedback techniques, in which carrier synchronisation is achieved in a master-slave manner.
The phase-offset among the carrier signals of destination nodes is corrected by a receiver node.
Diversity between RF-transmit signal components is achieved over CDMA channels \cite{5934}.
This approach is applicable only to small network sizes and requires sophisticated processing capabilities at the source nodes.

A more simple and less resource demanding implementation is the one-bit feedback based closed-loop synchronisation considered in \cite{5934,5920}.
The authors describe an iterative process in which $n$ source nodes $i\in[1,\dots,n]$ randomly adapt the phases $\gamma_i$ of their carrier signal $\Re\left(m(t)e^{j(2\pi (f_c+f_i)t+\gamma_i)}\right)$.
Here, $m(t)$ is the transmit message and $f_i$ denotes the frequency offset of node $i$ to a common carrier frequency $f_c$. 
Initially, i.i.d. phase offsets $\gamma_i$ of carrier signals are assumed.
When a receiver requests a transmission from the network, carrier phases are synchronised in an iterative process.

Our study of binary-feedback-based distributed adaptive transmit beamforming derived for the first time a concise theoretical understanding of this carrier synchronisation approach.
The potential to achieve carrier synchronisation by this iterative method had been reported in~\cite{5919} and brief estimates on its performance had been done based on empirical studies, for instance in~\cite{5918}.

The main breakthrough towards the estimation of the optimisation time has been done in preliminary work~\cite{4023,4024,4025} where the model was already described as evolutionary random optimisation. 
In previous empirical studies, the implemented algorithm frequently employed a normal random distribution for the mutation of transmit phases and a mutation probability of~$1$.
The upper and lower bounds have then been derived with standard tools for the analysis of simple evolutionary algorithms (assuming standard bit mutation and no crossover).
Namely, we utilised the method of the expected progress in order to derive the lower bound and the method of fitness-based partitions for the upper bound.

Based on the understanding of the problem domain, it was then also possible to derive further algorithmic improvements on the optimisation time.
Two examples for this are included in the publication, namely the hierarchical clustering of transmit nodes and the asymptotically optimal optimisation method. 
The former exploits the fact that the synchronisation time grows more than linearly with the number of nodes that participate in the synchronisation. 
Therefore, hierarchically clustering node sets and finding synchronisation for these smaller sets can reduce the overall synchronisation time. 

The asymptotically optimal approach exploits that, after random initialisation, with high probability, already more than half of all nodes are well synchronised to each other.
It is then sufficient to alter the phase of each node in the corresponding direction. 
This is done with a minimum number of transmissions by solving an equation with unknowns of the underlying fitness function.

Summarising, this publication solves the main open issues regarding binary-feedback-based distributed carrier synchronisation and, with the concise mathematical description of the underlying problem space, provides tools to design optimal beamformers based on randomly distributed phase perturbations. 

\section{A fast binary feedback-based distributed adaptive carrier synchronisation for transmission among clusters of disconnected IoT nodes in smart spaces}
The above studied closed-loop approach is computationally cheap at the cost of increasing the time required for carrier synchronisation. 
It utilises a binary feedback on the achieved synchronisation quality that is transmitted in each iteration from a remote receiver~\cite{Mudumbai_2009,5920}.
In particular, such binary feedback can be implemented by a simple on/off burst scheme also for sharply resource restricted devices.

The central optimisation procedure consists of $n$ devices $i\in[1,\dots,n]$ randomly altering the phases $\gamma_i$ of their carrier signal $\zeta_i(t)$
in each iteration.
Implicitly, with this process a global random search is implemented.
The search space $\mathcal{S}$ is spanned by all possible combinations of carrier frequencies and carrier phase offsets for all transmit nodes.

Each specific phase-frequency combination $s\in\mathcal{S}$ is associated with a score $\mathcal{F}_{sc}:\mathcal{S}\rightarrow\mathds{R}^+_0$ that denotes its synchronisation quality.
Without loss of generality we assume that the optimisation aim is to maximise $\mathcal{F}_{sc}$.
A natural choice to compute such a score value is, for instance, the Signal-to-Noise-Ratio (SNR) of the received sum signal.

Intuitively, each node may in one iteration alter its transmit carrier phase offset, superimpose a synchronisation signal simultaneously with all other nodes and receive a binary feedback on the quality of the synchronisation.
These iterations are repeated until a random distribution of carrier phases is achieved that scores a sufficient synchronisation quality~\cite{Seo_2008,Mudumbai_2010b,4024}.
Initially, independent and identically distributed (i.i.d.) phase offsets $\gamma_i$ of carrier signals are assumed.
Since a decreasing signal quality is not accepted, and since a global random search is implemented by this approach (every possible combination of carrier phase offsets of nodes has a positive probability in each iteration) the method eventually converges to the optimum with probability~1~\cite{Mudumbai_2010b}. 
For this result an idealised environment without noise and interference was considered.
In a realistic environment, the impact of the noise figure determines the accuracy that can be achieved.

In~\cite{CarrierSynchronisation_2012_Rahman}, an implementation of this carrier synchronisation approach was presented for software defined radio (SDR) devices which does not rely on any wired connections between devices (for instance, for clock synchronisation of the SDR nodes). 

The authors of~\cite{Seo_2008} then demonstrated in a case study that the method is feasible to synchronise frequency as well as phase of carrier signal components. 

In all previous studies, a global random search is considered, in which nodes choose their next carrier phase and frequency offset uniformly at random from all possible values.
However, this global random optimisation approach comparably slow in the optimisation process.

While an asymptotically optimal algorithm has been derived in~\cite{4022}, this requires more than binary feedback. 
In particular, function values of the feedback function have to be transmitted over the wireless channel. 
Assuming that the nodes seek for a common phase for transmission in order to establish a minimum Signal-To-Noise Ratio (SNR) at the receiver, this might be critical.
Binary feedback can be encoded as simple burst-protocol (transmission vs. no transmission) which can be read out also at very low SNR. 
More advanced protocols capable of transmitting additional information, however, also require an increased SNR.

Since the search space does not contain local optima~\cite{4023} we restrict the search neighbourhood to reduce the number of possible next  configurations in one iteration that would worsen the synchronisation quality.
We propose to modify the algorithm to follow a local random search instead of the previously applied global random search mechanism.
In particular, a node~$i$ will, when it changes its phase and frequency offset, draw the new values from a restricted neighbourhood of size~$\mathcal{N}$ that is centred around the current values of $\gamma_i$ (and $f_i$).     
This addresses a recent critique expressed in~\cite{CarrierSynchronisation_2012_Mudumbai} regarding the convergence speed for this binary feedback-based iterative adaptive carrier synchronisation.

The derived sharp asymptotic bounds on the expected optimisation time are again improved compared to the earlier studied global random search method.

Another question which was motivated by these studies on distributed adaptive transmit beamforming was on the impact of environmental effects.
In our case studies, we had observed that the performance of the binary feedback-based carrier synchronisation approach was perceptive to presence and movement of individuals.
In a first step, this motivated a series of studies on an adaptive learning approach for binary-feedback-based distributed adaptive beamforming in which the parameters of the optimisation algorithm -- namely the mutation probability and the variance of the normal distribution on the neighbourhood function -- were adapted according to the environment~\cite{OrganicComputing_Sigg_2011}.

These considerations let to our work on RF-based device-free recognition systems (cf. section~\ref{sectionOriginalRF01}, section~\ref{sectionOriginalRF02} and section~\ref{sectionOriginalRF03}) in which we considered the inmpact of different environmental conditions on the evolution of received RF-signals.

\section{RF-sensing of activities from non-cooperative subjects in device-free recognition systems using ambient and local signals}
In the approaching Internet of Things (IoT), virtually all entities in our environment will be enhanced by sensing, communication and computational capabilities~\cite{IoT_Li_2012,IoT_Haller_2010}.
These entities will provide information on environmental situations, interact in the computation and processing of data~\cite{FunctionComputation_Sigg_2012} and store information.
In order to sense environmental situations, common sensors in current applications are light, movement, pressure, audio or temperature~\cite{5840}.
Clearly, for reasons of cost and sensor size it is desired to minimise the count of distinct sensors in IoT entities.
The one sensor class that defines the minimum set naturally available in virtually all IoT devices is the Radio Frequency (RF)-transceiver to communicate with other wireless entities. 
It is also shipped with nearly every contemporary electronic device like mobile phones, notebooks, media players, printers as well as keyboards, mouses, watches, shoes and rumour has spread about even media cups. 
Therefore, the RF transceiver is a ubiquitously available sensor class.
It is capable of sensing changes or fluctuation in a received RF-signal.
Radio waves are blocked, reflected or scattered at objects.
At a receiver, the signal components from distinct signal paths add up to form a superimposition.
When objects that block or reflect the signal path of some of these signal components are moved, this is reflected in the superimposition of signal waves at the receiver.
We assert that specific activities in the proximity of a receiver generate characteristic patterns in the received superimposed RF-signal.
By identifying and interpreting these patterns, it is possible to detect activities of non-cooperating subjects in an RF-receiver's proximity.

In the context of indoor localisation, Youssef defines this approach as Device-Free Localisation (DFL) in~\cite{Pervasive_Youssef_2007} to localise or track a person using RF-Signals while the entity monitored is not required to carry an active transmitter or receiver.
They localised individuals by exchanging packets between 802.11b nodes in corners of a room and analysed the moving average and its variance of the RSSI~\cite{Pervasive_Youssef_2007}.
A passive radio map was constructed offline before a Bayesian-based inference algorithm estimated the most probable location.
These experiments have been conducted under Line-of-Sight (LoS) conditions.
Also, Wilson and Patwari showed in conformance with the findings of Kosba et al.~\cite{Pervasive_Kosba_2012b} that the variance of the RSSI can be used as an indicator of motion of non-actively transmitting individuals regardless of the average path loss that occurs due to dense walls and stationary objects~\cite{RFSensing_Wilson_2009}.
The area in which environmental changes impact signal characteristics was then considered by Zhang et al.
They used 870~MHz nodes arranged in a grid to show that for each link an elliptical area of about $0.5$ to 1 meters diameter exists for which RSSI fluctuation caused by an object traversing this area exceeds measurements in a static environment~\cite{RFSensing_Zhang_2009}.
They identified a valid region for detecting the impact (i.e. the RSSI fluctuations exceeding the measured threshold in a static environment) for transceiver distances from 2~m to 5~m for the considered 870~MHz frequency range~\cite{RFSensing_Zhang_2011}.
By dividing a room into hexagonal cell-clusters with measurements following a TDMA scheduling, an object position could be derived with an accuracy of around 1~meter.
This accuracy was further improved by Wilson and Patwari in 2011~\cite{RFSensing_Wilson_2009}.
They utilised a dense node array to locate individuals within a room with an average error of about $0.5$~meters.
This was possible by instrumenting a tomographic image over the 2-way RSSI fluctuations of nodes~\cite{RFSensing_Wilson_2010}.
All these studies consider a single experimental setting.

The simultaneous localisation of multiple individuals at the same time was first mentioned and studied by Patwari and Wilson in~\cite{Pervasive_Patwari_2011b}.
The authors derive a statistical model to approximate the position of a person based on RSSI variance which can be extended to multiple persons.
This aspect together with the previously untackled problem that environmental changes over time might necessitate frequent calibration of the location system was approached by Zhang and others in~\cite{Pervasive_Zhang_2012}.
The authors isolate the LoS path by extracting phase information from the differences in the RSS on various frequency spectra at distributed nodes.
Their experimental system is with this approach able to simultaneously and continuously localise up to 5 persons in a changing environment with an accuracy of 1~meter.

While the localisation of individuals based on features from the radio channel was well elaborated, activity recognition from RF was still in its infants.

Patwari et al. monitored breathing based on RSS analysis~\cite{RFSensing_Patwari_2011}.
The monitored area was surrounded by twenty 2.4~GHz nodes and the two-way RSSI was measured.
Using a maximum likelihood estimator they approximated the breathing rate within $0.1$ to $0.4$~beats accuracy.

Our presented work was preceded by several smaller studies in which we were considering a simple set of classes to distinguish~\cite{Pervasive_Scholz_2011} and also open issues in device-free recognition of situations and activities~\cite{Pervasive_Scholz_2011b}. 

Our study mainly focused on the recognition of the activities 'standing, lying, crawling, walking and empty' which we believe to be most expressive in emergency or elderly care situations.

The recognition of these activities has been considered by three device-free systems which we distinguish by the transmit and receive systems utilised. 
In particular, we utilised an active Software Defined Radio (SDR)-based Device-Free Activity Recognition (DFAR) system utilising Universal Software Radio Peripheral (USRP) transmit and receive device, an active RSSI-based DFAR system utilising sensor nodes as transmitters and recievers and a passive DFAR system exploiting ambient FM-radio. 
In all cases, unlike in most related work, we have restricted the system to a single receive device.
While the accuracy can be improved by the utilisation of multiple receive nodes, we envision that in practical applications the number of available nodes will also be limited. 

In comparison with a body-worn accelerometer, we could show that the recognition accuracy of the device-free systems on the recognition of the above mentioned classes in comparable.
Consequently, this shows that, for the recognition of simple classes in, for instance, emergency or elderly care situations, device-free systems can replace body-worn sensing systems.

This opens also a range of new applications in which the monitored entity is not required to wear any device.
Examples are cases in which the monitored entity is not cooperating or where the requirement to wear a device induces additional cost (e.g. monetary, comfort, freedom of movement).

The optimum features identified for all systems differ slightly but mainly cover the mean, variance or maximum of the received signal strength. 

\section{Monitoring of Attention Using Ambient FM-radio Signals}
After our initial studies which demonstrated the general feasibility of RF-based device-free recognition of activities via various recognition systems, we have considered various aspects of these systems regarding the recognition performance. 
Examples are the recognition of mulitple individuals~\cite{DeviceFreeRecognition_Sigg_2013} and also studies considering the distance of the monitored entity to a receive antenna~\cite{Pervasive_Sigg_2013,DeviceFreeRecognition_Shi_2013}.
One result from these studies was that static classes such as lying or standing, although they can in principle be detected from RF-based DFAR systems, require extensive training in a specific environment and it is therefore unrealistic to sense such classes in practical applications. 
Dynamic classes, however, such as walking, crawling or also walking speed can be well recognised also across different environments and slight modifications of parameters of the sensing system such as topology or transmission power. 

Likewise, other groups have been approaching RF-based passive recognition utilising different sensing technology like, for instance, RFID nodes~\cite{DeviceFreeRecognition_Wagner_2013} or also more sophisticated recognition algorithms~\cite{DeviceFreeRecognition_Hong_2013}. 
In this sense, RF-based device-free recognition received a broad recognition from various directions and we were also trying to push the limits of the recognition algorithms further towards the detection of non-directly observable properties such as attention.

In the featured study, we were considering attention of individuals towards poster frames in a corridor. 
Attention determines for a system the potential to impact the actions and decisions taken by an individual~\cite{AttentionMonitoring_Xu_2012}.
The management of attention covers the activation of attention as well as its detection and timely exploitation.
In the literature, we find various definitions that classify attention as well as its determining characteristics~\cite{AttentionMonitoring_Wu_2007,AttentionMonitoring_Wickens_1984}.
A straightforward measure of attention might be the tracking of gaze~\cite{AttentionMonitoring_Yonezawa_2007}. 
In general, aspects such as Saliency, Effort, Expectancy and Value are important indicators of attention~\cite{AttentionMonitoring_Wickens_2008,AttentionMonitoring_Wickens_1984}.
Others extended this model and put a greater stress on the effort a person takes towards an object~\cite{AttentionMonitoring_Ferscha_2012}.
Such detection and management of attention may require elaborate installations and very specific sensors in order to accurately sense quantities such as Saliency, Effort, Expectancy and Value~\cite{AttentionMonitoring_Xu_2012,AttentionMonitoring_Gollan_2011}.
However, we believe that for many commercial installations, cost and ease of installation and not primarily the highest achievable accuracy are most important.
Also more general, environmental sensors can provide sufficient information to estimate the attention state of individuals.

We propose to utilise ambient FM-radio signals for the detection of attention since it has a nearly perfect coverage in populated areas and features cheap receiver hardware~\cite{Percom_Popleteev_2012}.
In particular, authors in~\cite{AttentionMonitoring_Ferscha_2012} discuss various aspects of attention and identify as most distinguishing factors changes in walking speed, direction or orientation.
We therefore argue that attention levels can be inferred upon interpretation of the changes in walking speed or direction as derived from our system.

Our passive, FM-radio based DFAR system then detects the location and walking speed of persons in that corridor. 
From this information, for instance the location of the person and change in her walking speed, it is then possible to estimate attention levels towards specific poster frames in the corridor. 

This work also raises questions whether further complex classes, such as, for instance, sentiment can be reliably detected from device free RF-based sensing system.

\section{The Telepathic Phone: Frictionless Activity Recognition from WiFi-RSSI}
Throughout 2013, research campaigns in particular at the University of Washington and at the MIT have further pushed the limits of which classes can be accurately sensed from RF-based device-free recognition systems. 
In particular, the Doppler shift was exploited by these groups as a powerful feature to accurately sense fine-grained movement.

When an object reflecting a signal wave is in motion, this causes Doppler Shift. 
The direction and speed of the movement conditions the strength and nature of this frequency shift.
Pu and others showed that simultaneous detection of gestures from multiple individuals is possible by utilising multi-antenna nodes and micro Doppler fluctuations~\cite{RFsensing_Pu_2013,RFsensing_Kim_2009}.
They utilise a USRP SDR multi antenna receiver and one or more single antenna transmitters distributed in the environment to distinguish between a set of 9 gestures with an average accuracy of 0.94. 
Their active device-free system exploits a MIMO receiver in order to recognise gestures from different persons present at the same time. 
By leveraging a preamble gesture pattern, the receiver estimates the MIMO channel that maximises the reflections of the desired user.

A main challenge was for them that the Doppler shift from human movement was several magnitudes smaller than the bandwidth of the signal employed.
The authors therefore proposed to transform the received signal into several narrowband pulses which are then analysed for possible Doppler fluctuation.
The group discussed application possibilities of their system in~\cite{RFSensing_Kellog_2014}.

In a related system, Adib and Katabi employ MIMO interference nulling and combine samples taken over time to achieve a similar result while compensating for the missing spatial diversity in a single-antenna receiver system~\cite{Pervasive_Adib_2013}.
In their system, they leverage standard WiFi hardware at 2.4~GHz.

Later, this work was extended to 3D motion tracking by utlising three or more directional receive antennas in exactly defined relative orientation~\cite{RFSensing_Adib_2014}. 
In particular, the system is able to track the center of a human body with an error below 21cm in any direction and can also detect movement of body parts and directions of a pointing body part, such as a hand. 
This localisation is possible through time-of-flight estimation and triangulisation.
Higher accuracy of this estimation is granted by utilising frequency modulated carrier waves (sending a signal that changes linearly in frequency with time) over a bandwidth of 1.69~GHz.
Impact of static objects could be mitigated by subtracting successive sample means whereas noise was filtered by its speed of changes in energy over frequency bands. 

For these systems, however, sophisticated SDR devices are are necessary as well as extensive training of the system.

Towards another dimension, we were therefore considering the RF-based device-free recognition of activities, situations and gestures on consumer devices. 
On such device classes, the high accuracy and also the direct access to signals on the wireless channel are not provided. 
Instead, utilising IEEE 802.11g signals, we are restricted to the RSSI information calculated for each received packet.
It turned out that this information is particularly challenging since the packet reception is sparse, bursty and RSSI calculation is inaccurate and discrete.
We therefore limited our study to very simple time-domain features, mainly the mean RSSI, variance as well as minimum and maximum RSSI.
In this first study on passive RSSI-based DFAR, we investigated the distinction of distance, movement speed, crowd size as well as simple presence and interaction with the phone.
With about 20 packets per second, these classes can be recognised with an accuracy between 0.9 and 0.7. 
It was, however, not possible to distinguish the direction in which an activity was performed. 
Furthermore, gestures showed only a low recognition accuracy. 

With this study, the first to utilise passive RSSI on a smartphone device, we presented an alternative to the frequent use of accelerometers for the recognition of activities. 
When the smartphone is not worn on the body, which occurs more than 50\%\ of the time according to~\cite{Pervasive_Dey_2011,Pervasive_Patel_2006}, accelerometer-based sensing of activities is infeasible.
In such cases, i.e. when the smartphone is lying on the table, RSSI-based device-free sensing paradigms still enable the detection of simple activities or situations. 
Furthermore, this sensing paradigm enables a range of novel application cases in which environmental conditions might trigger actions on a smartphone. 

Rencent developments further advanced these ideas to also cover the recognition of emotional states from gestures captured from RSSI or channel state information (CSI) fluctuation~\cite{Muneeba_2017_Geospatial,Raja_2016_CoSDEO}.

\section{Secure communication based on ambient audio}
Secure authentication and communication in wireless settings is still a topic with a number of unsolved issues.
In mobile computing, when two devices meet for the first time, how is it possible to authenticate devices and to establish a secure key without the need of a trusted third party.

In the absence of a shared key, the first communication over the channel is necessarily unencrypted. 
Hence, this opens the door for an eavesdropper within the communication range. 
Furthermore, how can the communication partner be authenticated.
It is easy in such wireless scenario for an attacker to forge the identity of the legal communication partner.

Context as an implicit input might help to limit the amount of potential communication partners.
This concept was presented first 2001 by Holmquist et al.~\cite{ContextAwareness_Holmquist_2001}. 
The authors propose to utilise the accelerometer of the Smart-It~\cite{5836} device to extract characteristic features from simultaneous shaking processes of two devices.
Later, Mayrhofer et al. presented an authentication mechanism based on this principle~\cite{mayrhofer2007shake}.
The authors demonstrated, that an authentication is possible when devices are shaken simultaneously by a single person, while an authentication was unlikely for a third person trying to mimic the correct movement pattern remotely.
Also, Mayrhofer derived in~\cite{mayrhofer2007candidate} that the sharing of secret keys is possible with a similar protocol.
The proposed protocol that can be utilised with arbitrary contextual features repeatedly exchanges hashes of key-sub-sequences until a common secret is found.
In this instrumentation, exponentially quantised fast Fourier transformation (FFT) coefficients of a sequence of accelerometer samples are utilised.
In contrast, Bicher et al. describe an approach in which noisy acceleration readings can be utilised to establish a secure communication channel among devices~\cite{Cryptography_Bichler_2007,Cryptography_Bichler_2007-2}.
They utilise a hash function that maps similar acceleration patterns to identical key sequences.
However, their approach suffers from the required exact synchronisation among devices so that the authors computed the correct hash-values offline.
Additionally, the hash function utilised required that the keys computed exactly match and that the neighbourhood around these keys is precisely defined.
When patterns are located at the border of one of the region's neighbourhoods, the tolerance for noise in the input is biased in the direction of the centre of this region.
Additionally, key generation by simultaneous shaking is not unobtrusive.

In our publication we utilise an error correction scheme to account for noise in the input data which can be fine-tuned for any Hamming distance desired which is centred around the noisy characteristic sequences generated instead of an artificially defined centre value.
We implement a Network Time Protocol (NTP) based synchronisation mechanism that establishes sufficient synchronisation among nodes.

Another sensor class utilised for context-based device authentication is the RF-channel.
Varshavsky et al. present a technique to authenticate co-located devices based on RF-measurements since channel measurements from devices in near proximity are sufficiently similar to authenticate devices against each other~\cite{Cryptography_Varshavsky_2007}.
Hershey et al. utilise physical layer features to derive secret keys for a pair of devices~\cite{Cryptography_Hershey_1995}.
In the absence of interference and non-linear components, transmitter and receiver experience identical channel response~\cite{Cryptography_Smith_2004}.
This information is utilised to generate a secret key among a node pair.
Since channel characteristics are spatially sharply concentrated and not predictable at a remote location~\cite{Cryptography_Madiseh_2008}, an eavesdropper is not capable of guessing information about the secret.
This scheme was validated in an indoor environment in~\cite{Cryptography_BenHamida_2009}.
Although we consider the keys generated by this scheme as strong, it does not preserve spatial properties.
A device at arbitrary distance could pretend to be a nearby communication partner.

In our work we present an approach to establish a common cryptographic key for co-located devices from ambient audio, without leaking information to eavesdroppers in a different audio context.
We utilise purely ambient noise to establish a secure communication channel among devices in spatial proximity.
We record NTP-synchronised audio samples at two locations, generate a characteristic audio-fingerprint and map this fingerprint to a unique secret key with the help of error correcting codes.
The last step is necessary since the similarity between fingerprints is typically not sufficient to establish a secure channel.
With fuzzy-cryptography schemes, the generation of an identical key based on noisy input data~\cite{tuyls2007security} is possible.
The strength of the key established is conditioned on the length of the audio sequences considered but sequence of less than 6 seconds are already sufficient to obtain 512 bit keys. 

The main novelty of our approach compared to other context-based device pairing mechanisms is the use of fuzzy cryptography to mitigate differences in the audio fingerprints of co-located audio recordings.
Through the strong error correcting codes employed (we utilise Reed-Solomon error correcting codes), it is possible to exactly define how many differences are tolerated in fingerprints of co-located devices so as to establish identical cryptographic keys. 

We implemented the approach using python on a pair of desktop computers.
A critical problem we faced was the time synchronisation among devices. Standard NTP-based synchronisation showed to be too inaccurate. 
In particular, since the fingerprints are created from chunks of audio recordings in the time domain of length 0.375 seconds, a time difference of more than 20 milliseconds already results in significant difference of the generated fingerprint. 

Our study demonstrated that ambient audio is a rich source of randomness with good Entropy.
Indeed, employing sets of statistical tests we could not identify a bias in the generated fingerprints.

Our case studies also showed that even with partial information on the audio context, an eavesdropper can be prevented from obtaining a valid fingerprint. 
In this case, however, the parameters of the Reed-Solomon algorithm have to be stricter so that also more of the legitimate pairing attempts fail.
An automatic and adaptive choice of parameters in a given environment is yet an open issue to solve.

The presented approach raises the bar for attacks in wireless environments since attackers would have to establish physical proximity in order to be able to authenticate. 
Furthermore, it enables simpler, easier to use security schemes since part of the key can be derived automatically via ambient audio.

Li et al. analyse the usage of biometric or multimedia data as part of an authentication process and propose a protocol~\cite{li2006robust}.
Due to the use of error-tolerant cryptographic techniques, this protocol is robust against noise in the input data.
The authors utilise a secure sketch~\cite{Dodis04fuzzyextractors} to produce public information about an input without revealing it.
The input can then be recovered given another value that is close to it.
A similar study is presented by Miao et al.~\cite{miao2009biometrics}.
The authors establish a key distribution based on a fuzzy vault~\cite{Juels02fuzzyvault} using data measured by devices worn on the human body.
The fuzzy vault scheme, also utilised in~\cite{dodis2006robust}, enables the decryption of a secret with any key that is substantially similar to the key used for encryption.

Recently, the scheme developed by us in this work has been further advanced towards other domains, such as e.g. pairing conditioned on gait or co-presence on the same body~\cite{schurmann2017bandana}.

\section{Pattern-based Alignment of Audio Data for Ad-hoc Secure Device Pairing}
Our instrumentation described above requires idealised conditions regarding the synchronisation of devices and to account for this a high number of fingerprints must be created (201 in our experiments) in order to find one matching fingerprint.
For extensive computational load, this is feasible only in an offline approach.
The high number of fingerprints created, however, was necessary since the utilised NTP synchronisation is not sufficiently accurate.

In a later paper, we present an alignment mechanism which enables a synchronisation accuracy of recorded audio in the order of less than 10 milliseconds among unsynchronised mobile devices in the same context without transmitting information about the audio sequence over the wireless channel.
The synchronisation is achieved by processing a weakly NTP-synchronised recording without additional communication among the devices. 

This implementation solved several practical issues and constitutes a prototype implementation for smart and pervasive environments. 
The android-implementation is available online\footnote{https://github.com/stephansigg/AdhocPairing.git} and the method was also included in the Open-UAT API for secure applications on android~\footnote{http://www.openuat.org/}.

While the main pairing approach is identical, challenges arised through hardware peculiarities of the phones utilised. 

In particular, the hardware-based audio-pre-processing on one of the phones and the non-real-time capability of the operating system posed greatest difficulties.
While the hardware-pre-processing issue could be easily solved by restricting the frequency range from which audio-fingerprints are created, the non-real-time capability of the phone posed greater difficulties.
In particular, for the start of the audio recording, we experienced unpredictable delays of up to two seconds.
We solved the challenge to find an identical starting point in these sequences without inter-device-communication by utilising approximate pattern matching between a common short audio sequence and the recording on each of the devices.
While the common sequence was uncorrelated to the recording, a best matching would be found with high probability approximately at the same location within the audio recording provided that the recordings are similar. 

With this approach we have been able to identify an identical starting time for the generation of the fingerprints with an offset in the order of ten milliseconds.

\chapter{Acknowledgements}
A number of people have contributed to this thesis in various ways.
My work and studies towards this thesis have profited much from these positive contributions which have shaped and impacted the direction of my work.\\

First of all, I would like to thank my three girls, Nelja, Freyja and Aniko.
Thank you for your patience, understanding and support!\\

I would like to thank Professor Dr. Michael Beigl for giving me the opportunity to work with much freedom on these topics.
Without the financial support he mobilised to acquire the hardware for the experimental studies, many findings and experiences could not have been made.
Likewise, Professor Dr. Yusheng Ji has greatly supported this work with suggestions and advises and also financial support for equipment. 
Professor Dr. Lars Wolf helped me significantly with many issues after my return from japan and shared an office in his group which greatly helped to improve my working conditions at that time. 
Professor Dr. Xiaoming Fu greatly supported and re-integrated me into the German academic system and thereby essentially enabled the writing on this document. \\

Over the last seven years, I have had the opportunity to cooperate with many great scientists which have had a strong impact on my work and the results presented in this document. 
In particular, I would like to thank the team at TecO for stirring discussions and helpful advice on very diverse topics.
I have experienced a very competitive research environment in this group.
Dawud Gordon, in particular, has always been available for discussions and practical advice.
In Behnam Banitalebi I have found an electrical engineer with whom I have had the good fortune of working with.
Many of the studies on device-free recognition have been conducted in close cooperation with Markus Scholz. 
I am thankful for constructive discussions on this topic and for his patience. 
Furthermore, Matthias Budde, Till Riedel, Hedda Schmidtke, Predrag Jakimovski and Martin Berchtold have contributed to parts of the work presented here and I am very thankful for having had the chance to cooperate with them.

I would further like to thank the k-lab team at the National Institute for Informatics, in particular Shuyu Shi. 
I am honoured to have worked closely together with Shuyu during 2011 and 2012. 
Your passion and clear focus on research has inspired me greatly. 
Furthermore, I have had the chance to cooperate with passionate advanced scientists at NII, in particular, Olga Streibel, Sven Wohlgemuth and Christoph Lofi.

From the IBR at TU-Braunschweig, in particular Dominik Schuermann, Felix Buesching and Sebastian Schildt had a great impact on parts of the work presented here. 
Dominik has helped me a lot with security and cryptography-related issues and has significantly contributed to the implementation of the audio-based device pairing approach. 
Felix has introduced the RF-based recognition to the inga sensor nodes and has significantly contributed to our studies on the active RSSI-based device-free activity recognition system.

I am very thankful towards the team of the wearable computing laboratory of Professor Dr. Gerhard Troester at ETH Zurich. 
I was impressed by the high scientific standard each team member maintained. 
In particular, Ulf Blanke greatly contributed to our study on passive device-free activity recognition from received 802.11g packets.
I greatly enjoyed the constructive and positive discussions with Sinziana Mazilu, Alberto Calatroni, Sebastian Feese, Franz Gravenhorst, Giovanni Antonio Salvatore, Zhu Zack and with my roommate Simon Christen.

I would also like to mention the very warm welcome in the Nodes group of the University of Helsinki and good and fruitful discussions, in particular with Sasu Tarkoma, N. Asokan, Eemil Lagerspetz, Ella Peltronen, Petteri Nurmi, Hien Truong and Markus Miettinen. 
I have learned a lot during this stay, in particular, about big data management and algorithmic tools. 

Also, I am grateful for discussions and support with all kinds of academic issues by the ComTec team at Kassel University, in particular, Professor, Dr. Klaus David, Niklas Klein, Rico Kusber and Andreas Jahn.

Finally, the members of the ComNet group at University of Goettingen provide me with a new academic home and have inspired me with new ideas and research aspects towards future networking.
In particular, David Koll, Mayutan Arumaithurai, Konglin Zhu and Jiachen Chen. 

During these years I am grateful to have had the opportunity to closely work together with very talented and passionate students.
The interesting and deep discussions have often led to refinements or clarifications of my work.
In particular, I would like to mention Jialin Wang, Weiwei Liang, Rayan Merched El Masri, Julian Ristau, Aaron Israel, Georg von Zengen, Gerrit Bagschik, Toni Günther, Johannes Starosta, Sebastian Schwarzl, Markus Reschke, Timo Schulz, Sascha Lity, Stephen Roettger, Sergei Dechand, Ngu Nguyen, David Rieger, Mario Hock, An Huynh, Stephan Mueller, Philipp Specht, Christoph Rauterberg, Marko Becker, Matthias Velten and Chuong Thach Nguyen. \\

Finally, I would like to acknowledge partial funding by the 'Deutsche Forschungsgemeinschaft' (DFG) for the project ''Emergent radio'' as part of the priority program 1183 ''Organic Computing'' as well as funding by the German Academic Exchange Service (DAAD) for funding in the frame of the FIT-Weltweit program.

\printindex
\listoftables
\listoffigures

\small
\bibliographystyle{splncs}

\begin{thebibliography}{100}

\bibitem{4022}
Sigg, S., Masri, R.M.E., Beigl, M.:
\newblock Feedback based closed-loop carrier synchronisation: A sharp
  asymptotic bound, an asymptotically optimal approach, simulations and
  experiments.
\newblock Transactions on mobile computing \textbf{10}(11) (2011)  1605--1617

\bibitem{Beamforming_Sigg_2014}
Sigg, S.:
\newblock A fast binary feedback-based distributed adaptive carrier
  synchronisation for transmission among clusters of disconnected iot nodes in
  smart spaces.
\newblock Ad Hoc Netw. (2014) http://dx.doi.org/10.1016/j.adhoc.2013.12.006.

\bibitem{Pervasive_Sigg_2012}
Sigg, S., Scholz, M., Shi, S., Ji, Y., Beigl, M.:
\newblock Rf-sensing of activities from non-cooperative subjects in device-free
  recognition systems using ambient and local signals.
\newblock IEEE Transactions on Mobile Computing \textbf{13}(4) (2013)

\bibitem{Pervasive_Shi_2014}
Shi, S., Sigg, S., Zhao, W., Ji, Y.:
\newblock Monitoring of attention from ambient fm-radio signals.
\newblock IEEE Pervasive Computing, Special Issue - Managing Attention in
  Pervasive Environments (2014)

\bibitem{RFSensing_Sigg_2014}
Sigg, S., Blanke, U., Troester, G.:
\newblock The telepathic phone: Frictionless activity recognition from
  wifi-rssi.
\newblock In: IEEE International Conference on Pervasive Computing and
  Communications (PerCom). PerCom '14 (2014)

\bibitem{Cryptography_Mathur_2011}
Mathur, S., Miller, R., Varshavsky, A., Trappe, W., Mandayam, N.:
\newblock Proximate: Proximity-based secure pairing using ambient wireless
  signals.
\newblock In: Proceedings of the ninth International Conference on Mobile
  Systems, Applications and Services (MobiSys 2011). (2011)

\bibitem{Cryptography_Schuerman_2011}
Schuermann, D., Sigg, S.:
\newblock Secure communication based on ambient audio.
\newblock IEEE Transactions on mobile computing \textbf{12}(2) (2013)

\bibitem{Cryptography_Nguyen_2012-2}
Nguyen, N., Sigg, S., Huynh, A., Ji, Y.:
\newblock Pattern-based alignment of audio data for ad-hoc secure device
  pairing.
\newblock In: Proceedings of the 16th annual International Symposium on
  Wearable Computers (ISWC). (2012)

\bibitem{FunctionComputation_Sigg_2012}
Sigg, S., Jakimovski, P., Beigl, M.:
\newblock Calculation of functions on the rf-channel for iot.
\newblock In: 3rd International Conference on the Internet of Things (IOT).
  (2012)  107--113

\bibitem{InNetworkProcessing_Jakimovski_2012}
Jakimovski, P., Schmidtke, H.R., Sigg, S., Weiss, L., Chaves, F., Beigl, M.:
\newblock Collective communication for dense sensing environments.
\newblock Journal of Ambient Intelligence and Smart Environments (JAISE)
  \textbf{4}(2) (2012)

\bibitem{FunctionComputation_Sigg_2013}
Sigg, S., Jakimovski, P., Ji, Y., Beigl, M.:
\newblock Utilising an algebra of random functions to realise function
  calculation via a physical channel.
\newblock In: 14th IEEE workshop on Signal Processing Advances in Wireless
  Communications (SPAWC). (2013)

\bibitem{5911}
Culler, D., Estrin, D., Srivastava, M.:
\newblock Overview of sensor networks.
\newblock IEEE Computer \textbf{37}(8) (2004)  41--49

\bibitem{5912}
Zhao, F., Guibas, L.:
\newblock Wireless Sensor Networks: An Information Processing Approach.
\newblock Morgan Kaufmann, Los Altos, CA (2004)

\bibitem{5137}
Norman, D.:
\newblock The invisible computer.
\newblock MIT press (1999)

\bibitem{5916}
Butera, W.J.:
\newblock Programming a paintable computer.
\newblock PhD thesis, Massachusetts Institute of Technology (2002)

\bibitem{5908}
Pillutla, L., Krishnamurthy, V.:
\newblock Joint rate and cluster optimisation in cooperative mimo sensor
  networks.
\newblock In: Proceedings of the 6th IEEE Workshop on signal Processing
  Advances in Wireless Communications. (2005)  265--269

\bibitem{5884}
Scaglione, A., Hong, Y.W.:
\newblock Opportunistic large arrays: Cooperative transmission in wireless
  multihop ad hoc networks to reach far distances.
\newblock IEEE Transactions on Signal Processing \textbf{51}(8) (2003)
  2082--2092

\bibitem{5893}
Sendonaris, A., Erkop, E., Aazhang, B.:
\newblock Increasing uplink capacity via user cooperation diversity.
\newblock In: IEEE Proceedins of the International Symposium on Information
  Theory (ISIT). (2001)  156

\bibitem{5894}
Laneman, J., Wornell, G., Tse, D.:
\newblock An efficient protocol for realising cooperative diversity in wireless
  networks.
\newblock In: Proceedings of the IEEE International Symposium on Information
  Theory. (2001)  294

\bibitem{5885}
Hong, Y.W., Scaglione, A.:
\newblock Critical power for connectivity with cooperative transmission in
  wireless ad hoc sensor networks.
\newblock In: IEEE Workshop on Statistical Signal Processing. (2003)

\bibitem{5888}
Hong, Y.W., Scaglione, A.:
\newblock Energy-efficient broadcasting with cooperative transmission in
  wireless sensor networks.
\newblock IEEE Transactions on Wireless communications (2005)

\bibitem{5939}
Jayaweera, S.K.:
\newblock Energy analysis of mimo techniques in wireless sensor networks.
\newblock In: 38th conference on information sciences and systems. (2004)

\bibitem{5907}
{del Coso}, A., Sagnolini, U., Ibars, C.:
\newblock Cooperative distributed mimo channels in wireless sensor networks.
\newblock IEEE Journal on Selected Areas in Communications \textbf{25}(2)
  (2007)  402--414

\bibitem{4019}
Sigg, S., Beigl, M.:
\newblock Collaborative transmission in wsns by a (1+1)-ea.
\newblock In: Proceedings of the 8th International Workshop on Applications and
  Services in Wireless Networks (ASWN'08). (2008)

\bibitem{4020}
Sigg, S., Beigl, M.:
\newblock Randomised collaborative transmission of smart objects.
\newblock In: 2nd International Workshop on Design and Integration principles
  for smart objects (DIPSO2008) in conjunction with Ubicomp 2008. (2008)

\bibitem{Mudumbai_2009}
Mudumbai, R., Brown, D.R., Madhow, U., Poor, H.V.:
\newblock Distributed transmit beamforming: Challenges and recent progress.
\newblock IEEE Communications Magazine (2009)  102--110

\bibitem{5919}
Mudumbai, R., Wild, B., Madhow, U., Ramchandran, K.:
\newblock Distributed beamforming using 1 bit feedback: from concept to
  realization.
\newblock In: Proceedings of the 44th Allerton conference on communication,
  control and computation. (2006)  1020--1027

\bibitem{Barriac_2004}
Barriac, G., Mudumbai, R., Madhow, U.:
\newblock Distributed beamforming for information transfer in sensor networks.
\newblock In: Proceedings of the third International Workshop on Information
  Processing in Sensor Networks. (2004)

\bibitem{5923}
Mudumbai, R., Barriac, G., Madhow, U.:
\newblock On the feasibility of distributed beamforming in wireless networks.
\newblock IEEE Transactions on Wireless communications \textbf{6} (2007)
  1754--1763

\bibitem{5930}
Ochiai, H., Mitran, P., Poor, H.V., Tarokh, V.:
\newblock Collaborative beamforming for distributed wireless ad hoc sensor
  networks.
\newblock IEEE Transactions on Signal Processing \textbf{53}(11) (2005)  4110
  -- 4124

\bibitem{5940}
Chen, W., Yuan, Y., Xu, C., Liu, K., Yang, Z.:
\newblock Virtual mimo protocol based on clustering for wireless sensor
  networks.
\newblock In: Proceedings of the 10th IEEE Symposium on Computers and
  Commmunications. (2005)

\bibitem{5937}
Youssef, M., Yousif, A., El-Sheimy, N., Noureldin, A.:
\newblock A novel earthquake warning system based on virtual mimo wireless
  sensor netwroks.
\newblock In: Canadian conference on electrical and computer engineering.
  (2007)  932--935

\bibitem{5941}
{del Coso}, A., Savazzi, S., Spagnolini, U., Ibars, C.:
\newblock Virtual mimo channels in cooperative multi-hop wireless sensor
  networks.
\newblock In: 40th annual conference on information sciences and systems.
  (2006)  75--80

\bibitem{5938}
Jayaweera, S.K.:
\newblock Energy efficient virtual mimo based cooperative communications for
  wireless sensor networks.
\newblock IEEE Transactions on Wireless communications \textbf{5}(5) (2006)
  984--989

\bibitem{5898}
Laneman, J., Wornell, G.:
\newblock Distributed space-time coded protocols for exploiting cooperative
  diversity in wireless networks.
\newblock IEEE Transactions on Information theory \textbf{49}(10) (2003)
  2415--2425

\bibitem{5899}
Sendonaris, A., Erkip, E., Aazhang, B.:
\newblock User cooperation diversity -- part i: System description.
\newblock IEEE Transactions on Communications \textbf{51}(11) (2003)
  1927--1938

\bibitem{5900}
Zimmermann, E., Herhold, P., Fettweis, G.:
\newblock On the performance of cooperative relaying protocols in wireless
  networks.
\newblock European Transactions on Telecommunications \textbf{16}(1) (2005)
  5--16

\bibitem{5901}
Cover, T.M., Gamal, A.A.E.:
\newblock Capacity theorems for the relay channel.
\newblock IEEE Transactions on Information Theory \textbf{525}(5) (1979)
  572--584

\bibitem{5903}
Kramer, G., Gastpar, M., Gupta, P.:
\newblock Cooperative strategies and capacity theorems for relay networks.
\newblock IEEE Transactions on Information Theory \textbf{51}(9) (2005)
  3037--3063

\bibitem{5807}
Scaglione, A., Hong, Y.W.:
\newblock Cooperative models for synchronization, scheduling and transmission
  in large scale sensor networks: An overview.
\newblock In: 1st IEEE International Workshop on Computational Advances in
  Multi-Sensor Adaptive Processing. (2005)  60--63

\bibitem{5909}
Gupta, P., Kumar, R.P.:
\newblock The capacity of wireless networks.
\newblock IEEE Transactions on Information Theory \textbf{46}(2) (2000)
  388--404

\bibitem{Ochiai_2005}
Ochiai, H., Mitran, P., Poor, H.V., Tarokh, V.:
\newblock Collaborative beamforming for distributed wireless ad hoc sensor
  networks.
\newblock IEEE Transactions on Signal Processing \textbf{53}(11) (2005)  4110
  -- 4124

\bibitem{5905}
Simeone, O., Spagnolini, U.:
\newblock Capacity region of wireless ad hoc networks using opportunistic
  collaborative communications.
\newblock In: Proceedings of the International Conference on Communications
  (ICC). (2006)

\bibitem{5844}
Krohn, A., Beigl, M., Decker, C., Varona, D.G.:
\newblock Increasing connectivity in wireless sensor network using cooperative
  transmission.
\newblock In: 3rd International Conference on Networked Sensing Systems (INSS).
  (2006)

\bibitem{5843}
Krohn, A.:
\newblock Optimal non-coherent m-ary energy shift keying for cooperative
  transmission in sensor networks.
\newblock In: 31st IEEE International Conference on Acoustics, Speech, and
  Signal Processing (ICASSP). (2006)

\bibitem{5886}
Hong, Y.W., Scaglione, A.:
\newblock Cooperative transmission in wireless multi-hop ad hoc networks using
  opportunistic large arrays.
\newblock In: 4th IEEE workshop on Signal Processing Advances in Wireless
  Communications (SPAWC). (2003)

\bibitem{5931}
Brown, D.R., Prince, G., McNeill, J.:
\newblock A method for carrier frequency and phase synchronization of two
  autonomous cooperative transmitters.
\newblock In: sixth IEEE workshop on signal processing advances in wireless
  communications. (2005)

\bibitem{5932}
Brown, D.R., Poor, H.V.:
\newblock Time-slotted round-trip carrier synchronisation for distributed
  beamforming.
\newblock IEEE Transactions on Signal Processing \textbf{56} (2008)  5630--5643

\bibitem{5933}
Ozil, I., Brown, D.R.:
\newblock Time-slotted round-trip carrier synchronisation.
\newblock In: Proceedings of the 41st Asilomar conference on signals, signals
  and computers. (2007)  1781--1785

\bibitem{5934}
Tu, Y., Pottie, G.:
\newblock Coherent cooperative transmission from multiple adjacent antennas to
  a distant stationary antenna through awgn channels.
\newblock In: Proceedings of the IEEE Vehicular Technology Conference. (2002)
  130--134

\bibitem{5920}
Mudumbai, R., Hespanha, J., Madhow, U., Barriac, G.:
\newblock Scalable feedback control for distributed beamforming in sensor
  networks.
\newblock In: Proceedings of the IEEE International Symposium on Information
  Theory. (2005)  137--141

\bibitem{Mudumbai_2010b}
Mudumbai, R., Hespanha, J., Madhow, U., Barriac, G.:
\newblock Distributed transmit beamforming using feedback control.
\newblock IEEE Transactions on Information Theory \textbf{56}(1) (2010)

\bibitem{Seo_2008}
Seo, M., Rodwell, M., Madhow, U.:
\newblock A feedback-based distributed phased array technique and its
  application to 60-ghz wireless sensor network.
\newblock In: IEEE MTT-S International Microwave Symposium Digest. (2008)
  683--686

\bibitem{Bucklew_2008}
Bucklew, J.A., Sethares, W.A.:
\newblock Convergence of a class of decentralised beamforming algorithms.
\newblock IEEE Transactions on Signal Processing \textbf{56}(6) (2008)
  2280--2288

\bibitem{2105}
Bennett, W.:
\newblock Introduction to signal transmission.
\newblock McGraw-Hill (1971)

\bibitem{5811}
Krohn, A.:
\newblock Superimposed Radio Signals for Wireless Sensor Networks.
\newblock PhD thesis, Technical University of Braunschweig (2007)

\bibitem{062}
3GPP:
\newblock 3rd generation partnership project; technical specification group
  radio access networks; 3g home nodeb study item technical report (release 8).
\newblock Technical Report 3GPP TR 25.820 V8.0.0 (2008-03) (March)

\bibitem{InNetworkProcessing_Jakimovski_2011}
Jakimovski, P., Becker, F., Sigg, S., Schmidtke, H.R., Beigl, M.:
\newblock Collective communication for dense sensing environments.
\newblock In: 7th IEEE International Conference on Intelligent Environments
  (IE). (2011) (**Best paper**).

\bibitem{4023}
Sigg, S., Beigl, M.:
\newblock Algorithms for closed-loop feedback based distributed adaptive
  beamforming in wireless sensor networks.
\newblock In: Proceedings of the fifth International Conference on Intelligent
  Sensors, Sensor Networks and Information Processing - Symposium on Adaptive
  Sensing, Control, and Optimization in Sensor Networks. (2009)

\bibitem{Hagmann_1981}
Hagmann, W.:
\newblock Network synchronisation techniques for satellite communication
  systems.
\newblock PhD thesis, USC, Los Angeles (1981)

\bibitem{DistributedBeamforming_Mudumbai_2011}
Mudumbai, R., Madhow, U., Brown, R., Bidigare, P.:
\newblock Dsp-centric algorithms for distributed transmit beamforming.
\newblock In: 2011 Conference Record of the 45th Asilomar Conference on
  Signals, Systems and Computers (ASILOMAR). (2011)  93--98

\bibitem{DistributedBeamforming_Quitin_2013}
Quitin, F., Rahman, M.M.U., Mudumbai, R., Madhow, U.:
\newblock A scalable architecture for distributed transmit beamforming with
  commodity radios: Design and proof of concept.
\newblock IEEE Transactions on Wireless Communications (\textbf{12}(3))
  1418--1428

\bibitem{CarrierSynchronisation_2012_Mudumbai}
Mudumbai, R., Bidigare, P., Pruessing, S., Dasgupta, S., Oyarzun, M., Raeman,
  D.:
\newblock Scalable feedback algorithms for distributed transmit beamforming in
  wireless networks.
\newblock In: IEEE Internatinal Conference on Acoustics, Speech and Signal
  Processing (ICASSP). (2012)  5213 --5216

\bibitem{4032}
Masri, R.M.E., Sigg, S., Beigl, M.:
\newblock An asymptotically optimal approach to the distributed adaptive
  transmit beamforming in wireless sensor networks.
\newblock In: Proceedings of the 16th European Wireless Conference. (2010)

\bibitem{4024}
Sigg, S., Beigl, M.:
\newblock Algorithmic approaches to distributed adaptive transmit beamforming.
\newblock In: Fifth International Conference on Intelligent Sensors, Sensor
  Networks and Information Processing - Symposium on Theoretical and Practical
  Aspects of Large-scale Wireless Sensor Networks. (2009)

\bibitem{CarrierSynchronisation_2012_Rahman}
Rahman, M.M., Baidoo-Williams, H.E., Mudumbai, R., Dasgupta, S.:
\newblock Fully wireless implementation of distributed beamforming on a
  software-defined radio platform.
\newblock In: Proceedings of the 11th international conference on Information
  Processing in Sensor Networks. IPSN '12 (2012)  305--316

\bibitem{DistributedBeamforming_Quitin_2012}
Quitin, F., Rahman, M.M.U., Mudumbai, R., Madhow, U.:
\newblock Distributed beamforming with software-defined radios: frequency
  synchronisation and digital feedback.
\newblock In: Proceedings of the 55th International Global Communications
  Conference (Globecom). (2012)

\bibitem{Algorithm_Savage_1995}
Savage, C., Winkler, P.:
\newblock Monotone gray codes and the middle levels problem.
\newblock Journal of Combinatorial Theory \textbf{70}(2) (1995)  230--248

\bibitem{Algorithms_Knuth_2011}
Knuth, D.E.:
\newblock The Art of Computer Programming -- Combinatarial Algorithms, Part 1.
\newblock Addison-Wesley (2011)

\bibitem{4036}
Reschke, M., Starosta, J., Schwarzl, S., Sigg, S.:
\newblock Situation awareness based on channel measurements.
\newblock In: Vehicular Technology Conference (VTC Spring), 2011 IEEE 73rd.
  (2011)

\bibitem{ContextAwareness_Sigg_2011}
Reschke, M., Schwarzl, S., Starosta, J., Sigg, S., Beigl, M.:
\newblock Context awareness through the rf-channel.
\newblock In: Proceedings of the 2nd workshop on Context-Systems Design,
  Evaluation and Optimisation. (2011)

\bibitem{IoT_Li_2012}
Li, T., Chen, L.:
\newblock Internet of things: Priinciples, frameworks and applications.
\newblock In: Proceedings of the Future Wireless Networks and Information
  Systems. Number 144 in LNEE (2008)  477--842

\bibitem{IoT_Haller_2010}
Haller, S.:
\newblock The things in the internet of things.
\newblock In: Proceedings of the Internet of Things Conference 2010. (2010)

\bibitem{5840}
Beigl, M., Krohn, A., Zimmer, T., Decker, C.:
\newblock Typical sensors needed in ubiquitous and pervasive computing.
\newblock In: First International Workshop on Networked Sensing Systems.
  Volume~4 of Society of Instrument and Control Engineers. (2004)  153--158

\bibitem{Pervasive_Scholz_2011}
Scholz, M., Sigg, S., Shihskova, D., von Zengen, G., Bagshik, G., Guenther, T.,
  Beigl, M., Ji, Y.:
\newblock Sensewaves: Radiowaves for context recognition.
\newblock In: Video Proceedings of the 9th International Conference on
  Pervasive Computing (Pervasive 2011). (2011)

\bibitem{RFSensing_Woyach_2006}
Woyach, K., Puccinelli, D., Haenggi, M.:
\newblock Sensorless sensing in wireless networks: implementation and
  measurements.
\newblock In: Proceedings of the Second International Workshop on Wireless
  Network Measurement (WiNMee). (2006)

\bibitem{RFSensing_Muthukrishnan_2007}
Muthukrishnan, K., Lijding, M., Meratnia, N., Havinga, P.:
\newblock Sensing motion using spectral and spatial analysis of wlan rssi.
\newblock In: Proceedings of Smart Sensing and Context. (2007)

\bibitem{Pervasive_Youssef_2007}
Youssef, M., Mah, M., Agrawala, A.:
\newblock Challenges: Device-free passive localisation for wireless
  environments.
\newblock In: Proceedings of the 13th annual ACM international Conference on
  Mobile Computing and Networking (MobiCom 2007). (2007)  222--229

\bibitem{Pervasive_Seifeldin_2013}
Seifeldin, M., Saeed, A., Kosba, A., El-Keyi, A., Youssef, M.:
\newblock Nuzzer: A large-scale device-free passive localization system for
  wireless environments.
\newblock IEEE Transactions on Mobile Computing (TMC) \textbf{12}(7) (2013)

\bibitem{Pervasive_Scholz_2011b}
Scholz, M., Sigg, S., Schmidtke, H.R., Beigl, M.:
\newblock Challenges for device-free radio-based activity recognition.
\newblock In: Proceedings of the 3rd workshop on Context Systems, Design,
  Evaluation and Optimisation, in conjunction with MobiQuitous 2011. (2011)

\bibitem{Pervasive_Shi_2012}
Shi, S., Sigg, S., Ji, Y.:
\newblock Passive detection of situations from ambient fm-radio signals.
\newblock In: Proceedings of the 2012 ACM Conference on Ubiquitous Computing.
  UbiComp '12 (2012)

\bibitem{Pervasive_Shi_2012b}
Shi, S., Sigg, S., Ji, Y.:
\newblock Activity recognition from radio frequency data: Multi-stage
  recognition and features.
\newblock In: IEEE Vehicular Technology Conference (VTC Fall). (2012)

\bibitem{Pervasive_Ravi_2005}
Ravi, N., Dandekar, N., Mysore, P., Littman, M.L.:
\newblock Activity recognition from accelerometer data.
\newblock In: Proceedings of the 17th conference on Innovative applications of
  artificial intelligence - Volume 3. IAAI'05 (2005)  1541--1546

\bibitem{Pervasive_Cao_2012}
Cao, H., Nguyen, M.N., Phua, C.C.W., Krishnaswamy, S., Li, X.:
\newblock An integrated framework for human activity classification.
\newblock In: Proceedings of the 14th ACM International Conference on
  Ubiquitous Computing (UbiComp 2012). (2012)

\bibitem{Pervasive_Bao_2004}
Bao, L., Intille, S.S.:
\newblock Activity recognition from user-annotated acceleration data.
\newblock In: Proceedings of PERVASIVE 2004. Volume LNCS 3001. (2004)

\bibitem{Pervasive_Ploetz_2012}
Ploetz, T., Hammerla, N.Y., Rozga, A., Reavis, A., Call, N., Abowd, G.D.:
\newblock Automatic assessment of problem behavior in individuals with
  developmental disabilities.
\newblock In: Proceedings of the 14th ACM International Conference on
  Ubiquitous Computing (Ubicomp 2012). (2012)

\bibitem{Pervasive_Abdullah_2012}
Abdullah, S., Lane, N.D., Choudhury, T.:
\newblock Towards population scale activity recognition: A scalable framework
  for handling data diversity.
\newblock In: Proceedings of the 26th conference on artificial intelligence
  (AAAI 2012). (2012)

\bibitem{Pevasive_Chavarriaga_2011}
Chavarriaga, R., Bayati, H., del R.~Millan, J.:
\newblock Unsupervised adaptation for acceleration-based activity recognition:
  robustness to sensor displacement and rotation.
\newblock Personal and Ubiquitous Computing (2011)

\bibitem{Pervasive_Cohn_2012}
Cohn, G., Morris, D., Patel, S.N., Tan, D.S.:
\newblock Humantenna: Using the body as an antenna for real-time whole-body
  interaction.
\newblock In: Proceedings of ACM CHI 2012. (2012)

\bibitem{Pervasive_Patridge_2008}
Patridge, K., Golle, P.:
\newblock On using existing time-use study data for ubiquitous computing
  applications.
\newblock In: Proceedings of the 10th International Conference on Ubiquitous
  Computing. (2008)  144--153

\bibitem{Pervasive_Varshavsky_2008}
Varshavsky, A., Patel, S.N.
\newblock Ubiquitous Computing Fundamentals. In: Location in Ubiquitous
  Computing. Taylor and Francis Group (2008)  285--319

\bibitem{Pervasive_Lane_2010}
Lane, N.D., Miluzzo, E., Lu, H., Peebles, D., Choudhury, T., Campbell, A.T.:
\newblock A survey of mobile phone sensing.
\newblock IEEE Communications magazine \textbf{48}(9) (2010)  140--150

\bibitem{Pervasive_Dey_2011}
Dey, A.K., Wac, K., Ferreira, D., Tassini, K., Hong, J.H., Ramos, J.:
\newblock Getting closer: An empirical investigation of the proximity of user
  to their smart phones.
\newblock In: Proceedings of the 13th international conference on Ubiquitous
  computing. (2011)

\bibitem{Pervasive_Patel_2006}
Patel, S.N., Kientz, J.A., Hayes, G.R., Bhat, S., Abowd, G.D.:
\newblock Farther than you may think: An empirical investigation of the
  proximity of users to their mobile phones.
\newblock In: Proceedings of the 8th international conference on Ubiquitous
  computing. (2006)  123--140

\bibitem{Pervasive_Hongeng_2004}
Hongeng, S., Nevatia, R., Bremond, F.:
\newblock Video based event recognition: Activity representation and
  probabilistic recognition methods.
\newblock Computer Vision and Image Understanding \textbf{96} (2004)  129--162

\bibitem{Pervasive_Bannach_2008}
Bannach, D., Lukowicz, P., Amft, O.:
\newblock Rapid prototyping of activity recognition applications.
\newblock Pervasive Computing, IEEE \textbf{7}(2) (2008)  22 --31

\bibitem{Pervasive_Patel_2007}
Patel, S.N., Robertson, T., Kientz, J.A., Reynolds, M.S., Abowd, G.D.:
\newblock At the flick of a switch: Detecting and classifying unique electrical
  events on the residential power line.
\newblock In: Proceedings of the 9th International Conference on Ubiquitous
  Computing (UbiComp 2007). (2007)  271--288

\bibitem{RFSensing_Gupta_2010}
Gupta, S., Reynolds, M.S., Patel, S.N.:
\newblock Electrisense: Single-point sensing using emi for electrical event
  detection and classificaiton in the home.
\newblock In: Proceedings of the 13th international conference on Ubiquitous
  computing. (2010)

\bibitem{Pervasive_Gupta_2011}
Gupta, S., Chen, K.Y., Reynolds, M.S., Patel, S.N.:
\newblock Lightwave: Using compact fluorescent lights as sensors.
\newblock In: Proceedings of the 13th international conference on Ubiquitous
  computing. (2011)

\bibitem{Pervasive_Campbell_2010}
Campbell, A.T., Larson, E., Cohn, G., Froehlich, J., Alcaide, R., Patel, S.N.:
\newblock Wattr: A method for self-prowered wireless sensing of water activity
  in the home.
\newblock In: Proceedings of the 12th international conference on Ubiquitous
  computing. (2010)

\bibitem{Pervasive_Thomaz_2012}
Thomaz, E., Bettadapura, V., Reyes, G., Sandesh, M., Schindler, G., Ploetz, T.,
  Abowd, G.D., Essa, I.:
\newblock Recognizing water-based activities in the home through
  infrastructure-mediated sensing.
\newblock In: Proceedings of the 14th ACM International Conference on
  Ubiquitous Computing (Ubicomp 2012). (2012)

\bibitem{Pervasive_Cohn_2010}
Cohn, G., Stuntebeck, E., Pandey, J., Otis, B., Abowd, G.D., Patel, S.N.:
\newblock Snupi: Sensor nodes utilizing powerline infrastructure.
\newblock In: Proceedings of the 12th international conference on Ubiquitous
  computing (UbiComp 2010). (2010)

\bibitem{Pervasive_Bahl_2000}
Bahl, P., Padmanabhan, V.:
\newblock Radar: an in-building rf-based user location and tracking system.
\newblock In: Proceedings of the 19th IEEE International Conference on Computer
  Communications (Infocom). (2000)

\bibitem{Pervasive_Otsason_2005}
Otsason, V., Varshavsky, A., LaMarca, A., de~Lara, E.:
\newblock Accurate gsm indoor localisation.
\newblock In: Proceedings of the 7th ACM International Conference on Ubiquitous
  Computing (Ubicomp 2005). (2005)

\bibitem{Pervasive_Varshavsky_2007}
Varshavsky, A., de~Lara, E., Hightower, J., LaMarca, A., Otsason, V.:
\newblock Gsm indoor localization.
\newblock Pervasive and Mobile Computing \textbf{3} (2007)

\bibitem{Pervasive_Krumm_2003}
Krumm, J., Cermak, G.:
\newblock Rightspot: A novel sense of location for a smart personal object.
\newblock In: Proceedings of the 5th ACM International Conference on Ubiquitous
  Computing (Ubicomp 2003). (2003)

\bibitem{Pervasive_Youssef_2005}
Youssef, A., Krumm, J., Miller, E., Cermak, G., Horvitz, E.:
\newblock Computing location from ambient fm radio signals.
\newblock In: Proceedings of the IEEE Wireless Communications and Networking
  Conference. (2005)

\bibitem{Pervasive_Stuntebeck_2008}
E.P.~Stuntebeck, S.P., Robertson, T., Reynolds, M., Abowd, G.:
\newblock Wideband powerline positioning for indoor localization.
\newblock In: Proceedings of the 10th ACM International Conference on
  Ubiquitous Computing (Ubicomp 2008). (2008)

\bibitem{Pervasive_Jiang_2012}
Jiang, Y., Pan, X., Li, K., Lv, Y., Dick, R.P., Hannigan, M., Shang, L.:
\newblock Ariel: Automatic wi-fi based room fingerprinting for indoor
  localization.
\newblock In: Proceedings of the 14th ACM International Conference on
  Ubiquitous Computing (Ubicomp 2012). (2012)

\bibitem{Pervasive_Pulkkinen_2012}
Pulkkinen, T., Nurmi, P.:
\newblock Awesom: Automatic discrete partitioning of indoor spaces for wifi
  fingerprinting.
\newblock In: Proceedings of the 10th International Conference on Pervasive
  Computing. (2012)

\bibitem{Pervasive_Schougaard_2012}
Schougaard, K.R., Gronbaek, K., Scharling, T.:
\newblock Indoor pedestrian navigation based on hybrid route planning and
  location modelling.
\newblock In: Proceedings of the 10th International Conference on Pervasive
  Computing (Pervasive2012). (2012)

\bibitem{Pervasive_Wang_2012}
Wang, H., Sen, S., Elgohary, A., Farid, M., Youssef, M., Choudhury, R.R.:
\newblock No need to war-drive -- unsupervised indoor localization.
\newblock In: Proceedings of the 10th International Conference on Mobile
  Systems, Applications and Services (Mobisys2012). (2012)

\bibitem{Pervasive_Chen_2012}
Chen, L., Nugent, C., Wang, H.:
\newblock A knowledge-driven approach to activity recognition in smart homes.
\newblock Knowledge and Data Engineering, IEEE Transactions on \textbf{24}(6)
  (2012)  961--974

\bibitem{RFSensing_Anderson_2006}
Anderson, I., Muller, H.:
\newblock Context awareness via gsm signal strength fluctuation.
\newblock In: 4th international conference on pervasive computing, late
  breaking results. (2006)

\bibitem{RFSensing_Sohn_2006}
Sohn, T., Varshavsky, A., LaMarca, A., Chen, M.Y., Choudhury, T., Smith, I.,
  Consolvo, S., Hightower, J., Grisworld, W.G., de~Lara, E.:
\newblock Mobility detection using everyday gsm traces.
\newblock In: Proceedings of the 8th international conference on Ubiquitous
  computing. (2006)

\bibitem{RFSensing_Sen_2012}
Sen, S., Radunovic, B., Choudhury, R.R., Minka, T.:
\newblock You are facing the mona lisa -- spot localization using phy layer
  information.
\newblock In: Proceedings of the 10th International Conference on Mobile
  Systems, Applications and Services (Mobisys2012). (2012)

\bibitem{Pervasive_Kosba_2012b}
Kosba, A., Youssef, M.:
\newblock Rasid demo: A robust wlan device-free passive motion detection
  system.
\newblock In: 2012 IEEE International Conference on Pervasive Computing and
  Communications Workshops (PERCOM Workshops). (2012)  531--533

\bibitem{RFSensing_Wilson_2009}
Wilson, J., Patwari, N.:
\newblock See-through walls: Motion tracking using variance-based radio
  tomography.
\newblock IEEE Transactions on Mobile Computing \textbf{10}(5) (2011)  612--621

\bibitem{RFSensing_Zhang_2009}
Zhang, D., Ni, L.:
\newblock Dynamic clustering for tracking multiple transceiver-free objects.
\newblock In: Proceedings of the 7th IEEE International Conference on Pervasive
  Computing and Communications. (2009)

\bibitem{RFSensing_Zhang_2011}
Zhang, D., Liu, Y., Ni, L.:
\newblock Rass: A real-time, accurate and scalable system for tracking
  transceiver-free objects.
\newblock In: Proceedings of the 9th IEEE International Conference on Pervasive
  Computing and Communications (PerCom2011). (2011)

\bibitem{RFSensing_Wilson_2010}
Wilson, J., Patwari, N.:
\newblock Radio tomographic imaging with wireless networks.
\newblock IEEE Transactions on Mobile Computing \textbf{9} (2010)  621--632

\bibitem{RFSensing_Lee_2010}
Lee, P.W.Q., Seah, W.K.G., Tan, H.P., Yao, Z.:
\newblock Wireless sensing without sensors - an experimental study of
  motion/intrusion detection using rf irregularity.
\newblock Measurement science and technology \textbf{21} (2010)

\bibitem{Pervasive_Kosba_2011}
Kosba, A.E., Saeed, A., Youssef, M.:
\newblock Rasid: A robust wlan device-free passive motion detection system.
\newblock In: IEEE International Conference on Pervasive Computing and
  Communications (PerCom). (2012)

\bibitem{Pervasive_Patwari_2011b}
Patwari, N., Wilson, J., Ananthanarayanan, S., Kasera, S.K., Westenskow, D.:
\newblock Monitoring breathing via signal strength in wireless networks (2011)
  submitted to IEEE Transactions on Mobile Computing, 18 Sept., 2011,
  available: arXiv:1109.3898v1.

\bibitem{Pervasive_Zhang_2012}
Zhang, D., Liu, Y., Guo, X., Gao, M., Ni, L.M.:
\newblock On distinguishing the multiple radio paths in rss-based ranging.
\newblock In: Proceedings of the 31st IEEE International Conference on Computer
  Communications. (2012)

\bibitem{RFSensing_Patwari_2011}
Patwari, N., Wilson, J.:
\newblock Spatial models for human motion-induced signal strength variance on
  static links.
\newblock IEEE Transactions on Information Forensics and Security \textbf{6}(3)
  (2011)  791--802

\bibitem{OrganicComputing_Sigg_2011}
Sigg, S., Beigl, M., Banitalebi, B.:
\newblock 5.4.
\newblock Organic Computing - A Paradigm Shift for Complex Systems, Autonomic
  Systems Series. In: Efficient adaptive communication from multiple resource
  restricted transmitters. Springer (2011)

\bibitem{Pervasive_Song_2010}
Song, W.Z., Huang, R., Xu, M., Shirazi, B., LaHusen, R.:
\newblock Design and deployment of sensor network for real-time high-fidelity
  volcano monitoring.
\newblock IEEE Transactions on Parallel and Distributed Systems \textbf{21}(11)
  (2010)  1658--1674

\bibitem{Pervasive_Palmer_2012}
Palmer, N., Kemp, R., Kielmann, T., Bal, H.:
\newblock The case for smartphones as an urgent computing client platform.
\newblock In: Proceedings of the International Conference on Computational
  Science, ICCS 2012. Volume~9. (2012)  1667--1676

\bibitem{Pervasive_Ramirez_2011}
Ramirez, L., Dyrks, T., Gerwinski, J., Betz, M., Scholz, M., Wulf, V.:
\newblock Landmarke: an ad hoc deployable ubicomp infrastructure to support
  indoor navigation of firefighters.
\newblock Personal and Ubiquitous Computing \textbf{16}(8) (2012)  1025--1038

\bibitem{HassLass2002}
for the Prevention~of Accidents~(RoSPA), T.R.S.:
\newblock Home and leisure accident surveillance system.
\newblock http://www.hassandlass.org.uk/query/index.htm (2002)

\bibitem{Pervasive_Nouri_2007}
Noury, N., Fleury, A., Rumeau, P., Bourke, A., Laighin, G., Rialle, V., Lundy,
  J.:
\newblock Fall detection - principles and methods.
\newblock In: 29th Annual International Conference of the IEEE Engineering in
  Medicine and Biology Society. (2007)  1663--1666

\bibitem{Pervasive_Cucchiara_2007}
Cucchiara, R., Prati, A., Vezzani, R.:
\newblock A multi-camera vision system for fall detection and alarm generation.
\newblock Expert Systems \textbf{24}(5) (2007)  334--345

\bibitem{Pervasive_Lin_2007}
Lin, C.W., Ling, Z.H.:
\newblock Automatic fall incident detection in compressed video for intelligent
  homecare.
\newblock In: Proceedings of 16th International Conference on Computer
  Communications and Networks (ICCCN). (2007)  1172--1177

\bibitem{Noury_Book_2011}
Noury, N., Poujaud, J., Fleury, A., Nocua, R., Haddidi, T., Rumeau, P.:
\newblock Smart sweet home... a pervasive environment for sensing our daily
  activity?
\newblock In Chen, L., Nugent, C.D., Biswas, J., Hoey, J., Khalil, I., eds.:
  Activity Recognition in Pervasive Intelligent Environments. Volume~4 of
  Atlantis Ambient and Pervasive Intelligence.
\newblock Atlantis Press (2011)  187--208

\bibitem{Pervasive_Vacher_2011}
Vacher, M., Istrate, D., Portet, F., Joubert, T., Chevalier, T., Smidtas, S.,
  Meillon, B., Lecouteux, B., Sehili, M., Chahuara, P., Meniard, S.:
\newblock The sweet-home project: Audio technology in smart homes to improve
  well-being and reliance.
\newblock In: Proceedings of the 33rd Annual International Conference of the
  IEEE Engineering in Medicine and Biology Society (EMBC 2011). (2011)

\bibitem{Pervasive_Lukovicz_2012}
Lukowicz, P., Pentland, S., Ferscha, A.:
\newblock From context awareness to socially aware computing.
\newblock Pervasive Computing, IEEE \textbf{11}(1) (2012)  32 --41

\bibitem{MachineLearning_Kononenko_1991}
Kononenko, I., Bratko, I.:
\newblock Information-based evaluation criterion for classifier's performance.
\newblock Machine Learning \textbf{6} (1991)  67--80

\bibitem{AttentionMonitoring_Xu_2012}
Xu, Y., Stojanovic, N., Stojanovic, L., Schuchert, T.:
\newblock Efficient human attention detection based on intelligent complex
  event processing.
\newblock In: Proceedings of the 6th ACM International Conference on
  Distributed Event-Based Systems. DEBS '12 (2012)  379--380

\bibitem{AttentionMonitoring_Wu_2007}
Wu, F., Hubermann, B.:
\newblock Novelty and collective attention.
\newblock In: Proceedings of the National Academics of Sciences. Volume 104.
  (2007)  17599--17601

\bibitem{AttentionMonitoring_Yonezawa_2007}
Yonezawa, T., Yamazoe, H., Utsumi, A., Abe, S.:
\newblock Gaze-communicative behavior of stuffed-toy robot with joint attention
  and eye contact based on ambient gaze-tracking.
\newblock In: ICMI. (2007)  140--145

\bibitem{AttentionMonitoring_Wickens_2008}
Wickens, C., McCarley, J.:
\newblock Applied attention theory.
\newblock CRC Press (2008)

\bibitem{AttentionMonitoring_Ferscha_2012}
Ferscha, A., Zia, K., Gollan, B.:
\newblock Collective attention through public displays.
\newblock In: 2012 IEEE Sixth International Conference on Self-Adaptive and
  Self-Organizing Systems (SASO). (2012)  211 --216

\bibitem{Percom_Popleteev_2012}
Popleteev, A., Osmani, V., Mayora, O.:
\newblock Investigation of indoor localization with ambient {FM} radio
  stations.
\newblock In: Pervasive Computing and Communications (PerCom), 2012 IEEE
  International Conference on. (2012)  171 --179

\bibitem{Pervasive_Sigg_2013}
Sigg, S., Shi, S., Buesching, F., Ji, Y., Wolf, L.:
\newblock Leveraging rf-channel fluctuation for activity recognition.
\newblock In: Proceedings of the 11th International Conference on Advances in
  Mobile Computing and Multimedia (MoMM2013). (2013)

\bibitem{DeviceFreeRecognition_Tan_2005}
Tan, D., Sun, H., Lu, Y., Lesturgie, M., Chan, H.:
\newblock Passive radar using global system for mobile communication signal:
  theory, implementation and measurements.
\newblock IEE Proceedings - Radar, Sonar and Navigation \textbf{152}(3) (2005)
  116--123

\bibitem{PIMRC_Nishi_2006}
Nishi, M., Takahashi, S., Yoshida, T.:
\newblock Indoor human detection systems using vhf-fm and uhf-tv broadcasting
  waves.
\newblock In: Personal, Indoor and Mobile Radio Communications, 2006 IEEE 17th
  International Symposium on. (2006)  1--5

\bibitem{Pervasive_Yang_2012}
Yang, D., Xue, G., Fang, X., Tang, J.:
\newblock Crowdsourcing to smartphones: incentive mechanism design for mobile
  phone sensing.
\newblock In: Proceedings of the 18th annual international conference on Mobile
  computing and networking. Mobicom '12 (2012)  173--184

\bibitem{Pervasive_Lukovicz_2012-2}
Lukowicz, P., Pentland, A., Ferscha, A.:
\newblock From context awareness to socially aware computing.
\newblock IEEE Pervasive Computing \textbf{11}(1) (2012)  32--41

\bibitem{ActivityRecognition_Aggarwal_2011}
Aggarwal, J., Ryoo, M.:
\newblock Human activity analysis: A review.
\newblock ACM Computing Surveys \textbf{43}(3) (2011)  16:1--16:43

\bibitem{ContextAwareness_Kunze_2007}
Kunze, K., Lukowicz, P.:
\newblock Symbolic object localization through active sampling of acceleration
  and sound signatures.
\newblock In: Proceedings of the 9th International Conference on Ubiquitous
  Computing. (2007)

\bibitem{Pervasive_Chaquet_2013}
Chaquet, J.M., Carmona, E.J., Fern\'{a}Ndez-Caballero, A.:
\newblock A survey of video datasets for human action and activity recognition.
\newblock Comput. Vis. Image Underst. \textbf{117}(6) (2013)  633--659

\bibitem{Pervasive_Adib_2013}
Adib, F., Katabi, D.:
\newblock See through walls with wi-fi.
\newblock In: ACM SIGCOMM'13. (2013)

\bibitem{RFsensing_Pu_2013}
Pu, Q., Gupta, S., Gollakota, S., Patel, S.:
\newblock Whole-home gesture recognition using wireless signals.
\newblock In: The 19th Annual International Conference on Mobile Computing and
  Networking (Mobicom'13). (2013)

\bibitem{DeviceFreeRecognition_Popleteev_2013}
Popleteev, A.:
\newblock Device-free indoor localization using ambient radio systems.
\newblock In: Adjunct Proceedings of the 2013 ACM International Joint
  Conference on Pervasive and Ubiquitous Computing (UbiComp 2013). UbiComp '13
  (2013)

\bibitem{DeviceFreeRecognition_Wagner_2013}
Wagner, B., Timmermann, D.:
\newblock Adaptive clustering for device-free user positioning utilizing
  passive rfid.
\newblock In: Adjunct Proceedings of the 2013 ACM International Joint
  Conference on Pervasive and Ubiquitous Computing (UbiComp 2013). UbiComp '13
  (2013)

\bibitem{DeviceFreeRecognition_Colone_2012}
Colone, F., Falcone, P., Bongioanni, C., Lombardo, P.:
\newblock Wifi-based passive bistatic radar: Data processing schemes and
  experimental results.
\newblock IEEE Transactions on Aerospace and Electronic Systems \textbf{48}(2)
  (2012)  1061--1079

\bibitem{RFSensing_Kassem_2012}
Kassem, N., Kosba, A., Youssef, M.:
\newblock Rf-based vehicle detection and speed estimation.
\newblock In: 75th IEEE Vehicular Technology Conference (VTC Spring). (2012)
  1--5

\bibitem{DeviceFreeRecognition_Sigg_2013}
Sigg, S., Shi, S., Ji, Y.:
\newblock Rf-based device-free recognition of simultaneously conducted
  activities.
\newblock In: Adjunct Proceedings of the 2013 ACM International Joint
  Conference on Pervasive and Ubiquitous Computing (UbiComp 2013). UbiComp '13
  (2013)

\bibitem{RFsensing_Kim_2009}
Kim, Y., Ling, H.:
\newblock Human activity classification based on micro-doppler signatures using
  a support vector machine.
\newblock IEEE Transactions on Geoscience and Remote Sensing \textbf{47}(5)
  (2009)  1328--1337

\bibitem{RFSensing_Ding_2011}
Ding, Y., Banitalebi, B., Miyaki, T., Beigl, M.:
\newblock Rftraffic: Passive traffic awareness based on emitted rf noise from
  the vehicles.
\newblock In: ITS Telecommunications (ITST), 2011 11th International Conference
  on. (2011)  393 --398

\bibitem{RFsensing_Xu_2013}
Xu, C., Firner, B., Moore, R.S., Zhang, Y., Trappe, W., Howard, R., Zhang, F.,
  An, N.:
\newblock Scpl: Indoor device-free multi-subject counting and localization
  using radio signal strength.
\newblock In: The 12th ACM/IEEE Conference on Information Processing in Sensor
  Networks (ACM/IEEE IPSN). (2013)

\bibitem{Pervasive_Scholz_2013}
Scholz, M., Riedel, T., Hock, M., Beigl, M.:
\newblock Device-free and device-bound activity recognition using radio signal
  strength.
\newblock In: Proceedings of the 4th Augmented Human International Conference
  in cooperation with ACM SIGCHI. (2013)

\bibitem{Wardriving_Ildis_2013-PerCom}
Ildis, O., Ofir, Y., Feinstein, R.:
\newblock Wardriving from your pocket (2013)

\bibitem{Statistics_Kononenko_1991}
Kononenko, I., Bratko, I.:
\newblock {Information-based evaluation criterion for classifier's
  performance}.
\newblock Machine Learning \textbf{6}(1) (1991)  67--80

\bibitem{Statistics_Spackman_1989}
Spackman, K.A.:
\newblock Signal detection theory: valuable tools for evaluating inductive
  learning.
\newblock In: 6th international workshop on Machine learning. (1989)  160--163

\bibitem{Context_Dupuy_2006}
Dupuy, C., Torre, A.
\newblock In: Local Clusters, trust, confidence and proximity, Clusters and
  Globalisation: The development of urban and regional economies. (2006)  pp.
  175--195

\bibitem{mayrhofer2008spontaneous}
Mayrhofer, R., Gellersen, H.:
\newblock {Spontaneous mobile device authentication based on sensor data}.
\newblock information security technical report \textbf{13}(3) (2008)  136--150

\bibitem{Cryptography_Bichler_2007-2}
Bichler, D., Stromberg, G., Huemer, M., Loew, M.:
\newblock Key generation based on acceleration data of shaking processes.
\newblock In Krumm, J., ed.: Proceedings of the 9th International Conference on
  Ubiquitous Computing. (2007)

\bibitem{ContextAwareness_Holmquist_2001}
Holmquist, L.E., Mattern, F., Schiele, B., , Alahuhta, P., Beigl, M.,
  Gellersen, H.W.:
\newblock Smart-its friends: A technique for users to easily establish
  connections between smart artefacts.
\newblock In: Proceedings of the 3rd International Conference on Ubiquitous
  Computing. (2001)

\bibitem{Cryptography_Varshavsky_2007}
Varshavsky, A., Scannell, A., LaMarca, A., de~Lara, E.:
\newblock Amigo: Proximity-based authentication of mobile devices.
\newblock International Journal of Security and Networks (2009)

\bibitem{5836}
Gellersen, H.W., Kortuem, G., Schmidt, A., Beigl, M.:
\newblock Physical prototyping with smart-its.
\newblock IEEE Pervasive computing \textbf{4}(1536-1268) (2004)  10--18

\bibitem{mayrhofer2007shake}
Mayrhofer, R., Gellersen, H.:
\newblock {Shake well before use: Authentication based on accelerometer data}.
\newblock Pervasive Computing (2007)  144--161

\bibitem{mayrhofer2007candidate}
Mayrhofer, R.:
\newblock {The Candidate Key Protocol for Generating Secret Shared Keys from
  Similar Sensor Data Streams}.
\newblock Security and Privacy in Ad-hoc and Sensor Networks (2007)  1--15

\bibitem{Cryptography_Bichler_2007}
Bichler, D., Stromberg, G., Huemer, M.:
\newblock Innovative key generation approach to encrypt wireless communication
  in personal area networks.
\newblock In: Proceedings of the 50th International Global Communications
  Conference. (2007)

\bibitem{Cryptography_Hershey_1995}
Hershey, J., Hassan, A., Yarlagadda, R.:
\newblock Unconventional cryptographic keying variable management.
\newblock IEEE Transactions on Communications \textbf{43} (1995)  3--6

\bibitem{Cryptography_Smith_2004}
Smith, G.:
\newblock A direct derivation of a single-antenna reciprocity relation for the
  time domain.
\newblock IEEE Transactions on Antennas and Propagation \textbf{52} (2004)
  1568--1577

\bibitem{Cryptography_Madiseh_2008}
Madiseh, M.G., McGuire, M.L., Neville, S.S., Cai, L., Horie, M.:
\newblock Secret key generation and agreement in uwb communication channels.
\newblock In: Proceedings of the 51st International Global Communications
  Conference (Globecom). (2008)

\bibitem{Cryptography_BenHamida_2009}
Hamida, S.T.B., Pierrot, J.B., Castelluccia, C.:
\newblock An adaptive quantization algorithm for secret key generation using
  radio channel measurements.
\newblock In: Proceedings of the 3rd International Conference on New
  Technologies, Mobility and Security. (2009)

\bibitem{tuyls2007security}
Tuyls, P., Skoric, B., Kevenaar, T.:
\newblock {Security with Noisy Data}.
\newblock Springer-Verlag (2007)

\bibitem{li2006robust}
Li, Q., Chang, E.C.:
\newblock {Robust, short and sensitive authentication tags using secure
  sketch}.
\newblock In: Proceedings of the 8th workshop on Multimedia and security, ACM
  (2006)  56--61

\bibitem{Dodis04fuzzyextractors}
Dodis, Y., Ostrovsky, R., Reyzin, L., Smith, A.:
\newblock Fuzzy extractors: How to generate strong keys from biometrics and
  other noisy data.
\newblock EUROCRYPT 2004 (2004)  79--100

\bibitem{miao2009biometrics}
Miao, F., Jiang, L., Li, Y., Zhang, Y.T.:
\newblock {Biometrics based novel key distribution solution for body sensor
  networks}.
\newblock In: Engineering in Medicine and Biology Society, 2009. EMBC 2009.
  Annual International Conference of the IEEE, IEEE (2009)  2458--2461

\bibitem{Juels02fuzzyvault}
Juels, A., Sudan, M.:
\newblock {A Fuzzy Vault Scheme}.
\newblock Proceedings of IEEE Internation Symposium on Information Theory
  (2002)  408

\bibitem{dodis2006robust}
Dodis, Y., Katz, J., Reyzin, L., Smith, A.:
\newblock {Robust fuzzy extractors and authenticated key agreement from close
  secrets}.
\newblock Advances in Cryptology-CRYPTO 2006 (2006)  232--250

\bibitem{cano2005review}
Cano, P., Batlle, E., Kalker, T., Haitsma, J.:
\newblock {A Review of Algorithms for Audio Fingerprinting}.
\newblock The Journal of VLSI Signal Processing \textbf{41}(3) (2005)  271--284

\bibitem{AudioFingerprinting_Baluja_2008}
Baluja, S., Covell, M.:
\newblock Waveprint: Efficient wavelet-based audio fingerprinting.
\newblock Pattern Recognition \textbf{41}(11) (2008)

\bibitem{AudioFingerprinting_Ghouti_2006}
Ghouti, L., Bouridane, A.:
\newblock A robust perceptual audio hashing using balanced multiwavelets.
\newblock In: Proceedings of the 5th IEEE International Conference on
  Acoustics, Speech, and Signal Processing. (2006)

\bibitem{AudioFingerprinting_Sukittanon_2002}
Sukittanon, S., Atlas, L.:
\newblock Modulation frequency features for audio fingerprinting.
\newblock In: Proceedings of the 2nd IEEE International Conference on
  Acoustics, Speech, and Signal Processing. (2002)

\bibitem{Haitsma2003highlyrobust}
Haitsma, J., Kalker, T.:
\newblock {A Highly Robust Audio Fingerprinting System}.
\newblock Journal of New Music Research \textbf{32}(2) (2003)  211--221

\bibitem{AudioFingerprinting_Burges_2005}
Burges, C., Plastina, D., Platt, J., Renshaw, E., Malvar, H.:
\newblock Using audio fingerprinting for duplicate detection and thumbnail
  generation.
\newblock In: Acoustics, Speech, and Signal Processing, 2005. Proceedings.
  (ICASSP '05). IEEE International Conference on. Volume~3. (2005)
  iii/9--iii12 Vol. 3

\bibitem{AudioFingerprinting_Bellettini_2010}
Bellettini, C., Mazzini, G.:
\newblock A framework for robust audio fingerprinting.
\newblock Journal of Communications \textbf{5}(5) (2010)

\bibitem{AudioFingerprinting_Ghias_1995}
Ghias, A., Logan, J., Chamberlin, D., Smith, B.C.:
\newblock Query by humming.
\newblock In: Proceedings of the ACM Multimedia. (1995)

\bibitem{AudioFingerprinting_Parsons_1975}
Parsons, D.:
\newblock The directory of tunes and musical themes.
\newblock Cambridge University press (1975)

\bibitem{AlgorithmsStringMatching_BeezaYates_1992}
Baeza-Yates, R.A., perleberg~perleberg {perleberg perleberg}, C.H.:
\newblock Fast and practical approximate string matching.
\newblock Third annual symposium on combinatorial pattern matching (1992)

\bibitem{AudioFingerprinting_McNab_1996}
McNab, R.J., Smith, L.A., Witten, I.H., Henderson, C.L., Cunningham, S.J.:
\newblock Towards the digital music library: tune retrieval from acoustic
  iinput.
\newblock Proceedings of the ACM (1996)

\bibitem{AudioFingerprinting_Prechelt_2001}
Prechelt, L., Typke, R.:
\newblock An interface for melody input.
\newblock ACM Transactions on Computer Human Interactions \textbf{8} (2001)

\bibitem{AudioFingerprinting_Chai_2002}
Chai, W., Vercoe, B.:
\newblock Melody retrieval on the web.
\newblock In: Proceedings of the ACM/SPIE conference on Multimedia Computing
  and Networking. (2002)

\bibitem{AudioFingerprinting_Shifrin_2002}
Shifrin, L., Pardo, B., Birmingham, W.:
\newblock Hmm-based musical query retrieval.
\newblock In: Proceedings of the joint conference on digital libraries. (2002)

\bibitem{AlgorithmsHMM_Rabiner_1989}
Rabiner, L.:
\newblock A tutorial on hidden markov models and selected accplications in
  speech recognition.
\newblock Proceedings of the IEEE \textbf{77}(2) (1989)

\bibitem{AudioFingerprinting_Zhu_2003}
Zhu, Y., Shasha, D.:
\newblock Warping indexes with envelope transforms for query by humming.
\newblock In: Proceedings of the ACM SIGMOD International Conference on
  Management of Data. (2003)

\bibitem{AudioFingerprinting_Haitsma_2001}
Haitsma, J., Kalker, T.:
\newblock Robust audio hashing for content identification.
\newblock In: In Content-Based Multimedia Indexing (CBMI. (2001)

\bibitem{AudioFingerprinting_Lebosse_2007}
Leboss{\'e}, J., Brun, L., Pailles, J.C.:
\newblock A robust audio fingerprint's based identification method.
\newblock In: Proceedings of the 3rd Iberian conference on Pattern Recognition
  and Image Analysis, Part I. IbPRIA '07, Berlin, Heidelberg, Springer-Verlag
  (2007)  185--192

\bibitem{AudioFingerprinting_Burges_2003}
Burges, C., Platt, J., Jana, S.:
\newblock Distortion discriminant analysis for audio fingerprinting.
\newblock Speech and Audio Processing, IEEE Transactions on \textbf{11}(3)
  (2003)  165--174

\bibitem{AudioFingerprinting_Herre_2001}
Herre, J., Allamanche, E., Hellmuth, O.:
\newblock Robust matching of audio signals using spectral flatness features.
\newblock In: Applications of Signal Processing to Audio and Acoustics, 2001
  IEEE Workshop on the. (2001)  127--130

\bibitem{AudioFingerprinting_Yang_2001}
Yang, C.:
\newblock Macs: music audio characteristic sequence indexing for similarity
  retrieval.
\newblock In: Applications of Signal Processing to Audio and Acoustics, 2001
  IEEE Workshop on the. (2001)  123--126

\bibitem{AudioFingerprinting_Yang_2002}
Yang, C.:
\newblock Efficient acoustic index for music retrieval with various degrees of
  similarity.
\newblock In: Proceedings of the tenth ACM international conference on
  Multimedia. MULTIMEDIA '02, New York, NY, USA, ACM (2002)  584--591

\bibitem{AudioFingerprinting_Wang_2006}
Wang, A.:
\newblock The shazam music recognition service.
\newblock Communications of the ACM \textbf{49}(8) (2006)

\bibitem{AudioFingerprinting_Wan_2003}
Wang, A.:
\newblock An industrial strength audio search algorithm.
\newblock In: Proceedings of the International Conference on Music Information
  Retrieval (ISMIR). (2003)

\bibitem{2108}
Hao, F.:
\newblock On using fuzzy data in security mechanisms.
\newblock PhD thesis, Queens College, Cambridge (2007)

\bibitem{2109}
Ibarrola, A.C., Chavez, E.:
\newblock A robust entropy-based audio-fingerprint.
\newblock In: Proceedings of the 2006 International Conference on Multimedia
  and Expo (ICME 2006). (2006)

\bibitem{schneider1996applied}
Schneider, B.:
\newblock {Applied Cryptography: Protocols, Algorithms, and Source Code in C}.
  2 edn.
\newblock John Wiley and Sons, Inc. (1996)

\bibitem{Reed60polynomial}
Reed, I., Solomon, G.:
\newblock {Polynomial codes over certain finite fields}.
\newblock Journal of the Society for Industrial and Applied Mathematics (1960)
  300--304

\bibitem{Juels99fuzzycommitment}
Juels, A., Wattenberg, M.:
\newblock {A Fuzzy Commitment Scheme}.
\newblock Sixth ACM Conference on Computer and Communications Security (1999)
  28--36

\bibitem{fips2008180}
{National Institute of Standards and Technology}:
\newblock {180-3, Secure Hash Standard (SHS)}.
\newblock Federal Information Processing Standards Publications (FIPS PUBS)
  (2008)

\bibitem{fips2001197}
{National Institute of Standards and Technology}:
\newblock {197, Advanced Encryption Standard (AES)}.
\newblock Federal Information Processing Standards Publications (FIPS PUBS)
  (2001)

\bibitem{rfc5905}
Mills, D., Martin, J., Burbank, J., Kasch, W.:
\newblock {Network Time Protocol Version 4: Protocol and Algorithms
  Specification}.
\newblock RFC 5905 (Proposed Standard) (2010)

\bibitem{AlgorithmsNTP_Mills_1995}
Mills, D.L.:
\newblock Improved algorithms for synchronising computer network clocks.
\newblock IEEE/ACM Transactions on Networking \textbf{3}(3) (1995)

\bibitem{Meier_Synchronisation_1998}
Meier, S., Weibel, H., Weber, K.:
\newblock Ieee 1588 syntonization and synchronization functions completely
  realized in hardware.
\newblock In: International IEEE Symposium on Precision Clock Synchronization
  for Measurement, Control and Communication (ISPCS 2008). (2008)

\bibitem{AlgorithmsNTP_Mills_1994}
Mills, D.L.:
\newblock Precision synchronisation of computer network clocks.
\newblock ACM Computer Communication Review \textbf{24}(2) (1994)

\bibitem{gstreamer2010doc}
Freedesktop.org:
\newblock {GStreamer Documentation}. (2010)

\bibitem{scheirer2007cracking}
Scheirer, W.J., Boult, T.E.:
\newblock {Cracking fuzzy vaults and biometric encryption}.
\newblock In: Proceedings of Biometrics Symposium, Baltimore, USA. (2007)

\bibitem{ignatenko2010privacy}
Ignatenko, T., Willems, F.M.J.:
\newblock {Information Leakage in Fuzzy Commitment Schemes}.
\newblock In: IEEE Transactions on Information Forensics and Security.
  Volume~5. (2010)  337

\bibitem{fenzi2004linux}
Fenzi, K., Wreski, D.:
\newblock Linux Security HOWTO. (2004)

\bibitem{statistical_Brown_0000}
Brown, R.G.:
\newblock Dieharder: A random number test suite.
\newblock http://www.phy.duke.edu/$\sim$rgb/General/dieharder.php (2011)

\bibitem{Statistical_Kuiper_1962}
Kuiper, N.:
\newblock Tests concerning random points on a circle.
\newblock In: Proceedings of the Koinklijke Nederlandse Akademie van
  Wetenschappen. Volume Series a 63. (1962)  38--47

\bibitem{Statistical_stephens_1965}
Stephens, M.:
\newblock The goodness-of-fit statistic $v\_n$: Distribution and significance
  points.
\newblock Biometrika \textbf{52} (1965)

\bibitem{wavelets}
Baluja, S., Covell, M.:
\newblock Content fingerprinting using wavelets.
\newblock In: Proceedings of the Conference of Visual Media Production, London,
  UK (2006)

\bibitem{Audio_fingerprint_review_Cano_2005}
Cano, P., Batlle, E., Kalker, T., Haitsma, J.:
\newblock A review of audio fingerprinting.
\newblock Journal of VLSI Signal Processing Systems \textbf{41 Issue 3} (2005)

\bibitem{Audio_fingerprint_review_Chandrasekhar_2011}
Chandrasekhar, V., Sharifi, M., Ross, D.:
\newblock Survey and evaluation of audio fingerprinting schemes for mobile
  audio search.
\newblock In: International Symposium on Music and Information Retrieval
  (ISMIR), Miami, Florida (2011)

\bibitem{McCune_Cryptography_2005}
McCune, J.M., Perrig, A., Reiter, M.K.:
\newblock Seeing-is-believing: Using camera phones for human-verifiable
  authentication.
\newblock In: Proceedings of the 2005 IEEE Symposium on Security and Privacy.
  (2005)

\bibitem{Goodrich_Cryptography_2006}
Goodrich, M.T., Sirivianos, M., Solis, J., Tsudik, G., Uzun, E.:
\newblock Loud and clear: Human-verifiable authentication based on audio.
\newblock In: Proceedings of the 26th IEEE International Conference on
  Distributed Computing Systems. (2006)

\bibitem{Cryptography_Sigg_2011-5}
Sigg, S., Schuermann, D., Ji, Y.:
\newblock Pintext: A framework for secure communication based on context.
\newblock In: Proceedings of the Eighth Annual International ICST Conference on
  Mobile and Ubiquitous Systems:Computing, Networking and Services (MobiQuitous
  2011). (2011)

\bibitem{SmithWatermanAlg}
Smith, T.F., Waterman, M.S.:
\newblock Identification of common molecular subsequences.
\newblock Journal of molecular biology \textbf{147}(1) (1981)  195--197

\bibitem{5918}
Mudumbai, R., Brown, D.R., Madhow, U., Poor, H.V.:
\newblock Distributed transmit beamforming: Challenges and recent progress.
\newblock IEEE Communications Magazine (2009)  102--110

\bibitem{4025}
Sigg, S., Masri, R., Ristau, J., Beigl, M.:
\newblock Limitations, performance and instrumentation of closed-loop feedback
  based distributed adaptive transmit beamforming in wsns.
\newblock In: Fifth International Conference on Intelligent Sensors, Sensor
  Networks and Information Processing - Symposium on Theoretical and Practical
  Aspects of Large-scale Wireless Sensor Networks. (2009)

\bibitem{DeviceFreeRecognition_Shi_2013}
Shi, S., Sigg, S., Ji, Y.:
\newblock Joint localisation and activity recognition from ambient fm broadcast
  signals.
\newblock In: Adjunct Proceedings of the 2013 ACM International Joint
  Conference on Pervasive and Ubiquitous Computing (UbiComp 2013). UbiComp '13
  (2013)

\bibitem{DeviceFreeRecognition_Hong_2013}
Hong, J., Ohtsuki, T.:
\newblock Ambient intelligence sensing using array sensor: Device-free radio
  based approach.
\newblock In: Adjunct Proceedings of the 2013 ACM International Joint
  Conference on Pervasive and Ubiquitous Computing (UbiComp 2013). UbiComp '13
  (2013)

\bibitem{AttentionMonitoring_Wickens_1984}
Wickens, C.:
\newblock Processing resources in attention.
\newblock Academic Press (1984)

\bibitem{AttentionMonitoring_Gollan_2011}
Gollan, B., Wally, B., Ferscha, A.:
\newblock Automatic attention estimation in an interactive system based on
  behaviour analysis.
\newblock In: Proceedings of the 15th Portuguese Conference on Artificial
  Intelligence (EPIA2011). (2011)

\bibitem{RFSensing_Kellog_2014}
:
\newblock Bringing gesture recognition to all devices.
\newblock In: Presented as part of the 11th USENIX Symposium on Networked
  Systems Design and Implementation, Berkeley, CA, USENIX (2014)

\bibitem{RFSensing_Adib_2014}
Adib, F., Kabelac, Z., Katabi, D., Miller, R.C.:
\newblock 3d tracking via body radio reflections.
\newblock In: Usenix NSDI. Volume~14. (2014)

\bibitem{Muneeba_2017_Geospatial}
Raja, M., Exler, A., Hemminki, S., Konomi, S., Sigg, S., Inoue, S.:
\newblock Towards pervasive geospatial affect perception.
\newblock Springer GeoInformatica (2017)

\bibitem{Raja_2016_CoSDEO}
Raja, M., Sigg, S.:
\newblock Applicability of rf-based methods for emotion recognition: A survey.
\newblock In: 2016 IEEE International Conference on Pervasive Computing and
  Communication Workshops (PerCom Workshops). (2016)  1--6

\bibitem{schurmann2017bandana}
Sch{\"u}rmann, D., Br{\"u}sch, A., Sigg, S., Wolf, L.:
\newblock Bandana—body area network device-to-device authentication using
  natural gait.
\newblock In: Pervasive Computing and Communications (PerCom), 2017 IEEE
  International Conference on, IEEE (2017)  190--196

\end{thebibliography}

\end{document}